\definecolor{darkred}  {rgb}{0.5,0,0}
\definecolor{darkblue} {rgb}{0,0,0.5}
\definecolor{darkgreen}{rgb}{0,0.5,0}
\theoremstyle{definition}
\newtheorem{corollary}{Corollary}
\newtheorem{definition}{Definition}
\newtheorem{conjecture}{Conjecture}
\newtheorem{lemma}{Lemma}
\newtheorem{proposition}{Proposition}
\newtheorem{theorem}{Theorem}
\newtheorem{example}{Example}
\newtheorem*{remark}{Remark}
\newcommand{\mbb}{\mathbb}
\newcommand{\mc}{\mathcal}
\newcommand{\msf}{\mathsf}
\newcommand{\tr}{\textrm{Tr}}
\newcommand{\ket}[1]{|#1\rangle}
\newcommand{\bra}[1]{\langle #1|}
\newcommand{\op}[2]{|#1\rangle\langle#2|}
\definecolor{cool_green}{rgb}{0.0, 0.5, 0.0}
\newcommand{\yujie}{\color{black}}
\newcommand{\cmt}{\color{black}}
\newcommand{\blk}{\color{black}}
\begin{document}

\title{Reassessing the boundary between classical and nonclassical \\ for individual quantum processes}

\author{Yujie Zhang}
\email{yujie4physics@gmail.com}
\affiliation{Institute for Quantum Computing, University of Waterloo, Waterloo, Ontario Canada N2L 3G1}
\affiliation{Perimeter Institute for Theoretical Physics, 31 Caroline Street North, Waterloo, Ontario Canada N2L 2Y5}
\affiliation{Department of Physics \& Astronomy, University of Waterloo, Waterloo, Ontario, Canada, N2L 3G1}
\author{David Schmid} 
\email{dschmid1@pitp.ca}
\affiliation{Perimeter Institute for Theoretical Physics, 31 Caroline Street North, Waterloo, Ontario Canada N2L 2Y5}
\author{Y{\`i}l{\`e} Y{\=\i}ng}
\email{yying@pitp.ca}
\affiliation{Perimeter Institute for Theoretical Physics, 31 Caroline Street North, Waterloo, Ontario Canada N2L 2Y5}
\affiliation{Department of Physics \& Astronomy, University of Waterloo, Waterloo, Ontario, Canada, N2L 3G1}
\author{Robert W. Spekkens} 
\affiliation{Perimeter Institute for Theoretical Physics, 31 Caroline Street North, Waterloo, Ontario Canada N2L 2Y5}

\date{\today}
\begin{abstract}
There is a received wisdom about where to draw the boundary between classical and nonclassical for various types of quantum processes. For multipartite states, it is the divide between separable and entangled; for channels, the divide between entanglement-breaking and not; for sets of measurements, the divide between compatible and incompatible; for assemblages, the divide between steerable and unsteerable. However, these choices have not been motivated by any unified notion of what it means to be classically explainable. One well-motivated notion of classical explainability is the one based on generalized noncontextuality: a set of circuits is classically explainable if a generalized-noncontextual ontological model can realize the statistics they generate.  In this work, we show that this notion can be leveraged to define a classical-nonclassical divide for individual quantum processes of arbitrary type. A set of measurements, for example, is judged to be classical if and only if a particular set of circuits---the one obtained by contracting these measurements with every possible quantum state---is classically explainable in the sense just articulated.  We begin the task of characterizing where the classical-nonclassical divide lies according to this proposal for a variety of different types of processes. In particular, we show that all of the following are judged to be nonclassical: every entangled state, every set of incompatible measurements, every non-entanglement-breaking channel, and every steerable assemblage. Our proposal differs from the received wisdom, however, insofar as it also judges certain subsets of the complementary classes to be nonclassical, including certain separable states, compatible sets of measurements, entanglement-breaking channels, and unsteerable assemblages. \yujie Finally, we prove structure theorems characterizing the classical-nonclassical divide based on whether a process admits of a specific type of frame representation.  \blk
\end{abstract}
\maketitle

\section{Introduction}

A recurring question of interest in quantum information and foundations is to determine which processes (states, measurements, sets of measurements, channels, and so on) can be reasonably said to be classical, and which must be acknowledged to be nonclassical (that is, genuinely quantum). 

In much of the literature, proposals for the placement of the classical-nonclassical divide have been introduced on a case-by-case basis for different types of processes. For instance, it is commonly taken that for bipartite states, it is the divide between separable and entangled, for channels, it is the divide between entanglement-breaking and non-entanglement-breaking, for multi-measurements, it is the divide between the set of measurements being compatible and being incompatible, and so on. 

These proposals are often based on intuitive notions of what should count as a classical kind of correlation or process.  However, the question of whether all such proposals are consistent with a single principle has remained largely unexplored.\footnote{An exception to this is that a unified proposal has been made for studying nonclassicality of common causes in multipartite scenarios~\cite{Wolfe2020quantifyingbell, Schmid2020typeindependent, Schmid2023understanding}.} Indeed, the resulting conceptions of nonclassicality can be distinct for different types of processes, and we will even show that they are sometimes in conflict with one another.

Seeking to define classicality across different types of processes using a unified notion of classical explainability is especially important, given that many widely held intuitions about which features of quantum theory count as evidence of genuine quantumness are inconsistent with well-motivated notions of classical explainability. For example, there exist classical theories~\cite{spekkens2007evidence,epistricted, Catani2023whyinterference}  that satisfy various principles, such as locality and generalized noncontextuality, and that nonetheless contain analogues of entanglement, non-entanglement-breaking channels, and incompatible measurements, not to mention steering, interference, no-cloning, and a wide variety of other phenomena that are commonly thought to be challenging to explain classically.

The question of where the classical-nonclassical divide lies is sometimes mistaken for one that is merely about semantics, that is, merely about how we use the term ``classical".  This is not the case because the answer to this question has a significant impact on how one pursues foundational and applied research in quantum theory. For instance, the answer impacts how one pursues the project of identifying quantum-over-classical advantages across various tasks (computational, cryptographic, metrological, thermodynamic, etc.), because one expects to find such advantages only in protocols that leverage nonclassical resources. In other words, a desideratum for a good proposal on where to place the classical-nonclassical divide is that any advantage should require resources that are deemed nonclassical by the proposal. Similarly, the placement of the classical-nonclassical divide is important for identifying the manner in which the quantum worldview differs from the classical worldview and hence for correctly identifying the true innovation of quantum theory over its classical predecessors. 

Here, we introduce an approach that defines the classical-nonclassical divide for a process based on a single principle, namely the idea that classical explainability of operational statistics corresponds to realizability by a generalized-noncontextual ontological model
~\cite{gencontext,Schmid2024structuretheorem}. 

Since our framework is based on the notion of a generalized-noncontextual ontological model, it is worth reiterating the motivations for why realizability in such a model constitutes a good notion of classical explainability for operational statistics. The ontological models framework captures the conventional manner of providing a {\em realist} account of an operational theory~\cite{schmid2020unscrambling}.  Demanding that a realist account should satisfy the principle of generalized noncontextuality can be motivated by a methodological principle for theory-construction due to Leibniz. This principle has proven its utility in physics through Einstein's use of it in his development of the special and general theories of relativity~\cite{Leibniz}.  Finally, we note that a generalized-noncontextual ontological representation is equivalent to a positive quasiprobability representation~\cite{negativity, SchmidGPT, Schmid2024structuretheorem}, and coincides with the natural notion of classical explainability in the framework of generalized probabilistic theories (GPTs~\cite{Hardy, GPT_Barrett}), namely, linear embeddability into the strictly classical (simplicial) GPT~\cite{SchmidGPT, Schmid2024structuretheorem}. 

This notion of classical explainability also has pragmatic advantages over some of its competitors. For one, it is noise-robust and is therefore testable in real experiments where the ideal of noiselessness is never realized~\cite{mazurek2016experimental}.  This is in contrast to the Kochen-Specker notion of noncontextuality~\cite{KS}, as argued in Ref.~\cite{mazurek2016experimental, Spekkens2005, kunjwal2018from, kunjwal2015kochen}.

It also has {\em universal applicability}: given a circuit with arbitrary causal structure, one can ask whether there exists a generalized-noncontextual ontological model that can reproduce the statistics generated by that circuit. In contrast, standard Bell inequalities~\cite{Bell} apply only to circuits having one specific causal structure. Due to its universal applicability, one can ask: for \textit{any} operational phenomenon, precisely which aspects of it fail to admit of a generalized-noncontextual ontological model and hence which aspects resist classical explanation? 
Results establishing facts about the boundary between what is classically explainable and what is not are now known for the phenomenology of: quantum computation~\cite{Schmid2022Stabilizer,shahandeh2021quantum}, state discrimination~\cite{schmid2018contextual,flatt2021contextual,mukherjee2021discriminating,Shin2021}, interference~\cite{Catani2023whyinterference,catani2022reply,catani2023aspects,giordani2023experimental}, compatibility~\cite{selby2023incompatibility,selby2023accessible,PhysRevResearch.2.013011}, uncertainty relations~\cite{catani2022nonclassical}, metrology~\cite{contextmetrology}, thermodynamics~\cite{contextmetrology,comar2024contextuality,lostaglio2018}, weak values~\cite{AWV, KLP19}, coherence~\cite{rossi2023contextuality,Wagner2024coherence,wagner2024inequalities}, quantum Darwinism~\cite{baldijao2021noncontextuality}, information processing and communication~\cite{POM,RAC,RAC2,Saha_2019,Yadavalli2020,PhysRevLett.119.220402,fonseca2024robustness}, cloning~\cite{cloningcontext}, broadcasting~\cite{jokinen2024nobroadcasting}, pre- and post-selection paradoxes~\cite{PP1}, randomness certification~\cite{Roch2021}, psi-epistemicity~\cite{Leifer}, macroscopic realism~\cite{Schmid2024reviewreformulation}, and Bell~\cite{Wright2023invertible,schmid2020unscrambling} and Kochen-Specker scenarios~\cite{operationalks,kunjwal2018from,Kunjwal16,Kunjwal19,Kunjwal20,specker,Gonda2018almostquantum}. 
Aiding this research is a growing set of powerful techniques for certifying the failure to admit of a noncontextual ontological model~\cite{Schmid2018, Chaturvedi2021characterising, schmid2024PTM, Schmid2024structuretheorem, SchmidGPT,catani2024resource, PuseydelRio,schmid2024shadows,schmid2024addressing,mazurek2016experimental,grabowecky2021experimentally}.

However, in all work to date, the existence of a generalized-noncontextual ontological model provides a notion of classical explainability {\em for a closed circuit}, since only such a circuit can generate operational statistics, and it is on the basis of these statistics that classical explainability is assessed. In contrast, no notion of classicality has been given for a single process within a circuit, such as a single quantum state, measurement, or channel. In this work, we propose a way of doing so.

The intuitive idea is straightforward: a given quantum process is nonclassical if and only if there exists a quantum circuit that leverages it \emph{nontrivially} to generate statistics that are inconsistent with a generalized-noncontextual ontological model. In Section~\ref{basicdefns}, we formalize this definition. In particular, we explain what it means to ``leverage in a nontrivial way'' the process, and then we simplify the definition considerably, by showing that rather than considering \emph{all} quantum circuits containing the process, one needs only consider circuits where product states are prepared on its inputs and product measurements are implemented on its outputs.

We then undertake the task of characterizing which quantum preparations, measurements, and transformations are nonclassical, and which are classical. (This paper focuses only on quantum theory, although our approach could be applied to an arbitrary generalized probabilistic theory.)

We prove structure theorems characterizing the structure of any classical process in terms of the existence of a particular kind of product dual-frame representation. We then use this to prove a number of interesting results, including the following:
\begin{enumerate}
    \item A single state on a unipartite system is always classical. 
    \item \yujie Any multi-state consisting of \blk a set of distinct {\em pure} states on a unipartite 
    system is nonclassical if and only if the states are 
    linearly dependent in the operator space (and therefore any \yujie  \textit{minimal} \blk tomographically complete set of pure states is classical).
    \item Any rank-1 POVM is nonclassical if and only if its POVM elements are linearly dependent in the operator space (and therefore any symmetric informationally complete (SIC)-POVM is classical). 
    \item The set of classical multi-states (respectively measurements) is nonconvex.
    \item  Every steerable assemblage is nonclassical, as are certain unsteerable assemblages. 
    \item \yujie Every multi-measurement consisting of a set of incompatible measurements \blk is nonclassical, as are certain sets of compatible measurements (including certain singleton sets, i.e., single measurements).
    \item Every entangled state is nonclassical, as are certain separable states.  
    \item For two-qubit systems, a state is nonclassical if and only if it is entangled. 
    \item All non-entanglement-breaking channels are nonclassical, as are certain entanglement-breaking channels.
\end{enumerate}
\begin{figure*}
    \centering
    \includegraphics[width=0.95\linewidth]{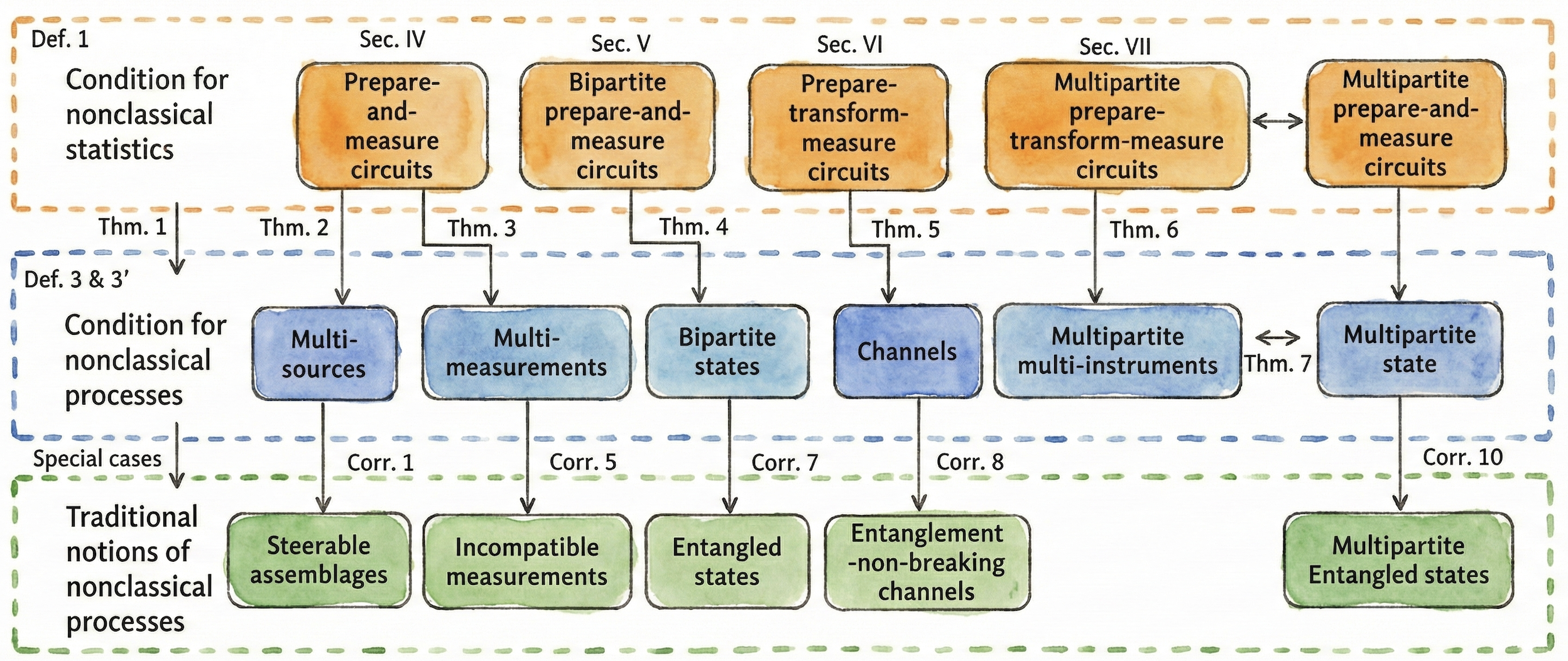}
    \caption{\yujie Schematic showing the main results of the paper. Yellow: in each section, we study the class of quantum circuits built from the type of target process of interest together with some dual processes (Definition~\ref{def:dualprocess}) that close the circuit; the conditions for the resulting statistics to be taken to be classically explainable is realizability by a noncontextual ontological model (Definition~\ref{defn:classical}). Blue: a process is then defined to be classical if the resulting statistics of the set of circuits obtained by contracting it with {\em all possible} dual processes are classically explainable (Definition~\ref{def:og_process} and Definition~\ref{maindefn}); the classical-nonclassical divide for processes of each type is then characterized by corresponding structure theorems. Green: traditional notions of nonclassicality are recovered as special cases, i.e., sufficient conditions for the proposed notion of nonclassicality.}
    \label{fig:outline}
\end{figure*}

\yujie Note that ``multi-state'' and ``multi-measurement'' are terms we use in this paper to denote a type of single process associated to a set of states or a set of measurements, respectively (see Sec.~\ref{sec:preliminary} ). \blk

In addition, we show that the task of deciding nonclassicality for a quantum process \emph{of arbitrary type} (including channels and instruments) can be reduced to the corresponding task for an associated multipartite state. 

\yujie
Together, these results provide a useful characterization of the nonclassicality of arbitrary processes. The essential results of the paper are summarized in \cref{fig:outline} below. 
\blk

In a companion paper~\cite{zhang2024parellel}, we provide further tools based on semidefinite programming that help one to witness and quantify the nonclassicality of single states and measurements. We do so both in a theory-dependent way (analogous to entanglement witnesses) and in a theory-independent way (using noncontextuality inequalities). 

\yujie
In another companion paper~\cite{zhang2025forthcoming}, we focus on bipartite states and investigate how our notion of nonclassicality relates to entanglement, steering, and nonlocality. We provide tools to certify these quantum resources via a hierarchy of noncontextuality inequalities. Moreover, we present an experimental demonstration of this novel entanglement-certification approach and contrast it with existing methods. 

We end with two points about terminology. 
\blk

First, we wish to head off a possible point of confusion: the term ``quantum'' is sometimes taken as a synonym for nonclassical, and other times taken as an enveloping term that includes classical and nonclassical as special cases.~\footnote{This is analogous to how the term ``rectangular'' is sometimes used to mean non-square, and other times taken in the more inclusive sense that includes squares as a special case.} In this work, we use the term ``quantum" in the more inclusive sense.  This is because we are studying the question of which subsets of phenomena described by quantum theory are {\em classically explainable} and which are not.  We use the term ``nonclassical'' or ``intrinsically quantum'' to signal the impossibility of classical explanation.~\footnote{The analogue for the geometric example would be to use the term ``rectangular'' in the more inclusive sense (as the mathematicians do) and to use the terms ``nonsquare'' (or even ``intrinsically rectangular'') to describe the property of failing to be square.}

\yujie Second, the term ``classical'' is quite overloaded in the literature. Although we shall continue to use the term throughout this article (where its meaning should be clear from context and from our explicit definitions), we suggest the term {\em Leibniz-classical} be used by anyone who seeks a more specific term (e.g., if referring to our notion in a broader setting where it might not be clear from context to what the word ``classical'' refers).
\blk

\section{Preliminaries}

\subsection{Types of quantum processes}
\label{sec:preliminary}

A generic {\em process} in quantum theory is simply a transformation with various input and output systems, which may be quantum or classical. A classical system that is an input is generally called a \emph{setting variable} and one that is an output is generally called an \emph{outcome variable}. 
The most common examples of processes in quantum theory are states, effects, measurements, channels, and instruments. However, it is often useful to consider more general processes, including those listed in the \cref{tab:my_label}.

\begin{table}[h]
\begin{ruledtabular}
    \centering
\begin{tabular}{ccc}
 Process Type & Quantum Rep'n & Ontological Rep'n     \\
 \hline
        State & $\rho^A$ &  $p(\lambda_A|\rho^A)$\\
 \hline 
        Effect & $M^{A}$\blk &  $p(M^A|\lambda_A)$\\
 \hline 
 
        Channel &$\mc{E}^{B|A}$  &  $p(\lambda_B|\lambda_A)$ \\
 \hline
        Source & $ \{p(a)\rho_a^A\}_a$ &  $\{p(a\lambda_A)\}_a$ \\
 \hline
        Measurement & $\{M_{b}^{A}\}_b$  &  $\{p(b|\lambda_A)\}_b$ \\
\hline 
        Instrument & 
         $ \{\mc{F}_{c}^{B|A}\}_c$
        & $\{p(c\lambda_B|\lambda_A)\}_c$\\
\hline
        Multi-state & $\{\rho_{\cdot|x}^A\}_x$ &  $\{p(\lambda_A|x)\}_x$ \\
\hline
 Multi-channel & $ \{\mc{E}^{B|A}_{\cdot|z}\}_z$ &  $\{p(\lambda_B|z\lambda_A)\}_z$  \\
 \hline       
        Multi-source &  $\{\{p(a|x)\rho_{a|x}^A\}_{a}\}_x$ &  $\{\{p(a\lambda_A|x)\}_a\}_x$ \\
 \hline       
        Multi-measurement & $\{\{M_{b|y}^{A}\}_b\}_y$ & $\{\{p(b|y\lambda_A)\}_b\}_{y}$ \\
 \hline
        Multi-instrument & 
         $\{\{\mc{F}_{c|z}^{B|A}\}_{c}\}_z$ \blk
        & $\{\{p(c\lambda_B|z\lambda_A)\}_{c}\}_z$  \\
\hline
Multipartite state & $\rho^{A_1\cdots A_n}$ & $p(\lambda_1,\cdots,\lambda_n)$
    \end{tabular}
    \caption{ Examples of various types of process, their representations in quantum theory, and their representations as substochastic matrices in an ontological model. Throughout this paper, we denote classical outcome variables by $a,b,c$ and setting variables by $x,y,z$, \yujie while quantum inputs and quantum outputs are denoted by the capital letters $A, B, C$. \blk 
    }
    \label{tab:my_label}
\end{ruledtabular}
\end{table}\par 

Most of these processes will be familiar to the quantum information theorist. However, the terminology of sources, multi-sources, multi-measurements, and multi-channels is relatively new, so we explain these terms below. These processes are easily understood via a diagrammatic representation, as in Fig.~\ref{processes}.
\begin{figure}[htb!]
\centering
\includegraphics[width=0.5\textwidth]{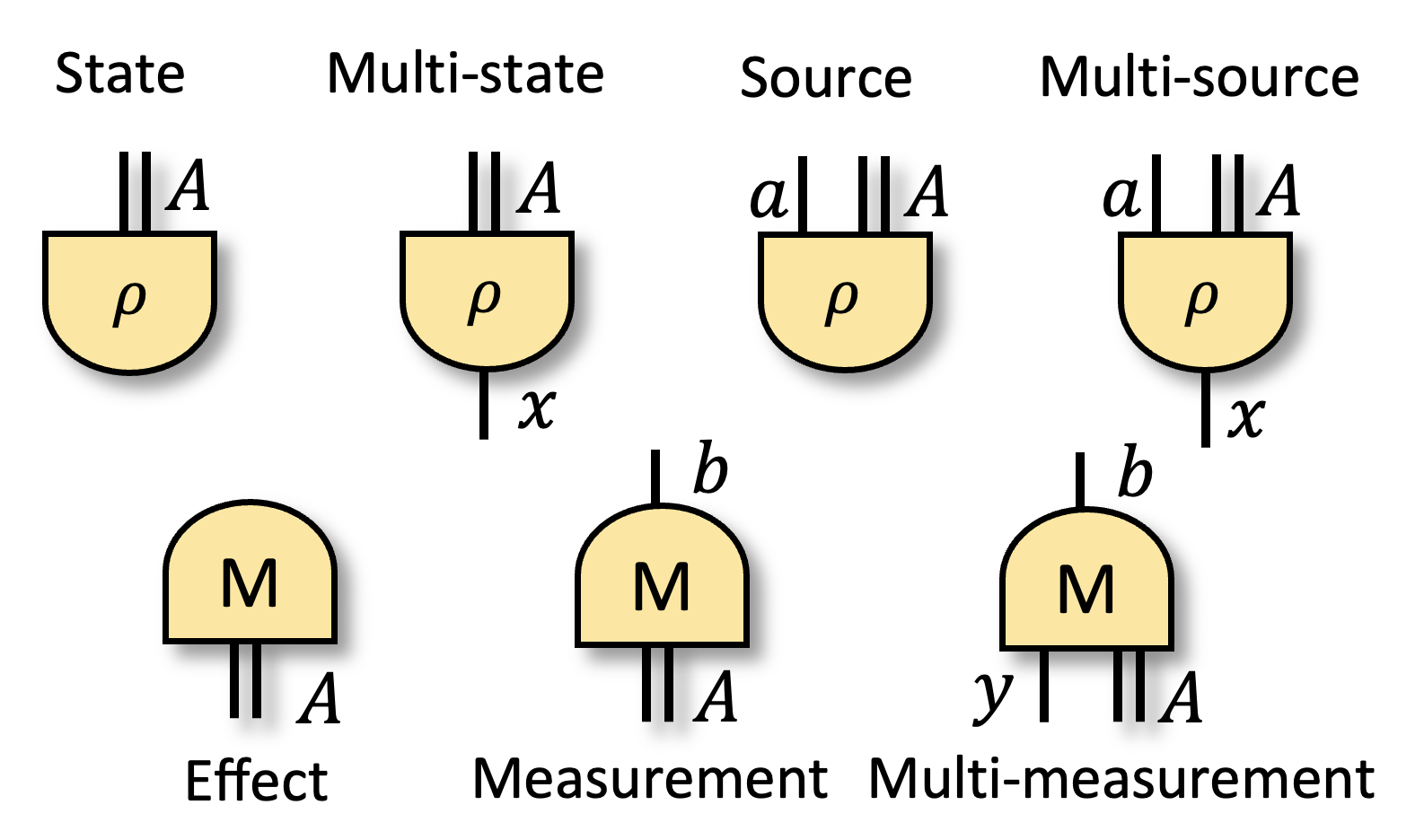}
\caption{Examples of processes with a single quantum output $A$ (top) and with a single quantum input $A$ (bottom) on a single quantum system. Throughout this paper, quantum systems are represented by double wires, whereas classical systems are represented by single wires.}
\label{processes}
\end{figure}

The {\em type} of a process is simply a specification of its input and output systems, as well as a specification of whether each of these is classical or quantum. Diagrammatically, we will depict quantum systems by double wires and classical systems by single wires (see, e.g., Fig.~\ref{processes}). In this approach, each individual input and output of a process (depicted by a wire in the diagram) is considered a fundamental system, having no subsystem structure, so that a composite system is always described by a collection of inputs or a collection of outputs.

Each quantum input and output for a given process is associated with a Hilbert space.  For a quantum system $A$, we denote this Hilbert space by $\mc{H}^A$ with superscript $A$. We will denote the set of all linear operators on $\mc{H}$ as $\mc{L}(\mc{H})$, the set of all Hermitian operators on $\mc{H}$ as $\text{Herm}(\mc{H})$, the set of all density operators (i.e., all $\rho\in \text{Herm}(\mc{H})$ such that $\rho\ge 0$ and $\tr[\rho]=1$) as $\mc{S}(\mc{H})$, and 
the set of all effects (i.e., all $M \in \text{Herm}(\mc{H})$ such that $0\le M\le \mbb{1}$) as $\mc{M}(\mc{H})$.\footnote{\yujie Formally, an effect on system $A$ is a map $\mc{M}^{\cdot|A}(X^A)\coloneqq\tr[M^AX^A]$, i.e., a channel with no output. Nonetheless, we will often simply refer to the operator $M^A$ itself as an effect.\blk}  

It is useful to classify the types of processes by their {\em quantum} inputs and outputs, such that the only differences among the processes in a given class are in their {\em classical} outcome and setting variables\blk.  For instance,  processes with a single quantum output and no quantum input form a class. We will refer to these as {\em processes of the preparation variety}.  In a similar fashion, the class of processes with a single quantum input and no quantum output will be termed {\em processes of the measurement variety}, and the class of processes with both a quantum input and a quantum output will be termed {\em processes of the transformation variety}.  We will sometimes qualify the terms describing these varieties to stipulate the subsystem structure of the quantum inputs or the quantum outputs; for example, we will speak of processes of the {\em bipartite} preparation variety.

We begin with processes of the preparation variety having a single quantum output $A$. Four types of processes in this class are depicted in the top row of Fig.~\ref{processes}. A single deterministic preparation on a quantum system $A$ is represented by a density operator (or state) $\rho\in \mc{S}(\mc{H}^A)$. A process with a classical input $x$ (a setting variable) that determines which of a set of preparations is implemented on $A$ can be represented by a set of states $\{\rho_{\cdot|x}^A\}_x$.  We refer to such a process as a {\em multi-state} process. Next, consider a process in which a variable $a$ is sampled at random from a distribution $p(a)$ and then the preparation represented by the state $\rho_a^A$ is implemented on $A$, and the variable $a$ becomes a classical output of the process (an outcome variable) that flags which state was prepared. Such a process is represented by a set of unnormalized states $\{p({a})\rho_a^A\}_a$, and we refer to it as a \emph{source} (it is sometimes termed an \emph{ensemble}).  Finally, a process with a classical input $x$ (a setting variable) that determines which of a set of sources is implemented can be represented by a set of sets of unnormalized states $\{\{p(a|x)\rho_{a|x}^A\}_{a}\}_x$.  We refer to such a process as a \emph{multi-source}. We will sometimes write $\msf{P}\coloneqq\{\{p(a|x)\rho_{a|x}^A\}_{a}\}_x$.  Note that when such processes arise in steering scenarios, they are also termed assemblages.

Next, we turn to processes of the measurement variety, having a single quantum input $A$. Three types of processes in this class are depicted in the bottom row of Fig.~\ref{processes}.  (Note that the term \emph{measurement} is used here to refer specifically to processes with no quantum output, such that there is only a {\em retrodictive aspect} to be characterized.  Processes with a classical outcome that have {\em both} a quantum input and a quantum output will here be termed {\em instruments} and are discussed below.) 
In quantum theory, a measurement with outcome labeled by $b$ is represented by a positive-operator valued measure (POVM) $\{M_{b}^{A}\}_b$ where each $M_{b}^{A}\in\mc{M}(\mc{H}^A)$.  The process that describes implementing this measurement and obtaining the outcome $b$ (so that it can be represented as a process with {\em no} classical outcome) is termed an {\em effect} and is represented by the single operator $M^{A}\in\mc{M}(\mc{H}^A)$. 
We use the term \emph{multi-measurement} to refer to a process with a classical input $y$ (a setting variable) that determines which of a set of measurements is implemented.  It can be represented by a set of POVMs, $\msf{M}\coloneqq\{\{M_{b|y}^A\}_{b}\}_y$.  Note that the notion of a measurement is subsumed as a special case of a multi-measurement, where the setting variable is trivial. 

Finally, we consider processes of the transformation variety.  A process that has a quantum input $A$ and a quantum output $B$ but that has neither a classical input nor a classical output is termed a {\em channel} and is represented by a map $\mc{E}^{B|A}: \mc{L}(\mc{H}^A)\mapsto \mc{L}(\mc{H}^B)$ that is completely positive and trace-preserving~\cite{nielsen2001quantum, Schmidcausal}. A process of the transformation variety that also has a classical input (a setting variable $z$) describes a set of channels $\{\mc{E}^{B|A}_{\cdot|z}\}_z$ and will be termed a {\em multi-channel}.  
A process of the transformation variety that has a classical output (an outcome variable) is termed an {\em instrument}.  It can be conceived of as a \textit{nondestructive} measurement procedure, so that the outcome is informative not only about the quantum input (the retrodictive aspect) but also about the quantum output and the correlations between them (the transformative aspect).  An instrument from $\mc{L}(\mc{H}^A)$ to $\mc{L}(\mc{H}^B)$ with classical outcome $c$ is represented by a set of trace-nonincreasing channels, denoted $\{\mc{F}_{c}^{B|A}\}_c$, that sum to a trace-preserving channel, i.e., $\sum_c \mc{F}_{c}^{B|A}$ defines a CPTP map. We use the term \emph{multi-instrument} to refer to a process with a classical input $z$ (a setting variable) that determines which of a set of instruments is implemented. We denote such a set of instruments (multi-instrument) by $\{ \{ \mc{F}_{c|z}^{B|A}\}_c \}_z$.

We note that any classical inputs (setting variables) and classical outputs (outcome variables) of a process can be reconceptualized, respectively, as quantum inputs and quantum outputs that are dephased in a particular basis. For example, a multi-measurement can be viewed as a channel that has two quantum inputs, one of which is dephased (corresponding to the setting variable), and that has a single dephased quantum output (corresponding to the outcome variable). As we will show in Section~\ref{consistency}, all of our results are invariant under this reconceptualization of the type of a process.

\subsection{Composition of quantum processes}

\begin{table}[h]
\begin{ruledtabular}
    \centering
    \begin{tabular}{ccc}
     Composition & Quantum Rep'n & Ontological Rep'n  \\ \hline \noalign{\vskip 1pt}
\yujie {Parallel}&  \yujie  {$\mc{E}^{B_1|A_1}\otimes \mc{D}^{B_2|A_2}$}  &  \yujie  {$p(\lambda_{B_1}|\lambda_{A_1})p(\lambda_{B_2}|\lambda_{A_2})$} \\
\hline \noalign{\vskip 1pt}
\yujie {Sequential }&  \yujie   {$\mc{F}^{C|B}\circ\mc{E}^{B|A}$}  &  \yujie  {$\sum_{\lambda_B}p(\lambda_C|\lambda_B)p(\lambda_B|\lambda_A) $} 
    \end{tabular}
    \caption{ \yujie Different kinds of composition (parallel and sequential) and how they are represented in quantum theory and in an ontological model thereof. \blk}
    \label{tab:placeholder}

\end{ruledtabular}
\end{table}
\yujie 
Quantum circuits are formed via the composition of quantum processes, such as states, channels, and measurements. To discuss composition, it suffices to focus on channels, since states and effects can be treated as special cases of channels when discussing compositions, specifically:
\begin{itemize}
    \item  A state $\rho^A$ defines a channel from the trivial system to $A$;
    \item An effect $M^A$ defines a channel $\mc{M}^{\cdot|A}(X^A)\coloneqq\tr[M^AX^A]$ from system $A$ to the trivial system.
\end{itemize}
\par 
We denote the operation of implementing processes on distinct systems, i.e., parallel composition of the processes, by $\otimes$.
Given channels $\mc{E}^{B_1|A_1}:\mc L(\mc H^{A_1}) \to \mc L(\mc H^{B_1})$ and
$\mc{D}^{B_2|A_2}:\mc L(\mc H^{A_2}) \to \mc L(\mc H^{B_2})$, their parallel composition is a bipartite channel $
\mc{E}^{B_1|A_1}\otimes \mc{D}^{B_2|A_2}:\mc L(\mc H^{A_1} \otimes \mc H^{A_2})\to \mc L(\mc H^{B_1} \otimes \mc H^{B_2})$ (See table~\ref{tab:placeholder}). As a special case, given states $\rho^{A}\in\mc S(\mc H^{A})$ and $\sigma^{B}\in\mc S(\mc H^{B})$, their parallel composition defines a bipartite state $\rho^{A}\otimes\sigma^{B}\in\mc S(\mc H^{A} \otimes \mc H^{B}).$ 

We denote the operation of wiring the output of one process to the input of the other process (where the system types match), i.e., sequential composition of processes,  by $\circ$. Given channels $\mc{E}^{B|A}$: $\mc{L}(\mc{H}^A)\mapsto \mc{L}(\mc{H}^B)$ and $\mc{F}^{C|B}$: $\mc{L}(\mc{H}^B)\mapsto \mc{L}(\mc{H}^C)$, their sequential composition defines a new channel  $\mc{F}^{C|B}\circ\mc{E}^{B|A}$: $\mc{L}(\mc{H}^A)\mapsto \mc{L}(\mc{H}^C)$ (See table~\ref{tab:placeholder}). As a special case, the application of a channel $\mc{E}^{B|A}$: $\mc{L}(\mc{H}^A)\mapsto \mc{L}(\mc{H}^B)$ to a state $\rho^A\in \mc{S}(\mc{H}^A)$ defines a new state  $\mc{E}^{B|A}\circ \rho^A\coloneqq \mc{E}^{B|A}(\rho^A)\in \mc{S}(\mc{H}^B)$. We will also use $\circ$ to denote more general types of ``wiring together'' of processes.  For instance, a channel $\mc{E}^{B|A}$ from $A$ to $B$ can be slotted into a comb~\cite{Chiribella2008} $\mc{C}^{A|B}$ with output $A$ and input $B$ (as in Fig.~\ref{duals}(d) below), and we denote the resulting closed circuit by $\mc{C}^{A|B} \circ \mc{E}^{B|A}$.

Discarding (trace and partial trace) can be regarded as composition with the deterministic effect ${u}^{\cdot|A}: A\to I$, where $I$ is the trivial system. Consequently, the partial trace is a special case of sequential composition. For a bipartite state $\rho^{A_1A_2}$, one has ${u}^{\cdot|A}\circ \rho^{A_1A_2}=\tr_{A_1}[\rho^{A_1A_2}]$. 

Consider a quantum circuit built up from the parallel and sequential composition of a set of quantum processes. Quantum theory provides a prescription for computing the operational statistics generated by any such circuit. Here and throughout this article, we use the term {\em operational statistics} (or simply \emph{statistics}) to refer to the conditional probability distribution over the outcome variables in the circuit given the setting variables (see Fig.~\ref{PTM} and Eq.~\eqref{eq: opstatistics} for an example).  Moreover, we refer to any 
circuit that is the sequential composition of a process of the preparation variety with a process of the measurement variety as a {\em prepare-measure circuit/scenario}. We refer to any circuit that is the sequential composition of a process of the preparation variety with a process of the transformation variety, followed by a process of the measurement variety, as a {\em prepare-transform-measure circuit/scenario}. The same terminological convention applies to circuits on multipartite rather than unipartite systems. 
\blk

\subsection{Classical explainability of the statistics generated by a set of quantum circuits} \label{classicalexplainability}

As noted in the introduction, we take {\em classical explainability of operational statistics} to mean the possibility of reproducing these operational statistics in a generalized-noncontextual ontological model~\cite{Spekkens2005, Schmid2024structuretheorem}. \blk

Here, we wish to conceptualize quantum theory as an operational theory wherein one has quotiented the set of laboratory procedures of a given type with respect to an operational equivalence relation~\cite{chiribella2010probabilistic}. (This is the standard conceptualization in the field of quantum information, where a state is a density operator, a channel is a completely-positive trace-preserving map, and a measurement outcome is a POVM element.) We digress briefly to explain this.  In an {\em un}quotiented operational theory, a given procedure (preparation, measurement, transformation, etc.) is conceptualized as a list of laboratory instructions~\cite{Spekkens2005}, so that the theoretical description includes details that can be irrelevant to the operational statistics obtained in circuits within which the procedure is embedded.  Despite their irrelevance to the operational statistics, such details---which are termed {\em contexts}---might still be relevant to the ontological representation of the procedure in a \emph{contextual} ontological model~\cite{Spekkens2005}.  Sameness of operational statistics defines an equivalence relation over procedures.  A quotiented operational theory is one that only includes operational equivalence classes of procedures, and does not include the details (i.e., contexts) that distinguish the procedures within a given class. For example, an equivalence class of preparation procedures in quantum theory corresponds to a density operator.

Conceptualizing quantum theory as a quotiented operational theory introduces a subtlety regarding how to understand the notion of classical explainability, namely, that the definition of a generalized-noncontextual ontological model was originally given only for unquotiented operational theories. Because a quotiented operational theory has quotiented away the notion of context that would be leveraged in a context-dependent ontological representation, it becomes a category mistake to talk about context-dependence, and therefore it becomes inappropriate to define classical explainability of such a theory as the possibility of reproducing its operational statistics in terms of a generalized-noncontextual ontological model.

What is required is a translation of the notion of classical explainability of unquotiented operational theories described above into the language of quotiented operational theories. Such a translation was presented in Ref.~\cite{SchmidGPT, Schmid2024structuretheorem}.  It is this: a quotiented operational theory is classically explainable if and only if its predictions can be reproduced by an ontological theory that is obtained from the quotiented operational theory by a linear and diagram-preserving ontological representation map, where each operational process is mapped to a substochastic map.

Consequently, this is our notion of classical explainability:
\begin{definition}\label{defn:classical}
The statistics generated by a set of quantum circuits of a given structure are said to be classically explainable if and only if they admit of a linear and diagram-preserving ontological representation---that is, if and only if 
there is a linear and diagram-preserving map from the quantum systems appearing in the set of circuits to random variables, and from the quantum processes appearing in the set of circuits to substochastic matrices such that the map preserves the statistical predictions~\cite{Schmid2024structuretheorem}.
\end{definition}

Note that we will sometimes simply say that the set of circuits
itself (rather than its statistics) is classically explainable, and we will often consider situations where the set of circuits in question contains only one element, in which case we will speak of the classical explainability of a circuit rather than of a set of circuits (an example of this follows). This notion of classical explainability of a set of circuits is equivalent to 
the notion defined by the possibility of a generalized-noncontextual ontological model of an unquotiented version of the set of circuits~\cite{Schmid2024structuretheorem}, the motivations for which
were discussed in the introduction. 

This definition relies on an abstract notion of a linear and diagram-preserving (or functorial~\cite{mellies2006functorial}) map from quantum processes in the circuit (s) to substochastic matrices over random variables. The advantage of this definition is that it is simple to state, simple to visualize (see Figure~\ref{PTM}), and fully general, in the sense that it applies to arbitrary circuits. The disadvantage is that writing it down more formally and explicitly requires a diagrammatic formalism like that presented in Ref.~\cite{Schmid2024structuretheorem}. As many readers may not be familiar with this diagrammatic notation, we will not repeat the general definitions here, but will instead only give the explicit form for specific examples (for which standard algebraic notations suffice). 

As a first illustrative example, imagine one implements an experiment with a prepare-transform-measure circuit of the sort shown in Figure~\ref{PTM}, consisting of a multi-state $\{\rho_{\cdot|x}^A\}_x$, multi-channel $\{\mc{E}_{\cdot|z}^{B|A}\}_z$, and measurement $\{M_{b}^{B}\}_b$.

\begin{figure}[htb!]
\centering
\includegraphics[width=0.4\textwidth]{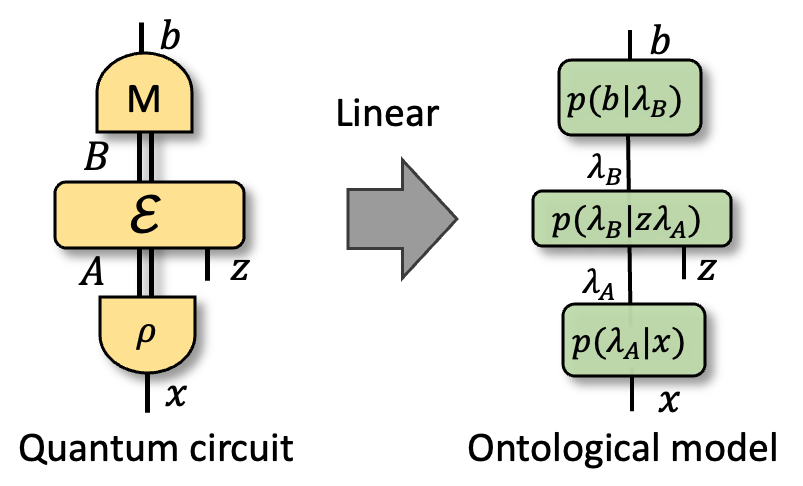}
\caption{A prepare-transform-measure circuit (left) and an ontological representation of it (right). The circuit is classically explainable if and only if there exists a linear and diagram-preserving ontological representation taking each quantum system to a classical random variable and each quantum process to a substochastic matrix. The map must, moreover, respect the circuit's topology and reproduce the quantum predictions.}
\label{PTM}
\end{figure}

The operational statistics one can generate within such a circuit are 
\begin{equation}
    p(b|xz)={\rm Tr}[M_{b}^{B} \mc{E}_{\cdot|z}^{B|A}(\rho_{\cdot|x}^A)]. 
\label{eq: opstatistics}
\end{equation}

A diagram-preserving ontological representation for such a circuit provides a realist explanation of these statistics in terms of underlying spaces of ontic states for systems $A$ and  $B$ that encapsulate the fundamental properties of the system in question and that ultimately determine the outcomes of measurements.
In particular, it associates with each system $S$ in the circuit an ontic state space $\Lambda_S$ stipulating the set of all possible ontic states $\lambda_S$. (We assume this set is finite for simplicity of the presentation.) Each state $\rho_{\cdot|x}^A$ is associated with a probability distribution over $\lambda_A$, namely $p(\lambda_A|x)$, where $p(\lambda_A|x) \geq 0$ and $\sum_{\lambda_A}p(\lambda_A|x)= 1.$
Each channel $\mc{E}_{\cdot|z}^{B|A}$ is represented by a stochastic matrix, the elements of which are conditional probabilities
$p(\lambda_B|z\lambda_A)$, giving the probability of transitioning from ontic state $\lambda_A$ to $\lambda_B$; this stochastic matrix
must satisfy
$p(\lambda_B|z\lambda_A) \geq 0$ for all $(\lambda_A,\lambda_B)\in \Lambda_A\times\Lambda_B$ and $\sum_{\lambda_B}p(\lambda_B|z\lambda_A)= 1$ for all $\lambda_A \in \Lambda_A$.
Each measurement effect $ M_{b}^{B}$ is represented by a response function $p(b|\lambda_B)$, describing the probability of obtaining outcome $b$ given the ontic state $\lambda_B$; each response function satisfies
$1\ge p(b|\lambda_B) \geq 0$, and for a set of effects $\{ M_{b}^{B}\}_b$ summing to the identity (and so forming a valid measurement), one must have
$\sum_b p(b|\lambda_B) = 1$ for all $\lambda_B\in \Lambda_B$. 

The requirement that  the ontological model  reproduces  the empirical predictions of quantum theory for this prepare–transform–measure scenario of \cref{PTM} implies that \blk 
\begin{align}
  p(b|xz)&={\rm Tr}[M_{b}^{B} \mc{E}_{\cdot|z}^{B|A}(\rho_{\cdot|x}^A)] \notag\\
  &= \sum_{\lambda_A\lambda_B} p(b|\lambda_B)  p(\lambda_B|z\lambda_A) p(\lambda_A|x).
\end{align}

Thus far, we have described the implications of the ontological representation being diagram-preserving.  We now consider the additional conditions implied by the linearity of the representation. 
First of all, there are generic constraints among the ontological representations of the states in the circuit, since the mapping from the set of states $\{\rho_{\cdot|x}^A\}_x$ to their ontological representations $\{p(\lambda_A|x)\}_x$ must be linear. So if an identity such as
\begin{equation}\label{eq:opidentity}
\sum_{x} \alpha_{x} \rho_{\cdot|x}^A=\mbb{0}^A 
\end{equation}
holds among the states (where here and elsewhere in the article, $\mbb{0}$ denotes the zero operator or zero channel and $\{\alpha_x\}$ is a set of real numbers), the corresponding identity must hold among the ontological representations of these states, namely,
\begin{equation}
   \sum_{x} \alpha_{x} p(\lambda_A|x)=0,\quad \forall~\lambda_A  
   \label{eq:op-PTM-p}
\end{equation}
We will call constraints of the type \cref{eq:opidentity} {\em operational identities}, and constraints of the type \cref{eq:op-PTM-p} {\em ontological identities}. 

Similarly, any operational identity among transformations, such as
\begin{equation}
\sum_{z}{\alpha}_{z} \mc{E}_{\cdot|z}^{B|A} =\mbb{0}^{B|A},\label{eq:op-PTM-t}
\end{equation}
implies the corresponding ontological identity 
\begin{align}
    \sum_{z} \alpha_{z}  p(\lambda_B|z\lambda_A)=0,\quad \forall~\lambda_A, \lambda_B,  
\end{align}
and any operational identity
\begin{equation}
\sum_{b} \alpha_{b} M_{b}^{B}=\mbb{0}^{B} \label{eq:op-PTM-m}
\end{equation}
among effects implies the corresponding ontological identity 
\begin{align}
    \sum_{b} \alpha_{b} p(b|\lambda_B)=0,\quad \forall~\lambda_B.
\end{align}
\begin{figure}[htb!]
\centering
\includegraphics[width=0.5\textwidth]{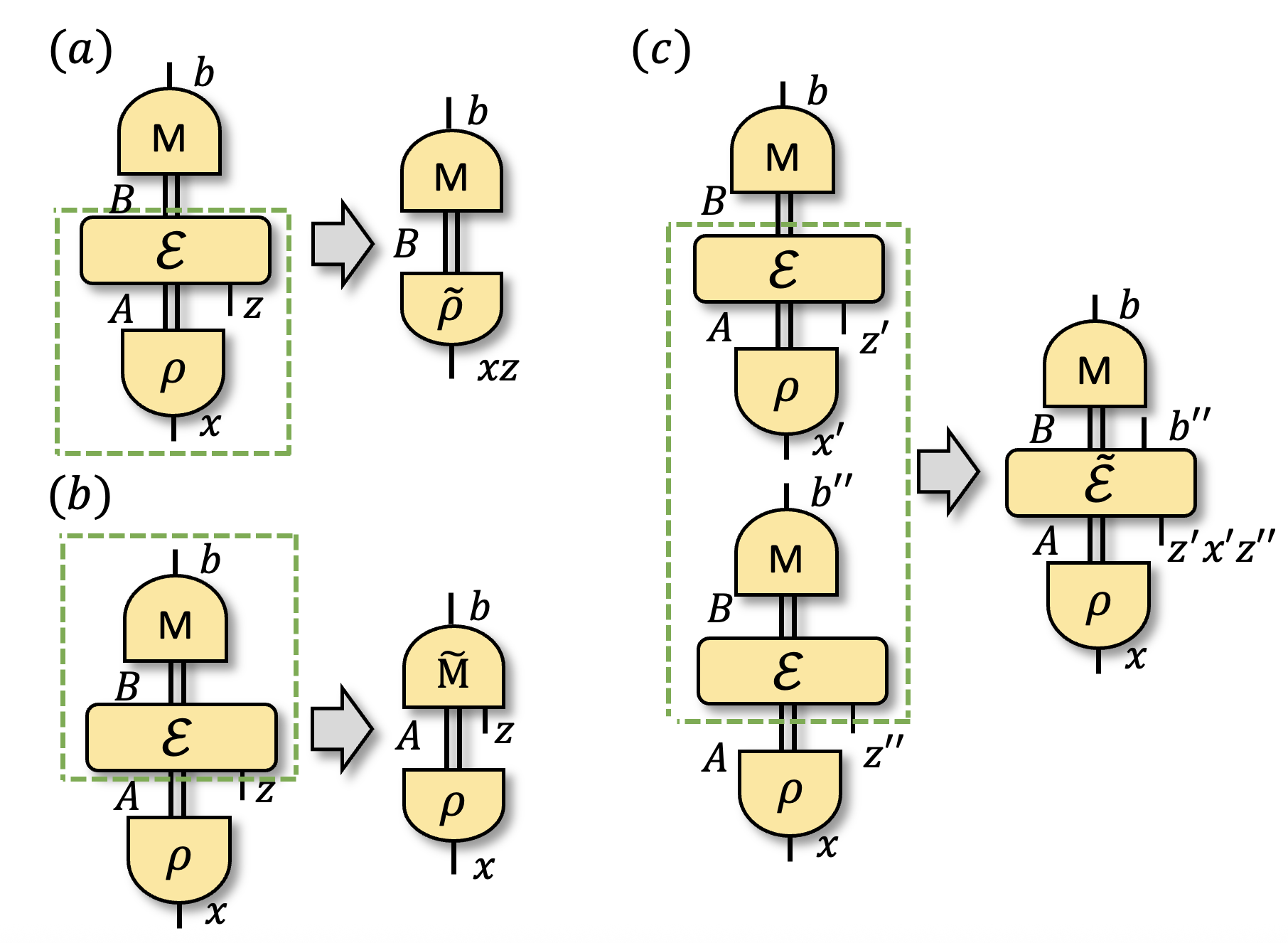}
\caption{ \yujie In the  prepare-transform-measure circuit, one can define different composite processes. (a) Boxing the multi-channel and the multi-state defines a new multi-state on $\mc{H}^B$, $\tilde{\rho}_{\cdot|zx}^B\coloneqq\mc{E}_{\cdot|z}^{B|A}(\rho_{\cdot|x}^A)$; (b) Boxing the multi-channel and the measurement defines a multi-measurement on $\mc{H}^A$, $\tilde{M}_{b|z}^A\coloneqq[\mc{E}^{B|A}]^{\dagger }_{\cdot|z}(M_{b}^{B})$; (c) More subtly, boxing the (composite) multi-state $\{\tilde{\rho}_{\cdot|zx}^B\}_{zx}$ with the (composite) multi-measurement $\{\{\tilde{M}_{b|z}^A\}_b\}_z$ defines a measure-and-prepare multi-instrument from $\mc{H}^A$ to $\mc{H}^B$ with $\tilde{\mc{E}}^{B|A}_{b''|z'x'z''}(\cdot)\coloneqq\tilde{\rho}_{\cdot|z'x'}^B\tr[\tilde{M}_{b''|z''}^A(\cdot)]$.  \blk  
}
\label{fig:PTMcomposition}
\end{figure}

\yujie 
Linearity of an ontological representation, moreover, imposes constraints on how the {\em compositions} of different processes are represented.

\yujie 
Consider the composition of the multi-state and multi-channel from Fig.~\ref{PTM}, yielding a new multi-state $\{\tilde{\rho}_{\cdot|zx}^B\}_{zx}$, depicted in \cref{fig:PTMcomposition}(a), and consider the composition of the measurement and multi-channel from Fig.~\ref{PTM}, yielding a new multi-measurement $\{\{\tilde{M}_{b|z}^A\}_{b}\}_z$, depicted in \cref{fig:PTMcomposition}(b). The corresponding ontological representations of these compositions are
\begin{subequations}
\begin{align}
\tilde p(\lambda_B|xz)\coloneqq\sum_{\lambda_A}p(\lambda_B|z\lambda_A) p(\lambda_A|x),\\
\tilde p(b|z\lambda_A)\coloneqq\sum_{\lambda_B} p(b|\lambda_B)p(\lambda_B|z\lambda_A).
\end{align}
\end{subequations}

Suppose that the operational identities that the new processes satisfy are 
\begin{subequations}
\begin{align}
&\sum_{zx}{\alpha}_{zx}\tilde{\rho}_{\cdot|zx}^B =\mbb{0}^B,  \\ 
 &\sum_{bz}{\alpha}_{bz} \tilde{M}_{b|z}^A =\mbb{0}^{A},
\end{align}
 \label{eq:op-PTM-s}
\end{subequations}
linearity then implies that the same identities must also hold for their ontological representations, namely, 
\begin{subequations}
\begin{align}
&\sum_{zx}{\alpha}_{zx}\tilde p(\lambda_B|xz)=0,\quad \forall~\lambda_B,  \\
&\sum_{bz}{\alpha}_{bz}\tilde p(b|z\lambda_A)=0, \quad \forall~\lambda_A.  \end{align}
\end{subequations}

A more subtle kind of operational identity that one must also consider\footnote{ \yujie The constraints arising from this kind of operational identity must be included in order to ensure that the ontological model explains not only the way the processes {\em are} used in one's circuit, but also any possible way those processes {\em could} be used. Alternatively, one can view their inclusion as ensuring that the model explains the statistics arising in {\em pairs} of runs of the experiment. Note that running multiple copies of such an experiment is always necessary to gather statistics. (See also Appendix~\ref{sec:gauge}.)} 
is the type illustrated in Fig.~\ref{fig:PTMcomposition} (c). Here, a multi-state  $\{\tilde{\rho}_{\cdot|zx}^B\}_{zx}$ on system $B$ (defined in \cref{fig:PTMcomposition}(a)) is composed with a multi-measurement $\{\{\tilde{M}_{b|z}^A\}_b\}_z$ on system $A$ (defined in \cref{fig:PTMcomposition}(b)), in a manner giving rise to a multi-instrument from system $A$ to $B$, namely
\begin{equation}
\tilde{\mc{E}}^{B|A}_{b''|z'x'z''}(\cdot)\coloneqq\tilde{\rho}_{\cdot|z'x'}^B\tr[\tilde{M}_{b''|z''}^A(\cdot)]. \label{eq:op-PTM-pr}
\end{equation}
(Note that this type of composition operation can be considered as an instance of either parallel or sequential composition.)
\yujie The resulting multi-instrument, i.e., $\tilde{\mc{E}}^{B|A}_{b''|z'x'z''}(\cdot)$, is of the same type as the given multi-channel $\{\mc{E}_{\cdot|z}^{B|A}\}_z$, so there can be new operational identities  of the form
\begin{equation}
\label{eq:tranpa}
\sum_z \alpha_z \mc{E}_{\cdot|z}^{B|A} +\sum_{b''x'z'z''} \alpha_{b''x'z'z''}\tilde{\mc{E}}^{B|A}_{b''|x'z'z''}=\mbb{0}^{B|A},
\end{equation}
which implies the corresponding ontological identity
\begin{align}
&\sum_z \alpha_z p(\lambda_B|z\lambda_A)\notag \\
&+\sum_{b''x'z'z''} \alpha_{b''x'z'z''}\tilde p(\lambda_B|x'z')\tilde p(b''|\lambda_Az'')=0
\end{align}
for all $\lambda_A$ and $\lambda_B$. Of course, this operational identity is nontrivial (i.e., not implied by the operational identities from~\cref{eq:op-PTM-t}) only if $\alpha_z\ne 0$ for some $z$ while $\alpha_{b''x'z'z''}\ne 0$ for some $b''x'z'z''$.
\blk

If we move beyond prepare–transform–measure scenarios\blk, the prescription is similar, but one must consider all possible ways of composing together processes within the circuit, and then demand that the operational identities on all of these effective processes are preserved in the ontological representation, in a manner analogous to what was done above. 
We again refer the reader to Ref.~\cite{Schmid2024structuretheorem} for a diagrammatic framework that greatly simplifies the explicit treatment of general circuits.

It is worth emphasizing that assessments of classical explainability are always given {\em relative to a circuit}, that is, relative to a factorization of systems into subsystems and a hypothesis about the causal structure. 
This follows from two consequences of diagram preservation: the fact that an ontological model associates an ontic state space to each system in the given circuit, and the fact that the stochastic matrices in the model must be wired together with the same connectivity as in the quantum circuit. Although this choice is quite innocuous in simple cases like a prepare-measure scenario or a Bell scenario~\cite{wood2015lesson}, it is a bit more substantive in general. We will see this in Section~\ref{sec:bipartite} when we discuss the representation of bipartite systems. The reader can also find more discussion of this point in Refs.~\cite{schmid2020unscrambling,schmid2024addressing}. 
\blk

\subsection{Two notions of variability in the identity of a process}\label{variation}

There are two ways in which one can imagine variability in the identity of a process within a circuit.  On the one hand, this variability might be induced by the variability in the value of a setting variable and/or in the value of an outcome variable. This is a variability that gets explored physically across different runs of the experiment.  On the other hand, we sometimes wish to contemplate a theoretical notion of variability.  For instance, for a process that has neither classical inputs nor classical outputs (such as a single state or transformation), we might wish to consider all possible choices for this process according to quantum theory.  We refer to the former sort of variability as {\em realized variability} and the latter as {\em theoretical variability}.

In the following, when we consider a specific quantum circuit, there will be some realized variability within that circuit for a given process.  This type of variability is considered part of the specification of the circuit, and is captured by the ranges of the classical setting and outcome variables attached to that process.

However, in our assessments of classical explainability of a given type of process, we will often consider \emph{all} the processes that are \emph{dual} to it (in a sense to be specified below). The variability in the dual process is of the theoretical variety.  We are considering what quantum theory predicts for every possible choice of this dual process. 
For the case where the process is a state, multi-state, source, or multi-source, the dual process is an effect. Quantum theory stipulates the probability assigned to all effects, and these probability assignments do not depend on which measurement, or multi-measurement, this effect is considered part of. Similarly, for the case where the process is a measurement or a multi-measurement, the dual process is a state. Quantum theory stipulates the probability generated by all states, and these probabilities do not depend on what source or multi-source this state is considered a part of.

\section{Defining the classical-nonclassical divide for an individual quantum process}\label{basicdefns}

When should a given quantum process be deemed to be nonclassical? Intuitively, it is nonclassical if and only if (i) it can be used in some quantum circuit to realize  operational statistics that are not classically explainable (in the sense of Definition~\ref{defn:classical}), and 
(ii) it is {\em implicated} in the lack of classical explainability.
We begin by clarifying what we mean by ``being implicated in the lack of classical explainability".  Essentially, if a circuit realizes statistics that are not classically explainable {\em independently of the identity of the process in question}, the process is not implicated in the lack of classical explainability.

Consider the question of whether a given multi-source is nonclassical.  Suppose one has the circuit in Figure~\ref{aux}(a), where the multi-source on system $A$ appears in yellow, and suppose that an ancilla system labeled $B$ is also prepared in some state (in blue), and the composite system and ancilla are subjected to a joint measurement.
In this case, it is possible to observe a failure of classical explainability regardless of the identity of the multi-source on the system.
This can be done, for instance, simply by taking the multipartite measurements to be factorized across the system-ancilla divide, and by choosing any set of states and any set of effects on the ancilla sufficient to generate statistics that are not classically explainable (in the sense of Definition~\ref{defn:classical}).

\begin{figure}[htb!]
\centering
\includegraphics[width=0.45\textwidth]{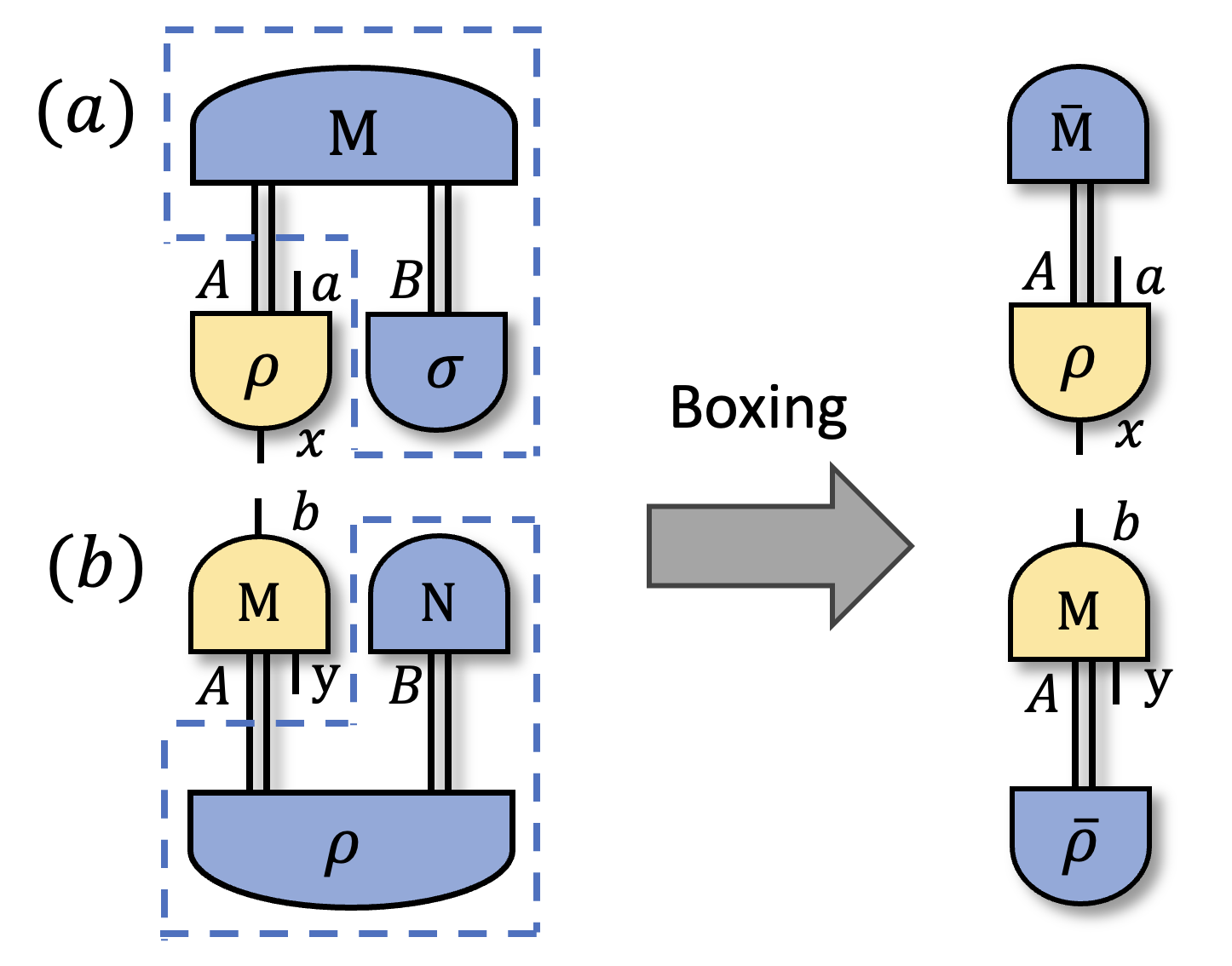}
\caption{ 
Left: Probing schemes (in blue) that can generate statistics that are not classically explainable {\em independently} of the identity of the given process of interest (in yellow). In (a), the process of interest is a multi-source. In (b), the process of interest is a multi-measurement. Right: Treating these same circuits in a manner that coarse-grains over all processes involving systems that are not connected directly to the process of interest 
leads to probing schemes (in blue) such that whether the statistics can be explained classically now always depends nontrivially on the identity of the process of interest. In (a), the effective process is simply an effect $\bar{M}_A$; in (b), the effective process is simply a state $\bar{\rho}_A$. } \label{aux}
\end{figure}

Similarly, consider whether a given multi-measurement is nonclassical. 
Suppose one has the circuit in Figure~\ref{aux}(b), where the multi-measurement appears in yellow, and suppose that the composite of system and ancilla is prepared in some joint state (in blue) and subsequently a factorizing measurement is made, with some local measurement implemented on the ancilla. 
Again, it is possible to observe the failure of classical explainability regardless of the identity of the multi-measurement on the system: simply take the joint state to be factorizing across the system-ancilla divide, and choose any set of states and set of effects on the ancilla is sufficient to generate statistics that are not classical explainable (in the sense of Definition~\ref{defn:classical}).

 In both cases, then, one cannot conclude that the process in question is implicated in the fact that the overall circuit admits of no classical explanation. 

To ensure that the lack of classical explainability of the circuit implies the nonclassicality of the process of interest, it suffices to stipulate that for any circuit that includes the process of interest, one must box together (i.e., compose together) all processes that act on ancillary quantum systems. In other words, this circuit should not introduce additional causal structure that is internal to the probing part of the circuit.

In the above examples, this entails boxing together the blue processes to define an effective process:  in Figure~\ref{aux}(a), one must compose the effect $M^{AB}$ with the state $\sigma^{B}$ to define an effective effect $\bar{M}^A\coloneqq M^{AB}\circ \sigma^{B}$ (i.e., $\tr[\bar{M}^A(\cdot)]\coloneqq \tr[M^{AB}(\cdot\otimes\sigma^B)] $) on system $A$ alone; in Figure~\ref{aux}(b), one must compose the effect $N^{B}$ with the state $\rho^{AB}$ to define an effective state $\bar{\rho}^A\coloneqq N^B\circ\rho^{AB}$ (i.e., $\bar{\rho}^A\coloneqq \tr[(\mbb{1}^A\otimes N^B)\rho^{AB}] $) on system $A$ alone. By boxing processes in this manner, one ensures that any nonclassicality in the circuit comes from constraints pertaining to the system of interest (here, system $A$), as opposed to any ancilla systems internal to the probing part of the circuit (here, system $B$).

Equivalently, one can simply define nonclassicality of a given process in terms of whether or not it can generate statistics that are not classically explainable when embedded in a circuit {\em that does not introduce any new quantum systems}. That is, the circuit cannot contain any quantum systems not connected directly to the process itself (for instance, it cannot contain a system such as $B$ in Figure~\ref{aux}).
This is consistent with the fact that noncontextuality is always defined relative to some specified systems. Consequently, one wishes to consider the possibility of classical explanations {\em for the systems in question} (those connected to the given process) {\em and no others}, and the stipulation just given is a simple heuristic to guarantee this.

We formalize all of this by defining the notion of a dual process to $P$. 
\begin{definition}
\label{def:dualprocess}
    For a given type of quantum process $P$, a {\em dual process} to $P$ is a quantum process that has a quantum input corresponding to every quantum output of $P$ and a quantum output corresponding to every quantum input of $P$,  and that has no other quantum inputs or outputs. 
\end{definition}
If one connects (that is, one ``wires together'')  each quantum output of $P$ to the corresponding quantum input of the dual process and each quantum input of $P$ to the corresponding quantum output of the dual process, one obtains a circuit with no open quantum wires and with which one can associate a probability (for each value of the setting variables and outcome variables of the process $P$). It follows that a dual process defines a linear functional on the process $P$.  We will refer to this wiring together of $D$ and $P$ to form $D\circ P$ as {\em composing} $D$ with $P$.

We depict examples of dual processes in Fig.~\ref{duals}. For the process of the preparation variety on $A$, i.e., a state, multi-state, source, or multi-source on system $A$, the dual process is an effect on $A$. Similarly, for the process of the measurement variety on $A$, i.e., an effect, measurement, or multi-measurement on system $A$, the dual process is a state on $A$.

Throughout the paper, we treat the variability in the states and effects of a dual process as {\em theoretical variability} (as discussed in Sec.~\ref{variation}).  Consequently, dual processes in our treatment do not have classical inputs nor classical outputs. 
As discussed in Sec.~\ref{variation}, there are many different ways of ranging over all dual processes of a given type through {\em physical variability} (i.e., by the variability in the values of classical inputs and classical outputs), but the definition of classicality is independent of the choice of how to do so.  This is discussed further in Appendix~\ref{labels}.

\begin{figure}[htb!]
\centering
\includegraphics[width=0.5\textwidth]{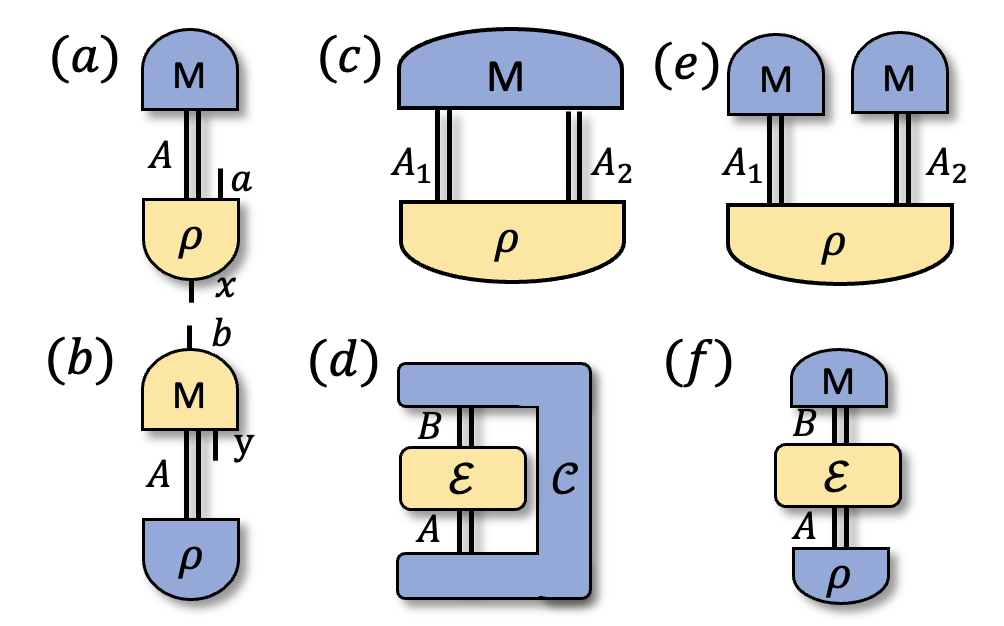}
\caption{
In (a)-(d), we depict examples of various types of quantum processes (in yellow) and their associated dual processes (in blue).  In (a), the process is a multi-source on a single system.  In (b), it is a multi-measurement on a single system.  In (c), it is a state on a bipartite system. In (d), it is a channel. In (e) and (f), the processes are the same as in (c) and (d), respectively, but we depict the associated dual processes when the latter are factorizing. 
}
\label{duals}
\end{figure}

For a state on a multipartite system, the dual process is an effect on that multipartite system, as depicted in \cref{duals}(c).  The same is true if one considers any process of the preparation variety, i.e., a multipartite multi-state, source, or multi-source. For processes with both nontrivial quantum inputs and nontrivial quantum outputs, the dual process is a comb~\cite{Chiribella2008, Chiribella2009}, as depicted in \cref{duals} (d).

With these notions in hand, we can present our proposed notion of classicality for an individual process. 

\begin{definition}
A quantum process is classical if and only if the statistics generated by the set of circuits that contract the process of interest with {\em any} dual process (i.e., where the set ranges over {\em all} choices of dual process) are classically explainable (in the sense of Definition~\ref{defn:classical}). 
\label{def:og_process}
\end{definition}

It is useful to provide a slightly more formal version of this definition. Suppose \texttt{T} denotes a variety of a process, such as a preparation on $A$, a preparation on $AB$, a measurement on $A$, a transformation from $A$ to $B$, etc. For a process $P$ of type $\texttt{T}$, let the set of all processes of the dual type be denoted ${\rm AllDuals}(\texttt{T})$, and recall that the circuit one obtains by contracting a process $P$ with a dual process $D$  is denoted $D\circ P$.  Relative to these notational conventions, the definition can be summarized as follows: a quantum process $P$ is classical if and only if the statistics generated by the set of circuits $\{ D\circ P | D \in {\rm AllDuals}(\texttt{T})\}$ are classically explainable.

\yujie 
Note that this basic definition generalizes naturally to theories beyond quantum theory; that is, to arbitrary GPTs. However, in this paper, we focus specifically on quantum theory. As a consequence, we can simplify the definition considerably.
\blk

It turns out that it is sufficient to focus on a special class of dual processes.  To define this class, we introduce the notion of a {\em factorizing process}.

\begin{definition}
A process is termed \emph{factorizing} if it consists of a product of local effects on its quantum inputs and a product of local states on its quantum outputs. 
\end{definition}
A few examples serve to illustrate the idea. A state on a bipartite system $AB$ is factorizing if it is a product of a state on $A$ and a state on $B$. A process with quantum inputs $A$ and $B$ and a single quantum output $C$ is factorizing if it is a product of an effect on $A$ and an effect on $B$, composed sequentially with the preparation of a state on $C$. A comb with quantum output $A$ and quantum input $B$ is factorizing if it consists of a state on $A$ and an effect on $B$.

With this definition, we prove (in Appendix~\ref{secdualsthm}) the following. 

\begin{theorem}\label{thmdual}
A quantum process is such that there exists some set of dual processes that it can be contracted with to obtain statistics that are not classically explainable, if and only if there exists some set of {\em factorizing} dual processes for which this is the case. 
\end{theorem}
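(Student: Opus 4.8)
The plan is to prove the two directions separately, with essentially all of the content residing in one of them. The forward direction---that a witnessing set of \emph{factorizing} dual processes implies a witnessing set of dual processes---is immediate, since every factorizing process is in particular a dual process, so the very same set serves as the witness. I would therefore concentrate on the converse, phrased in contrapositive form: \emph{if the statistics generated by contracting $P$ with the set of all factorizing dual processes are classically explainable, then so are the statistics generated by contracting $P$ with the set of all dual processes.} (Here I use the fact that classical explainability is monotone under enlarging the set of circuits, so ``some witnessing set exists'' is equivalent to ``the full set fails to be classically explainable''.)

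The engine of the argument is a spanning lemma: the real-linear span of the factorizing dual processes coincides with that of all dual processes. A dual process to $P$ carries a quantum input for each quantum output $B_j$ of $P$ and a quantum output for each quantum input $A_i$ of $P$, so, via its Choi/comb operator, it is an element of $\text{Herm}\big(\bigotimes_j \mc H^{B_j}\otimes\bigotimes_i \mc H^{A_i}\big)$, whereas a factorizing dual process corresponds to a product $\bigotimes_j M^{B_j}\otimes\bigotimes_i\rho^{A_i}$ of local effects and local states. Since effects span $\text{Herm}(\mc H^{B_j})$ and states span $\text{Herm}(\mc H^{A_i})$, and the span of products of spanning sets is the tensor product of their spans, the factorizing dual processes already span the full Hermitian operator space; a fortiori they span every dual process (including the causally constrained combs that arise for transformation-variety $P$).

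I would then transfer classical explainability using the linearity built into Definition~\ref{defn:classical}. Assume $\xi$ is a linear, diagram-preserving ontological representation reproducing the statistics of $D\circ P$ for every factorizing $D$, with $\xi(P)$ a fixed substochastic matrix, and extend $\xi$ to all dual processes by linearity, which the spanning lemma renders consistent. For a general $D=\sum_k c_k D_k$ with each $D_k$ factorizing, the true statistics obey $p(\cdot\,|D)=\sum_k c_k\, p(\cdot\,|D_k)$ because the quantum statistics are linear in the dual process, while the representation predicts $\xi(D)\circ\xi(P)=\sum_k c_k\,\xi(D_k)\circ\xi(P)$ by linearity and diagram-preservation; these agree term-by-term and hence agree for $D$. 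Linear operational identities among dual processes are likewise preserved automatically by the linearity of $\xi$.

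The main obstacle---and the place where genuine care is required---is the composite operational identities of the type in Eq.~\eqref{eq:tranpa}, which arise when dual-process fragments are re-wired with $P$ to build effective processes of the same type as $P$, and which (as the main text notes) are \emph{not} implied by the identities among the atomic processes. These impose constraints on the fixed matrix $\xi(P)$, so I must rule out the possibility that ranging over general rather than factorizing duals generates strictly more such constraints. The key observation is that the relevant composite processes are always of measure-and-prepare (product) form, as in Eq.~\eqref{eq:op-PTM-pr}, with their state-part and effect-part drawn from independent invocations of the duals; consequently the set of achievable composite processes depends only on the linear span of the states and effects that the duals can feed into and read off $P$, and the spanning lemma shows this span to be identical in the factorizing and general cases. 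Hence no new composite identities appear, $\xi$ remains a valid ontological representation for the full set $\{D\circ P : D\in {\rm AllDuals}(\texttt{T})\}$, and classical explainability transfers, completing the contrapositive.
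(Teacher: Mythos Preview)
Your contrapositive strategy and the spanning lemma are both correct, and they mirror the paper's approach. However, there is a genuine gap: you never verify that the linearly extended map $\xi$ sends an arbitrary dual process $D$ to a \emph{substochastic} matrix. Definition~\ref{defn:classical} requires the ontological representation to take each quantum process to a substochastic matrix (nonnegative entries, appropriately normalized). A linear extension of a nonnegative assignment from a spanning set need not be nonnegative on the whole space: if $D=\sum_k c_k D_k$ with mixed-sign $c_k$, then $\xi(D)=\sum_k c_k\,\xi(D_k)$ can have negative entries even though each $\xi(D_k)$ is substochastic. Without positivity you have only a quasiprobability representation, which always exists and says nothing about classical explainability.

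The paper closes this gap by first using the generalized Gleason theorem (and Riesz) on the factorizing side: since the duals range over \emph{all} local effects and \emph{all} local states, linearity and positivity on those forces the ontological response functions and epistemic states to take the form $p(M^{B_j}|\lambda_{B_j})=\tr[M^{B_j}\sigma^{B_j}_{\lambda_{B_j}}]$ and $p(\lambda_{A_i}|\rho^{A_i})=\tr[G^{A_i}_{\lambda_{A_i}}\rho^{A_i}]$ for genuine density operators $\sigma^{B_j}_{\lambda_{B_j}}$ and POVM elements $G^{A_i}_{\lambda_{A_i}}$. The extended representation of a general dual $\mc{C}$ then becomes $\tr[(\otimes_i G^{A_i}_{\lambda_{A_i}})\,\mc{C}(\otimes_j \sigma^{B_j}_{\lambda_{B_j}})]$, which is automatically in $[0,1]$ because it is the Born rule applied to a valid comb sandwiched between valid states and effects. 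This structural form is the missing ingredient in your argument; once it is in place, your remaining points (reproduction of statistics, preservation of both linear and composite operational identities) follow immediately and your separate discussion of the composite identities becomes unnecessary.
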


Consequently, we can restrict our attention to factorizing dual processes.
\yujie For instance, for a bipartite state, it suffices to consider product effects, as depicted in \cref{duals}(e). For a channel, it suffices to consider factorizing combs, as depicted in \cref{duals}(f). With this result in hand, we can provide an alternative definition to Definition~\ref{defn:classical} of the classical-nonclassical divide for a single quantum process as follows\blk.

\setcounter{definition}{3} 
\renewcommand\thedefinition{$3'$} 
\begin{definition}\label{maindefn}
  A quantum process is classical if and only if the statistics generated by the set of circuits that contract the process of interest with any factorizing dual process (i.e., where the set ranges over \emph{all} choices of factorizing dual process) are classically explainable (in the sense of Definition~\ref{defn:classical}). \blk
\end{definition}

\renewcommand\thedefinition{\arabic{definition}}

This is the central definition of our manuscript.

\section{ Processes of the preparation or measurement variety}
Let us start by reviewing the form of a classical explanation (an ontological model) for the statistics in the simplest prepare-measure scenario. We will consider the most general circuit in such a scenario, with a multi-source on the preparation side and a multi-measurement on the measurement side.

\yujie In this section, we will drop the superscript specifying the quantum system since only one system is involved. \blk
\subsection{Classical explainability of a prepare-measure scenario}
\label{ncstates}
In the prepare-measure scenario, we consider a multi-source $\msf{P}\coloneqq\{\{p(a|x)\rho_{a|x}\}_{a}\}_x$ and a multi-measurement $\msf{M} \coloneqq\{\{M_{b|y}\}_{b}\}_y$, which, when contracted together (as depicted in Fig.~\ref{PM}), generate the statistics 
\begin{equation}
p(ab|xy)=p(a|x)\tr[ M_{b|y}\rho_{a|x}].
\label{eq: noncontexual}
\end{equation}
In this simple context, the only constraints implied by diagram-preservation in this circuit are that states $\rho_{a|x}$ get represented as probability distributions $p(\lambda|ax)$ (the subnormalized state $p(a|x)\rho_{a|x}$ get represented as $p(a\lambda|x):=p(a|x)p(\lambda|ax)$), that effects get represented as response functions $p(b|y\lambda)$, and that one can reproduce the quantum statistics as
\begin{align}
&{p(ab|xy)=\sum_{\lambda} p(b|y\lambda)p(a\lambda|x)}
\label{eq:PMNC}
\end{align}.
\begin{figure}[htb!]
\centering
\includegraphics[width=0.4\textwidth]{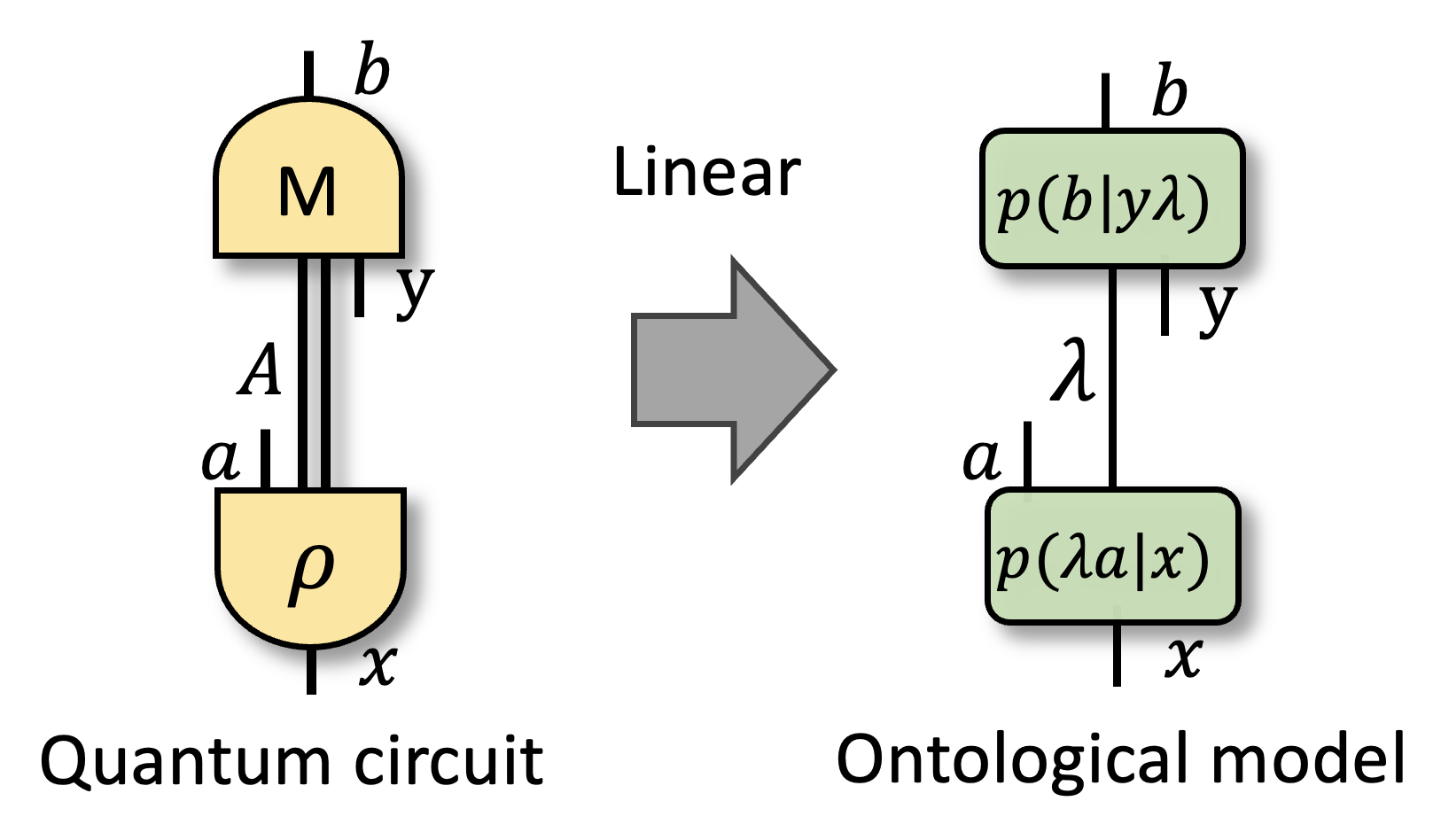}
\caption{A prepare-measure circuit (left) and an ontological model for it (right).}
\label{PM}
\end{figure}

The states and effects in such a circuit satisfy operational identities. We will identify these operational identities by their coefficients, and so we can denote the set of all operational identities for the states and that for the effects, respectively, by\blk
\begin{subequations}
\begin{align}
 &{{\mc{O}}({\msf{P})}\coloneqq\{\{{\alpha}_{ax}\}_{ax} |\sum_{ax} {\alpha}_{ax} p(a|x){\rho}_{a|x}=\mbb{0} \}, \label{op:prep}} \\
 &\mc{O}({\msf{M}})\coloneqq\{\{\alpha_{by}\}_{by} |\sum_{by} \alpha_{by} M_{b|y}=\mbb{0}\}. \label{op:meas}
\end{align}
\end{subequations}

If one further demands that the ontological representation is linear, the conditional probabilities \blk $p(b|y\lambda)$ and $p(a\lambda|x)$ must satisfy the ontological identities
\begin{subequations}
\begin{align}
&{\sum_{ax} {\alpha}_{ax} p(a\lambda|x)=0 \quad  \forall \{\alpha_{ax}\}_{ax}\in{\mc{O}}({\msf{P})}, \label{nc:prep}}\\
 &\sum_{by} \alpha_{by} p(b|y\lambda)=0 \quad \forall \{\alpha_{by}\}_{by}\in\mc{O}(\msf{M}) \label{nc:meas}
\end{align}
\label{nc:PM}
\end{subequations}
for all $\lambda$.  Consequently, this circuit is classically explainable (i.e., admits a linear and diagram-preserving ontological representation)
if and only if the statistics can be reproduced by \cref{eq:PMNC} under the constraint of \cref{nc:PM}.

\subsection{ The classical-nonclassical divide for multi-sources}
Next, we characterize what states, multi-states, sources, and multi-sources are classical. Because all of these are special cases of the single concept of a multi-source, it suffices to define classicality for the latter. 
This is just a special case of Definition~\ref{maindefn}:
\begin{definition}
\label{def_multi-sourceNC}
A multi-source is classical if and only if the statistics generated by the set of circuits that contract it with any effect (i.e., where the set ranges over {\em all} possible choices of effects) are classically explainable (in the sense of Definition~\ref{defn:classical}). \blk 
\end{definition}

The contraction of a multi-source with an effect results in a circuit of the prepare-measure form,   
as depicted in Fig.~\ref{duals}(a). 

Precisely \emph{which} multi-sources are classical is characterized by the following theorem, \yujie which refers to the notions of frame and dual frame. See Appendix~\ref{framerepn} for a review of these. \blk

\begin{theorem}\label{theoremprep}
\yujie
For a multi-source $\msf{P}\coloneqq\{\{p(a|x)\rho_{a|x}\}_{a}\}_x$, the following statements are equivalent:\\
(0) It is classical (in the sense of Definition~\ref{def_multi-sourceNC}).\\
\blk
(1) Each of the subnormalized states 
$p(a|x)\rho_{a|x}$ can be decomposed as 
    \begin{align}
    \label{eq_rhodecom} {p(a|x)\rho_{a|x} = \sum_{\lambda} p(a\lambda|x)\sigma_{\lambda} \quad \forall a,x}
     \end{align}
    for a set of normalized states $\{\sigma_{\lambda}\}_{\lambda}$  and some conditional probability distribution $p(a\lambda|x)$ satisfying 
    \begin{align}
    \label{eq_paxOp}
&{\sum_{ax} \alpha_{ax} p(a \lambda|x)=0 \qquad     \forall \{\alpha_{ax}\}_{ax}\in\mc{O}(\msf{P})}
    \end{align}
for all $\lambda$, where $\mc{O}(\msf{P})$ is defined in Eq.~\eqref{op:prep}. \blk
\\(2) Each of the 
normalized states $\rho_{a|x}$ can be decomposed within a frame representation on $V=\text{Span}(\{\rho_{a|x}\}_{ax})$ as
\begin{align}
        \label{eq_rhodecFrame}
        \rho_{a|x} = \sum_{\lambda} \tr[H_{\lambda} \rho_{a|x}]\sigma_{\lambda}, \quad \forall a,x
\end{align}
where the frame is a fixed set of density operators 
$\{\sigma_{\lambda}\}_{\lambda}$ and the dual frame is a set of Hermitian operators, i.e., $\{H_{\lambda}\}_{\lambda}\subset \text{Herm}(\mc H)$ satisfying \yujie $\sum_{\lambda}H_{\lambda}=\mathcal{P}_V(\mbb{1})$ and $\tr[H_{\lambda} \rho_{a|x}]\in [0,1]$ for all $a,x,\lambda$, with $\mathcal{P}_V$ being the superoperator that projects onto the operator space $V$.\footnote{
Note that these Hermitian operators need not be positive. } This condition is depicted in \cref{framesource}.
\end{theorem}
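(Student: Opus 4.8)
The plan is to prove the cyclic chain of implications $(0)\Rightarrow(1)\Rightarrow(2)\Rightarrow(0)$. The closing implication $(2)\Rightarrow(0)$ is a routine explicit construction of an ontological model, so the genuine work lies in $(0)\Rightarrow(1)$ and $(1)\Rightarrow(2)$. Throughout I would invoke the characterization of classical explainability of a prepare--measure scenario from \cref{ncstates}, since contracting $\msf{P}$ with any effect yields exactly such a scenario, with statistics $p(a|x)\tr[M\rho_{a|x}]$.

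For $(0)\Rightarrow(1)$, assume a single linear, diagram-preserving ontological representation reproduces these statistics for \emph{every} effect $M$. The representation fixes an ontic space with source distributions $p(a\lambda|x)$ and, crucially, a response function $p(M|\lambda)\in[0,1]$ for each effect that depends \emph{linearly} on $M$. Since the effects span $\text{Herm}(\mc{H})$, this assignment extends to a linear functional on $\text{Herm}(\mc{H})$, so by Riesz duality in the Hilbert--Schmidt inner product there is a Hermitian $\sigma_\lambda$ with $p(M|\lambda)=\tr[M\sigma_\lambda]$. I would then argue $\sigma_\lambda$ is a genuine normalized state: testing against rank-one effects $M=\epsilon\op{\psi}{\psi}$ forces $\sigma_\lambda\ge 0$, while diagram preservation of the unit effect, $p(\mbb{1}|\lambda)=1$, forces $\tr[\sigma_\lambda]=1$. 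Substituting back gives $p(a|x)\tr[M\rho_{a|x}]=\tr[M\sum_\lambda p(a\lambda|x)\sigma_\lambda]$ for all $M$; as effects span $\text{Herm}(\mc{H})$, the operators on the two sides coincide, yielding \cref{eq_rhodecom}, while the preserved operational identities of the source give \cref{eq_paxOp}.

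For $(1)\Rightarrow(2)$, the key observation is that condition \cref{eq_paxOp} is precisely the consistency requirement needed to realize the coefficients $p(a\lambda|x)$ as a \emph{fixed linear functional} of the states. Concretely, for each $\lambda$ the assignment $p(a|x)\rho_{a|x}\mapsto p(a\lambda|x)$ respects every linear dependence among the subnormalized states---which is exactly what $\{\alpha_{ax}\}_{ax}\in\mc{O}(\msf{P})$ encodes---so it extends to a well-defined real linear functional on $V=\text{Span}(\{\rho_{a|x}\}_{ax})$, represented via Riesz by a Hermitian $H_\lambda\in V$ with $\tr[H_\lambda\rho_{a|x}]=p(\lambda|ax)$. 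Dividing \cref{eq_rhodecom} by $p(a|x)$ then gives the reconstruction \cref{eq_rhodecFrame}, and the dual-frame conditions follow: $\tr[H_\lambda\rho_{a|x}]=p(\lambda|ax)\in[0,1]$, and the identity $\sum_\lambda p(a\lambda|x)=p(a|x)$ shows $\sum_\lambda H_\lambda$ and $\mbb{1}$ agree as functionals on $V$, forcing $\sum_\lambda H_\lambda=\mathcal{P}_V(\mbb{1})$ once $H_\lambda$ is taken inside $V$.

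The closing direction is short: given $(2)$, I would set $p(a\lambda|x):=p(a|x)\tr[H_\lambda\rho_{a|x}]$ and $p(M|\lambda):=\tr[M\sigma_\lambda]$, then verify positivity, $\sum_\lambda p(a\lambda|x)=p(a|x)$ (using $\sum_\lambda H_\lambda=\mathcal{P}_V(\mbb{1})$ together with $\rho_{a|x}\in V$ and self-adjointness of $\mathcal{P}_V$), reproduction of the statistics, and preservation of $\mc{O}(\msf{P})$, all by direct computation. The main obstacle I anticipate is the careful treatment of the effect sector in $(0)\Rightarrow(1)$---in particular justifying that the response functions must be linear in the effect and that the resulting $\sigma_\lambda$ is positive and normalized rather than merely Hermitian---together with the bookkeeping around $\mathcal{P}_V$ when the frame states $\sigma_\lambda$ need not lie in $V$, which must be handled so that the normalization $\sum_\lambda H_\lambda=\mathcal{P}_V(\mbb{1})$ emerges correctly.
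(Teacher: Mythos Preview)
Your proposal is correct and uses the same core ingredients as the paper---Riesz/Gleason on the effect side to extract the $\sigma_\lambda$, Riesz on the state side to extract the $H_\lambda$, and an explicit model construction for the converse. The only structural difference is the direction of the cycle: the paper proves $(0)\Rightarrow(2)\Rightarrow(1)\Rightarrow(0)$, doing both Riesz steps at once in $(0)\Rightarrow(2)$ so that $(2)\Rightarrow(1)$ is the trivial implication (just set $p(a\lambda|x):=p(a|x)\tr[H_\lambda\rho_{a|x}]$). Your route $(0)\Rightarrow(1)\Rightarrow(2)\Rightarrow(0)$ postpones the state-side Riesz argument to $(1)\Rightarrow(2)$, where you must invoke \cref{eq_paxOp} explicitly to show the map $p(a|x)\rho_{a|x}\mapsto p(a\lambda|x)$ is well-defined on $V$; this is a bit more work than the paper's trivial $(2)\Rightarrow(1)$, but it has the minor expository benefit of making transparent exactly why the operational-identity condition \cref{eq_paxOp} in (1) is equivalent to the existence of the dual-frame operators $H_\lambda$ in (2). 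On the effect side, your positivity argument via rank-one effects and normalization via $p(\mbb{1}|\lambda)=1$ is exactly the content of the generalized Gleason theorem the paper cites, so there is no substantive difference there.
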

\begin{remark}
\label{remark:Pv}
\yujie 
We introduce the superoperator $\mathcal{P}_V$ to reflect the fact that $\sum_{\lambda}H_{\lambda}$ need not equal identity operator $\mbb{1}$ on $\mc{H}$. (A sufficient condition for $\sum_{\lambda}H_{\lambda}=\mbb{1}$ is that $V$ coincides the operator space $\text{Herm}(\mc H)$.)

As an example, the multi-state $\rho_{\cdot|x}=\{\op{0}{0}, \op{+}{+}\}$ is classical with a frame representation of the form of \cref{eq_rhodecFrame} with $\sigma_0=\op{0}{0}, \sigma_1=\op{+}{+}$ and $H_0=\frac{4}{3}\op{0}{0}-\frac{2}{3}\op{+}{+}, H_1=\frac{4}{3}\op{+}{+}-\frac{2}{3}\op{0}{0}$. However, in this case, $\sum_{\lambda}H_{\lambda}=\mathcal{P}_V(\mbb{1})=\frac{2}{3}\op{0}{0}+\frac{2}{3}\op{+}{+}\ne\mbb{1}$. 
\end{remark}
\blk

The proof is given in Appendix~\ref{proofthmprep}. We will sometimes refer to this result as a {\em structure theorem} for classical multi-sources.
\begin{figure}[htb!]
\centering
\includegraphics[width=0.35\textwidth]{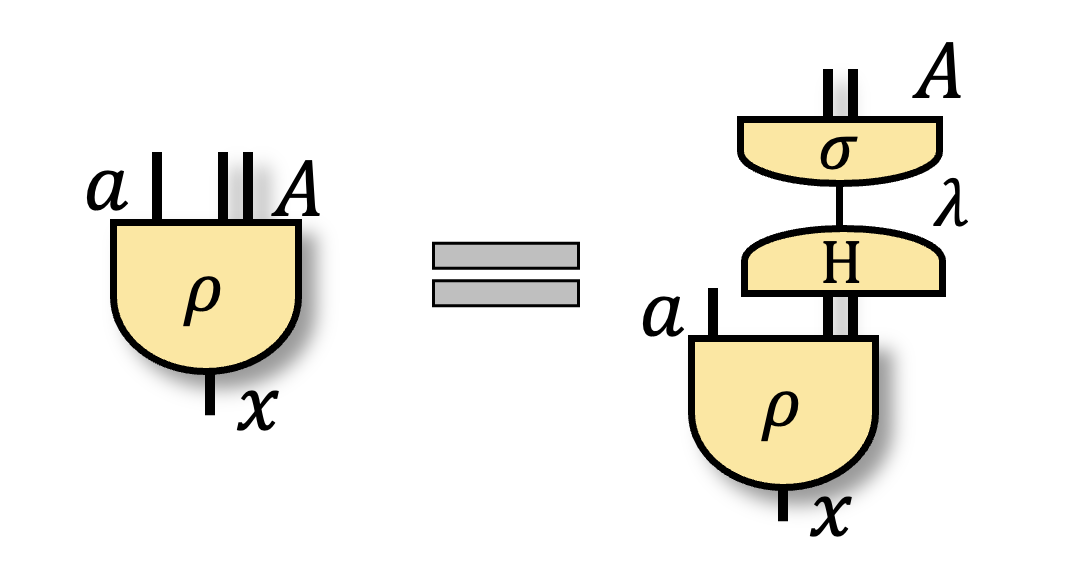}
\caption{\yujie A diagrammatic criterion for classicality of a multi-source $\{\{p(a|x)\rho_{a|x}\}_a\}_x$: a multi-source is classical if and only if it admits the frame representation as shown on the right, where the conditions on the frame and dual frame operators are given in Theorem~\ref{theoremprep}. \blk}
\label{framesource}
\end{figure}
In a companion paper~\cite{zhang2024parellel}, we leverage this theorem to provide detailed results on the certification and quantification of the nonclassicality of general multi-sources. 

We can also derive some immediate insights from this theorem. First, consider multi-sources that arise in quantum steering experiments~\cite{Wiseman2007,zhang2025cost}. As shown on the left-hand side of Fig.~\ref{steering}, a pair of systems, $A_1$ and $A_2$, is prepared in an entangled state, and a multi-measurement is implemented on $A_1$. This yields an effective process that is a multi-source on $A_2$.   Such a multi-source always satisfies $\sum_a p(a|x)\rho_{a|x}=\sigma$ for some state $\sigma$ independent of $x$, and is typically termed an {\em assemblage}~\cite{Pusey2013}. An assemblage is said to be unsteerable~\cite{Wiseman2007} if its  unnormalized states can be decomposed as
\begin{equation}\label{unsteerableassemblage}
     p(a|x) \blk \rho_{a|x}=\sum_\lambda p(\lambda)p(a|x\lambda)\sigma_{\lambda},
\end{equation}
for some set $\{\sigma_{\lambda}\}_{\lambda}$ of density operators on $A_2$, some probability distribution $p(\lambda)$, and some conditional probability distribution $p(a|x\lambda)$, 
as depicted in Fig.~\ref{steering}.
This is often referred to as a local hidden state model of the assemblage.  (Note that in the resource theory of local operations and shared randomness~\cite{sq, Wolfe2020quantifyingbell, Schmid2020typeindependent, Schmid2023understanding}, the unsteerable assemblages are the free resources when one considers processes corresponding to a steering experiment~\cite{Zjawin2023quantifyingepr, Zjawin2023resourcetheoryof}.) If an assemblage fails to satisfy the condition for being unsteerable, it is said to be {\em steerable}. 

\begin{figure}[htb!]
\centering
\includegraphics[width=0.5\textwidth]{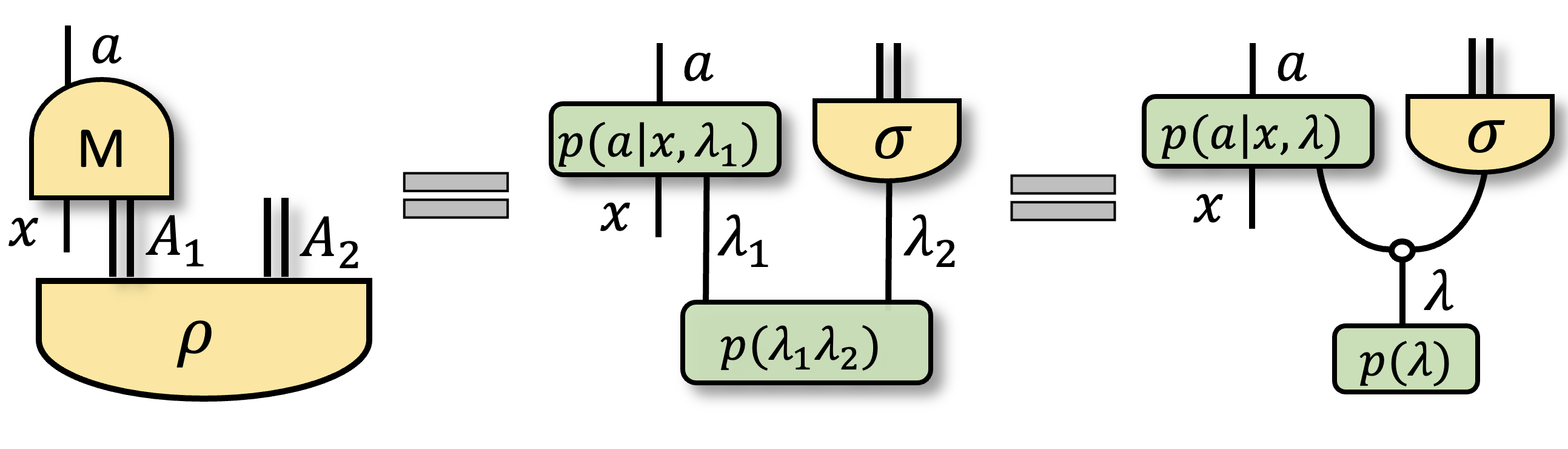}
\caption{ A steering experiment defines a multi-source $\{\{p(a|x)\rho_{a|x}\}_a\}_x$, often termed an assemblage. An assemblage is said to be unsteerable (or free) if and only if it admits a decomposition in terms of local operations and shared randomness, as $\sum_{\lambda_1,\lambda_2}p(a|x\lambda_1)p(\lambda_1\lambda_2)\sigma_{\lambda_2}$, as shown in the middle figure, or equivalently, as $\sum_{\lambda}p(\lambda)p(a|x \lambda)\sigma_{\lambda}$, as shown in the right-hand-side figure, where the dot represents the copying operation.}
\label{steering}
\end{figure}

Given the definition of steerability, Theorem~\ref{theoremprep} immediately implies the following result.
\begin{corollary}
    Every steerable assemblage is nonclassical. 
\end{corollary}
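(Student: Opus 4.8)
The plan is to prove the contrapositive: I will show that every classical assemblage is unsteerable, which immediately gives that every steerable assemblage is nonclassical. The key observation is that Theorem~\ref{theoremprep} provides a structural characterization of classicality, and I expect this characterization to be essentially a special case of the unsteerability condition. So the whole argument reduces to comparing two decompositions.

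First I would invoke Theorem~\ref{theoremprep}, condition (1): if the assemblage $\msf{P}=\{\{p(a|x)\rho_{a|x}\}_a\}_x$ is classical, then each subnormalized state decomposes as
\begin{equation}
p(a|x)\rho_{a|x}=\sum_{\lambda}p(a\lambda|x)\sigma_{\lambda}
\label{eq:classicaldecomp}
\end{equation}
for a fixed set of normalized states $\{\sigma_{\lambda}\}_{\lambda}$ and some conditional distribution $p(a\lambda|x)$ satisfying the ontological identities of Eq.~\eqref{eq_paxOp}. Next I would compare \eqref{eq:classicaldecomp} with the unsteerability condition \eqref{unsteerableassemblage}, namely $p(a|x)\rho_{a|x}=\sum_{\lambda}p(\lambda)p(a|x\lambda)\sigma_{\lambda}$. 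The two have the same form once one writes $p(a\lambda|x)=p(\lambda)p(a|x\lambda)$; that is, I would simply factor the joint conditional probability $p(a\lambda|x)$ into a marginal over $\lambda$ and a conditional over $a$. The only point requiring care is that the unsteerability model demands the marginal $p(\lambda)$ to be independent of $x$.

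The main obstacle—and the crux of the argument—is therefore establishing that the $\lambda$-marginal of $p(a\lambda|x)$ does not depend on $x$. Here is where I would use the fact that an assemblage satisfies the no-signaling condition $\sum_a p(a|x)\rho_{a|x}=\sigma$ for a state $\sigma$ independent of $x$. Summing \eqref{eq:classicaldecomp} over $a$ gives $\sum_{a}p(a\lambda|x)\sigma_{\lambda}$ summed over $\lambda$ equals $\sigma$ for every $x$; defining $q_x(\lambda)\coloneqq\sum_a p(a\lambda|x)$, this says $\sum_{\lambda}q_x(\lambda)\sigma_{\lambda}=\sigma$ for all $x$. In general this does not force $q_x(\lambda)$ itself to be $x$-independent unless the $\{\sigma_{\lambda}\}$ are linearly independent. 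I would handle the generic case by noting that the constraint $q_x(\lambda)-q_{x'}(\lambda)$ has vanishing inner product with the frame, which translates (via the ontological identities \eqref{eq_paxOp} applied to the no-signaling operational identities $\sum_a p(a|x)\rho_{a|x}-\sum_a p(a|x')\rho_{a|x'}=\mathbb{0}$) into $q_x(\lambda)=q_{x'}(\lambda)$ for all $\lambda$. Setting $p(\lambda)\coloneqq q_x(\lambda)$ (now well-defined) and $p(a|x\lambda)\coloneqq p(a\lambda|x)/p(\lambda)$ on the support of $p(\lambda)$ yields exactly the local hidden state model \eqref{unsteerableassemblage}, completing the proof. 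The heart of the matter is thus recognizing that the no-signaling operational identities lie in $\mc{O}(\msf{P})$ and hence constrain the ontological representation precisely enough to pin down an $x$-independent $p(\lambda)$.
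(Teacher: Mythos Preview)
Your proposal is correct and follows essentially the same route as the paper's proof: prove the contrapositive, invoke condition (1) of Theorem~\ref{theoremprep}, recognize the no-signaling relations $\sum_a p(a|x)\rho_{a|x}-\sum_a p(a|x')\rho_{a|x'}=\mathbb{0}$ as elements of $\mc{O}(\msf{P})$, apply Eq.~\eqref{eq_paxOp} to conclude $\sum_a p(a\lambda|x)$ is $x$-independent, and then factor via Bayes to recover the local hidden state model. The paper proceeds in exactly this way; your additional remark that the linear dependence of the $\{\sigma_\lambda\}$ prevents concluding $x$-independence directly from $\sum_\lambda q_x(\lambda)\sigma_\lambda=\sigma$ is a correct and helpful clarification of why the ontological identity constraint is genuinely needed.
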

\begin{proof}
We prove the contrapositive that if the multi-source defined by an assemblage is classical, then it is unsteerable. The assumption of classicality is implied by the satisfaction of condition (1) of Theorem~\ref{theoremprep}. Summing Eq.~\eqref{eq_rhodecom} over the outcome index $a$, and using the fact that an assemblage \yujie satisfies the operational identity
\begin{align}
   \sum_a p(a|x)\rho_{a|x} = \sum_a p(a|x')\rho_{a|x'} \quad \text{for all } x,x'. 
\end{align}
By \cref{eq_paxOp}, it follows that
\begin{align}
   \sum_a p(a\lambda|x) = \sum_a p(a\lambda|x')
   \quad \forall\,x,x'.    
\end{align}
Defining $p(\lambda|x) \coloneqq \sum_a p(a\lambda|x)$, we conclude that $p(\lambda|x)$ is independent of $x$, i.e. we can denote,
\begin{align}\label{xindep}
   p(\lambda|x) = p(\lambda) \quad \forall x.
\end{align}
\blk
By Bayesian inversion,
\begin{align}
&p(a\lambda|x)=p(a|x\lambda)p(\lambda|x),
\end{align}
and so  Eq.~\eqref{xindep} implies
\begin{align}
&p(a\lambda|x)=p(a|x\lambda)p(\lambda).
\end{align}
Plugging this into Eq.~\eqref{eq_rhodecom}, we obtain Eq.~\eqref{unsteerableassemblage}, the condition for being an unsteerable assemblage. 
\end{proof}

In other words, an assemblage being classical (in the sense of Definition~\ref{def:og_process}) implies that it admits a local hidden state model and that it is a free resource relative to local operations and shared randomness. \yujie  It is worth noting that, in our framework, every steerable assemblage remains nonclassical even when it is viewed as a general multi-source, i.e.,  without necessarily assuming the lack of causal influence from the setting to the quantum output that is implied by the steering scenario. \blk
However, the converse is not true, as is established below in Example~\ref{steeringexample}.

Theorem~\ref{theoremprep}  also has consequences for the particular case of a multi-state where the states are linearly independent.
For any multi-state described by a set $\{\rho_{\cdot|x}\}_x$ of linearly independent states on a single system, the set $\mc{O}(\msf{P})$ of operational identities it satisfies is empty, so (by Eq.~\eqref{eq_paxOp}) there are no constraints on the set of distributions $\{ p(a|x)\}_x$ that are the ontological representations of these.
Consequently, one can always find a decomposition as in Eq.~\eqref{eq_rhodecom} (where $a$ is trivial) by defining $\sigma_{\lambda}\coloneqq\rho_{\cdot|\lambda}$ and $p(\lambda|x)=\delta_{x,\lambda}$. So one has the following result (also noted in Ref.~\cite{schmid2024addressing}):
\begin{corollary}
\label{lem:linearinde}
    A \yujie multi-state consisting of \blk a set of linearly independent states on a single system is classical.
\end{corollary}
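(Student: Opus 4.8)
The plan is to derive this as an immediate special case of the structure theorem for classical multi-sources, Theorem~\ref{theoremprep}, by checking that its condition (1) is satisfied whenever the states are linearly independent. First I would recall that a multi-state is the special case of a multi-source in which the outcome variable $a$ is trivial, so that each subnormalized state $p(a|x)\rho_{a|x}$ collapses to a single normalized state $\rho_{\cdot|x}$, and the decomposition demanded by Eq.~\eqref{eq_rhodecom} reduces to writing each $\rho_{\cdot|x}$ as $\sum_{\lambda} p(\lambda|x)\sigma_{\lambda}$ for a fixed set of normalized states $\{\sigma_{\lambda}\}_{\lambda}$ and a conditional distribution $p(\lambda|x)$ obeying the ontological-identity side condition Eq.~\eqref{eq_paxOp}.

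The crucial observation is that linear independence of $\{\rho_{\cdot|x}\}_x$ forces the operational-identity set $\mc{O}(\msf{P})$ of Eq.~\eqref{op:prep} to be trivial: the only real coefficients $\{\alpha_x\}$ with $\sum_x \alpha_x \rho_{\cdot|x} = \mbb{0}$ are $\alpha_x = 0$ for all $x$. Hence the side condition Eq.~\eqref{eq_paxOp} on $p(\lambda|x)$ is vacuous, and one is free to choose any legitimate conditional distribution whatsoever. I would then exhibit the diagonal choice explicitly: take the frame to consist of the states themselves, $\sigma_{\lambda} \coloneqq \rho_{\cdot|\lambda}$, which are manifestly normalized density operators, and set $p(\lambda|x) = \delta_{x,\lambda}$, a valid conditional probability distribution. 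With these choices $\sum_{\lambda} p(\lambda|x)\sigma_{\lambda} = \sigma_x = \rho_{\cdot|x}$, so Eq.~\eqref{eq_rhodecom} holds identically, while the now-vacuous constraint Eq.~\eqref{eq_paxOp} is satisfied automatically. This establishes condition (1) of Theorem~\ref{theoremprep} and therefore classicality in the sense of Definition~\ref{def_multi-sourceNC}.

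There is essentially no obstacle to overcome here; the entire content lies in recognizing that linear independence empties $\mc{O}(\msf{P})$ of nontrivial elements, after which the diagonal ontological representation closes the argument trivially. The only steps requiring a line of verification are that the candidate frame operators $\sigma_{\lambda}$ are normalized states and that $p(\lambda|x) = \delta_{x,\lambda}$ is a normalized conditional distribution, both of which are immediate. I would note in passing that the same construction underlies the intuition that linearly independent preparations carry no operational constraints to be respected by the ontology, so a fully separating (effectively ``copy the label'') model always exists.
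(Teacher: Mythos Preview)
Your proof is correct and follows essentially the same route as the paper: both observe that linear independence renders $\mc{O}(\msf{P})$ trivial, so Eq.~\eqref{eq_paxOp} imposes no constraint, and then verify condition~(1) of Theorem~\ref{theoremprep} via the diagonal choice $\sigma_{\lambda}\coloneqq\rho_{\cdot|\lambda}$, $p(\lambda|x)=\delta_{x,\lambda}$.
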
 

From this, one can prove that {\em any} multi-source defined by a set of three or fewer states on a single system is also classical. (The classical explainability of a prepare-measure scenario involving fewer than four states was first noted in Ref.~\cite{Pusey2018}.) The case where the states are linearly independent is covered by Lemma~\ref{lem:linearinde}, so it suffices to consider the case where they are linearly {\em dependent}.  The single-state case is trivial. For a pair of states, linear dependence implies that the states are equal, so we again have a trivial case.  For a triple of states, any possible linear dependence has the form $\rho_3=p\rho_1+(1-p)\rho_2$ for some $p\in [0,1]$ (for some suitable relabeling). There is consequently a single operational identity in the set $\mathcal{O}_{P}$. But in this case, one can find a decomposition 
as in Eq.~\eqref{eq_rhodecom},
where $a$ is trivial,  where $\lambda \coloneqq \{1,2\}$,  where $\sigma_1\coloneqq\rho_1$ and $\sigma_2\coloneqq\rho_2$, and where 
$p(\lambda|x)=\delta_{x,\lambda}$ for $x\in \{1,2\}$ and $p(\lambda|x)=p\delta_{\lambda,1}+(1-p)\delta_{\lambda,2}$ for $x=3$.  The last equation ensures that Eq.~\eqref{eq_paxOp} is satisfied and hence that condition (1) of Theorem \ref{theoremprep} holds. 

This argument generalizes to give the following sufficient condition for classicality of 
a set of states.
\begin{corollary} \label{simplexcor}
If a \yujie multi-state consists of a \blk set of states that fits inside a simplex within the quantum state space, then it is classical.
\end{corollary}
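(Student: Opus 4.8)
The plan is to verify condition (1) of Theorem~\ref{theoremprep} directly, taking the frame states $\{\sigma_\lambda\}_\lambda$ to be the vertices of the simplex. Since a simplex is by definition the convex hull of affinely independent points, the $\{\sigma_\lambda\}_\lambda$ are affinely independent density operators, and this affine independence is the engine of the whole argument.

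First I would use the hypothesis that each state $\rho_{\cdot|x}$ of the multi-state lies inside the simplex to write it in its (unique) convex decomposition $\rho_{\cdot|x} = \sum_\lambda p(\lambda|x)\sigma_\lambda$, where the coefficients satisfy $p(\lambda|x)\geq 0$ and $\sum_\lambda p(\lambda|x)=1$. Because the outcome variable $a$ is trivial for a multi-state, this is precisely the decomposition demanded by Eq.~\eqref{eq_rhodecom} (with $p(a\lambda|x)$ identified with $p(\lambda|x)$), and the coefficients form a genuine conditional probability distribution.

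The remaining, and essential, step is to check the ontological identity constraint Eq.~\eqref{eq_paxOp}. Given any operational identity $\{\alpha_x\}_x\in\mc{O}(\msf{P})$, i.e.\ any real coefficients with $\sum_x \alpha_x \rho_{\cdot|x}=\mbb{0}$, I would substitute the simplex decomposition to obtain
\begin{equation}
\mbb{0}=\sum_x \alpha_x \rho_{\cdot|x} = \sum_\lambda \Big(\sum_x \alpha_x\, p(\lambda|x)\Big)\sigma_\lambda .
\end{equation}
Taking the trace and using normalization of each $\rho_{\cdot|x}$ gives $\sum_x \alpha_x = 0$, which in turn forces the coefficients $\sum_x \alpha_x p(\lambda|x)$ to sum to zero over $\lambda$. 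Thus the display above exhibits an affine combination of the $\sigma_\lambda$ equal to the zero operator with coefficients summing to zero; affine independence of the simplex vertices then forces every coefficient to vanish, i.e.\ $\sum_x \alpha_x p(\lambda|x)=0$ for all $\lambda$. This is exactly Eq.~\eqref{eq_paxOp}, so condition (1) of Theorem~\ref{theoremprep} holds and the multi-state is classical.

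The main subtlety to get right is the distinction between affine and linear independence of the $\sigma_\lambda$: a simplex only guarantees the former, but that is all one needs, because the trace/normalization bookkeeping ensures the relevant coefficient vector always sums to zero and hence probes only the affine structure. This is also what makes the result a genuine generalization of Corollary~\ref{lem:linearinde} and of the three-state argument preceding it, since a set of linearly independent states is in particular a set of affinely independent vertices of a simplex.
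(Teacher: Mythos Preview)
Your proof is correct and uses the same essential idea as the paper's: take the simplex vertices $\{\sigma_\lambda\}$ as the frame states and exploit their independence to force the ontological identities. The presentations differ slightly. The paper first observes that affinely independent trace-$1$ operators are automatically \emph{linearly} independent (via the trace argument you give), then invokes the existence of a dual basis $\{H_\lambda\}$ and defines $p(\lambda|x)\coloneqq\tr[H_\lambda\rho_{\cdot|x}]$; linearity of this functional then yields Eq.~\eqref{eq_paxOp} for free. You instead stay with affine independence and verify Eq.~\eqref{eq_paxOp} by hand, which is more elementary and avoids introducing the dual-basis machinery. Both routes are equally valid; yours makes the role of the trace/normalization bookkeeping more explicit, while the paper's packages it into the observation ``affinely independent trace-$1$ $\Rightarrow$ linearly independent'' and then lets the dual-frame picture do the work.
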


This is proven in Appendix~\ref{simpproof}.

For a set of distinct {\em pure} states, linear dependence is not only necessary 
for nonclassicality (per Corollary \ref{lem:linearinde}), but also sufficient. 
\begin{corollary}\label{purestatescorollary}
A \yujie multi-state consisting of a \blk set of distinct \emph{pure} states on a single system is classical if and only if the states are linearly independent. 
\end{corollary}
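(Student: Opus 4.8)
The plan is to prove the two directions separately. The direction ``linearly independent $\Rightarrow$ classical'' is already covered by Corollary~\ref{lem:linearinde}, so the work lies entirely in the converse, which I would establish in its contrapositive form: if a set of distinct pure states is linearly \emph{dependent}, then the associated multi-state is nonclassical. I would argue by contradiction, assuming classicality and invoking the structure theorem.

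First I would assume the multi-state is classical and apply Theorem~\ref{theoremprep}, specializing condition~(1) to the case where the outcome variable $a$ is trivial (so $p(a|x)=1$). Each pure state then admits a decomposition $\rho_{\cdot|x}=\sum_{\lambda}p(\lambda|x)\sigma_{\lambda}$ of the form in Eq.~\eqref{eq_rhodecom}, where $\{\sigma_{\lambda}\}_{\lambda}$ are normalized states and $p(\lambda|x)\ge 0$ with $\sum_{\lambda}p(\lambda|x)=1$ (taking the trace confirms normalization), and where the distribution additionally obeys the ontological-identity constraint Eq.~\eqref{eq_paxOp}. The key structural observation is that, because each $\rho_{\cdot|x}$ is pure and hence an extreme point of the state space, this convex decomposition must be trivial: for every $\lambda$ with $p(\lambda|x)>0$ one necessarily has $\sigma_{\lambda}=\rho_{\cdot|x}$.

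Next I would exploit distinctness. Defining the support $S_x\coloneqq\{\lambda : p(\lambda|x)>0\}$ of each state's ontological representation, the previous step shows that $\lambda\in S_x$ forces $\sigma_{\lambda}=\rho_{\cdot|x}$; since the pure states are pairwise distinct, no single $\lambda$ can have $\sigma_{\lambda}$ simultaneously equal to two different ones, so the sets $\{S_x\}_x$ are pairwise disjoint. Now, linear dependence of the states supplies a nontrivial operational identity $\{\alpha_x\}_x\in\mc{O}(\msf{P})$ (defined in Eq.~\eqref{op:prep}) with $\alpha_{x_0}\neq 0$ for some $x_0$. Choosing any $\lambda_0\in S_{x_0}$ (nonempty since $p(\cdot|x_0)$ is a normalized distribution), disjointness yields $p(\lambda_0|x)=0$ for all $x\neq x_0$, so $\sum_x\alpha_x\,p(\lambda_0|x)=\alpha_{x_0}\,p(\lambda_0|x_0)\neq 0$. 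This contradicts the required ontological identity Eq.~\eqref{eq_paxOp}, completing the argument.

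I expect the only delicate point to be the extremality step, namely justifying the collapse of the convex decomposition onto the pure state and correctly specializing Theorem~\ref{theoremprep} to the trivial-outcome setting; once that is in place, the disjoint-support observation and the resulting violation of the ontological identity follow immediately, so the extremality argument is the genuine crux of the proof.
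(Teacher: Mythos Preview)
Your proposal is correct and follows essentially the same approach as the paper: invoke Theorem~\ref{theoremprep} to obtain a convex decomposition of each pure state into the $\sigma_\lambda$'s, use extremality of pure states to force $\sigma_\lambda=\rho_{\cdot|x}$ on the support, then exploit distinctness to derive a contradiction with the ontological identity Eq.~\eqref{eq_paxOp}. Your support-set formulation is in fact slightly more careful than the paper's, which writes $p(\lambda|x)=\delta_{\lambda,x}$ directly (implicitly relabeling the ontic states), whereas you correctly allow that several $\lambda$'s may share the same $\sigma_\lambda$ and only use that the supports $S_x$ are pairwise disjoint.
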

The proof is in \cref{app:purestatescorollary}.

So, for example, any set of five pure qubit states is nonclassical, as are any four pure states lying within a plane of the Bloch ball.  An example of such a set of four states arises in the BB84 protocol~\cite{BB84} and in parity-oblivious multiplexing~\cite{POM}: 
\begin{example}\label{BB84ex}
The \yujie multi-state \blk$ \{\op{0}{0},\op{1}{1},\op{+}{+},\allowbreak \op{-}{-}\}$ is nonclassical.   
\end{example}
Follows immediately from Theorem~\ref{theoremprep} (condition 2), the qualitative nonclassicality of a multi-state is not affected by rescaling the states contained therein by any numbers in $(0,1]$ (this is also proven in Ref.~\cite{selby2023accessible}). Consequently, we infer the following result. 
\begin{example} \label{steeringexample}
    The source $\{p(a)\rho^{A}\}_a=\{\frac{1}{4}\op{0}{0},\allowbreak \frac{1}{4}\op{1}{1},\frac{1}{4}\op{+}{+},\frac{1}{4}\op{-}{-}\}$ is nonclassical.
\end{example}

Considered as an assemblage, one sees that it is unsteerable by noting that it has no setting variable (i.e., it is a single ensemble, rather than a set of ensembles) and so \cref{unsteerableassemblage} is satisfied trivially; this, therefore, is an example of an unsteerable assemblage that is \emph{nonclassical}.

Note that the converses of Corollaries~\ref{lem:linearinde} and~\ref{simplexcor} do not hold; there exist multi-states that exhibit nontrivial linear dependence relations, and moreover that do not fit inside a simplex within the quantum state space, but are nevertheless classical. This is illustrated by the following example. 
 
\begin{example}\label{simplexex}
 Consider the multi-states
\begin{equation} \label{eq:cube_prep}
  \left\{  \frac{1}{2}[\mbb{1}+\eta\hat{n}_x] \right\}_x,
\end{equation}
where $\{\hat{n}_x\}$ are unit vectors corresponding to the eight vertices of a cube inscribed in the Bloch sphere. This set is classical if and only if $\eta\le \frac{1}{\sqrt{3}}$. (See companion paper~\cite{zhang2024parellel} for the proof.) \blk However, these states are linearly dependent for all $\eta >0$\blk; moreover, it can be verified geometrically that a shrunken cube with vertices $\{\frac{1}{\sqrt{3}}\hat{n}_x\}$ cannot be contained inside any tetrahedron that fits inside the Bloch sphere, and so the set of states when $\eta=\frac{1}{\sqrt{3}}$ does not fit inside any simplex within the qubit state space.
\end{example}
\yujie This example also highlights that our notion of classical multi-states is characterized by a frame representation (as defined in Theorem~\ref{theoremprep}), where the frame may be overcomplete, and the associated decomposition need not be simplicial (i.e., it need not correspond to a basis).\blk

\subsection{The classical-nonclassical divide for multi-measurements}

Next, we turn to measurements. Recall that we use the term ``measurement'' to refer to a device with a quantum input and an outcome variable (no quantum output), and the term ``multi-measurement'' to refer to such a device when it has a setting variable as well. \blk

The definition of nonclassicality of a multi-measurement \blk is again a special case of Definition~\ref{maindefn}:
\begin{definition}
\label{def_mtmtNC}
A multi-measurement is classical if and only if the statistics generated by the set of circuits where it is contracted with {\em any} state (i.e., where the set ranges over all states) are classically explainable (in the sense of Definition~\ref{defn:classical}).
\end{definition}

Precisely {\em which} multi-measurements are classical is characterized by the following theorem.  

\begin{theorem}\label{theoremmeas}
\yujie
For a multi-measurement $\msf{M}=\{\{M_{b|y}\}_b\}_y$, the following statements are equivalent:\\
(0) It is classical (in the sense of Definition~\ref{def_mtmtNC}).\\
\blk
(1) Its effects can be decomposed as
    \begin{equation}
    \label{eq_Edecom}
        M_{b|y}=\sum_{\lambda} p(b|y\lambda)G_{\lambda} \quad \forall b,y
    \end{equation}
for some POVM $\{G_{\lambda}\}_{\lambda}$ and some conditional probability distribution satisfying 
    \begin{align}
    \label{eq_pbyOp}
\sum_{by} \beta_{by} p(b|y\lambda)=0 \qquad     \forall \{\beta_{by}\}_{by}\in\mc{O}(\msf{M})
    \end{align}
for all $\lambda$, where $\mc{O}(\msf{P})$ is defined in Eq.~\eqref{op:meas}.   \\ 
(2) Its effects can be decomposed within a frame representation on $V=\text{Span}(\{M_{b|y}\})$ as
    \begin{equation}
    \label{eq_MdecFrame}
        M_{b|y}=\sum_{\lambda} \tr[M_{b|y}F_{\lambda}] G_{\lambda} \quad \forall b,y
    \end{equation}
where $\{G_{\lambda}\}_{\lambda}$ is a POVM with \yujie $\sum_{\lambda}G_{\lambda}=\mc{P}_{V}(\mbb{1})=\mbb{1}$ \blk and $\{F_{\lambda}\}_{\lambda}\subset\text{Herm}(\mc H)$ is a set of Hermitian operators satisfying $\tr[F_{\lambda}]=1$ and $\tr[M_{b|y}F_{\lambda}]\in [0,1]$ for all $b,y,\lambda$. This condition is depicted in \cref{framemeas}.
\end{theorem}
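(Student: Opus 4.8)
The plan is to mirror the proof of Theorem~\ref{theoremprep} under the state--effect duality, establishing $(0)\Leftrightarrow(1)$ directly from the characterization of classical explainability of a prepare-measure scenario, and $(1)\Leftrightarrow(2)$ as a restatement in terms of a frame representation (with $\{G_\lambda\}_\lambda$ the frame and $\{F_\lambda\}_\lambda$ the dual frame). For $(0)\Leftrightarrow(1)$ I would use Definition~\ref{def_mtmtNC}: the dual processes here are all states, so contracting the multi-measurement with a state $\rho$ yields a prepare-measure circuit with statistics $\tr[M_{b|y}\rho]$. I then invoke the characterization from Sec.~\ref{ncstates} (Eqs.~\eqref{eq:PMNC}--\eqref{nc:PM}): classicality holds iff there exist response functions $p(b|y\lambda)$ representing the effects together with a linear, diagram-preserving assignment $\rho\mapsto\mu_\rho$ reproducing $\tr[M_{b|y}\rho]=\sum_\lambda p(b|y\lambda)\mu_\rho(\lambda)$, subject to the preservation of operational identities among effects, Eq.~\eqref{nc:meas}, which is precisely Eq.~\eqref{eq_pbyOp}. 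The crucial move is that, since the dual states range over the \emph{entire} state space, linearity forces $\mu_\rho(\lambda)=\tr[G_\lambda\rho]$ for some Hermitian $G_\lambda$; demanding that $\mu_\rho$ be a normalized probability distribution for every state then forces $G_\lambda\ge 0$ and $\sum_\lambda G_\lambda=\mbb{1}$, i.e. $\{G_\lambda\}_\lambda$ is a POVM. Substituting gives $\tr[M_{b|y}\rho]=\tr[(\sum_\lambda p(b|y\lambda)G_\lambda)\rho]$ for all $\rho$, which is equivalent to the operator identity Eq.~\eqref{eq_Edecom}. The converse runs backwards: given $(1)$, the model $\rho\mapsto\tr[G_\lambda\rho]$ together with the response functions $p(b|y\lambda)$ is linear, diagram-preserving, reproduces the statistics, and satisfies all operational identities by Eq.~\eqref{eq_pbyOp}.

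For $(1)\Rightarrow(2)$ the task is to manufacture dual-frame operators $\{F_\lambda\}_\lambda$ realizing the coefficients as $p(b|y\lambda)=\tr[M_{b|y}F_\lambda]$. Here the constraint Eq.~\eqref{eq_pbyOp} is exactly the statement that the assignment $M_{b|y}\mapsto p(b|y\lambda)$ respects every linear dependence among the effects, so it extends to a well-defined linear functional on $V=\text{Span}(\{M_{b|y}\})$; by Riesz representation with respect to the trace inner product on $V$ there is a unique $F_\lambda\in V\subseteq\text{Herm}(\mc{H})$ with $\tr[M_{b|y}F_\lambda]=p(b|y\lambda)$. The normalization $\tr[F_\lambda]=1$ comes for free: since each POVM satisfies $\sum_b M_{b|y}=\mbb{1}$, one has $\mbb{1}\in V$, whence $\tr[F_\lambda]=\tr[\mbb{1}\,F_\lambda]=\sum_b\tr[M_{b|y}F_\lambda]=\sum_b p(b|y\lambda)=1$. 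The bounds $\tr[M_{b|y}F_\lambda]=p(b|y\lambda)\in[0,1]$ hold by hypothesis, and $\sum_\lambda G_\lambda=\mc{P}_V(\mbb{1})=\mbb{1}$ again follows from $\mbb{1}\in V$. The reverse $(2)\Rightarrow(1)$ is immediate: setting $p(b|y\lambda):=\tr[M_{b|y}F_\lambda]$, one checks $\sum_b p(b|y\lambda)=\tr[(\sum_b M_{b|y})F_\lambda]=\tr[F_\lambda]=1$, and Eq.~\eqref{eq_pbyOp} is inherited from linearity of the trace pairing.

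The main obstacle---really the only substantive point---is the passage from ``there exists a valid ontological representation for every dual state'' to the clean operator statement that the state-representation map is $\rho\mapsto\tr[G_\lambda\rho]$ with $\{G_\lambda\}_\lambda$ a genuine POVM. This is where ranging over \emph{all} states (rather than some fixed, tomographically incomplete set) does the work, simultaneously pinning down positivity ($G_\lambda\ge 0$, via evaluation on all pure states) and normalization ($\sum_\lambda G_\lambda=\mbb{1}$, via $\tr[(\sum_\lambda G_\lambda-\mbb{1})\rho]=0$ for all $\rho$). I would also flag the contrast with the preparation case of Theorem~\ref{theoremprep}: no $\mc{P}_V$ subtlety arises here, neither in the normalization of $\{G_\lambda\}_\lambda$ nor in $\tr[F_\lambda]=1$, precisely because $\mbb{1}\in V$ always holds for a multi-measurement (cf.\ Remark~\ref{remark:Pv}), whereas $\mbb{1}$ need not lie in the span of the states of a multi-source. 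The full argument would be deferred to an appendix paralleling Appendix~\ref{proofthmprep}.
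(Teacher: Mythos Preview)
Your proposal is correct and follows essentially the same approach as the paper's proof in Appendix~\ref{proofthmmes}: both use the Riesz representation theorem (and the fact that the dual processes range over all states) to extract the POVM $\{G_\lambda\}_\lambda$ from the linear state-representation map, and then represent the response functions $p(b|y\lambda)$ via Hermitian operators $F_\lambda$ using linearity in the effects. The only cosmetic difference is the ordering of implications---the paper runs the cycle $(0)\Rightarrow(2)\Rightarrow(1)\Rightarrow(0)$, constructing both $G_\lambda$ and $F_\lambda$ directly from the ontological model in the first step, whereas you split this into $(0)\Leftrightarrow(1)$ and $(1)\Leftrightarrow(2)$---but the substance is identical.
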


The proof of the theorem is provided in Appendix~\ref{proofthmmes}. We will sometimes refer to this as a structure theorem for classical multi-measurements.
\yujie 
\begin{remark}
Here, $\sum_{\lambda}G_{\lambda}=\mc{P}_{V}(\mbb{1})=\mbb{1}$ since the operator space $V=\text{Span}(\{M_{b|y}\})$ contains the identity operator (it suffices to recall that $\sum_b M_{b|y} = \mbb{1}$). Note the difference to the situation with the dual frame $\{H_{\lambda}\}_{\lambda}$ in Theorem~\ref{theoremprep}.
\end{remark}
\blk

\begin{figure}[htb!]
\centering
\includegraphics[width=0.32\textwidth]{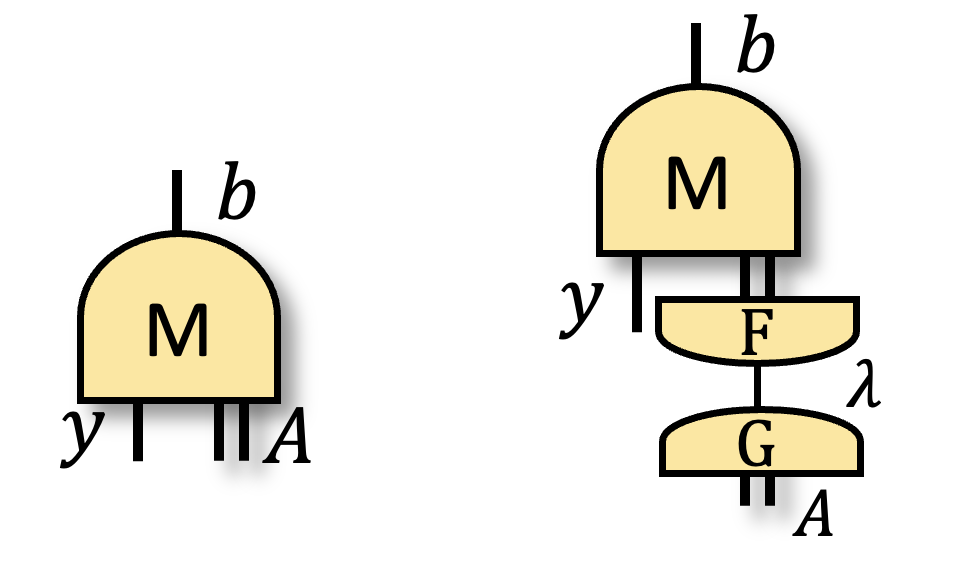}
\caption{\yujie A diagrammatic criterion for classicality of a multi-measurement $\{\{M_{b|y}\}_b\}_y$:  a multi-measurement is classical if and only if it admits a frame representation as shown on the right, where the conditions on the frame and dual frame operators are given in Theorem~\ref{theoremmeas}.\blk}
\label{framemeas}
\end{figure}

It is worth noting that an alternative proof of Theorem~\ref{theoremmeas} is possible by leveraging Theorem~\ref{theoremprep} and Lemma~\ref{lem:Choi} from Section~\ref{general}.  Essentially, a multi-measurement defines a special type of multi-source (known as a steering assemblage) via the Choi isomorphism, and Lemma~\ref{lem:Choi} asserts that the verdict of nonclassicality is preserved under the Choi isomorphism.  Conditions (1) and (2) in Theorem~\ref{theoremmeas} are the translations through the Choi isomorphism of conditions (1) and (2) of Theorem~\ref{theoremprep}. \blk

Several interesting insights follow from this theorem, most of which are direct analogues of those in the previous section.  

First, recall that a set of measurements is compatible~\cite{Guhne2023} if and only if there is a single measurement
$\{G_\lambda\}_\lambda$ that can be post-processed to reproduce every measurement in the set. Representing this set of measurements as a multi-measurement $\{\{M_{b|y}\}_y\}_y$, the condition of compatibility is
\begin{equation}
M_{b|y}=\sum_\lambda p(b|y\lambda)G_{\lambda},
\end{equation}
as depicted in Fig.~\ref{incompatiblemmts}.
Any set of measurements for which the condition cannot be satisfied is said to be incompatible.

\begin{figure}[htb!]
\centering
\includegraphics[width=0.35\textwidth]{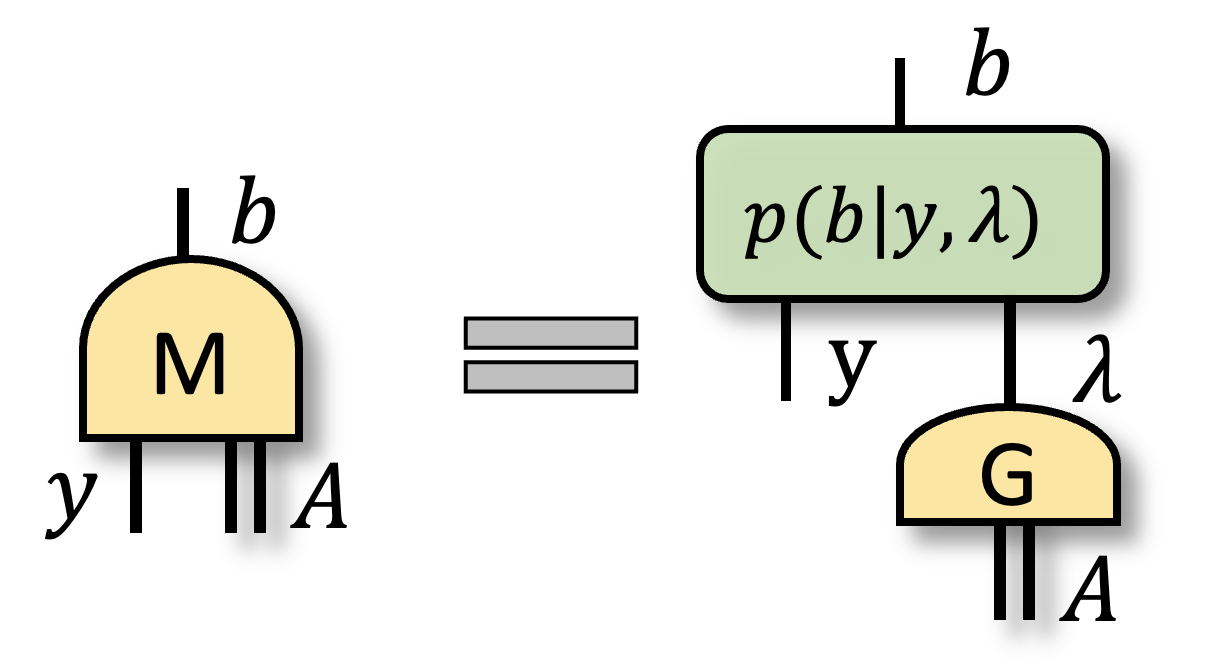}
\caption{ A set of measurements $\{\{M_{b|y}\}_b\}_y$ is compatible if and only if there is a single measurement which can be classically post-processed to simulate each element of the set. This implies that a multi-measurement ranging over those measurements decomposes as shown (with the single measurement denoted by $\{G_{\lambda}\}$).  }
\label{incompatiblemmts}
\end{figure}

Recalling this, Theorem~\ref{theoremmeas} immediately implies the following sufficient condition for nonclassicality.  
\begin{corollary}\label{corr:incompatible}
Every \yujie multi-measurement consisting of a \blk set of incompatible measurements is nonclassical. 
\end{corollary}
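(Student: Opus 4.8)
The plan is to prove the contrapositive: if the multi-measurement $\msf{M}=\{\{M_{b|y}\}_b\}_y$ is classical, then the underlying set of measurements is compatible. The entire argument rests on comparing the decomposition guaranteed by classicality against the defining decomposition of compatibility, so Theorem~\ref{theoremmeas} does essentially all of the work.

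First I would assume that $\msf{M}$ is classical. By the equivalence $(0)\Leftrightarrow(1)$ in Theorem~\ref{theoremmeas}, there then exists a single POVM $\{G_\lambda\}_\lambda$ and a conditional probability distribution $p(b|y\lambda)$ such that
\[
M_{b|y}=\sum_{\lambda} p(b|y\lambda)\,G_{\lambda}\qquad \forall\,b,y.
\]
Next I would observe that this is \emph{precisely} the functional form defining compatibility (cf. Fig.~\ref{incompatiblemmts}): the single POVM $\{G_\lambda\}_\lambda$ plays the role of the common ``parent'' measurement, and $p(b|y\lambda)$ is the classical post-processing that recovers each measurement $\{M_{b|y}\}_b$ in the set. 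Hence $\msf{M}$ is compatible, and taking the contrapositive yields the claim: if the set of measurements is incompatible, the multi-measurement is nonclassical.

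The only subtlety worth flagging is that condition (1) of Theorem~\ref{theoremmeas} carries an \emph{additional} requirement, namely the ontological-identity constraint of Eq.~\eqref{eq_pbyOp}, which the post-processing $p(b|y\lambda)$ must satisfy; compatibility imposes no such constraint. This is exactly why the implication runs only one way: classicality is strictly stronger than compatibility, so while every classical multi-measurement is compatible, a compatible set whose post-processing cannot additionally be chosen to respect Eq.~\eqref{eq_pbyOp} will be nonclassical. Consequently there is no genuine obstacle in establishing the corollary—the content lies entirely in recognizing that the classicality decomposition already \emph{is} a compatibility decomposition—and the same comparison transparently accounts for the companion fact, asserted elsewhere in the text, that certain compatible sets of measurements are nonetheless nonclassical.
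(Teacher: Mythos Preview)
Your proposal is correct and follows essentially the same approach as the paper: prove the contrapositive by invoking condition (1) of Theorem~\ref{theoremmeas} and observing that the resulting decomposition $M_{b|y}=\sum_\lambda p(b|y\lambda)G_\lambda$ is exactly the defining form of compatibility. Your remark that the additional constraint of Eq.~\eqref{eq_pbyOp} is what blocks the converse also matches the paper's discussion verbatim.
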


We prove this by demonstrating the contrapositive, that classicality implies compatibility. Specifically, classicality implies the satisfaction of condition 1 (Eq.~\eqref{eq_Edecom}) of Theorem~\ref{theoremmeas}, which implies that the POVM $\{G_{\lambda}\}_\lambda$ defines a single measurement that can simulate all of the measurements in the given set via post-processing~\cite{Guhne2023}, hence the set of measurements is compatible. The converse of Corollary~\ref{corr:incompatible} does not hold, due to the extra constraints in Eq.~\eqref{eq_pbyOp} of condition 1; there are sets of measurements that are all compatible and yet for which the multi-measurement they define is nonetheless nonclassical. 

A more exhaustive comparison between nonclassicality and incompatibility of a set of measurements is carried out in our companion paper~\cite{zhang2024parellel}.  

The analogue for effects of our Corollary~\ref{purestatescorollary} holds \blk by an analogous proof: 
\begin{corollary}
\label{coro: measurement}
A \yujie measurement defined by a set of rank-1 effects \blk is classical
if and only if the effects are linearly independent. 
\end{corollary}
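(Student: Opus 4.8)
The plan is to mirror the proof of Corollary~\ref{purestatescorollary}, using Theorem~\ref{theoremmeas} in place of Theorem~\ref{theoremprep} and letting rank-1-ness of the effects play the role that purity (extremality) played for states. Write the measurement as $\{M_b\}_b$ with each $M_b=\mu_b\op{\phi_b}{\phi_b}$ a nonzero rank-1 effect; since the setting $y$ is trivial, $\mc{O}(\msf{M})$ from Eq.~\eqref{op:meas} is just the set of coefficient tuples $\{\beta_b\}_b$ with $\sum_b\beta_b M_b=\mbb{0}$. For the easy direction---linear independence implies classicality---I would note that if the $M_b$ are linearly independent then $\mc{O}(\msf{M})$ is trivial, so the only demand of condition (1) of Theorem~\ref{theoremmeas} is the existence of the decomposition itself. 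One then takes the frame $\{G_\lambda\}_\lambda$ to be the measurement $\{M_b\}_b$ itself and $p(b|\lambda)=\delta_{b\lambda}$, which reproduces Eq.~\eqref{eq_Edecom} and satisfies Eq.~\eqref{eq_pbyOp} vacuously; hence the measurement is classical. This step is just the measurement analogue of Corollary~\ref{lem:linearinde} and uses neither rank-1-ness nor distinctness.

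The substantive direction is the contrapositive: a linearly dependent set of rank-1 effects is nonclassical. First I would fix a nontrivial dependence $\{\beta_b\}_b\in\mc{O}(\msf{M})$ and suppose, for contradiction, that the measurement is classical, so by condition (1) of Theorem~\ref{theoremmeas} there is a POVM $\{G_\lambda\}_\lambda$ and a stochastic $p(b|\lambda)$ with $M_b=\sum_\lambda p(b|\lambda)G_\lambda$ and $\sum_b\beta_b\,p(b|\lambda)=0$ for all $\lambda$. The key structural step---the analogue of extremality of pure states---is a domination/alignment argument: since $M_b-p(b|\lambda)G_\lambda=\sum_{\lambda'\neq\lambda}p(b|\lambda')G_{\lambda'}\ge \mbb{0}$, we have $\mbb{0}\le p(b|\lambda)G_\lambda\le M_b$, and because $M_b$ has range $\text{span}\{\phi_b\}$ this forces $p(b|\lambda)>0\Rightarrow G_\lambda\propto\op{\phi_b}{\phi_b}$. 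Discarding the irrelevant $\lambda$ with $G_\lambda=\mbb{0}$, every surviving $G_\lambda$ is then a nonzero rank-1 operator whose direction matches $\phi_b$ for each outcome $b$ it supports.

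Finally I would invoke distinctness of the effects' directions to close the argument. Because the effects point in pairwise distinct directions, no single $G_\lambda$ can be aligned with two of them, so for each $\lambda$ at most one $b$ has $p(b|\lambda)>0$; together with $\sum_b p(b|\lambda)=1$ this makes $p(\cdot|\lambda)=\delta_{\cdot,f(\lambda)}$ a point distribution for some outcome-valued function $f$. Then $M_b=\sum_{\lambda:f(\lambda)=b}G_\lambda$, and since $M_b\neq\mbb{0}$ the preimage $f^{-1}(b)$ is nonempty for every outcome $b$. The ontological identity now reads $0=\sum_b\beta_b\,p(b|\lambda)=\beta_{f(\lambda)}$ for all $\lambda$, forcing $\beta_b=0$ for every $b$ in the image of $f$, i.e.\ for every outcome---contradicting $\{\beta_b\}\neq\mbb{0}$. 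Hence the measurement is nonclassical.

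I expect the main obstacle to be the last step's reliance on the effects having pairwise distinct \emph{directions}, which is the exact counterpart of the word ``distinct'' in Corollary~\ref{purestatescorollary} (where distinct pure states are automatically non-proportional, being normalized). For effects this is a genuine hypothesis rather than a free consequence: two rank-1 effects that are proportional but unequal are linearly dependent yet can still be classically post-processed from a coarser POVM, so the statement should be read as being about distinct rays, not merely distinct operators. Isolating this hypothesis, together with the domination estimate $\mbb{0}\le p(b|\lambda)G_\lambda\le M_b$ that converts rank-1-ness into alignment, is where the real work lies; the remaining manipulations are bookkeeping identical to the pure-state case.
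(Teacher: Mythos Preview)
Your proof is correct and is precisely the ``analogous proof'' the paper points to: the domination estimate $0\le p(b|\lambda)G_\lambda\le M_b$ is the measurement-side translation of the extremality step ``pure $\Rightarrow$ trivial convex mixture'' used in Corollary~\ref{purestatescorollary}, and the rest is identical bookkeeping. Your caveat about needing pairwise distinct rays is well taken and is exactly the hypothesis ``distinct'' that appears explicitly in Corollary~\ref{purestatescorollary} but is left implicit here; your counterexample sketch (proportional rank-1 effects) shows the statement genuinely needs it.
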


One significant implication of this corollary is 
that a measurement represented by a rank-1 symmetric informationally-complete POVM is classical.\footnote{We thank Thomas Galley for discussions in which this was first realized.} 

Unlike a single state, a single measurement $\{M_b\}_b$ can be nonclassical, provided that there are nontrivial operational identities among its effects~\cite{Selby2023}. 
\begin{example}
The noisy Pentagon measurement $\{M^{\eta}_b\}_{b=0}^4$, where
\begin{equation}
    M^{\eta}_b=\frac{1}{5}(\mbb{1}+\eta(\cos\theta_b\sigma_x+\sin\theta_b\sigma_z)]~~~\theta_b=\frac{2\pi b}{5},
    \label{eq: planar}
\end{equation}
and $\eta \in [0,1]$ is a noise parameter (this multi-measurement was defined for $\eta=1$ in Ref.~\cite{Selby2023}). \blk This single measurement is nonclassical if and only if $\eta \gtrapprox 0.618$ as proven in Ref.~\cite{zhang2024parellel}.
\end{example}

\section{Processes of the bipartite preparation variety}
\label{sec:bipartite}

Recall that a single state on a unipartite system is classical, as a consequence of corollary~\ref{lem:linearinde}. \blk However, once one considers multipartite systems, even a single state can be nonclassical. This is because the subsystem structure provides new opportunities for witnessing nonclassicality (as first recognized in Ref.~\cite{schmid2024addressing}).

When defining a linear and diagram-preserving ontological representation of a circuit that involves parallel composition of two or more systems (represented diagrammatically by a pair of wires), the assumption of diagram preservation implies that each system is associated with its own independent ontic state space. The ontic state space of the joint system is the Cartesian product of the ontic state spaces of the subsystems, an assumption that is sometimes called {\em ontic separability}.

Note that, if one instead wishes to entertain an ontological representation wherein the ontic state space of the joint system is {\em not} a Cartesian product, then one should rather represent the bipartite system in the quantum circuit as a single monolithic system---without a factorization into subsystems. In this case, the assumption of diagram preservation would not force one to assume ontic separability. Under such a choice, the joint system would be treated as a unipartite system and represented diagrammatically by a single wire rather than a pair of wires. 

Recall from Section~\ref{classicalexplainability} that assessments of nonclassicality are always made relative to a given circuit, which effectively embodies a hypothesis about the causal structure. It should be clear from the previous paragraph that for
a quantum system represented by a 4-dimensional Hilbert space, viewing it as (i) a single monolithic system or (ii) as a pair of qubits (according to some factorization of the system into a pair of 2-dimensional subsystems) constitutes two different hypotheses about the causal structure that are represented by different circuits.  Indeed, the latter view, wherein a privileged factorization is specified for the full system,  gives a more fine-grained perspective on the system. This is reflected in the fact that an ontological model associates to a monolithic system an ontic state space with no particular structure, while it associates to a bipartite system an ontic state space that has extra structure---it must be the Cartesian product of two ontic state spaces, one for each subsystem. 

Assessments of nonclassicality may depend on which view of the system one takes. In particular, the unipartite view may lead to an assessment of classicality, while the bipartite view may lead to an assessment of nonclassicality.\footnote{Note that if the bipartite view leads to an assessment of classicality, then the unipartite view will as well.} Indeed, we will see this in the following results, where we show that any entangled state on a bipartite system is nonclassical relative to the specified bipartition, whereas we saw in Section~\ref{ncstates} (from Corollary~\ref{lem:linearinde}, for example) that any individual state is classical when it is conceptualized as a state on a unipartite system with no specified subsystem structure.

To discuss the nonclassicality of a bipartite state, one must consider the circuit obtained by contracting it with its dual process, which is a bipartite effect, as depicted in Fig.~\ref{duals}(c).  Given Theorem~\ref{thmdual}, however, it suffices to consider the effects that factorize across the bipartition---hence, only {\em local} measurements. Thus, it suffices to consider only the type of circuits depicted in Fig.~\ref{duals}(e).   

\subsection{Classical explainability of a prepare-measure scenario on a bipartite system with local measurements}

As preparation for our discussion of nonclassicality of bipartite states, we consider the question of when a prepare-measure scenario involving a bipartite state and local measurements is classically explainable in the sense of Definition~\ref{defn:classical}. 

Specifically, we consider a circuit wherein the composite system $A_1 A_2$ is prepared in a particular  bipartite state $\msf{P}^{A_1A_1}:=\rho^{A_1A_2}$ and then $A_1$ is subjected to a multi-measurement $\msf{M}^{A_1}_1$ while $A_2$ is subjected to a multi-measurement $\msf{M}^{A_2}_2$, where $\msf{M}^{A_i}_i\coloneqq\{\{M_{b_i|y_i}^{A_i}\}_{b_i}\}_{y_i}$. In such an experiment, depicted in Fig.~\ref{fig:steer}(a), quantum theory predicts that
\begin{equation}
p(b_1b_2|y_1y_2)=\tr[\rho^{A_1A_2} M_{b_1|y_1}^{A_1}\otimes M_{b_2|y_2}^{A_2}].
\end{equation}

\begin{figure}[htb!]
\centering
\includegraphics[width=0.5\textwidth]{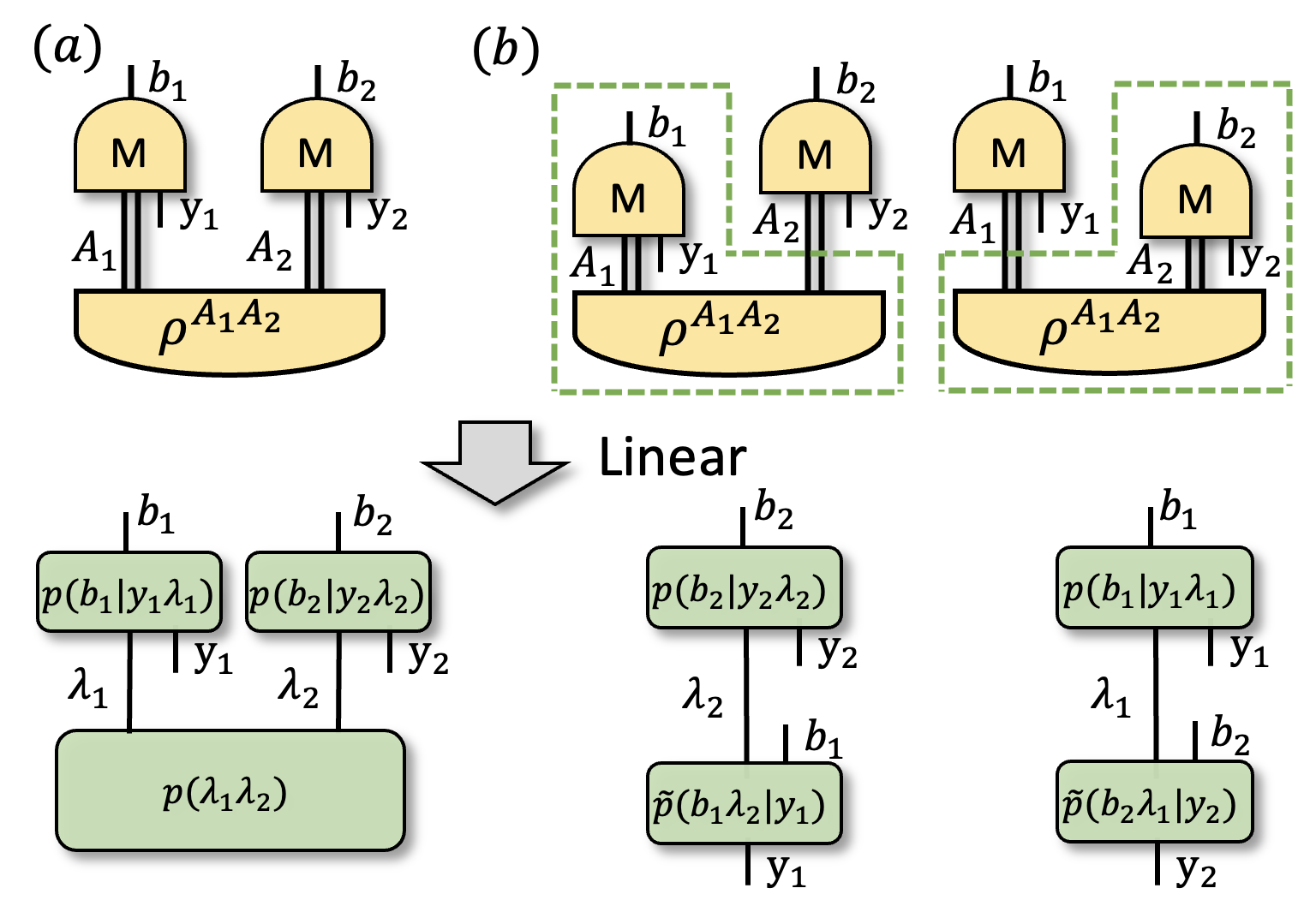}
\caption{a) Local measurements on a bipartite state. b) By boxing processes together, we can view this circuit as a set of preparations and measurements on $A_2$, or as a set of preparations and measurements on $A_1$. At the top of the figure, we depict the quantum circuits; at the bottom, we depict an ontological model of each circuit.}
\label{fig:steer}
\end{figure}

There may be operational identities among the effects on each system, namely
\begin{subequations}
\begin{align}
 &\mc{O}(\msf{M}^{}_1)=\{\{\alpha_{b_1y_1}\} |\sum_{b_1y_1} \alpha_{b_1y_1} M_{b_1|y_1}^{A_1}=\mbb{0}\},\label{op:meas1} \\ 
 &\mc{O}(\msf{M}_2)=\{\{\alpha_{b_2y_2}\} |\sum_{b_2y_2} \alpha_{b_2y_2} M_{b_2|y_2}^{A_2}=\mbb{0}\}.\label{op:meas2}
\end{align}
\label{op:bipartite1}
\end{subequations}
\par 
\yujie Note that for the sake of reducing notational clutter, we will sometimes drop the system-identifying superscripts on the names of processes when these appear as arguments inside $\mc{O}(\cdot)$. Thus we write $\mc{O}(\msf{M}^{}_i)$ instead of $\mc{O}(\msf{M}^{A_i}_i)$.

There may also be further operational identities in this circuit, namely, identities holding for the steering assemblages on $A_1$ and $A_2$ depicted in \cref{fig:steer}(b), the steering assemblage (i.e., multi-source) on $A_2$, for instance, is defined by compositing of a multi-measurement on system $A_1$ with the bipartite state on $A_1A_2$. Similarly for the steering assemblage (i.e., multi-source) on $A_1$. Formally, the multi-sources on $A_1$ and $A_2$ are defined by\footnote{In the paper, we will use $\tilde{\rho}$ to denote a subnormalized state. }
\begin{subequations}
\begin{align}
\msf{M}^{A_1}_1\circ \msf{P}^{A_1A_2}:&=\{\{\tilde{\rho}_{b_1|y_1}^{A_2}\}_{b_1}\}_{y_1}  \label{eq:steer A} \\
&=  \{\{ \tr_{A_1}[(M_{b_1|y_1}^{A_1}\otimes \mbb{1}^{A_2})\rho^{A_1A_2}]\}_{b_1}\}_{y_1}, \notag \\
\msf{M}^{A_2}_2\circ \msf{P}^{A_1A_2}:&=\{\{\tilde{\rho}_{b_2|y_2}^{A_1}\}_{b_2}\}_{y_2} \label{eq:steer B}\\
&=  \{\{ \tr_{A_2}[(\mbb{1}^{A_1}\otimes M_{b_2|y_2}^{A_2})\rho^{A_1A_2}]\}_{b_2}\}_{y_2}. \notag 
\end{align} 
\label{eq:steer}
\end{subequations}
The operational identities satisfied by these multi-sources are of the form:
\begin{subequations}
\begin{align}
 &\mc{O}(\msf{M}_1\circ \msf{P})=\{\{\alpha_{b_1y_1}\} |\sum_{b_1y_1} \alpha_{b_1y_1}\tilde{\rho}_{b_1|y_1}^{A_2}=\mbb{0}^{A_2}\}, \label{op:prep1}\\ 
 &\mc{O}(\msf{M}_2\circ \msf{P})=\{\{\alpha_{b_2y_2}\}|\sum_{b_2y_2}{\alpha}_{b_2y_2} \tilde{\rho}_{b_2|y_2}^{A_1}=\mbb{0}^{A_1}\}. \label{op:prep2}
\end{align}
\label{op:bipartite2}
\end{subequations}
\blk
Note that these last two sets of identities actually subsume those in Eq.~\eqref{op:meas1} and Eq.~\eqref{op:meas2}. To see that $\mc{O}(\msf{M}_1) \subseteq \mc{O}(\msf{M}_1\circ \msf{P})$ for instance, it suffices to note that $\sum_{b_1y_1} \alpha_{b_1y_1} M_{b_1|y_1}^{A_1}=\mbb{0}^{A_1}$ implies $\sum_{b_1y_1} \alpha_{b_1y_1}\tr_{1}[M_{b_1|y_1}^{A_1}\otimes\mbb{1}^{A_2}\rho^{A_1A_2}] =\mbb{0}^{A_2}$.

\yujie There is also a more subtle kind of operational identity, analogous to the one in \cref{eq:tranpa}. Specifically, \yujie the unnormalized states $\tilde{\rho}_{b_2|y_2}^{A_1}$ and $\tilde{\rho}_{b_1|y_1}^{A_2}$
 in the steering assemblage (i.e., in Eq.~\eqref{eq:steer}) can also be composed in parallel, giving rise to bipartite states on $A_1A_2$. Since such states are of the same type as the given bipartite state $\rho^{A_1A_2}$, there \textit{might} exist a new operational identity for some set of real coefficients $\{\alpha_{b_1b_2y_1y_2} \}_{b_1b_2y_1y_2}$ such that
\begin{equation}
 \rho^{A_1A_2} =
\sum_{{b_1b_2y_1y_2}}\alpha_{b_1b_2y_1y_2} \tilde{\rho}_{b_2|y_2}^{A_1}\otimes \tilde{\rho}_{b_1|y_1}^{A_2}.
\label{eq:OPnewbi}
\end{equation}
Such an identity exists if and only if $\rho^{A_1A_2}$ lies in the linear span of the set $\{\tilde{\rho}_{b_2|y_2}^{A_1}\otimes \tilde{\rho}_{b_1|y_1}^{A_2}\}_{b_1b_2y_1y_2}$. The operational identity of the form of Eq.~\eqref{eq:OPnewbi} will be denoted by
\begin{align}
&\mc{O}(\tilde{\msf{P}})\\
&=\{\{\alpha_{b_1b_2y_1y_2} \}| \rho^{A_1A_2} =
\sum_{{b_1b_2y_1y_2}}\alpha_{b_1b_2y_1y_2} \tilde{\rho}_{b_2|y_2}^{A_1}\otimes \tilde{\rho}_{b_1|y_1}^{A_2} \}. \notag
\end{align}

 \yujie

The circuit in Fig.~\ref{fig:steer}(a) is classically explainable (i.e., admits an ontological model) if and only if the following conditions are met. One can introduce  a separable ontic state space $\Lambda_1\times \Lambda_2$ over which one can reproduce the quantum statistics via 
\begin{align}
&  p(b_1b_2|y_1y_2)=\sum_{\lambda_1\lambda_2} p(\lambda_1\lambda_2)p(b_1|y_1\lambda_1)  p(b_2|y_2\lambda_2), 
\label{eq:cl-bipartite}
\end{align}
where $p(b_1|y_1\lambda_1)$ satisfies the ontological identities specified by $\mc{O}(\msf{M}_1)$, and $p(b_2|y_2\lambda_2)$ satisfies the ontological identities specified by $\mc{O}(\msf{M}_2)$. Moreover, defining compositions
\begin{subequations}
\begin{align}
\tilde{p}(b_1\lambda_2|y_1)&\coloneqq\sum_{\lambda_1} p(b_1|y_1\lambda_1)p(\lambda_1\lambda_2),  \\
\tilde{p}(b_2\lambda_1|y_2)&\coloneqq\sum_{\lambda_2} p(b_2|y_2\lambda_2)p(\lambda_1\lambda_2), 
\end{align}
\label{eq:steeronto}
\end{subequations}
we require that, $\tilde{p}(b_1\lambda_2|y_1)$ must satisfy the ontological identities specified  by $\mc{O}(\msf{M}_1\circ \msf{P})$, and $\tilde{p}(b_2\lambda_1|y_2)$ must satisfy the ontological identities specified by $\mc{O}(\msf{M}_2\circ \msf{P})$. Finally, we require that $p(\lambda_1\lambda_2)$ must satisfy the ontological identity specified  by $\mc{O}({\widetilde{\msf{P}}})$. All told, we require that, for all $\lambda_1,\lambda_2$,
\begin{subequations}
\begin{align}
 &\sum_{b_1y_1} \alpha_{b_1y_1} p(b_1|y_1\lambda_1) =0 \quad \forall \{\alpha_{b_1y_1}\}\in \mc{O}(\msf{M}_1),\label{ncmeas1} \\ 
 &\sum_{b_2y_2} \alpha_{b_2y_2} p(b_2|y_2\lambda_2)=0\quad \forall \{\alpha_{b_2y_2}\}\in \mc{O}(\msf{M}_2), \label{ncmeas2} \\ 
 &\sum_{b_1y_1} \alpha_{b_1y_1}\tilde{p}(b_1\lambda_2|y_1) =0 \quad \forall \{\alpha_{b_1y_1}\}\in \mc{O}(\msf{M}_1\circ \msf{P}), \label{ncprep1}\\ 
 &\sum_{b_2y_2}{\alpha}_{b_2y_2}\tilde{p}(b_2\lambda_1|y_2)=0 \quad \forall \{\alpha_{b_2y_2}\}\in \mc{O}(\msf{M}_2\circ \msf{P}), \label{ncprep2}\\
 &p(\lambda_1\lambda_2)=\sum_{{b_1b_2y_1y_2}} \alpha_{b_1b_2y_1y_2} \tilde{p}(b_2\lambda_1|y_2)\tilde{p}(b_1\lambda_2|y_1) \notag \\
&\quad\quad\quad\quad\quad\quad\quad\quad\quad\quad\quad  \forall \{\alpha_{b_1b_2y_1y_2}\}\in \mc{O}(\tilde{\msf{P}}). \label{ncnewstate}
\end{align}
\label{op:bipartiterep}
\end{subequations}
\yujie 
Note that the first two constraints in Eq.~\eqref{op:bipartiterep} follow from linearity, while the latter three follow from diagram-preservation together with linearity. 
\blk
 
\blk
Importantly, since the ontological model for the steering assemblages has already been explicitly constructed in Eq.~\eqref{eq:steeronto}, this yields the following necessary condition for classical explainability:
\begin{proposition}\label{prop:bipartite}
Any circuit with local measurements on a bipartite state (as depicted in Fig.~\ref{fig:steer}(a)) is classically explainable only if the two prepare-measure circuits it induces by boxing processes together (in one or the other direction, as in Fig.~\ref{fig:steer}(b)) are classically explainable.
\end{proposition}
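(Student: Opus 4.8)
The plan is to prove this necessary condition by direct construction: starting from a classical explanation of the full bipartite circuit, I would extract a classical explanation of each of the two induced prepare-measure circuits by marginalizing over the ontic variable of the subsystem that gets boxed into the preparation. The key observation---already anticipated by the explicit formulas in Eq.~\eqref{eq:steeronto}---is that an ontological model of the bipartite circuit \emph{already contains}, as a substructure, an ontological model of each steering assemblage on the relevant subsystem. So the proof is really a matter of exhibiting this substructure and checking that it satisfies exactly the conditions that Section~\ref{ncstates} demands of a classical explanation of a prepare-measure scenario.

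Concretely, suppose the circuit of Fig.~\ref{fig:steer}(a) is classically explainable, so that there is a separable ontic state space $\Lambda_1\times\Lambda_2$ with representations $p(\lambda_1\lambda_2)$, $p(b_1|y_1\lambda_1)$, and $p(b_2|y_2\lambda_2)$ reproducing the statistics via Eq.~\eqref{eq:cl-bipartite} and satisfying all the ontological identities collected in Eq.~\eqref{op:bipartiterep}. Consider the prepare-measure circuit on $A_2$ obtained by boxing $\msf{M}_1$ with the state, whose preparation is the assemblage $\{\{\tilde{\rho}_{b_1|y_1}^{A_2}\}_{b_1}\}_{y_1}$ and whose measurement is $\msf{M}_2$. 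Per Section~\ref{ncstates}, a classical model for such a circuit consists of an ontic space together with a representation of the assemblage (viewed as a multi-source) satisfying the identities in $\mc{O}(\msf{M}_1\circ\msf{P})$, and a representation of $\msf{M}_2$ satisfying those in $\mc{O}(\msf{M}_2)$, such that the two reproduce the prepare-measure statistics.

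I would then take the candidate model to be $\Lambda_2$ together with $\tilde{p}(b_1\lambda_2|y_1)$ from Eq.~\eqref{eq:steeronto} as the representation of the assemblage and $p(b_2|y_2\lambda_2)$ as the representation of $\msf{M}_2$. Reproduction of the statistics follows by substituting Eq.~\eqref{eq:steeronto} and summing over $\lambda_1$, which collapses the expression to Eq.~\eqref{eq:cl-bipartite} and hence to $p(b_1b_2|y_1y_2)$. The required ontological identities then hold for free: the identity for the assemblage is precisely Eq.~\eqref{ncprep1}, and the identity for $\msf{M}_2$ is precisely Eq.~\eqref{ncmeas2}, both of which are among the assumed constraints of Eq.~\eqref{op:bipartiterep}. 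The argument for the induced prepare-measure circuit on $A_1$ is identical after exchanging the roles of the two subsystems (using $\tilde{p}(b_2\lambda_1|y_2)$, $\mc{O}(\msf{M}_2\circ\msf{P})$, and Eqs.~\eqref{ncprep2} and~\eqref{ncmeas1}).

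The main subtlety---rather than a genuine obstacle---is to confirm that no operational identity of the reduced circuit has been overlooked, i.e.\ that the constraints defining classical explainability of the $A_2$ prepare-measure circuit are genuinely a \emph{subset} of those defining classical explainability of the bipartite circuit. This reduces to observing that in a pure prepare-measure scenario the only relevant operational identities are those among the preparations and those among the measurements; the more subtle measure-and-prepare composition identities of Section~\ref{classicalexplainability} do not arise in the absence of an intermediate transformation. Thus $\mc{O}(\msf{M}_1\circ\msf{P})$ together with $\mc{O}(\msf{M}_2)$ exhaust the constraints, and since both are already enforced by the bipartite model, the marginalized model is a bona fide classical explanation of the reduced circuit. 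This establishes the stated implication in both directions of boxing.
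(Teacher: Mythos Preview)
Your proposal is correct and follows essentially the same approach as the paper's proof: both construct the reduced model on $\Lambda_2$ by taking $\tilde{p}(b_1\lambda_2|y_1)$ from Eq.~\eqref{eq:steeronto} as the assemblage representation and $p(b_2|y_2\lambda_2)$ as the measurement representation, verify reproduction of the statistics by summing over $\lambda_1$, and observe that the required ontological identities (Eqs.~\eqref{ncprep1} and~\eqref{ncmeas2}) are already among the constraints~\eqref{op:bipartiterep} assumed for the bipartite model. Your closing paragraph making explicit that the reduced circuit's identities form a subset of the bipartite ones is a useful clarification that the paper leaves implicit.
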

\yujie 
The intuition behind this result is straightforward. To determine the classical explainability of a prepare-measure scenario obtained by boxing operation, it suffices to check only a subset of the constraints in~\cref{op:bipartiterep} (specifically only \cref{ncprep1,ncmeas2} or only \cref{ncprep2,ncmeas1}), rather than the full set that must be checked in the original bipartite scenario. Consequently, if a bipartite scenario is classically explainable, then so is each of the unipartite prepare-measure scenarios it defines by steering.  

For completeness, a formal proof of this implication is provided in Appendix~\ref{steerproof}.

 \blk
\subsection{ The classical-nonclassical divide for bipartite states
}
\label{sec:bipartitedivide}

We obtain the definition of classicality for a single bipartite state by particularizing Definition~\ref{maindefn} to this type of process:

\begin{definition}
A state on a bipartite system is classical if and only if the statistics generated by the set of circuits where it is contracted with {\em any} product effect (i.e., where the set ranges over all product effects) is classically explainable (in the sense of Definition~\ref{defn:classical}).
\label{def:bipartite}
\end{definition}

We therefore consider a prepare-measure circuit of the type depicted in Fig.~\ref{fig:steer}(a), with a single bipartite state and local measurements as depicted in Fig.~\ref{duals}(e), where the range of measurements includes every possible effect on $A_1$ and every possible effect on $A_2$. A necessary condition for classicality of a bipartite state can then be obtained as a simple corollary of Proposition~\ref{prop:bipartite}. 
\yujie 
\begin{proposition}
A bipartite state $\rho^{A_1A_2}$
is classical only if every multi-source on $A_2$ that is obtained by contracting $\rho^{A_1A_2}$ with a multi-measurment on $A_1$ is classical, and every multi-source on $A_1$ that is obtained by contracting $\rho^{A_1A_2}$ with a multi-measurement on $A_2$ is classical. 
\label{propremote}
\end{proposition}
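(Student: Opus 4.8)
The plan is to derive this proposition as a direct corollary of Proposition~\ref{prop:bipartite}, exploiting the fact that the prepare-measure circuit defining classicality of a steering-induced multi-source is \emph{literally} the boxed circuit appearing in that proposition. By the left-right symmetry of the bipartite scenario, I would prove only the claim for multi-sources on $A_2$; the claim for $A_1$ then follows by interchanging the roles of the two subsystems.

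First I would fix an arbitrary multi-measurement $\msf{M}_1^{A_1}$ on $A_1$ and let $\msf{S}\coloneqq\msf{M}_1^{A_1}\circ\rho^{A_1A_2}$ denote the multi-source it induces on $A_2$, whose unnormalized states $\tilde{\rho}_{b_1|y_1}^{A_2}$ are given by Eq.~\eqref{eq:steer}. To establish classicality of $\msf{S}$ in the sense of Definition~\ref{def_multi-sourceNC}, I must exhibit an ontological representation of $\msf{S}$ that reproduces the statistics obtained by contracting it with \emph{every} effect on $A_2$. The key observation is that contracting $\msf{S}$ with any multi-measurement $\msf{M}_2^{A_2}$ yields exactly the bipartite prepare-measure statistics $p(b_1b_2|y_1y_2)=\tr[\rho^{A_1A_2}M_{b_1|y_1}^{A_1}\otimes M_{b_2|y_2}^{A_2}]$, so this prepare-measure circuit coincides with the circuit obtained by boxing $\msf{M}_1^{A_1}$ together with $\rho^{A_1A_2}$ in the scenario of Fig.~\ref{fig:steer}(a).

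Next I would invoke the hypothesis: since $\rho^{A_1A_2}$ is classical in the sense of Definition~\ref{def:bipartite}, the full set of bipartite circuits obtained by contracting it with all product effects is classically explainable, so there is a single ontic distribution $p(\lambda_1\lambda_2)$ together with response functions $p(b_1|y_1\lambda_1)$ and $p(b_2|y_2\lambda_2)$ satisfying the constraints of Eq.~\eqref{op:bipartiterep} and reproducing the statistics via Eq.~\eqref{eq:cl-bipartite}. Boxing over $\Lambda_1$ as in Eq.~\eqref{eq:steeronto} then defines a candidate representation $\tilde{p}(b_1\lambda_2|y_1)$ of $\msf{S}$. I would check that this is a valid ontological representation of the multi-source: it satisfies the operational identities $\mc{O}(\msf{M}_1\circ\msf{P})=\mc{O}(\msf{S})$ by virtue of Eq.~\eqref{ncprep1}, and contracting it with the response functions $p(b_2|y_2\lambda_2)$ of any $\msf{M}_2^{A_2}$ recovers $p(b_1b_2|y_1y_2)$ by Eq.~\eqref{eq:cl-bipartite}. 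Since the same representation works for every choice of $\msf{M}_2^{A_2}$, the multi-source $\msf{S}$ is classical. Equivalently, one may simply apply Proposition~\ref{prop:bipartite} to each such bipartite circuit and read off the boxed prepare-measure model.

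The main subtlety to be careful about is the distinction between joint and circuit-by-circuit classical explainability: Definition~\ref{def_multi-sourceNC} demands a \emph{single} representation of $\msf{S}$ that works for all dual effects simultaneously, whereas a naive application of Proposition~\ref{prop:bipartite} yields classical explainability of each boxed circuit separately. The resolution, which I would make explicit, is that the boxing formula Eq.~\eqref{eq:steeronto} builds $\tilde{p}(b_1\lambda_2|y_1)$ purely from the $A_1$-side data $p(b_1|y_1\lambda_1)$ and $p(\lambda_1\lambda_2)$, which are held fixed across all choices of $\msf{M}_2^{A_2}$ once the classicality of $\rho^{A_1A_2}$ has supplied a single global model; hence the representation of $\msf{S}$ is automatically uniform in the dual process, closing the argument.
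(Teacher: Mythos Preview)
Your proposal is correct and follows essentially the same approach as the paper: both invoke Definition~\ref{def:bipartite} to get classical explainability of the full set of bipartite circuits, apply Proposition~\ref{prop:bipartite} to obtain classical explainability of the boxed prepare-measure circuits, and then read off classicality of the steered multi-source via Definition~\ref{def_multi-sourceNC}. The one difference is that you make explicit the uniformity subtlety (a single representation of $\msf{S}$ working for all dual effects), which the paper leaves implicit in its notion of classical explainability of a \emph{set} of circuits; your resolution via the fixed global model supplied by classicality of $\rho^{A_1A_2}$ is exactly right.
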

\blk
\begin{proof}
The proof is simple. If the bipartite state is classical, then by Definition~\ref{def:bipartite}, the circuit where it is contracted with any product effect is classically explainable.  By Proposition~\ref{prop:bipartite}, this implies that the two prepare-measure circuits (as depicted in Fig.~\ref{fig:steer}(b)) are classically explainable when all local effects on each side are considered. But classical explainability for each prepare-measure circuit then implies that each corresponding set of steered states (the composition between the bipartite state and local effects on the other side) is classical according to Definition~\ref{def_multi-sourceNC}.
\end{proof}

Whether the classicality of the set of steered states on both sides is {\em sufficient} to establish the classicality of the bipartite state is still unknown. \yujie However, in the following, we will provide the necessary and sufficient conditions for a classical bipartite state. 

First, define the operator spaces
\begin{subequations}
\begin{align}
&V_1 \\
&\coloneqq\text{Span}\big(\{\tr_{A_2}[(\mbb{1}^{A_1}\otimes M^{A_2})\rho^{A_1A_2}]: M^{A_2}\in\mc M(\mc H^{A_2})\}\big),\notag \\
&V_2 \\
&\coloneqq\text{Span}\big(\{\tr_{A_1}[(M^{A_1}\otimes \mbb{1}^{A_2} )\rho^{A_1A_2}]: M^{A_1}\in\mc M(\mc H^{A_1})\}\big).\notag
\end{align}
\label{eq: bi-operator space}
\end{subequations}
Whether or not a bipartite state $\rho^{A_1A_2}$ is classical can then be characterized based on a product-frame decomposition, analogous to Theorems~\ref{theoremprep} and~\ref{theoremmeas}.
\yujie
\begin{theorem}\label{ncbp}
A bipartite state $\rho^{A_1A_2}\in\mc S(\mc H^{A_1}\otimes\mc H^{A_2})$ is classical if and only if it admits a frame decomposition:
\begin{equation}\label{eq:bipartiteframe}
\rho^{A_1A_2}
=\sum_{\lambda_1,\lambda_2}
\tr\big[(H^{A_1}_{\lambda_1}\otimes H^{A_2}_{\lambda_2}) \rho^{A_1A_2}\big]
\sigma^{A_1}_{\lambda_1}\otimes\sigma^{A_2}_{\lambda_2},
\end{equation}
where the frame on the operator space $V_i$ is a fixed set of density operators $\{\sigma^{A_i}_{\lambda_i}\}_{\lambda_i}$, and the dual frame is a set of Hermitian operators $\{H^{A_i}_{\lambda_i}\}_{\lambda_i}$ 
satisfying $\sum_{\lambda_{i}}H^{A_i}_{\lambda_i}=\mc{P}_{V_i}(\mbb 1^{A_i})$ with $\mc{P}_{V_i}$ denoting the superoperator projecting onto $V_i$, and such that
\begin{equation}
\tr\big[(H^{A_1}_{\lambda_1}\otimes H^{A_2}_{\lambda_2}) \rho^{A_1A_2}\big]\ge 0
\quad \forall \lambda_1,\lambda_2.\label{eq:constraints}
\end{equation} 
\end{theorem}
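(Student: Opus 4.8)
The plan is to establish Theorem~\ref{ncbp} as the bipartite analogue of the structure theorems for multi-sources and multi-measurements (Theorems~\ref{theoremprep} and~\ref{theoremmeas}), proving the two implications separately. A structural fact I would use throughout is that every bipartite state lies in $V_1\otimes V_2$: writing the operator Schmidt decomposition $\rho^{A_1A_2}=\sum_k s_k E_k\otimes F_k$ with $s_k>0$ and $\{E_k\}$, $\{F_k\}$ orthonormal, one reads off $V_1=\text{Span}\{E_k\}$ and $V_2=\text{Span}\{F_k\}$ from Eq.~\eqref{eq: bi-operator space}, hence $\rho^{A_1A_2}\in V_1\otimes V_2$.

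For sufficiency (frame decomposition $\Rightarrow$ classical), I would construct an explicit ontological model for the circuit of \cref{fig:steer}(a) ranging over all product effects: let the ontic spaces index the frames, set $p(\lambda_1\lambda_2):=\tr[(H^{A_1}_{\lambda_1}\otimes H^{A_2}_{\lambda_2})\rho^{A_1A_2}]$, and represent every local effect $M^{A_i}$ by the response function $\tr[M^{A_i}\sigma^{A_i}_{\lambda_i}]$. The positivity requirement Eq.~\eqref{eq:constraints} makes $p(\lambda_1\lambda_2)$ non-negative, and it is normalized because $\sum_{\lambda_i}H^{A_i}_{\lambda_i}=\mc P_{V_i}(\mbb 1^{A_i})$ together with $\rho^{A_1A_2}\in V_1\otimes V_2$ gives $\sum_{\lambda_1\lambda_2}p(\lambda_1\lambda_2)=\tr[(\mc P_{V_1}\otimes\mc P_{V_2})(\rho^{A_1A_2})]=\tr[\rho^{A_1A_2}]=1$. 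The frame decomposition Eq.~\eqref{eq:bipartiteframe} is precisely the statement that this model reproduces the statistics $\tr[(M^{A_1}\otimes M^{A_2})\rho^{A_1A_2}]$. Since the assignment of effects to response functions and of $\rho^{A_1A_2}$ to $p(\lambda_1\lambda_2)$ is manifestly linear, diagram-preserving, and maps every process to a valid substochastic object, it is a classical representation in the sense of Definition~\ref{defn:classical}, and therefore automatically satisfies all the operational identities collected in Eq.~\eqref{op:bipartiterep}. As a consistency check one verifies the steering identities~\eqref{ncprep1}--\eqref{ncprep2} directly from the reconstruction relation $X=\sum_{\lambda_1}\tr[H^{A_1}_{\lambda_1}X]\sigma^{A_1}_{\lambda_1}$ valid for all $X\in V_1$, using $\tr_{A_2}[(\mbb 1\otimes H^{A_2}_{\lambda_2})\rho^{A_1A_2}]\in V_1$, and likewise on the other wing, while the subtle identity~\eqref{ncnewstate} follows from the same reconstruction applied on both systems.

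For necessity (classical $\Rightarrow$ frame decomposition), I would start from a classical model obeying Eq.~\eqref{op:bipartiterep}. Each response function $M^{A_i}\mapsto p(M^{A_i}|\lambda_i)$ is, by linearity, positivity, and normalization on the identity, the expectation against a density operator, so there is $\sigma^{A_i}_{\lambda_i}$ with $p(M^{A_i}|\lambda_i)=\tr[M^{A_i}\sigma^{A_i}_{\lambda_i}]$; reproducing the statistics on the spanning set of product effects then yields the separable decomposition $\rho^{A_1A_2}=\sum_{\lambda_1\lambda_2}p(\lambda_1\lambda_2)\,\sigma^{A_1}_{\lambda_1}\otimes\sigma^{A_2}_{\lambda_2}$. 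To promote this to the required frame decomposition, I would pick a basis of $V_i$ consisting of normalized steered states $\beta^{A_i}_k$ (these span $V_i$ by Eq.~\eqref{eq: bi-operator space}). The representation map $R_i:V_i\to\mbb R^{\Lambda_i}$ sending a steered state to its ontological distribution is injective---if $R_i(X)=0$ then $\tr[M^{A_i}X]=\sum_{\lambda_i}\tr[M^{A_i}\sigma^{A_i}_{\lambda_i}]R_i(X)_{\lambda_i}=0$ for every effect, forcing $X=0$---so the images $r^{(i)}_k:=R_i(\beta^{A_i}_k)$ are linearly independent probability distributions. Expanding $\rho^{A_1A_2}=\sum_{kl}c_{kl}\,\beta^{A_1}_k\otimes\beta^{A_2}_l$ in this product basis, the subtle bipartite identity~\eqref{ncnewstate} forces $p(\lambda_1\lambda_2)=\sum_{kl}c_{kl}\,r^{(1)}_k(\lambda_1)\,r^{(2)}_l(\lambda_2)$, with the $c_{kl}$ uniquely determined by the independence of the $r^{(i)}_k$. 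Taking the dual basis $\{\hat\beta^{A_i}_k\}\subset V_i$ as the dual frame gives $c_{kl}=\tr[(\hat\beta^{A_1}_k\otimes\hat\beta^{A_2}_l)\rho^{A_1A_2}]$, and since the $\beta^{A_i}_k$ are density operators, $\sum_k\hat\beta^{A_i}_k=\mc P_{V_i}(\mbb 1^{A_i})$ holds automatically.

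The main obstacle is the final positivity requirement $c_{kl}\ge 0$ of Eq.~\eqref{eq:constraints}: non-negativity of $p(\lambda_1\lambda_2)$ and of the distributions $r^{(i)}_k$ does not by itself force the coefficients $c_{kl}$ in the product decomposition to be non-negative. Equivalently, the real content of the theorem is that classicality upgrades mere separability of $\rho^{A_1A_2}$ to separability with all local factors drawn from the support spaces $V_1$ and $V_2$---a \emph{strictly} stronger condition in any dimension above the two-qubit case (where the two notions coincide). Clearing this hurdle is where the full force of the operational-identity constraints~\eqref{ncprep1}--\eqref{ncnewstate} must be deployed in concert, rather than statistics reproduction alone (which yields only separability), in direct parallel with the $(1)\Rightarrow(2)$ step of Theorem~\ref{theoremprep} but now carried out simultaneously on both wings through the product dual frame. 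I would expect to handle it either by choosing the steered-state basis adaptively so that the induced coefficients inherit the non-negativity of the model's joint distribution, or by an explicit construction of the product dual frame reproducing $p(\lambda_1\lambda_2)$, using the steering constraints~\eqref{ncprep1}--\eqref{ncprep2} to control the kernels of the associated synthesis maps---this being the technical heart of the argument.
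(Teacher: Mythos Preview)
Your sufficiency direction matches the paper's argument essentially verbatim and is fine.

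Your necessity direction has a genuine gap, and you have correctly identified it yourself: the positivity $c_{kl}\ge 0$ does not follow, and in fact generically fails, for the dual basis $\{\hat\beta^{A_i}_k\}$ you chose. The issue is not that the operational identities are hard to exploit---it is that you abandoned the natural frame halfway through. You extracted the density operators $\sigma^{A_i}_{\lambda_i}$ from the model via Gleason, but then switched to an unrelated basis $\{\beta^{A_i}_k\}$ of $V_i$ and its dual basis $\{\hat\beta^{A_i}_k\}$ as your candidate frame pair. That second frame has no connection to the joint distribution $p(\lambda_1\lambda_2)$, so positivity is lost.

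The paper's fix is simple: keep $\{\sigma^{A_i}_{\lambda_i}\}_{\lambda_i}$ as the frame and build the dual frame \emph{directly from the steering constraints}. Concretely, for each fixed $\lambda_2$ the linearity constraint~\eqref{ncprep1} says that the map $\tilde\rho^{A_2}_{b_1|y_1}\mapsto \tilde p(b_1\lambda_2|y_1)$ is a linear functional on $V_2$; by Riesz this is $\tr[H^{A_2}_{\lambda_2}\,\cdot\,]$ for some Hermitian $H^{A_2}_{\lambda_2}$, and similarly~\eqref{ncprep2} yields $H^{A_1}_{\lambda_1}$. Substituting into the statistics reproduction shows that $\{\sigma^{A_i}_{\lambda_i}\}$ and $\{H^{A_i}_{\lambda_i}\}$ are a frame/dual-frame pair on $V_i$. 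Then the subtle identity~\eqref{ncnewstate}, together with the operational identity~\eqref{eq:OPnewbi}, gives directly
\[
p(\lambda_1\lambda_2)=\tr\big[(H^{A_1}_{\lambda_1}\otimes H^{A_2}_{\lambda_2})\rho^{A_1A_2}\big],
\]
so the positivity condition~\eqref{eq:constraints} is \emph{automatic} from $p(\lambda_1\lambda_2)\ge 0$. In your language, the correct dual frame $H^{A_i}_{\lambda_i}$ is obtained by applying Riesz to the $\lambda_i$-th \emph{component} of your injective map $R_i$, not by dualizing an arbitrary basis of its domain.
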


\par 
\begin{figure}[htb!]
\centering
\includegraphics[width=0.4\textwidth]{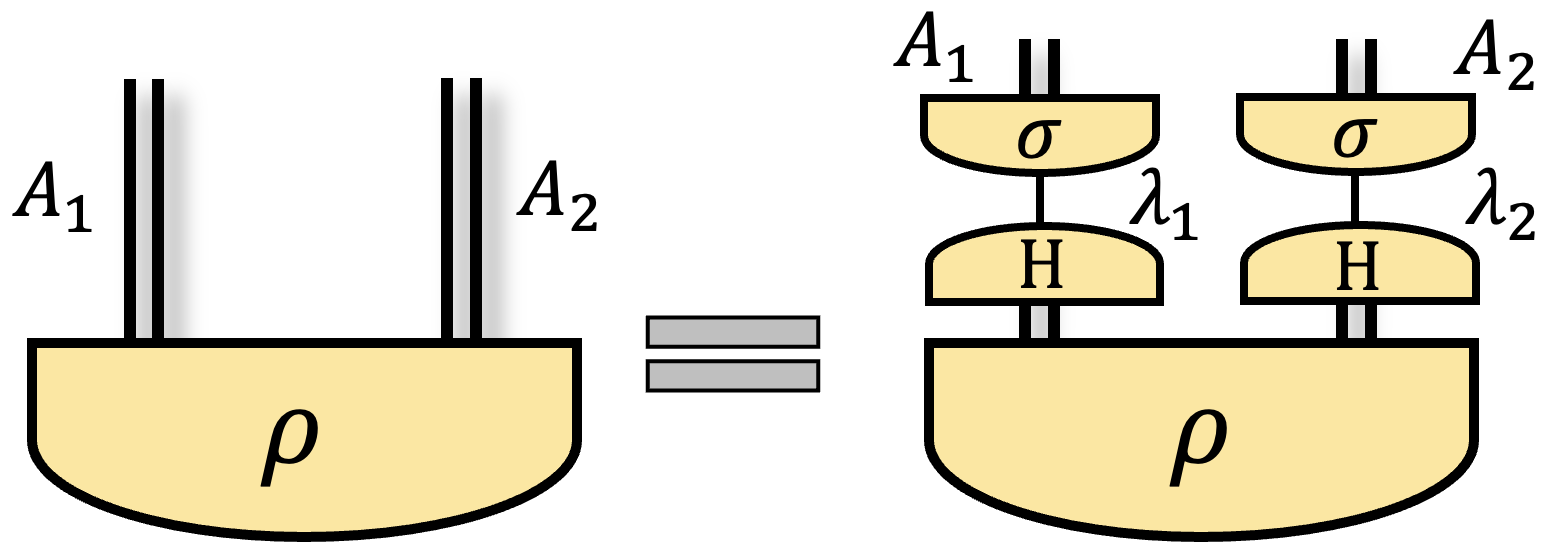}
\caption{\yujie A bipartite state is classical if and only if it admits a frame representation as shown on the right hand side of the diagrammatic equation, where the condition on the frame and dual frame operators are given in Theorem~\ref{ncbp}.\blk}
\label{framebp}
\end{figure}
\par 
As a direct consequence of Theorem~\ref {ncbp}, it is easy to see that for a bipartite state, entanglement implies nonclassicality because satisfying \cref{eq:bipartiteframe,eq:constraints} in the condition for classicality articulated in Theorem~\ref{ncbp} implies that $\rho^{A_1A_2}$ must be separable. 
\blk 
\begin{corollary}
If a bipartite state is entangled, then it is nonclassical.
\label{corr:bipartiteentangle}
\end{corollary}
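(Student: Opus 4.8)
The plan is to prove the contrapositive: that any \emph{classical} bipartite state is separable. Since Theorem~\ref{ncbp} already furnishes a complete structural characterization of classical bipartite states, essentially all of the substantive work is done, and the corollary will follow simply by inspecting the frame decomposition of \cref{eq:bipartiteframe}. In particular, I would argue that this decomposition is literally a separable decomposition once one recognizes that its coefficients are nonnegative and its operator factors are genuine states.

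First I would assume $\rho^{A_1A_2}$ is classical and invoke Theorem~\ref{ncbp} to write
\begin{equation}
\rho^{A_1A_2} = \sum_{\lambda_1,\lambda_2} c_{\lambda_1\lambda_2}\, \sigma^{A_1}_{\lambda_1}\otimes\sigma^{A_2}_{\lambda_2}, \qquad c_{\lambda_1\lambda_2}\coloneqq\tr\big[(H^{A_1}_{\lambda_1}\otimes H^{A_2}_{\lambda_2})\rho^{A_1A_2}\big].
\end{equation}
The crucial point is that this is \emph{already} of separable form: by the theorem each $\sigma^{A_i}_{\lambda_i}$ is a genuine density operator, so every term $\sigma^{A_1}_{\lambda_1}\otimes\sigma^{A_2}_{\lambda_2}$ is a product state. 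It then remains only to verify that the coefficients $c_{\lambda_1\lambda_2}$ constitute a probability distribution. Nonnegativity, $c_{\lambda_1\lambda_2}\ge 0$, is precisely the constraint \cref{eq:constraints}. Normalization follows by taking the trace of both sides: since $\tr[\rho^{A_1A_2}]=1$ and $\tr[\sigma^{A_1}_{\lambda_1}\otimes\sigma^{A_2}_{\lambda_2}]=1$, one obtains $\sum_{\lambda_1,\lambda_2}c_{\lambda_1\lambda_2}=1$.

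Hence $\rho^{A_1A_2}=\sum_{\lambda_1,\lambda_2} p(\lambda_1\lambda_2)\,\sigma^{A_1}_{\lambda_1}\otimes\sigma^{A_2}_{\lambda_2}$ with $p(\lambda_1\lambda_2)\coloneqq c_{\lambda_1\lambda_2}$ a bona fide probability distribution over product states, which is exactly the definition of separability; taking the contrapositive yields the corollary. I do not anticipate any genuine obstacle in this step, as all the real content resides in the structure theorem (Theorem~\ref{ncbp}). The only care required is the elementary bookkeeping that upgrades the frame coefficients to a probability distribution, using nonnegativity from \cref{eq:constraints} together with unit trace, so that the frame decomposition coincides term-by-term with a separable decomposition.
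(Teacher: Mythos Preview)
Your proposal is correct and follows essentially the same approach as the paper: the paper simply remarks that the corollary is a direct consequence of Theorem~\ref{ncbp}, since the frame decomposition \eqref{eq:bipartiteframe} with nonnegative coefficients \eqref{eq:constraints} and density-operator frames $\{\sigma^{A_i}_{\lambda_i}\}$ is manifestly a separable decomposition. Your version is slightly more explicit in verifying that the coefficients are normalized, but the argument is the same.
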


Within the set of all bipartite states, the boundary between separable and entangled is distinct from the boundary between steerable and unsteerable, and both of these are also distinct from the boundary between local and nonlocal~\cite{Werner1989,wisesteer,zhang2024,Renner2024}.  As it turns out, the boundary between classical and nonclassical, as defined here, is also distinct from each of these. This follows from the fact that while separability is a {\em necessary} condition for classicality (the content of Corollary~\ref{corr:bipartiteentangle}), it is not a sufficient condition, as is demonstrated by the following example.
\blk

\begin{example}\label{singularexample}
The separable bipartite state
\begin{align}\label{eq:sep-nl}
\rho^{A_1A_2}&=\frac{1}{4}(\op{0}{0}\otimes\op{0}{0}+\op{1}{1}\otimes\op{1}{1} \notag \\
&+\op{2}{2}\otimes\op{+}{+}+\op{3}{3}\otimes\op{-}{-})
\end{align}
is nonclassical, because there exists a measurement on $A_1$, namely, the projector-valued measure $\{\op{0}{0},\op{1}{1},\op{2}{2},\op{3}{3}\}$ that steers $A_2$ to the ``BB84 source" $\{\frac{1}{4}\op{0}{0},\frac{1}{4}\op{1}{1},\frac{1}{4}\op{+}{+},\frac{1}{4}\op{-}{-}\}$, which was shown to be nonclassical in Example~\ref{steeringexample}. The claim then follows from Proposition 2. \blk
\end{example}
\yujie 
If a separable state can be nonclassical, it is natural to ask whether this nonclassicality constitutes a resource for achieving quantum advantages in any information-processing task. The answer is that it does (at least in some cases). Consider the separable state from Example~\ref{singularexample} above. It induces the nonclassical BB84 source on $A_2$,  which is known to be a resource for quantum key distribution~\cite{BB84} and for parity-oblivious multiplexing~\cite{Spekken2009}. That said, a complete understanding of the resourcefulness of separable states that are nonclassical---particularly whether their resourcefulness extends beyond their ability to realize nonclassical resources via steering---remains open. 
\blk

It is worth noting that, nonclassicality of a bipartite state, in the sense introduced here, does not coincide with having nonzero discord~\cite{Modi2012}; for example, the state $\frac{1}{2}\op{0}{0}\otimes\op{0}{0}+\frac{1}{2}\op{1}{1}\otimes\op{+}{+}$ has nonzero discord but is classical by our definition. It is not hard to see that nonzero discord is necessary but not sufficient for a single bipartite state to be nonclassical.

A complete characterization of the boundary between classical and nonclassical for 
bipartite states is likely to be difficult to obtain. For one, the set of classical bipartite states is not convex, because all product states are classical, but not all separable states are. 

\yujie 
However, in the special case of a two-qubit system, a complete characterization is possible to obtain.
\begin{proposition} 
\label{prop: biqubit}
A bipartite qubit state $\rho^{A_1A_2}$ is classical if and only if it is separable. 
\end{proposition}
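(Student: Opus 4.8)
The two directions of the biconditional are of completely different character, and I would handle them separately. The forward direction (classical $\Rightarrow$ separable) requires no new work at all: it is exactly the contrapositive of \cref{corr:bipartiteentangle}, which already asserts that every entangled bipartite state is nonclassical, and which holds in every dimension. So I would dispatch it in a single line. All of the content, and the reason the statement is special to $2\otimes 2$, lies in the reverse direction (separable $\Rightarrow$ classical), and here the plan is to verify the criterion of \cref{ncbp} directly by exhibiting a suitable product frame.

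Given a separable two-qubit state, I would start from a decomposition into pure product states, $\rho^{A_1A_2}=\sum_k p_k\,\rho^{A_1}_k\otimes\rho^{A_2}_k$, chosen so that the two local ensembles $\{\rho^{A_1}_k\}_k$ and $\{\rho^{A_2}_k\}_k$ are each linearly independent in the corresponding operator space (for a qubit this means the associated Bloch vectors are affinely independent, so at most four product terms forming a genuine tetrahedron on each side). I would then take the frame on side $i$ to be the ensemble itself, $\sigma^{A_i}_k:=\rho^{A_i}_k$, which spans $V_i$, and let $\{H^{A_i}_k\}\subset V_i$ be the biorthogonal dual frame defined by $\tr[H^{A_i}_k\rho^{A_i}_l]=\delta_{kl}$. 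Since each $\rho^{A_i}_l$ is normalized, $\sum_k H^{A_i}_k$ and $\mc P_{V_i}(\mbb 1^{A_i})$ agree on the basis $\{\rho^{A_i}_l\}$ and both lie in $V_i$, so $\sum_k H^{A_i}_k=\mc P_{V_i}(\mbb 1^{A_i})$ as required; the product operators $\{\rho^{A_1}_k\otimes\rho^{A_2}_l\}$ form a basis of $V_1\otimes V_2\ni\rho^{A_1A_2}$ with dual basis $\{H^{A_1}_k\otimes H^{A_2}_l\}$, so the reconstruction \cref{eq:bipartiteframe} holds automatically; and, crucially, the product coefficients collapse to $\tr[(H^{A_1}_k\otimes H^{A_2}_l)\rho^{A_1A_2}]=\sum_m p_m\delta_{km}\delta_{lm}=p_k\delta_{kl}\ge 0$, so the positivity requirement \cref{eq:constraints} is automatic. \cref{ncbp} then certifies that $\rho^{A_1A_2}$ is classical, making this part pure frame bookkeeping.

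This reduces the whole proposition to one geometric lemma, which I expect to be the main obstacle: \emph{every} separable two-qubit state admits a pure-product decomposition whose local ensembles are simultaneously linearly independent on both sides. This is precisely the step with no analogue in higher dimensions. In $\mc H^{A_1}\otimes\mc H^{A_2}$ with a larger factor a separable state can be forced to steer to a linearly dependent, genuinely nonclassical multi-source — exactly what happens in \cref{singularexample}, where a projective measurement on the four-dimensional system steers the qubit to the four coplanar BB84 states, obstructing the biorthogonal construction and indeed yielding nonclassicality (via \cref{propremote}). The qubit-specific fact that rescues the two-qubit case is that a rank-one effect on a qubit annihilates a unique pure-state direction, so no qubit measurement on $A_1$ can pick out four distinct pure steered states of $A_2$ from a separable $\rho^{A_1A_2}$ without collapsing the $A_1$-ensemble to a single state (and hence making $\rho^{A_1A_2}$ a product). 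I would prove the lemma by combining this dimensional rigidity with the richness of the product-vector variety inside any subspace of $\mc H^{A_1}\otimes\mc H^{A_2}$ (the product vectors in a three-dimensional subspace trace out a conic, giving ample freedom to choose affinely independent local parts), and I would treat rank-deficient states — where some $V_i$ has dimension below four and correspondingly fewer frame elements are needed — as easier special cases of the same construction. Establishing this decomposition lemma cleanly, rather than the frame manipulations above, is where the real difficulty lies.
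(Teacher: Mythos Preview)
Your overall architecture matches the paper exactly: the forward direction is dispatched via \cref{corr:bipartiteentangle}, and the reverse direction is reduced to a decomposition lemma (every separable two-qubit state admits a pure-product decomposition whose local ensembles are linearly independent on both sides), after which the biorthogonal dual-frame construction verifies \cref{ncbp} with coefficients $p_k\delta_{kl}\ge 0$. The paper carries out precisely this frame bookkeeping.

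The substantive difference is in how the decomposition lemma is obtained. The paper does not attempt a direct geometric argument. Instead it (i) first reduces to the case $\tr_{A_1}[\rho^{A_1A_2}]=\tfrac{1}{2}\mbb{1}$ by local filtering with $(2\rho^{A_2})^{-1/2}$, and (ii) in that case invokes the steering-ellipsoid characterization of two-qubit separability due to Jevtic et al.: a two-qubit state with maximally mixed marginal is separable iff its steering ellipsoid fits inside a tetrahedron in the Bloch ball, and that reference constructs the required $(d{+}1)$-term product decomposition with affinely independent vertices on both sides. The filtered frame is then pushed back to the original state.

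Your sketch for the lemma, by contrast, is not yet a proof. The observation that a rank-one qubit effect annihilates a unique pure direction explains why the specific obstruction of \cref{singularexample} cannot be replicated when $A_1$ is a qubit, but it does not by itself produce a decomposition with simultaneously independent local parts; and ``richness of the product-vector variety'' in a three-dimensional subspace is an intuition, not an argument. If you want to avoid citing the steering-ellipsoid result, you would essentially have to reprove the tetrahedron criterion for separability, which is nontrivial. The paper's route---filtering plus citation---is what makes the step clean.
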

\begin{proof}
The “only if” direction follows from Corollary~\ref{corr:bipartiteentangle}; we now prove the “if” direction. 

We first treat the special case where $\rho^{A_2}\coloneqq\tr_{A_1}[\rho^{A_1A_2}]=\frac{\mbb{1}}{2}$. By Ref.~\cite{Jevtic2014}, for any two-qubit separable state with $\rho^{A_2}=\frac{\mbb{1}}{2}$,  the set of states that can be steered on $A_1$ (i.e., the steering ellipsoid) is contained in a tetrahedron in the Bloch ball. Let $d$ denote the dimension of the steering ellipsoid of $\rho^{A_1A_2}$ (an integer less than or equal to 3), Ref.~\cite{Jevtic2014} shows constructively that ${\rho}^{A_1A_2}$ admits a decomposition 
\begin{equation}
{\rho}^{A_1A_2}=\sum_{\lambda_1,\lambda_2=0}^dp_{\lambda_1}\delta_{\lambda_1,\lambda_2}{\sigma}^{A_1}_{\lambda_1}\otimes {\sigma}^{A_2}_{\lambda_2},    
\end{equation}
where $\{{\sigma}^{A_i}_{\lambda_i}\}_{\lambda_i=0}^d$ are linearly independent sets of  states. Given the linear independence, there exist unique dual frames $\{ {H}_{\lambda_i}^{A_i}\}_{\lambda_i=0}^d$ on the space $V_i\coloneqq\text{Span}(\{{\sigma}_{\lambda_i}^{A_i}\}_{\lambda_i=0}^d)$ such that $\tr[{H}_{\lambda_i}^{A_i}{\sigma}^{A_i}_{\lambda'_i}]=\delta_{\lambda_i,\lambda'_i}$ (one can infer that $\sum_{\lambda_i}{H}_{\lambda_i}^{A_i}=\mc{P}_{V_i}(\mbb{1})$). Therefore, one can construct the following frame decomposition for ${\rho}^{A_1A_2}$:
\begin{align}
{\rho}^{A_1A_2}&=\sum_{\lambda_1,\lambda_2=0}^d\tr[({H}_{\lambda_1}^{A_1}\otimes {H}_{\lambda_2}^{A_2}){\rho}^{A_1A_2}]{\sigma}^{A_1}_{\lambda_1}\otimes {\sigma}^{A_2}_{\lambda_2},
\end{align}
 where $
 \tr[{H}_{\lambda_1}^{A_1}\otimes{H}_{\lambda_2}^{A_2} {\rho}^{A_1A_2}]=p_{\lambda_1}\delta_{\lambda_1,\lambda_2}\ge 0 $. 
We have shown that the conditions of Theorem~\ref{ncbp} are satisfied, and so it follows that any two-qubit separable state  ${\rho}^{A_1A_2}$ satisfying $\tr_{A_1}[\rho^{A_1A_2}]=\frac{\mbb{1}}{2}$ is classical.

Next, consider the case where $\rho^{A_2}\ne\frac{\mbb{1}}{2}$. If $\rho^{A_2}$ is rank-1, $\rho^{A_1A_2}$ must be of the form $\rho^{A_1A_2}=\rho^{A_1}\otimes\rho^{A_2}$ and thus is classical. If $\rho^{A_2}$ has full rank, we can obtain a locally filtered state $\tilde{\rho}^{A_1A_2}$ via:
\yujie\begin{equation}
    \tilde{\rho}^{A_1A_2}\coloneqq(\mbb{1}^{A_1}\otimes(2\rho^{A_2})^{-\frac{1}{2}})\rho^{A_1A_2}(\mbb{1}^{A_1}\otimes(2\rho^{A_2})^{-\frac{1}{2}})
\end{equation}
so that $\tr_{A_1}[\tilde{\rho}^{A_1A_2}]=\frac{\mbb{1}^{A_2}}{2}$. Since this local filtering preserves separability,  $\tilde{\rho}^{A_1A_2}$ is still separable. Now, given dual-frame operators $\{\tilde{\sigma}^{A_i}_{\lambda_i}\}_{\lambda_i}$ and $\{\tilde{H}_{\lambda_i}^{A_i}\}_{\lambda_i}$ for $\tilde{\rho}^{A_1A_2}$ defined on the operator space $\tilde{V}_i\coloneqq\text{Span}(\{\tilde{\sigma}_{\lambda_i}^{A_i}\}_{\lambda_i=0}^d)$, one can map these frame operators back to $\rho^{A_1A_2}$ by defining
\begin{subequations}
\begin{align}
&{\sigma}^{A_1}_{\lambda_1}\coloneqq\tilde{\sigma}^{A_1}_{\lambda_1}, \quad\quad\quad {H}_{\lambda_1}^{A_1}=\tilde{H}_{\lambda_1}^{A_1},\\
&{\sigma}^{A_2}_{\lambda_2}\coloneqq\frac{(2\rho^{A_2})^{\frac{1}{2}}\tilde{\sigma}^{A_2}_{\lambda_2}(2\rho^{A_2})^{\frac{1}{2}}}{\tr[\tilde{\sigma}^{A_2}_{\lambda_2}(2\rho^{A_2}))]} \label{eq:newH1},\\ 
&{H}_{\lambda_2}^{A_2}\coloneqq\tr[\tilde{\sigma}^{A_2}_{\lambda_2}(2\rho^{A_2}))]\big[(2\rho^{A_2})^{-\frac{1}{2}}\tilde{H}_{\lambda_2}^{A_2}(2\rho^{A_2})^{-\frac{1}{2}}\big]. \label{eq:newH}
\end{align}
\end{subequations}
Then, one has the frame representation 
\begin{align}
{\rho}^{A_1A_2}=\sum_{\lambda_1,\lambda_2=0}^d\tr[({H}_{\lambda_1}^{A_1}\otimes {H}_{\lambda_2}^{A_2}) {\rho}^{A_1A_2}]{\sigma}^{A_1}_{\lambda_1}\otimes{\sigma}^{A_2}_{\lambda_2}.   
\end{align}
It is easy to see that $ \tr[({H}_{\lambda_1}^{A_1}\otimes {H}_{\lambda_2}^{A_2}) {\rho}^{A_1A_2}]\ge 0$, and moreover, $\{{H}^{A_1}_{\lambda_1}\}$ and $\{{\sigma}^{A_1}_{\lambda_1}\}$ defines a dual basis on $V_1$, and $
\sum_{\lambda_1}{H}^{A_1}_{\lambda_1}=\mc{P}_{V_1}(\mbb{1})$  since they are untouched.  In order to show that $
\sum_{\lambda_2}{H}^{A_2}_{\lambda_2}=\mc{P}_{V_2}(\mbb{1})$, it  suffices to note that from \cref{eq:newH1,eq:newH}, we have
\begin{align}
\tr[{H}_{\lambda_2}^{A_2}{\sigma}_{\lambda'_2}^{A_2}]&=\frac{\tr[\tilde{\sigma}^{A_2}_{\lambda_2}(2\rho^{A_2}))]}{\tr[\tilde{\sigma}^{A_2}_{\lambda'_2}(2\rho^{A_2}))]}\tr[\tilde{H}_{\lambda_2}^{A_2}\tilde{\sigma}_{\lambda'_2}^{A_2}]\notag \\
&=\frac{\tr[\tilde{\sigma}^{A_2}_{\lambda_2}(2\rho^{A_2}))]}{\tr[\tilde{\sigma}^{A_2}_{\lambda'_2}(2\rho^{A_2}))]}\delta_{\lambda_2,\lambda_2'}=\delta_{\lambda_2,\lambda_2'}
\end{align}
where the second equality holds because
$\{\tilde{\sigma}^{A_2}_{\lambda_2}\}_{\lambda_2}$ and $\{\tilde{H}^{A_2}_{\lambda_2}\}_{\lambda_2}$ are a pair of basis and dual basis. 

Again, having shown that the conditions of Theorem~\ref{ncbp} are satisfied, it follows that the two-qubit state $\rho^{A_1A_2}$ is classical. 
\end{proof}

\begin{remark}
The proof can be run in the opposite direction, i.e., by considering the steering ellipsoid on $A_2$.
\end{remark}
Using an analogous proof technique, we can find a sufficient condition for classicality for bipartite states, independent of the subsystem dimensions. 
\begin{corollary}
A bipartite state $\rho^{A_1A_2}$ is classical if it admits a separable decomposition of the form:
\begin{equation}
\rho^{A_1A_2}=\sum_{\lambda_1\lambda_2}p(\lambda_1\lambda_2)\sigma^{A_1}_{\lambda_1}\otimes \sigma^{A_2}_{\lambda_2},
\end{equation}
where each of $\{\sigma^{A_1}_{\lambda_1}\}_{\lambda_1}$ and $\{\sigma^{A_2}_{\lambda_2}\}_{\lambda_2}$ is a set of
of linearly independent states. 
\end{corollary}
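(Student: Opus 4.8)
The corollary generalizes the two-qubit Proposition~\ref{prop: biqubit} to arbitrary dimensions, but under the stronger hypothesis that the separable decomposition uses \emph{linearly independent} local state sets on each side. The strategy is to directly verify that such a state satisfies the frame-representation condition of Theorem~\ref{ncbp}, mirroring the $\rho^{A_2}=\frac{\mbb{1}}{2}$ case in the proof of Proposition~\ref{prop: biqubit}. First I would take the given decomposition $\rho^{A_1A_2}=\sum_{\lambda_1\lambda_2}p(\lambda_1\lambda_2)\sigma^{A_1}_{\lambda_1}\otimes\sigma^{A_2}_{\lambda_2}$ and invoke linear independence of $\{\sigma^{A_1}_{\lambda_1}\}_{\lambda_1}$ and $\{\sigma^{A_2}_{\lambda_2}\}_{\lambda_2}$ to guarantee the existence of unique dual frames. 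Concretely, on $V_i\coloneqq\text{Span}(\{\sigma^{A_i}_{\lambda_i}\}_{\lambda_i})$, linear independence means $\{\sigma^{A_i}_{\lambda_i}\}_{\lambda_i}$ is a \emph{basis} of $V_i$, so there exist unique Hermitian dual operators $\{H^{A_i}_{\lambda_i}\}_{\lambda_i}$ satisfying the biorthogonality relation $\tr[H^{A_i}_{\lambda_i}\sigma^{A_i}_{\lambda'_i}]=\delta_{\lambda_i,\lambda'_i}$.

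**Key steps.** With the dual frames in hand, I would compute $\tr[(H^{A_1}_{\lambda_1}\otimes H^{A_2}_{\lambda_2})\rho^{A_1A_2}]$ by substituting the separable decomposition and using biorthogonality on each tensor factor; this collapses the double sum and yields exactly $p(\lambda_1\lambda_2)$. Since $p(\lambda_1\lambda_2)$ is a probability distribution, the nonnegativity constraint \cref{eq:constraints} is immediately satisfied, $\tr[(H^{A_1}_{\lambda_1}\otimes H^{A_2}_{\lambda_2})\rho^{A_1A_2}]=p(\lambda_1\lambda_2)\ge 0$. Plugging this back reproduces precisely the frame decomposition \cref{eq:bipartiteframe}. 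The remaining condition to check is $\sum_{\lambda_i}H^{A_i}_{\lambda_i}=\mc{P}_{V_i}(\mbb{1}^{A_i})$; this follows from the standard fact that for a basis and its biorthogonal dual, the operator $\sum_{\lambda_i}H^{A_i}_{\lambda_i}$ acts as the identity functional on $V_i$ and hence equals the projection of the identity onto $V_i$ (as already noted in the proof of Proposition~\ref{prop: biqubit}). This verifies all hypotheses of Theorem~\ref{ncbp}, establishing classicality.

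**Main obstacle.** The only genuinely nontrivial point is confirming that the \emph{uniquely determined} dual frame $H^{A_i}_{\lambda_i}$ lands inside $\text{Herm}(\mc H^{A_i})$ and satisfies $\sum_{\lambda_i}H^{A_i}_{\lambda_i}=\mc{P}_{V_i}(\mbb{1}^{A_i})$ rather than some other operator. Hermiticity is automatic because the biorthogonal dual of a set of Hermitian operators, computed within the real inner-product space $\text{Herm}(\mc H^{A_i})$ equipped with the Hilbert--Schmidt inner product, is again Hermitian. The identity-resolution condition requires observing that $\mbb{1}^{A_i}$ need not lie in $V_i$ (the local marginals $\sigma^{A_i}_{\lambda_i}$ may span a proper subspace), which is exactly why the projector $\mc{P}_{V_i}$ appears; but one checks $\tr[(\sum_{\lambda_i}H^{A_i}_{\lambda_i})\sigma^{A_i}_{\lambda'_i}]=1$ for every basis element, and since $\mc{P}_{V_i}(\mbb{1}^{A_i})$ satisfies the same relations and both lie in $V_i$, they coincide. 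I expect no serious difficulty beyond this bookkeeping, since the linear-independence hypothesis is precisely what removes the constrained-optimization subtleties present in the general (overcomplete-frame) case.
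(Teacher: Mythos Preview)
Your proposal is correct and follows essentially the same approach as the paper. The paper does not give a separate proof but says the result follows ``using an analogous proof technique'' to Proposition~\ref{prop: biqubit}; your argument is precisely that analogue---invoke linear independence on each side to obtain biorthogonal dual frames, compute $\tr[(H^{A_1}_{\lambda_1}\otimes H^{A_2}_{\lambda_2})\rho^{A_1A_2}]=p(\lambda_1\lambda_2)\ge 0$, and feed this into Theorem~\ref{ncbp}---and your handling of the $\sum_{\lambda_i}H^{A_i}_{\lambda_i}=\mc{P}_{V_i}(\mbb{1}^{A_i})$ condition matches the paper's treatment in the $\rho^{A_2}=\tfrac{\mbb{1}}{2}$ case.
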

For subsystems having Hilbert space dimensionality greater than 2, there exist separable states that do not admit such a decomposition, as is illustrated in Example~\ref{singularexample}.  This fact allows the possibility of separable nonclassical states when one moves beyond qubits.   

\par
In the following, we briefly discuss a conjecture concerning the task of witnessing the nonclassicality of a bipartite state.  It is based on a property we term {\em nonsingularness} (which we will define right after the conjecture).

\begin{conjecture}\label{SuffSepForNonsingular}
A nonsingular bipartite state is classical if and only if it is separable.
\end{conjecture}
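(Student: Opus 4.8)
The plan is to follow the template set by the proof of Proposition~\ref{prop: biqubit}, upgrading each step to arbitrary local dimension and using nonsingularity to control the frame spaces. The ``only if'' direction is immediate and holds for \emph{every} bipartite state: by Corollary~\ref{corr:bipartiteentangle}, entanglement implies nonclassicality, so a classical state is necessarily separable. All the work is in the ``if'' direction, where I would show that a separable \emph{and} nonsingular state admits a product frame decomposition of the form of Eq.~\eqref{eq:bipartiteframe} meeting the constraints of Theorem~\ref{ncbp}. Here I take nonsingularity to mean that the two steering maps are of full rank, i.e.\ that $V_1=\text{Herm}(\mc H^{A_1})$ and $V_2=\text{Herm}(\mc H^{A_2})$; this is precisely the feature that Example~\ref{singularexample} violates. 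This assumption is what makes the problem tractable: once the frames $\{\sigma^{A_i}_{\lambda_i}\}_{\lambda_i}$ span the full Hermitian space $V_i$, any choice of dual frames $\{H^{A_i}_{\lambda_i}\}_{\lambda_i}$ automatically reproduces $\rho^{A_1A_2}$ in Eq.~\eqref{eq:bipartiteframe} (since then $\rho^{A_1A_2}\in V_1\otimes V_2$ and the tensored dual frames reconstruct any such operator), and automatically satisfies $\sum_{\lambda_i}H^{A_i}_{\lambda_i}=\mc{P}_{V_i}(\mbb 1)=\mbb 1$ (by tracing the dual-frame identity and using $\tr[\sigma^{A_i}_{\lambda_i}]=1$). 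Thus the \emph{only} nontrivial requirement left to secure is the positivity condition $\tr[(H^{A_1}_{\lambda_1}\otimes H^{A_2}_{\lambda_2})\rho^{A_1A_2}]\ge 0$ of Eq.~\eqref{eq:constraints}.

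Next I would reduce to the symmetric case of maximally mixed marginals exactly as in the proof of Proposition~\ref{prop: biqubit}: if a marginal is rank-deficient the state factorizes or lives on a smaller subsystem, and if it is full rank one applies the local filter $\mbb 1^{A_1}\otimes(2\rho^{A_2})^{-1/2}$, which preserves separability and maps any admissible frame decomposition of the filtered state back to one of $\rho^{A_1A_2}$ via the transformations in Eqs.~\eqref{eq:newH1} and~\eqref{eq:newH}. This reduction is dimension-independent, so it carries over verbatim. The crux is then to establish, for a nonsingular separable state with $\tr_{A_1}[\rho^{A_1A_2}]=\mbb 1^{A_2}/d$ and $\tr_{A_2}[\rho^{A_1A_2}]=\mbb 1^{A_1}/d$, the existence of a \emph{product-basis diagonal} decomposition
\begin{equation}
\rho^{A_1A_2}=\sum_{\lambda} p_\lambda\,\sigma^{A_1}_{\lambda}\otimes\sigma^{A_2}_{\lambda},\qquad p_\lambda\ge 0,
\end{equation}
in which $\{\sigma^{A_1}_{\lambda}\}_\lambda$ and $\{\sigma^{A_2}_{\lambda}\}_\lambda$ are each linearly independent families of density operators spanning the full Hermitian space. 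Given such a decomposition, I would take $\{H^{A_i}_{\lambda}\}_\lambda$ to be the unique dual bases, so that $\tr[(H^{A_1}_{\lambda}\otimes H^{A_2}_{\lambda'})\rho^{A_1A_2}]=p_\lambda\delta_{\lambda\lambda'}\ge 0$, and the conditions of Theorem~\ref{ncbp} would follow immediately, just as in the two-qubit argument.

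The main obstacle is precisely the existence of this diagonalizing product-basis decomposition, which is the higher-dimensional analogue of the fact (supplied for qubits by Ref.~\cite{Jevtic2014}) that the steering body of a separable state with maximally mixed marginal fits inside a simplex of linearly independent states. For qubits this is the tetrahedron-containment theorem; in general dimension there is no known counterpart, and the geometry of both the separable set and of steering bodies is far more intricate. I would attempt to prove it by showing that nonsingularity forces the steering body to be full-dimensional yet still inscribable in a simplex whose vertices are genuine density operators, or alternatively by arguing directly that a full-rank separable correlation matrix can be brought to diagonal form by a local (nonunitary, frame-level) change of basis while keeping both local families positive. The difficulty is that a generic nonsingular separable state need not even possess a separable decomposition into only $d^2$ product terms, let alone one with linearly independent local families \emph{and} diagonal weights; controlling simultaneously the positivity of the local frames, their linear independence, and the nonnegativity of the weights is exactly what resists a general argument and warrants the statement's status as a conjecture rather than a theorem.
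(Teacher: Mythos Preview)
This statement is a \emph{conjecture} in the paper; the paper does not prove it but offers a heuristic discussion of why it ought to hold and what obstructs a full proof. Your proposal likewise ends by correctly flagging the unresolved step, so both accounts agree on the conjectural status.

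The routes to the obstruction differ, however. The paper argues through the list of ontological identities in Eq.~\eqref{op:bipartiterep}: nonsingularity is exactly the condition $\mc{O}(\msf{M}_i\circ\msf{P})=\mc{O}(\msf{M}_i)$, which collapses constraints~\eqref{ncprep1}--\eqref{ncprep2} into~\eqref{ncmeas1}--\eqref{ncmeas2}, and satisfying those four is equivalent to separability; the remaining obstacle is constraint~\eqref{ncnewstate}, coming from $\mc{O}(\tilde{\msf{P}})$, whose implications the paper leaves open. You instead attack the problem through the structure theorem (Theorem~\ref{ncbp}), attempting to lift the two-qubit proof of Proposition~\ref{prop: biqubit} to arbitrary dimension; the obstacle you isolate is the absence of a higher-dimensional analogue of the simplex-fitting result of Ref.~\cite{Jevtic2014}. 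These are genuinely complementary framings of the same open problem---yours geometric, the paper's in operational-identity language---and your reduction of Theorem~\ref{ncbp}'s conditions (once $V_i=\text{Herm}(\mc H^{A_i})$) to the single positivity requirement~\eqref{eq:constraints} is a clean observation the paper does not spell out. One minor slip: local filtering does not in general make \emph{both} marginals maximally mixed simultaneously (the qubit proof only filters one side), though this does not affect your main conclusion.
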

The nonsingularness property is defined as follows (and note that the separable state in Example~\ref{singularexample} is singular).
\blk
\begin{definition}\label{defn:nonsingularness}
A bipartite state $\rho^{A_1A_2}$ is said to be {\em nonsingular} if 
the {\em steering maps} induced by $\rho^{A_1A_2}$ are invertible.  The steering maps are the superoperators $\Psi^{\rho}_1$ and $\Psi^{\rho}_2$
defined by the expressions
\begin{align}
    &\Psi_1^{\rho}(X)=\tr_{A_1}[(X^T\otimes \mbb{1})\rho^{A_1A_2}],\\ 
    &\Psi_2^{\rho}(X)=\tr_{ A_2 }[(\mbb{1} \otimes X^T)\rho^{A_1A_2}],
\end{align}
\end{definition} 
where $T$ denotes the transpose. An equivalent characterization of what it is for a bipartite state to be  nonsingular can be given in terms of operational identities:  a bipartite state is  nonsingular if and only if the coefficients defining the linear identities holding among the measurement effects on $A_1$ are equivalent to those defining the linear identities holding among the steered states on $A_2$, and vice-versa, that is, if and only if $\mc{O}(\msf{M_1}\circ\msf{P})=\mc{O}(\msf{M_1})$ and $\mc{O}(\msf{M_2}\circ\msf{P})=\mc{O}(\msf{M_2})$ (recall that $\mc{O}(\msf{M_1})$, $\mc{O}(\msf{M_2})$, $\mc{O}(\msf{M_1}\circ\msf{P})$, and $\mc{O}(\msf{M_2}\circ\msf{P})$
are defined in Eqs.~\eqref{op:bipartite1} and ~\eqref{op:bipartite2} respectively).  One can summarize this characterization of nonsingularity as follows: a bipartite state is nonsingular if it does not induce any new operational identities when contracted with the local effects.  

That this characterization of nonsingularness coincides with that of Definition~\ref{defn:nonsingularness} \blk
can be easily seen from the fact that $\mc{O}(\msf{M_i}\circ\msf{P})$ in Eq.~\eqref{op:bipartite2} can be expressed as 
\begin{align}
\mc{O}(\msf{M_i}\circ\msf{P})=
 \{\{\alpha_{b_iy_i}\}|\sum_{b_i,y_i} 
\alpha_{b_iy_i} \Psi_i^{\rho}([M^{A_i}_{b_i|y_i}]^T)
=0\},
\end{align}
but using the invertibility of $\Psi_i^{\rho}$ and the invertibility of the transpose, this implies 
\begin{align}
\mc{O}(\msf{M_i}\circ\msf{P})=
\{ \{\alpha_{b_iy_i}\}| \sum_{b_i,y_i} 
\alpha_{b_iy_i} M_{b_i|y_i}^{A_i}
=0\},
\end{align}
such that, recalling the definition of $\mc{O}(\msf{M_i})$ from Eqs.~\eqref{op:meas1} and ~\eqref{op:meas2}, we infer that $\mc{O}(\msf{M_i}\circ\msf{P})=\mc{O}(\msf{M_i})$.

Corollary~\ref{corr:bipartiteentangle} establishes that for bipartite states, being separable is a necessary condition for being classical. To establish Conjecture~\ref{SuffSepForNonsingular}, what remains to be proven is that for nonsingular bipartite states, being separable is also a sufficient condition for being classical. When considering what constraints the operational identities imply for a noncontextual ontological model, one only needs to consider the first two sets of  constraints 
in Eq.~\eqref{op:bipartiterep}, which are of the form
\begin{align}
\sum_{b_iy_i}\alpha_{b_iy_i} p(b_i|y_i,\lambda_i)=0~~~~\forall \{\alpha_{b_iy_i}\}\in \mc{O}(\msf{M_i}),
\end{align}
This is because the other 
two sets of constraints in Eq.~\eqref{ncprep1} and Eq.~\eqref{ncprep2}, namely,
\begin{align}
  \sum_{b_iy_i}\alpha_{b_iy_i} \sum_{\lambda_{j}}p(\lambda_i\lambda_j) p(b_i|y_i,\lambda_i)=0~\forall\{\alpha_{b_iy_i}\} \in \mc{O}(\msf{M_i}\circ\msf{P}),
\end{align}
are automatically satisfied since we have $\mc{O}(\msf{M_i}\circ\msf{P})=\mc{O}(\msf{M_i})$ for nonsingular bipartite state. 
\par 

\yujie
Now, let us discuss why Conjecture~\ref{SuffSepForNonsingular} might hold, and why we are unable to provide a complete proof. Consider a bipartite state; from the proof of Corollary~\ref{corr:bipartiteentangle}, the state is separable if and only if the statistics generated by the set of circuits where it is contracted with any product effect admits an ontological model that satisfies the first two sets of identities in Eq.~\eqref{op:bipartiterep}. If the bipartite state is also nonsingular, as we discussed above, we can disregard the third and fourth sets of constraints. In other words, statistics generated by a nonsingular bipartite state automatically satisfy the first four sets of identities in Eq.~\eqref{op:bipartiterep}.

What remains unclear is whether the fifth constraint in Eq.~\eqref{op:bipartiterep}—namely, the one implied by the operational identity of $\mc{O}(\tilde{\msf{P}})$)—has nontrivial consequences for the characterization of classicality for bipartite states. Thus,  it is still an open question whether every nonsingular separable state is classical. \blk

As discussed at the beginning of this section, diagram-preservation implies that an ontological model assigns a Cartesian product of ontic state spaces to a particular tensor product of Hilbert spaces. This leads to additional constraints on the ontological representation of quantum processes---constraints that can in turn imply nonclassicality of a process that would otherwise be judged classical.  This highlights how conclusions about nonclassicality only hold {\em relative to} a choice of factorization of the Hilbert space. For instance, as noted above, a 4-dimensional Hilbert space can be factorized into a pair of 2-dimensional Hilbert spaces in many ways, where the factor spaces correspond to virtual subsystems~\cite{zanardi2004quantum}. Because for a given pure state there is always some choice of factorization relative to which it is entangled, it follows from Corollary~\ref{corr:bipartiteentangle} that there is always some choice of factorization relative to which it is nonclassical.  In this article, the choice of factorization of Hilbert space relative to which nonclassicality is being judged is always specified by the circuit one is considering.
 
Note that the ideas and results in this section generalize straightforwardly to multipartite states, and that some of these generalizations appear in Section~\ref{general}. 

\section{Processes of the transformation variety }

To discuss the nonclassicality of a process of the transformation variety (a channel, multi-channel, instrument, or multi-instrument), one must consider the circuit obtained by contracting the process with its dual process, namely, a comb, as depicted in Fig.~\ref{duals}(d).  But given Theorem~\ref{thmdual}, we need not consider all combs. It suffices to consider those that factorize, meaning they can be understood as a preparation preceding the process and a measurement following it.  In other words, it suffices to consider circuits that have the form depicted in Fig.~\ref{PTMcoarse}(a), termed a {\em prepare-transform-measure scenario}.  

As in the previous sections, we begin by reviewing the form of a classical explanation (i.e., a diagram-preserving and linear ontological representation) of a prepare-transform-measure scenario. We consider the case where the preparation and measurement stages are general (a multi-state and a multi-measurement, respectively), but the transformation is simply a channel, rather than a general process of the transformation variety. We label the quantum system before and after the transformation by $A$ and $B$.

\subsection{Classical explainability of a prepare-transform-measure scenario}

We consider a  multi-source $\msf{P}^A\coloneqq\{\{p(a|x)\rho^A_{a|x}\}_{a}\}_x$ on $\mc{H}^{A}$, followed by a   
channel $\msf{T}^{B|A}\coloneqq\mc{E}^{B|A}$: $\mc{L}(\mc{H}^A)\mapsto \mc{L}(\mc{H}^B)$, and a multi-measurement $\msf{M}^B\coloneqq\{\{M_{b|y}^B\}_{b}\}_y$ on $\mc{H}^{B}$.  For such a circuit, quantum theory predicts that 
\begin{equation}
p(ab|xy)=p(a|x)\tr[M_{b|y}^B  \mc{E}^{B|A} (\rho_{a|x}^A)].
\end{equation}
In this circuit, there are operational identities among the states on the input system and among the effects on the output system, namely,
\begin{subequations}
\begin{align}
 &\mc{O}(\msf{M})=\{\{\alpha_{by}\} |\sum_{by} \alpha_{by} M_{b|y}^B=\mbb{0}^B\}, \\ 
  &\mc{O}(\msf{P})=\{\{\alpha_{ax}\} |\sum_{ax} \alpha_{ax} p(a|x)\rho_{a|x}^A=\mbb{0}^A\}.
\end{align}
 \label{eq:op-PTM1}
\end{subequations} 
There are also additional operational identities, as a consequence of the fact that the \textit{sequential} composition of the channel, together with the multi-source on its input, defines an effective multi-source, and the sequential composition of the channel, together with the multi-measurement on its output, defines an effective multi-measurement. Specifically, the effective multi-source and multi-measurement are, 
\yujie
\begin{subequations}
\begin{align}
&\msf{M}^B\circ\msf{T}^{B|A}:= \{\{\tilde{M}_{b|y}^A\}_b\}_y=\{\{[\mc{E}^{B|A}]^{\dagger }(M_{b|y}^B)\}_b\}_y,  \label{eq:ch-seq-A} \\
&\msf{T}^{B|A}\circ \msf{P}^A:=\{\{\tilde{\rho}^B_{a|x}\}_a\}_x=\{\{p(a|x)\mc{E}^{B|A}(\rho_{a|x}^A)\}_a\}_x, \label{eq:ch-seq-B}
\end{align}
\label{eq:ch-seq-comp}
\end{subequations}
and the operational identities for these are of the form:
\begin{subequations}
\begin{align}
&\mc{O}(\msf{M}\circ\msf{T})=\{\{\alpha_{by}\}|\sum_{by}{\alpha}_{by} \tilde{M}_{b|y}^A =\mbb{0}^{A}\},  \\ 
&\mc{O}(\msf{T}\circ\msf{P})=\{\{\alpha_{ax}\}|\sum_{ax}{\alpha}_{ax} \tilde{\rho}^B_{a|x} =\mbb{0}^{B}\}.
\end{align}
 \label{eq:op-PTM2}
\end{subequations}
\blk 
\begin{figure}[htb!]
\centering
\includegraphics[width=0.5\textwidth]{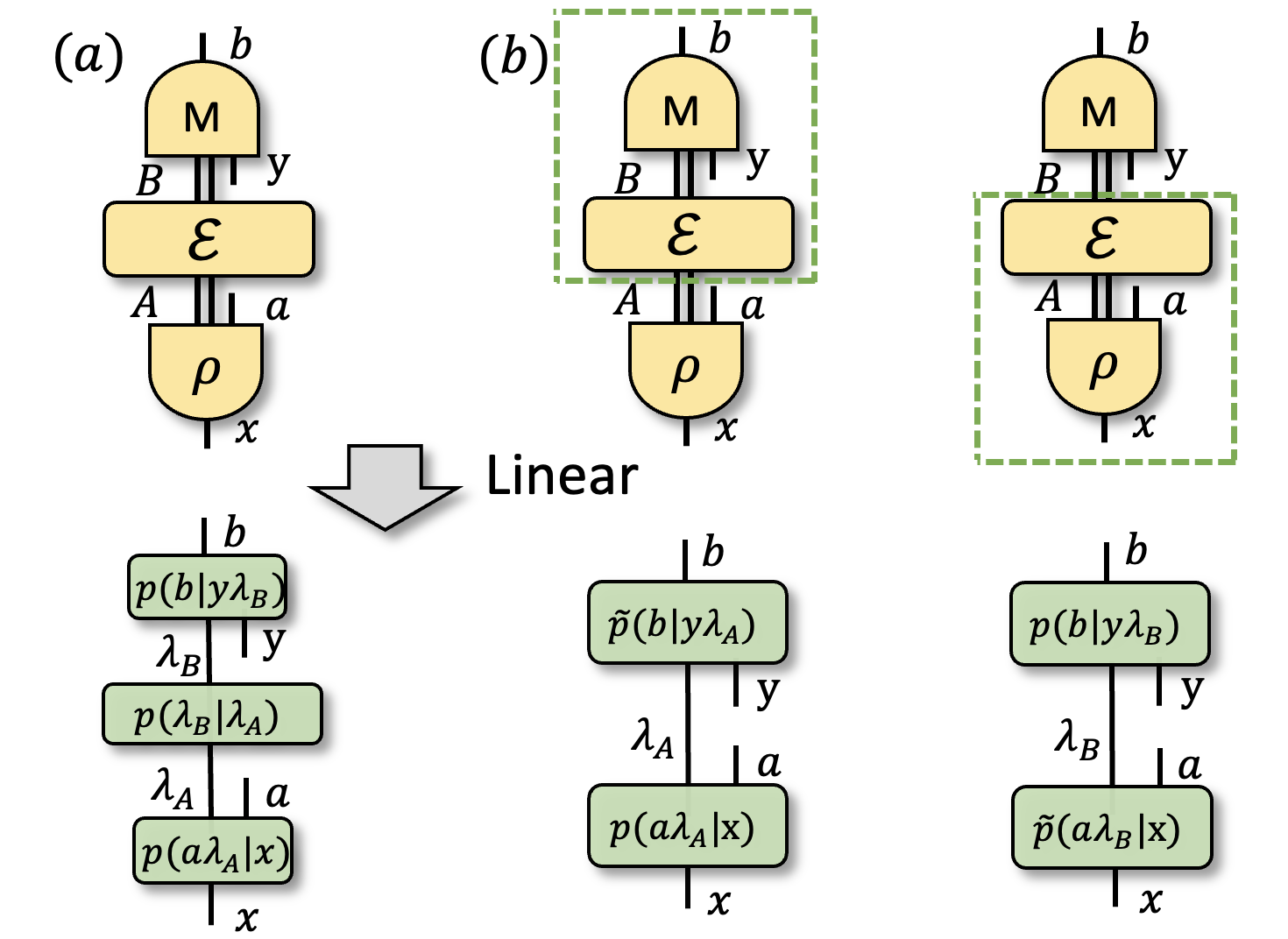}
\caption{a) A prepare-transform-measure circuit. b) By boxing the channel together with the preparation or the measurement, we can obtain a pair of prepare-measure circuits. At the top of the figure, we depict the quantum circuits; at the bottom, we depict an ontological model of each circuit.}
\label{PTMcoarse}
\end{figure}
\yujie 
\par 

Moreover, the sequential composition of the new multi-measurement on system $A$,  $\{\{\tilde{M}_{b|y}^{A}\}_b\}_y$, with the new multi-source on system $B$,  $\{\{\tilde{\rho}_{a|x}^B\}_a\}_x$, gives rise to a measure-and-prepare multi-instrument, namely,
\begin{equation}
\tilde{\mc{E}}^{B|A}_{ab|xy}(\cdot)\coloneqq\tilde{\rho}^B_{a|x}\tr \left[\tilde{M}_{b|y}^A(\cdot)\right].
\end{equation}
Since this composite process is of the same type as the given channel $\mc{E}^{B|A}$ making up the prepare-transform-measure scenario, there might exist a new operational identity in the form of 
\begin{equation}
\mc{E}^{B|A} =
\sum_{{abxy}}\alpha_{abxy}\tilde{\mc{E}}^{B|A}_{ab|xy}.
\label{eq:OPnewch}
\end{equation}
Such an identity exists if and only if $\mc{E}^{B|A}$ lies in the linear span of the set $\left\{\tilde{\mc{E}}^{B|A}_{ab|xy}(\cdot)=\tilde{\rho}^B_{a|x}\tr \left[\tilde{M}_{b|y}^A(\cdot)\right]\right\}_{abxy}$ with $\tilde{\rho}^B_{a|x}$ and $\tilde{M}_{b|y}^A$ given in Eq.~\eqref{eq:ch-seq-comp}. The operational identity of the form of Eq.~\eqref{eq:OPnewch} will then be denoted by
\begin{align}
\mc{O}(\widetilde{\msf{T}})=\{\{\alpha_{abxy}\}|\mc{E}^{B|A} =
\sum_{{abxy}}\alpha_{abxy}\tilde{\mc{E}}^{B|A}_{ab|xy}\}. 
\end{align}

Consequently, a prepare-transform-measure  circuit is classically explainable if and only if one can reproduce the quantum statistics via 
\begin{equation}
    p(ab|xy) =\sum_{\lambda_A\lambda_B}p(b|y\lambda_B)p(\lambda_B|\lambda_A)p(a\lambda_A |x),
    \label{eq: classical model}
\end{equation}  
where $p(b|y\lambda_B)$ respects the identities implied by $\mc{O}(\msf{M})$, and $p(a\lambda_A|x)$ respects the identities implied by $\mc{O}(\msf{P})$. Moreover, define
\begin{subequations}
\begin{align}
\tilde{p}(b|y\lambda_{A})&\coloneqq\sum_{\lambda_B}p(b|y\lambda_B)p(\lambda_B|\lambda_A), \\ 
\tilde{p}(a\lambda_{B}|x)&\coloneqq\sum_{\lambda_A}p(a\lambda_{A}|x)p(\lambda_B|\lambda_A),
\end{align}
\label{eq:channelonto}
\end{subequations}
Here, $\tilde{p}(b|y\lambda_{A})$ must satisfy the ontological identities implied by 
$\mc{O}(\msf{M}\circ \msf T)$, and $\tilde{p}(a\lambda_{B}|x)$ must satisfy the ontological identities implied by 
 $\mc{O}(\msf T \circ \msf{P})$, \yujie while $p(\lambda_B|\lambda_A)$ must satisfy the ontological identities implied by  $\mc{O}(\widetilde{\msf{T}})$.
 Explicitly, we require that for all $\lambda_A,\lambda_B$, 
\begin{subequations}
\begin{align}
 &\sum_{by} \alpha_{by} p(b|y\lambda_B) =0 \quad \forall \{\alpha_{by}\}\in \mc{O}(\msf{M}),\label{channelmeas1} \\ 
 &\sum_{ax} \alpha_{ax} p(a\lambda_A|x)=0 \quad \forall \{\alpha_{ax}\}\in \mc{O}(\msf{P}), \label{channelmeas2} \\ 
 &\sum_{by} \alpha_{by}\tilde{p}(b|y\lambda_{A}) =0 \quad \forall \{\alpha_{by}\}\in \mc{O}(\msf{M}\circ \msf{T}), \label{channelprep1}\\ 
 &\sum_{ax}{\alpha}_{ax} \tilde{p}(a\lambda_{B}|x)=0 \quad \forall \{\alpha_{ax}\}\in \mc{O}(\msf{T}\circ\msf{P}), \label{channelprep2} \\
 & {\yujie p(\lambda_B|\lambda_A)=\sum_{{abxy}}\alpha_{abxy} \tilde{p}(b|y\lambda_{A})\tilde{p}(a\lambda_{B}|x) }\notag \\
 &~~\quad\quad\quad\quad\quad\quad\quad\quad\quad \forall\{\alpha_{abxy}\}\in \mc{O}(\widetilde{\msf{T}}). \label{channel3}
\end{align}
\label{op:channel}
\end{subequations}
\blk
The first two constraints here follow from linearity, while the latter three follow from diagram-preservation together with linearity.
\blk
Finally, we present a result that will be useful for the characterization of nonclassical channels (the analogue of Proposition~\ref{prop:bipartite}). 
\begin{proposition}
\label{prop:channel}
Any prepare-transform-measure circuit (as depicted in Fig.~\ref{PTMcoarse}(a)) is classically explainable only if the two prepare-measure circuits it induces by boxing the transformation with the preparation or with the measurements (the two cases depicted in Fig.~\ref{PTMcoarse}(b)) are classically explainable.
\end{proposition}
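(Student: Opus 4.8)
The plan is to mirror the strategy behind Proposition~\ref{prop:bipartite}, exploiting the fact that the ontological representations of the two boxed processes have already been written down explicitly in \cref{eq:channelonto}. I would begin by assuming that the full prepare-transform-measure circuit of Fig.~\ref{PTMcoarse}(a) is classically explainable, so that there exist a distribution $p(a\lambda_A|x)$, a stochastic matrix $p(\lambda_B|\lambda_A)$, and a response function $p(b|y\lambda_B)$ satisfying the reproduction formula \cref{eq: classical model} together with all five families of ontological identities in \cref{op:channel}. The goal is to extract from this single model two separate ontological models, one for each of the prepare-measure circuits obtained by boxing.

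For the circuit obtained by boxing the transformation with the preparation (the effective multi-source $\msf{T}^{B|A}\circ\msf{P}^A$ on $B$, measured by $\msf{M}^B$), I would take the ontic space to be $\Lambda_B$ and assign to the boxed multi-source the representation $\tilde{p}(a\lambda_B|x)$ defined in \cref{eq:channelonto}, and to the measurement the response function $p(b|y\lambda_B)$. Summing $p(b|y\lambda_B)\tilde{p}(a\lambda_B|x)$ over $\lambda_B$ and substituting the definition of $\tilde{p}$ recovers exactly the right-hand side of \cref{eq: classical model}, so the statistics are reproduced. It then remains only to check the two ontological identities of a prepare-measure scenario (those of the form \cref{nc:PM}): the identity for the effects, indexed by $\mc{O}(\msf{M})$, is precisely \cref{channelmeas1}, and the identity for the steered states, indexed by $\mc{O}(\msf{T}\circ\msf{P})$, is precisely \cref{channelprep2}. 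Both hold by assumption, so this boxed circuit is classically explainable. The second boxed circuit (boxing the transformation with the measurement, giving the effective multi-measurement $\msf{M}^B\circ\msf{T}^{B|A}$ on $A$ applied to $\msf{P}^A$) is handled symmetrically: I would use the ontic space $\Lambda_A$, assign $p(a\lambda_A|x)$ to the multi-source and $\tilde{p}(b|y\lambda_A)$ from \cref{eq:channelonto} to the boxed multi-measurement, verify statistics reproduction in the same way, and observe that the two required ontological identities are now \cref{channelmeas2} (indexed by $\mc{O}(\msf{P})$) and \cref{channelprep1} (indexed by $\mc{O}(\msf{M}\circ\msf{T})$), both again guaranteed by the assumed model.

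The step requiring the most care---and the conceptual crux of the result---is the observation that a prepare-measure scenario carries no operational identities beyond those among its states and those among its effects. Unlike the full prepare-transform-measure circuit, where composing the steered measurement with the steered source produces a measure-and-prepare process of the same (transformation) type, and hence the extra identity \cref{channel3} indexed by $\mc{O}(\widetilde{\msf{T}})$, no composite of the same type as one of the remaining processes arises once the circuit has been reduced to a single preparation contracted with a single measurement. Consequently the constraints needed to certify each boxed prepare-measure circuit form a strict subset of the five families in \cref{op:channel}, and the implication goes through with \cref{channel3} left entirely unused. This is precisely why only the one-way implication is claimed: the reduction discards the constraint coming from $\mc{O}(\widetilde{\msf{T}})$, so classical explainability of both boxed circuits need not suffice to recover classical explainability of the full prepare-transform-measure circuit.
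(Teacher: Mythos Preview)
Your proposal is correct and follows essentially the same approach as the paper, which explicitly says the proof is analogous to that of Proposition~\ref{prop:bipartite} and leaves the details to the reader. Your identification of which subset of the constraints in \cref{op:channel} is needed for each boxed circuit, and your remark that the unused constraint \cref{channel3} is precisely why only the one-way implication is claimed, are both accurate and match the paper's logic.
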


\subsection{The classical-nonclassical divide for channels }
\label{sec:channeldevide}
We turn now to the definition of classicality for a channel, again obtained by particularizing Definition~\ref{maindefn}.  Recalling that the most general dual process of a channel from $A$ to $B$ is a comb with output $A$ and input $B$ and that a factorizing such comb constitutes a preparation on $A$ and a measurement on $B$, we are led to consider a prepare-transform-measure circuit of the type depicted in Fig.~\ref{PTMcoarse}(a), but where the range of preparations includes every possible state on $A$ and the range of measurements includes every possible effect on $B$---a situation depicted in Fig.~\ref{duals}(f).

\begin{definition}
A channel is classical if and only if the statistics generated by the set of circuits where it is contracted with {\em any} state at its input and {\em any} effect at its output (i.e., where the set ranges over all states and all effects) is classically explainable (in the sense of Definition~\ref{defn:classical}). \blk 
 \label{def:PTM}
\end{definition}

A useful result concerning the sufficient condition for nonclassical channels (necessary condition for classical channels) can be inferred from Proposition~\ref{prop:channel}.

\begin{proposition}\label{channelprop}
\yujie
A channel $\mc{E}^{B|A}$ is classical only if the image of every multi-source 
on $A$ under $\mc{E}^{B|A}$ is a classical multi-source on $B$, and the image of every multi-measurement on $B$ under $[\mc{E}^{B|A}]^{\dagger}$ is a classical multi-measurement on $A$.
\blk
\end{proposition}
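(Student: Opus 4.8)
The plan is to derive this as a direct corollary of Proposition~\ref{prop:channel}, exactly mirroring how Proposition~\ref{propremote} was obtained from Proposition~\ref{prop:bipartite} in the bipartite case. First I would invoke Definition~\ref{def:PTM}: if the channel $\mc{E}^{B|A}$ is classical, then for \emph{every} choice of input state and output effect, the resulting prepare-transform-measure circuit (Fig.~\ref{PTMcoarse}(a)) is classically explainable. Since classicality of the channel ranges over \emph{all} states on $A$ and \emph{all} effects on $B$, this in particular means the full prepare-transform-measure circuit is classically explainable when the preparation stage is taken to be an arbitrary multi-source $\msf{P}^A$ and the measurement stage an arbitrary multi-measurement $\msf{M}^B$ (any such multi-source/multi-measurement is just a relabeling of a collection of states/effects, which are all individually in range).

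Next I would apply Proposition~\ref{prop:channel}, which asserts that classical explainability of the prepare-transform-measure circuit implies classical explainability of the two prepare-measure circuits obtained by boxing the channel together with the preparation or with the measurement (Fig.~\ref{PTMcoarse}(b)). Concretely, boxing $\mc{E}^{B|A}$ with the input multi-source yields the effective multi-source $\msf{T}^{B|A}\circ\msf{P}^A = \{\{p(a|x)\mc{E}^{B|A}(\rho_{a|x}^A)\}_a\}_x$ on $B$ (Eq.~\eqref{eq:ch-seq-B}), while boxing with the output multi-measurement yields the effective multi-measurement $\msf{M}^B\circ\msf{T}^{B|A} = \{\{[\mc{E}^{B|A}]^{\dagger}(M_{b|y}^B)\}_b\}_y$ on $A$ (Eq.~\eqref{eq:ch-seq-A}). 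Classical explainability of the boxed prepare-measure circuit, in which the other side ranges over \emph{all} effects (respectively all states), is precisely the condition appearing in Definition~\ref{def_multi-sourceNC} (respectively Definition~\ref{def_mtmtNC}) for the effective process to be classical.

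Therefore, if $\mc{E}^{B|A}$ is classical, then the image of every multi-source on $A$ under $\mc{E}^{B|A}$ is a classical multi-source on $B$, and the image of every multi-measurement on $B$ under $[\mc{E}^{B|A}]^{\dagger}$ is a classical multi-measurement on $A$, which is the claim. The argument is essentially bookkeeping: the only genuine content is imported from Proposition~\ref{prop:channel}, whose proof in turn rests on the observation that checking classical explainability of the boxed prepare-measure scenario requires verifying only a strict subset of the ontological constraints in Eq.~\eqref{op:channel} (the analogue of how Eqs.~\eqref{ncprep1}--\eqref{ncmeas2} are a subset of the full bipartite constraint set). The main thing to be careful about is the direction of the implication and the quantifier structure: one must confirm that ranging over all dual factorizing combs for the channel indeed supplies enough preparations and measurements to realize every multi-source on $A$ and every multi-measurement on $B$ as the boxed effective process, so that classicality of the channel does feed the full hypothesis of Proposition~\ref{prop:channel} for each such effective process. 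This is immediate because a multi-source (multi-measurement) is nothing but an indexed family of states (effects), each of which individually lies in the range quantified over in Definition~\ref{def:PTM}.
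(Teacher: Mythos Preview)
Your proposal is correct and follows essentially the same approach as the paper, which explicitly states that the proof is ``analogous to that of Proposition~\ref{propremote}'' (i.e., invoke the definition of channel classicality, apply Proposition~\ref{prop:channel} to box the channel with the preparation or the measurement, and then read off classicality of the resulting effective multi-source/multi-measurement from Definitions~\ref{def_multi-sourceNC} and~\ref{def_mtmtNC}). The paper additionally notes an alternative route via Lemma~\ref{lem:Choi} (Choi-isomorphism) combined with Proposition~\ref{propremote}, but your direct argument matches the primary one sketched in the text.
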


It is straightforward to give a proof that is analogous to that of Proposition~\ref{propremote}.  The result can also be understood as following from Proposition~\ref{propremote} by leveraging lemma~\ref{lem:Choi} 
(which will be presented in Section~\ref{general}) because the latter result shows that the verdict regarding the nonclassicality of a channel is the same as that regarding the bipartite state that is Choi-isomorphic to it.  

\yujie However the converse of proposition~\ref{channelprop} is yet unknown. Instead, we show that whether a quantum channel $\mc{E}^{B|A}$ is classical or not can be characterized by the following theorem, analogous to Theorems~\ref{theoremprep}, ~\ref{theoremmeas}, and \ref{ncbp}, which is proven in Appendix~\ref{app:proof7}.

Define the operator spaces
\begin{subequations}
\begin{align}
&V_B=\text{Span}(\{\mc{E}^{B|A}(\rho^A):\rho^A\in \mc S(\mc{H}^A)\}),  \\
&V_A=\text{Span}(\{[\mc{E}^{B|A}]^{\dagger }(M^B):M^B\in \mc M(\mc{H}^B)\}). 
\end{align}
Note that since $[\mc{E}^{B|A}]^{\dagger }$ is a unital channel, $V_A$ must also contains identity operator $\mbb{1}^A$.
\end{subequations}
\begin{theorem}
\label{ncchannelstruc}
A channel $\mc{E}^{B|A}$: $\mc{L}(\mc{H}^{A})\mapsto \mc{L}(\mc{H}^{B})$ is classical if and only if it admits a frame representation:
\begin{equation}
\mc{E}^{B|A}(\cdot)=\sum_{\lambda_A\lambda_B}\sigma^{B}_{\lambda_B}\tr[H_{\lambda_B}^{B}\mc{E}^{B|A}(F^{A}_{\lambda_A})]\tr[G_{\lambda_A}^{A}(\cdot)]
\label{eq:channelframe}
\end{equation}
where the frame on $V_A$ is a set of Hermitian operators $\{{F}^{A}_{\lambda_A}\}_{\lambda_A}$ each with unit trace, $\tr[{F}^{A}_{\lambda_A}]=1$, and the dual frame on $V_A$ is a POVM $\{G_{\lambda_A}^{A}\}_{\lambda_A}$ satisfying $\sum_{\lambda_A}G_{\lambda_A}^{A}=\mc{P}_{V_A}(\mbb{1}^A)=\mbb{1}^A$
, while the frame on $V_B$ is a set of density operators $\{\sigma^{B}_{\lambda_B}\}_{\lambda_B}$ and the dual frame on $V_B$  is a set of Hermitian operators $\{{H}_{\lambda_B}^{B}\}_{\lambda_B}$ satisfying $\sum_{\lambda_B}{H}_{\lambda_B}^{B}=\mc{P}_{V_B}(\mbb{1}^{B})$, with $\mc{P}_{V_{B(A)}}$ denoting the superoperator projecting onto $V_{B(A)}$, and such that   
\begin{align} 
\tr[H_{\lambda_B}^{B}\mc{E}^{B|A}(F^{A}_{\lambda_A})] \ge 0~~~~\forall \lambda_A, \lambda_B. \label{eq:cl-cons-all}
\end{align}
\end{theorem}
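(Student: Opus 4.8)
The plan is to reduce the problem to the already-established structure theorem for bipartite states (Theorem~\ref{ncbp}) by passing through the Choi isomorphism, and then to translate the resulting frame decomposition back into a statement about the channel. By Lemma~\ref{lem:Choi} (from Section~\ref{general}), a channel $\mc{E}^{B|A}$ is classical if and only if its Choi state $\rho_{\mc{E}}^{A'B}\coloneqq(\mc{I}^{A'}\otimes\mc{E}^{B|A})(\Phi^+)$ is classical as a bipartite state on $A'\otimes B$, where $\Phi^+$ is the maximally entangled state and $A'$ is a copy of $A$. First I would record two structural facts about $\rho_{\mc{E}}$ that will be needed downstream: its marginal on $A'$ is maximally mixed, $\tr_B[\rho_{\mc{E}}^{A'B}]\propto\mbb{1}^{A'}$ (a consequence of trace preservation of $\mc{E}$), and the operator space of states steered on $A'$ is $V_{A'}=(V_A)^T$ while the space steered on $B$ is $V_B$; in particular $\mbb{1}^{A'}\in V_{A'}$ because $\mbb{1}^A\in V_A$ by unitality of $[\mc{E}^{B|A}]^{\dagger}$.

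Applying Theorem~\ref{ncbp} to $\rho_{\mc{E}}$, classicality is equivalent to the existence of density-operator frames $\{\sigma^{A'}_{\lambda_1}\}$, $\{\sigma^{B}_{\lambda_2}\}$ with Hermitian dual frames $\{H^{A'}_{\lambda_1}\}$, $\{H^{B}_{\lambda_2}\}$ obeying $\sum_{\lambda_1}H^{A'}_{\lambda_1}=\mc{P}_{V_{A'}}(\mbb{1})$ and $\sum_{\lambda_2}H^{B}_{\lambda_2}=\mc{P}_{V_B}(\mbb{1})$, subject to $\tr[(H^{A'}_{\lambda_1}\otimes H^{B}_{\lambda_2})\rho_{\mc{E}}]\ge 0$. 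I would substitute this into the Choi reconstruction $\mc{E}^{B|A}(X)\propto\tr_{A'}[(X^T\otimes\mbb{1}^B)\rho_{\mc{E}}^{A'B}]$, using $\tr[X^T\sigma^{A'}]=\tr[X(\sigma^{A'})^T]$ and the identity $\tr[(H^{A'}\otimes H^B)\rho_{\mc{E}}]\propto\tr[H^B\mc{E}((H^{A'})^T)]$. This yields exactly the form~\eqref{eq:channelframe} under the identifications $\sigma^B_{\lambda_B}=\sigma^B_{\lambda_2}$, $H^B_{\lambda_B}=H^B_{\lambda_2}$, $F^A_{\lambda_A}\propto(H^{A'}_{\lambda_1})^T$ and $G^A_{\lambda_A}\propto(\sigma^{A'}_{\lambda_1})^T$. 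The coefficient $\tr[H^B_{\lambda_B}\mc{E}(F^A_{\lambda_A})]$ is then proportional to $\tr[(H^{A'}_{\lambda_1}\otimes H^{B}_{\lambda_2})\rho_{\mc{E}}]$, so the non-negativity constraint~\eqref{eq:cl-cons-all} transfers verbatim from~\eqref{eq:constraints}, and the $B$-side condition $\sum_{\lambda_B}H^B_{\lambda_B}=\mc{P}_{V_B}(\mbb{1})$ is inherited directly.

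The hard part is that Theorem~\ref{ncbp} supplies \emph{density operators} on the $A'$ side, whereas the channel statement demands a POVM $\{G^A_{\lambda_A}\}$ with $\sum_{\lambda_A}G^A_{\lambda_A}=\mbb{1}^A$ together with a \emph{unit-trace} Hermitian frame $\{F^A_{\lambda_A}\}$; these normalizations are not guaranteed by the bipartite theorem. I would resolve this by exploiting the gauge freedom of a frame/dual-frame pair, rescaling $\sigma^{A'}_{\lambda_1}\mapsto c_{\lambda_1}\sigma^{A'}_{\lambda_1}$ and $H^{A'}_{\lambda_1}\mapsto c_{\lambda_1}^{-1}H^{A'}_{\lambda_1}$, which leaves the reconstruction of $\rho_{\mc{E}}$ (hence the whole channel representation) invariant while multiplying each coefficient by the positive factor $c_{\lambda_1}^{-1}$, and choosing $c_{\lambda_1}=\tr[H^{A'}_{\lambda_1}]$. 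The key lemma is that $\tr[H^{A'}_{\lambda_1}]\ge 0$: summing the non-negativity constraint over $\lambda_2$ and using $\sum_{\lambda_2}H^B_{\lambda_2}=\mc{P}_{V_B}(\mbb{1})$, the fact that the $B$-marginals of $\rho_{\mc{E}}$ lie in $V_B$, and the maximally-mixed $A'$-marginal, one finds $\sum_{\lambda_2}\tr[(H^{A'}_{\lambda_1}\otimes H^{B}_{\lambda_2})\rho_{\mc{E}}]\propto\tr[H^{A'}_{\lambda_1}]\ge 0$. Indices with $\tr[H^{A'}_{\lambda_1}]=0$ then carry identically vanishing weight and can be discarded. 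After the rescaling, $\tr[F^A_{\lambda_A}]=1$ holds by construction, while $\sum_{\lambda_A}G^A_{\lambda_A}\propto(\sum_{\lambda_1}\tr[H^{A'}_{\lambda_1}]\,\sigma^{A'}_{\lambda_1})^T=(\mc{P}_{V_{A'}}(\mbb{1}))^T=\mbb{1}^A$, the middle equality being the frame reconstruction of $\mbb{1}\in V_{A'}$. This secures both the POVM and the unit-trace conditions simultaneously; the only remaining chore is to confirm that discarding the zero-trace indices preserves positivity of the surviving $G^A_{\lambda_A}$ and the validity of the representation, which is routine.

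For the converse I would run the construction backwards: from a channel frame representation meeting all stated conditions, assemble a bipartite frame decomposition of $\rho_{\mc{E}}$ satisfying the hypotheses of Theorem~\ref{ncbp} (its~\eqref{eq:constraints} coming from~\eqref{eq:cl-cons-all}), conclude that $\rho_{\mc{E}}$ is classical, and invoke Lemma~\ref{lem:Choi} once more. Alternatively, paralleling the proofs of Theorems~\ref{theoremprep} and~\ref{theoremmeas}, one can build the ontological model directly: represent input states by $p(\lambda_A|\rho)=\tr[G^A_{\lambda_A}\rho]$ (a valid distribution since $\{G^A_{\lambda_A}\}$ is a POVM), output effects by $\tr[M\sigma^B_{\lambda_B}]$ (valid since the $\sigma^B_{\lambda_B}$ are density operators), and the channel by the stochastic matrix $p(\lambda_B|\lambda_A)=\tr[H^B_{\lambda_B}\mc{E}(F^A_{\lambda_A})]$, whose column normalization $\sum_{\lambda_B}p(\lambda_B|\lambda_A)=\tr[\mc{E}(F^A_{\lambda_A})]=\tr[F^A_{\lambda_A}]=1$ is exactly where the unit-trace condition on $F^A$ and trace preservation of $\mc{E}$ enter; linearity of these assignments then guarantees that every operational identity, including the subtle compositional one encoded in~\eqref{op:channel}, is automatically preserved.
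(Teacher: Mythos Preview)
Your approach is substantively correct but takes a different route from the paper's. The paper proves Theorem~\ref{ncchannelstruc} directly, paralleling the proofs of Theorems~\ref{theoremprep}--\ref{ncbp}: in the forward direction it starts from the ontological model of the prepare-transform-measure scenario and uses the ontological identities in~\eqref{op:channel} together with Gleason and Riesz to extract $\sigma^B_{\lambda_B}$, $G^A_{\lambda_A}$, $H^B_{\lambda_B}$, $F^A_{\lambda_A}$ one by one, finally invoking the compositional identity~\eqref{channel3} to put $p(\lambda_B|\lambda_A)$ in frame form; in the backward direction it builds the ontological model directly from the frame data, which is precisely your ``option~2''. Your Choi-based reduction instead recycles Theorem~\ref{ncbp} rather than redoing the ontological-model analysis, and the translation work---the rescaling $c_{\lambda_1}=\tr[H^{A'}_{\lambda_1}]$ together with the argument that this quantity is non-negative---matches the paper's proof of Lemma~\ref{lem:Choi} (specifically Eqs.~\eqref{eq:choiupdate2} and~\eqref{eq:ch-margi2}) almost verbatim. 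The paper even flags this as a valid alternative route in the remark following Theorem~\ref{theoremmeas}.

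One caveat: in the paper, Lemma~\ref{lem:Choi} is \emph{proved using} Theorem~\ref{ncchannelstruc}, so invoking it here as a black box is circular. To close the loop you must establish Lemma~\ref{lem:Choi} independently---for instance directly from Definition~\ref{maindefn}, by observing that the prepare-transform-measure statistics for $\mc{E}^{B|A}$ and the bipartite local-measurement statistics for $\rho_{\mc{E}}$ are related by transposing the $A$-side data, so that an ontological model for one yields one for the other with all operational identities (including the compositional ones) preserved. This is not difficult, but it is extra work that the paper sidesteps by proving Theorem~\ref{ncchannelstruc} first and deriving Lemma~\ref{lem:Choi} afterwards.
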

\blk
\begin{figure}[htb!]
\centering
\includegraphics[width=0.4\textwidth]{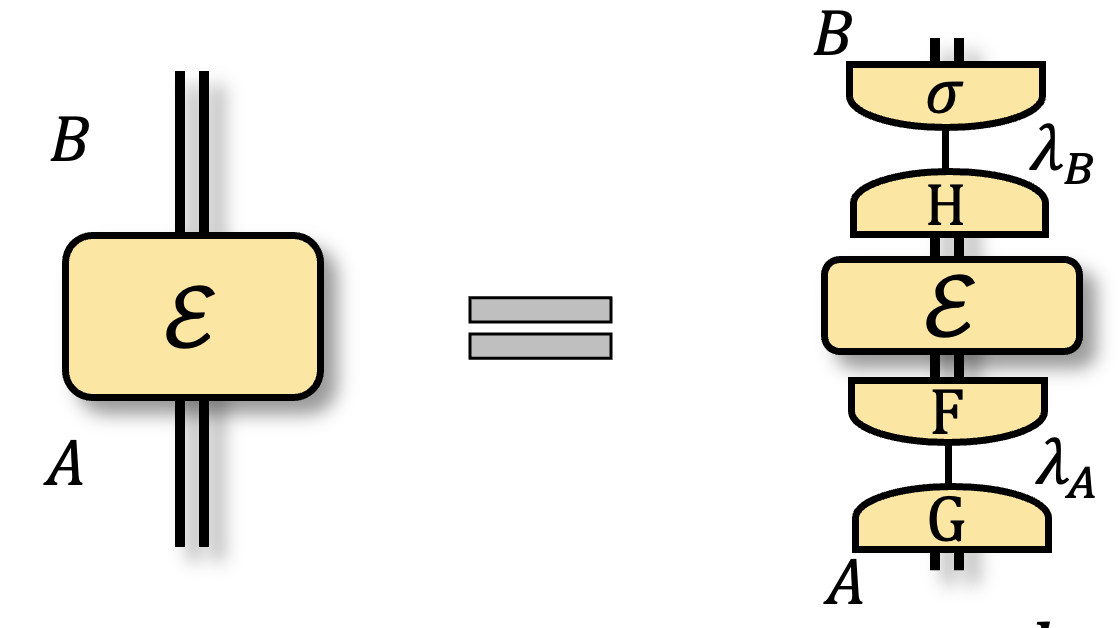}
\caption{\yujie A quantum channel is classical if and only if it admits a frame representation as shown on the right hand side of the diagrammatic equation, where the condition on the frame and dual frame operators are given in Theorem~\ref{ncchannelstruc}.\blk}
\label{framech}
\end{figure}
\par 
As a direct consequence, one can observe that a quantum channel being classical implies it is entanglement-breaking:
\begin{corollary}
If a channel is non-entanglement-breaking, then it is nonclassical.
\label{prop:tf_ENT}
\end{corollary}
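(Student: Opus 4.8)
The plan is to prove the contrapositive: every classical channel is entanglement-breaking. Since Theorem~\ref{ncchannelstruc} gives an exact characterization of classical channels in terms of the frame representation of Eq.~\eqref{eq:channelframe}, I would take that representation as my starting point and show that it is nothing but a measure-and-prepare decomposition, which is the defining structural feature of an entanglement-breaking channel.

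Concretely, I would first abbreviate the nonnegative coefficients guaranteed by Eq.~\eqref{eq:cl-cons-all},
\begin{equation}
c_{\lambda_A\lambda_B}\coloneqq\tr[H_{\lambda_B}^{B}\mc{E}^{B|A}(F^{A}_{\lambda_A})]\ge 0,
\end{equation}
so that Eq.~\eqref{eq:channelframe} reads $\mc{E}^{B|A}(\cdot)=\sum_{\lambda_A\lambda_B}c_{\lambda_A\lambda_B}\,\sigma^{B}_{\lambda_B}\,\tr[G_{\lambda_A}^{A}(\cdot)]$. Next I would perform the sum over $\lambda_B$ first, defining the effective output operators
\begin{equation}
\tau^{B}_{\lambda_A}\coloneqq\sum_{\lambda_B}c_{\lambda_A\lambda_B}\,\sigma^{B}_{\lambda_B}.
\end{equation}
Because each $c_{\lambda_A\lambda_B}\ge 0$ and each $\sigma^{B}_{\lambda_B}$ is a density operator, each $\tau^{B}_{\lambda_A}$ is a positive semidefinite operator (a subnormalized state). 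The channel then takes the form $\mc{E}^{B|A}(\cdot)=\sum_{\lambda_A}\tau^{B}_{\lambda_A}\,\tr[G_{\lambda_A}^{A}(\cdot)]$, where $\{G^A_{\lambda_A}\}_{\lambda_A}$ is a genuine POVM on $A$ (as asserted in Theorem~\ref{ncchannelstruc}, with $\sum_{\lambda_A}G^A_{\lambda_A}=\mbb{1}^A$).

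Finally, I would invoke the standard fact that any map of the form $\rho\mapsto\sum_k \tr[M_k\rho]\,\omega_k$ with a POVM $\{M_k\}$ and positive operators $\{\omega_k\}$ is entanglement-breaking: its Choi operator is $\sum_k (M_k)^{T}\otimes\omega_k$, a sum of product positive operators and hence separable. Identifying $M_k\to G^A_{\lambda_A}$ and $\omega_k\to\tau^B_{\lambda_A}$ establishes that every classical channel is entanglement-breaking, which is the desired contrapositive. I expect the only point requiring care---and the main (though mild) obstacle---to be the verification that the collapsed representation is a bona fide measure-and-prepare channel: specifically, that the $\tau^{B}_{\lambda_A}$ are legitimate (sub)states and that trace preservation is respected. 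The former rests entirely on the nonnegativity constraint Eq.~\eqref{eq:cl-cons-all}, which is exactly the extra ingredient distinguishing the classical frame representation from a generic (possibly negative) quasiprobabilistic one; the latter is automatic because the representation reproduces the CPTP map $\mc{E}^{B|A}$ exactly.
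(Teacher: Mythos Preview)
Your proof is correct and is precisely the route the paper has in mind when it introduces the corollary with ``As a direct consequence'' of Theorem~\ref{ncchannelstruc}: the frame representation with the nonnegativity constraint Eq.~\eqref{eq:cl-cons-all} is visibly a Holevo/measure-and-prepare form, hence entanglement-breaking. The paper also notes an alternative one-line route you could mention---combining Corollary~\ref{corr:bipartiteentangle} (entangled bipartite states are nonclassical) with Lemma~\ref{lem:Choi} (nonclassicality is preserved under the Choi isomorphism)---but your argument is the direct unpacking the paper leaves implicit.
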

This also follows immediately from the fact that if a state of a bipartite system is entangled, then it is nonclassical (Corollary~\ref{corr:bipartiteentangle}) together with the fact that nonclassicality of processes is preserved under the Choi isomorphism (Lemma~\ref{lem:Choi}, which will be presented further on in the article).

Although being entanglement-breaking is a necessary condition for classicality, it is not sufficient, as illustrated by the following examples.

\begin{example}
The entanglement-breaking channel
\begin{align}
\mc{E}(X)&=\tr\big[\op{0}{0}X\big]\op{0}{0}+\tr\big[\op{1}{1}X\big]\op{1}{1} \\
&+\tr\big[\op{2}{2}X\big]\op{+}{+}+\tr\big[\op{3}{3}X\big]\op{-}{-}\notag
\end{align}
is nonclassical.
This follows from Proposition~\ref{channelprop} and the fact that if the input to the channel is the source $\{\frac{1}{4}\op{i}{i}\}_{i=0}^3$, then its output is the source $\{\frac{1}{4}\op{0}{0},\frac{1}{4}\op{1}{1},\frac{1}{4}\op{+}{+},\frac{1}{4}\op{-}{-}\}$, which is the nonclassical source discussed in Example \ref{steeringexample}. (Note that there does not exist a set of effects that is mapped by the adjoint of this channel to a nonclassical set of effects, since any set of effects will get mapped to a set of effects that are all diagonal in the same eigenbasis, namely, $\{\op{i}{i}\}_{i=0}^3$.) 
\end{example}
\begin{example}
\yujie 
The entanglement-breaking channel 
\begin{align}
\mc{E}(X)&=\frac{1}{2}[\tr\big[\op{0}{0}X\big]\op{0}{0}+\tr\big[\op{1}{1}X\big]\op{1}{1} \\
&+\tr\big[\op{+}{+}X\big]\op{2}{2}+\tr\big[\op{-}{-}X\big]\op{3}{3}]\notag
\end{align}
is also nonclassical. This follows from Proposition~\ref{channelprop} and the fact that if the output to the channel is composed with the POVM $\{\op{i}{i}\}_{i=0}^3$, then it defines an effective POVM $\{\frac{1}{2}\op{0}{0},\frac{1}{2}\op{1}{1},\frac{1}{2}\op{+}{+},\frac{1}{2}\op{-}{-}\}$, which is nonclassical according to our Corollary~\ref{coro: measurement}. 
\blk
\label{example 6}
\end{example}
An alternative proof of the nonclassicality of this channel is to note that the channel is Choi-isomorphic to the nonclassical separable state in Example~\ref{singularexample} and to leverage the fact that nonclassicality is preserved under the Choi isomorphism (Lemma~\ref{lem:Choi}).

\section{Processes of arbitrary type }\label{general}
\yujie

In this section, we derive results on the nonclassicality of a general process that extends those in the previous sections. We first present some necessary conditions for classicality for the process of arbitrary type. Next, we present a full characterization of such classicality in terms of a frame representation. Finally, we show that the nonclassicality of any process can be deduced from the nonclassicality of an associated multipartite state.

Here, the most general process we consider is a multipartite multi-instrument  $\{\{\mc{F}_{c|z}^{\vec{B}|\vec{A}}\}_c\}_z$, which has a classical setting variable $z$ and a classical outcome variable $c$, takes a set of quantum inputs labeled $\vec{A}\coloneqq(A_1,\cdots,A_n)$ and produces a set of quantum outputs labeled $\vec{B}\coloneqq(B_1,\cdots,B_m)$. This class subsumes multi-source, multi-measurement, bipartite, multi-instrument, and multipartite versions of each. We also refer the reader to Appendix~\ref{sec:appK} for the generalization to $N$-combs.\footnote{\yujie We have not developed our approach for processes with indefinite causal order, as it is not clear how to define an ontological model for these.}

This section introduces substantial notation to treat processes of arbitrary type. Conceptually, however, the results are very similar to those in section~\ref{sec:bipartite} and~\ref{sec:channeldevide} for bipartite states and channels. Readers who wish to avoid the notational overhead may focus on the main results—Theorems~\ref{ncgen}, Prop.~\ref{prop:general}, Prop.~\ref{generalprop}, Coro.\ref{thmmulti}, and Coro.~\ref{thmmulti1} and consult the surrounding material only as needed.

\subsection{Classical explainability of a multipartite prepare-transform-measure scenario}

\begin{figure}[htb!]
\centering
\includegraphics[width=0.5\textwidth]{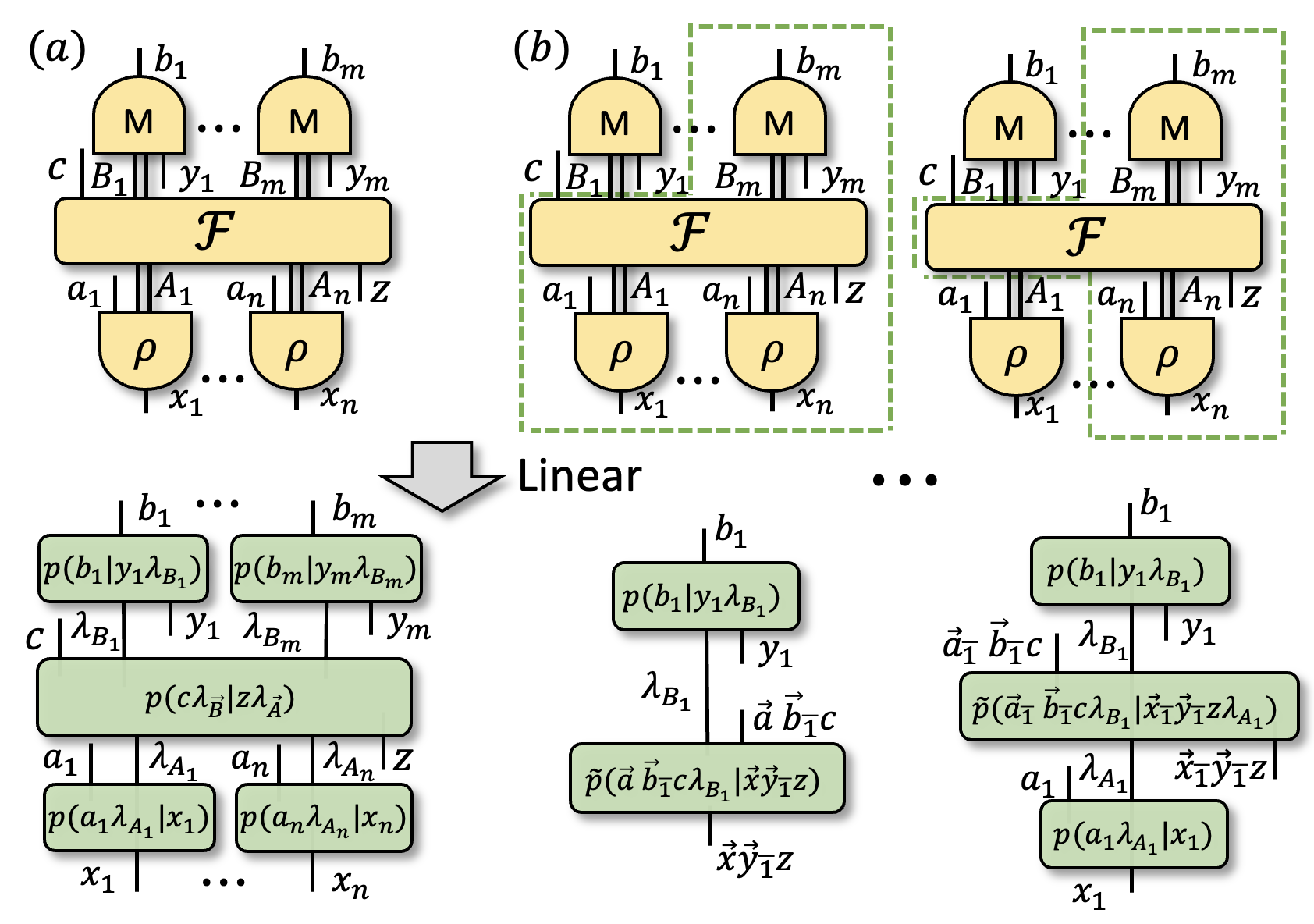}
\caption{\yujie (a)A multipartite prepare-transform-measure circuit with local multi-measurements, local multi-sources, and a multipartite multi-instrument. (b) By boxing processes together, we can, for instance, view this circuit as a unipartite prepare-measure circuit on $B_1 $, or as a unipartite prepare-transform-measure circuit from $A_1$ to $B_1$, where we denote $\vec{b}_{\bar{1}}\coloneqq\{b_2,\cdots b_m\}$ (and similarly for $\vec{a}_{\bar{1}}$, $\vec{x}_{\bar{1}}$ and $\vec{y}_{\bar{1}}$). At the top of the circuit, we depict the quantum circuit, and at the bottom, we depict their diagram-preserving ontological model. }
\label{fig:general}
\end{figure}

As preparation for our discussion of the nonclassicality of a general multipartite multi-instrument $\{\{\mc{F}_{c|z}^{\vec{B}|\vec{A}}\}_c\}_z$, we first consider the question of when a multipartite prepare-transform-measure circuit involving a multi-instrument, along with local preparations and local measurements on the multipartite system, is classically explainable. 

Specifically, we consider the circuit depicted in \cref{fig:general} where at the preparation stage, there is for each $i\in [n]\coloneqq\{1,\cdots,n\}$, a multi-source $\msf{P}_i^{A_i}\coloneqq\{\{p(a_i|x_i)\rho^{A_i}_{a_i|x_i}\}_{a_i}\}_{x_i}$ on $\mc{H}^{A_i}$,  at the transformation stage there is a multipartite multi-instrument $\msf{T}^{\vec{B}|\vec{A}}\coloneqq\{\{\mc{F}_{c|z}^{\vec{B}|\vec{A}}\}_c\}_z$ where  $\mc{F}_{c|z}^{\vec{B}|\vec{A}}$: $\mc{L}(\underset{i\in [n]}\bigotimes\mc{H}^{A_i})\mapsto \mc{L}(\underset{j\in [m]}{\bigotimes}\mc{H}^{B_{j}})$, and finally, at the measurement stage there is, for each $j\in [m]\coloneqq\{1,\cdots,m\}$, a local measurement $\msf{M}^{B_j}_j\coloneqq \{\{M_{b_j|y_j}^{B_j}\}_{b_j}\}_{y_j}$. For such a circuit, quantum theory predicts the statistics
\begin{equation}
p(\vec{a}\vec{b}c|\vec{x}\vec{y}z)=\tr[\underset{j\in [m]}{\big(\bigotimes} M_{b_j|y_j}^{B_j}\big)\mc{F}_{c|z}^{\vec{B}|\vec{A}} (\underset{i\in [n]}\bigotimes p(a_i|x_i)\rho^{A_i}_{a_i|x_i})].
\end{equation}

Similar to earlier discussions of simpler quantum circuits, one can enumerate all operational identities among the individual processes in the circuit. Specifically,
\begin{subequations}
\begin{align}
  &\mc{O}(\msf{P}_i)=\{\{\alpha_{a_ix_i}\} |\sum_{a_ix_i} \alpha_{a_ix_i} p(a_i|x_i)\rho^{A_i}_{a_i|x_i}=\mbb{0}^{A_i}\}, \\
 &\mc{O}(\msf{M}_j)=\{\{\alpha_{b_jy_j}\} |\sum_{b_jy_j} \alpha_{b_jy_j} M^{B_j}_{b_j|y_j}=\mbb{0}^{B_j}\}, \\ 
& \mc{O}(\msf{T})=\{\{\alpha_{cz}\} |\sum_{cz} \alpha_{cz} \mc{F}_{c|z}^{\vec{B}|\vec{A}}=\mbb{0}^{\vec{B}|\vec{A}}\}. \label{op-instr}
\end{align}
 \label{eq:op-gen1}
\end{subequations} 
\par There are additional operational identities in this circuit arising from sequential compositions of the multipartite multi-instrument with multi-states on different subsets of the input systems and with multi-measurements on different subsets of the output systems, respectively. 

Specifically, for any sets $\mc{I}\subseteq [n]$ and $\mc{J}\subseteq [m]$ with complementary sets $\bar{\mc I}:=[n]/\mc I$ and $\bar{\mc J}:=[m]/\mc J$, we denote $\vec{B}_{{\mc{J}}}\coloneqq(B_j)_{j\in \mc{J}}$ and $\vec{B}_{\bar{\mc{J}}}\coloneqq(B_j)_{j\in \bar{\mc{J}}}$ with analogous notation for $\vec{A}_{{\mc{I}}}$, $\vec{A}_{\bar{\mc{I}}}$, $\vec{b}_{\mc{J}}$ and $\vec{x}_{\mc{I}}$. We can define the effective process (a transformation from $\vec{A}_{\bar{\mc{I}}}$ to $\vec{B}_{\bar{\mc{J}}}$) by
\begin{align}
\msf{T}^{\vec{B}_{\bar{\mc{J}}}|\vec{A}_{\bar{\mc{I}}}}&(\cdot)=\bigotimes_{j\in \mc{J}}\msf{M}_j^{B_j}\circ\msf{T}^{\vec{B}|\vec{A}}\circ\bigotimes_{i\in\mc{I}}\msf{P}_i^{A_i} \label{eq:seq-comp}\\
\coloneqq &  \{\{\widetilde{\mc{F}}_{\vec{a}_{\mc{I}}\vec{b}_{\mc{J}}c|\vec{x}_{\mc{I}}\vec{y}_{\mc{J}}z}^{\vec{B}_{\bar{\mc{J}}}|\vec{A}_{\bar{\mc{I}}}}(\cdot)\}_{\vec{a}_{\mc{I}}\vec{b}_{\mc{J}}c}\}_{\vec{x}_{\mc{I}}\vec{y}_{\mc{J}}z}\notag \\
=&\{\{\tr_{\vec{B}_{\mc{J}}}[M_{\vec{b}_{\mc J}|\vec{y}_{\mc J}}^{\vec{B}}\mc{F}_{c|z}^{\vec{B}|\vec{A}}( \tilde{\rho}^{\vec{A}_{\mc{I}}}_{\vec{a}_{\mc I}|\vec{x}_{\mc I}}\otimes (\cdot) )]\}_{c\vec{a}_{\mc{I}}\vec{b}_{\mc{J}}}\}_{z\vec{x}_{\mc{I}}\vec{y}_{\mc{J}}},\notag
\end{align}
where we denote $M_{\vec{b}_{\mc J}|\vec{y}_{\mc J}}^{\vec{B}}\coloneqq\underset{j'\in \bar{\mc{J}}}{\bigotimes} \mbb{1}^{B_{j'}}\otimes\underset{j\in \mc{J}}{\bigotimes} M_{{b_j|y_j}}^{B_j}$ and $\tilde{\rho}^{\vec{A}_{\mc{I}}}_{\vec{a}_{\mc I}|\vec{x}_{\mc I}}\coloneqq\underset{i\in \mc{I}}{\bigotimes} p(a_i|x_i)\rho^{A_i}_{a_i|x_i}$. For these effective processes, their operational identities are of the form 
\begin{align}
&\mc{O}(\msf{T}^{\bar{\mc{J}}|\bar{\mc{I}}}) \\\label{eq:general-seq}
&=\{\{\alpha_{\vec{a}_{\mc{I}}\vec{b}_{\mc{J}}c}^{\vec{x}_{\mc{I}}\vec{y}_{\mc{J}}z}\} |\sum_{\substack{\vec{x}_{\mc{I}}\vec{y}_{\mc{J}}z\\ \vec{a}_{\mc{I}}\vec{b}_{\mc{J}}c}} \alpha_{\vec{a}_{\mc{I}}\vec{b}_{\mc{J}}c}^{\vec{x}_{\mc{I}}\vec{y}_{\mc{J}}z} \widetilde{\mc{F}}_{\vec{a}_{\mc{I}}\vec{b}_{\mc{J}}c|\vec{x}_{\mc{I}}\vec{y}_{\mc{J}}z}^{\vec{B}_{\bar{\mc{J}}}|\vec{A}_{\bar{\mc{I}}}}=\mbb{0}^{\vec{B}_{\bar{\mc{J}}}|\vec{A}_{\bar{\mc{I}}}}\}, 
\notag
\end{align}
of which the operational identities in Eq.~\eqref{op-instr} are a special case with $\mc{I}=\mc{J}=\emptyset$.

In addition to these sequential compositions, parallel compositions of the effective processes in \cref{eq:seq-comp} can also lead to nontrivial operational identities. For instance, given any disjoint collection of subsets of $[m]$, $\{\bar{\mc{J}}_{\nu}\}_{\nu}$, and any disjoint collection of subsets of $[n]$, $\{\bar{\mc{I}}_{\nu}\}_{\nu}$, such that $ \bigcup_{\nu}\bar{\mc{J}}_{\nu}=[m]$ and $\bigcup_{\nu}\bar{\mc{I}}_{\nu}= [n]$. Denote $\mc{I}_{\nu}:=[n]/\bar{\mc{I}}_{\nu}$ and $\mc{J}_{\nu}:=[m]/\bar{\mc{J}}_{\mu}$ as the corresponding set-theoretic complements, there might exist nontrivial operational identities of the form:\footnote{One can think of it as a generalization of \cref{eq:OPnewbi,eq:OPnewch}.}
\begin{align}
&\sum_{\substack{cz}} \alpha_{cz} {\mc{F}}_{c|z}^{\vec{B}|\vec{A}}+  \\ \label{op:gencomp}
&\sum_{\substack{ \{\vec{x}^{\nu}_{\mc{I}_{\nu}}\vec{y}^{\nu}_{\mc{J}_{\nu}}z^{\nu}\}_{\nu} \\ \{\vec{a}^{\nu}_{\mc{I}_{\nu}}\vec{b}^{\nu}_{\mc{J}_{\nu}}c^{\nu} \}_{\nu}}}\alpha^{\{\vec{x}^{\nu}_{\mc{I}_{\nu}}\vec{y}^{\nu}_{\mc{J}_{\nu}}z^{\nu}\}_{\nu}}_{\{\vec{a}^{\nu}_{\mc{I}_{\nu}}\vec{b}^{\nu}_{\mc{J}_{\nu}}c^{\nu} \}_{\nu}}\underset{\nu}{\bigotimes}\widetilde{\mc{F}}_{\vec{a}^{\nu}_{\mc{I}_{\nu}}\vec{b}^{\nu}_{\mc{J}_{\nu}}c^{\nu}|\vec{x}^{\nu}_{\mc{I}_{\nu}}\vec{y}^{\nu}_{\mc{J}_{\nu}}z^{\nu}}^{\vec{B}_{\bar{\mc{J}_{\nu}}}|\vec{A}_{\bar{\mc{I}}_{\nu}}}=\mbb{0}^{\vec{B}|\vec{A}}, \notag
\end{align}
where the superscript $\nu$ in $c^{\nu}$, $\vec{a}^{\nu}$ is an index to distinguish different settings and outcomes in different parts of the composition. 

Since the new identity obviously includes the one in \cref{op-instr} as a special case, the sets of new operational identities are
\begin{align}
&\mc{O}(\widetilde{\msf{T}}):=\{\{\alpha_{cz}\}\cup\{\alpha^{\{\vec{x}^{\nu}_{\mc{I}_{\nu}}\vec{y}^{\nu}_{\mc{J}_{\nu}}z^{\nu}\}_{\nu}}_{\{\vec{a}^{\nu}_{\mc{I}_{\nu}}\vec{b}^{\nu}_{\mc{J}_{\nu}}c^{\nu} \}_{\nu}}\}|  \sum_{\substack{cz}} \alpha_{cz} {\mc{F}}_{c|z}^{\vec{B}|\vec{A}}+\\
&\sum_{\substack{ \{\vec{x}^{\nu}_{\mc{I}_{\nu}}\vec{y}^{\nu}_{\mc{J}_{\nu}}z^{\nu}\}_{\nu} \\ \{\vec{a}^{\nu}_{\mc{I}_{\nu}}\vec{b}^{\nu}_{\mc{J}_{\nu}}c^{\nu} \}_{\nu}}}\alpha^{\{\vec{x}^{\nu}_{\mc{I}_{\nu}}\vec{y}^{\nu}_{\mc{J}_{\nu}}z^{\nu}\}_{\nu}}_{\{\vec{a}^{\nu}_{\mc{I}_{\nu}}\vec{b}^{\nu}_{\mc{J}_{\nu}}c^{\nu} \}_{\nu}}\underset{\nu}{\bigotimes}\widetilde{\mc{F}}_{\vec{a}^{\nu}_{\mc{I}_{\nu}}\vec{b}^{\nu}_{\mc{J}_{\nu}}c^{\nu}|\vec{x}^{\nu}_{\mc{I}_{\nu}}\vec{y}^{\nu}_{\mc{J}_{\nu}}z^{\nu}}^{\vec{B}_{\bar{\mc{J}_{\nu}}}|\vec{A}_{\bar{\mc{I}}_{\nu}}}=\mbb{0}^{\vec{B}|\vec{A}}\}.  \notag
\end{align}
where the coefficients $\{\alpha_{cz}\}$ and $\{\alpha^{\{\vec{x}^{\nu}_{\mc{I}_{\nu}}\vec{y}^{\nu}_{\mc{J}_{\nu}}z^{\nu}\}_{\nu}}_{\{\vec{a}^{\nu}_{\mc{I}_{\nu}}\vec{b}^{\nu}_{\mc{J}_{\nu}}c^{\nu} \}_{\nu}}\}$ are not both uniformly zero. 

\par 
Consequently, the circuit in \cref{fig:general} is classically explainable if and only if one can reproduce the quantum statistics via
\begin{align}
&p(\vec{a}\vec{b}c|\vec{x}\vec{y}z) \label{eq:cl-general}\\
&=\sum_{\lambda_{\vec{B}}\lambda_{\vec{A}}} p(c\lambda_{\vec{B}}|z\lambda_{\vec{A}})\prod_{j}p(b_j|y_j\lambda_{B_j})\prod_{i}p(a_i\lambda_{A_i}|x_i), \notag
\end{align} 
where $p(c\lambda_{\vec{B}}|z\lambda_{\vec{A}})$, $\{ p(b_j|y_j\lambda_{B_j})\}_j$, and $\{p(a_i\lambda_{A_i}|x_i)\}_i$ 
satisfy all the relevant ontological identities. Moreover, for any choice of $\mc{I}\subseteq [n]$ and $\mc{J}\subseteq [m]$ if we can define the effective ontological processes associated to $\msf{T}^{\vec{B}_{\bar{\mc{J}}}|\vec{A}_{\bar{\mc{I}}}}$ as 
\begin{align}
&p(\vec{a}_{{\mc{I}}}\vec{b}_{{\mc{J}}}c{\lambda
}_{\vec{B}_{\bar{\mc{J}}}}|\vec{x}_{{\mc{I}}}\vec{y}_{{\mc{J}}}z\lambda_{\vec{A}_{\bar{\mc{I}}}})\\
&=\sum_{\lambda_{\vec{B}_{{\mc{J}}}}\lambda_{\vec{A}_{{\mc{I}}}}}p(c\lambda_{\vec{B}}|z\lambda_{\vec{A}})\prod_{j\in \mc{J}}p(b_j|y_j\lambda_{B_j})\prod_{i\in\mc{I}}p(a_i\lambda_{A_i}|x_i), \notag
\end{align}
then they must satisfy the following ontological identities for all $\lambda_{\vec{B}}, \lambda_{\vec{A}}$,
\begin{subequations}
\begin{align}
&\sum_{b_jy_j} \alpha_{b_jy_j} p(b_j|y_j\lambda_{B_j}) =0,\label{instmeas1} \\ 
&\sum_{a_ix_i}  \alpha_{a_ix_i} p(a_i\lambda_{A_i}|x_i)=0, \label{instprep1} \\ 
&\sum_{\substack{\vec{x}_{\mc{I}}\vec{y}_{\mc{J}}z\\ \vec{a}_{\mc{I}}\vec{b}_{\mc{J}}c}} \alpha_{\vec{a}_{\mc{I}}\vec{b}_{\mc{J}}c}^{\vec{x}_{\mc{I}}\vec{y}_{\mc{J}}z}p(\vec{a}_{{\mc{I}}}\vec{b}_{{\mc{J}}}c{\lambda
}_{\vec{B}_{\bar{\mc{J}}}}|\vec{x}_{{\mc{I}}}\vec{y}_{{\mc{J}}}z\lambda_{\vec{A}_{\bar{\mc{I}}}})=0, \label{instrcomp} \\
&\sum_{\substack{ \{\vec{x}^{\nu}_{\mc{I}_{\nu}}\vec{y}^{\nu}_{\mc{J}_{\nu}}z^{\nu}\}_{\nu} \\ \{\vec{a}^{\nu}_{\mc{I}_{\nu}}\vec{b}^{\nu}_{\mc{J}_{\nu}}c^{\nu} \}_{\nu}}}\alpha^{\{\vec{x}^{\nu}_{\mc{I}_{\nu}}\vec{y}^{\nu}_{\mc{J}_{\nu}}z^{\nu}\}_{\nu}}_{\{\vec{a}^{\nu}_{\mc{I}_{\nu}}\vec{b}^{\nu}_{\mc{J}_{\nu}}c^{\nu} \}_{\nu}}\prod_{\nu}p(\vec{a}^{\nu}_{{\mc{I}_{\nu}}}\vec{b}^{\nu}_{{\mc{J}_{\nu}}}c^{\nu}{\lambda
}_{\vec{B}_{\bar{\mc{J}_{\nu}}}}|\vec{x}^{\nu}_{{\mc{I}_{\nu}}}\vec{y}^{\nu}_{{\mc{J}_{\nu}}}z^{\nu}\lambda_{\vec{A}_{\bar{\mc{I}}_{\nu}}})\notag \\
&\quad\quad\quad\quad+\sum_{cz} \alpha_{cz}p(c\lambda_{\vec{B}}|z\lambda_{\vec{A}})=0, \label{instrgen}
\end{align}
\label{op:general}
\end{subequations}
for all $\{\alpha_{b_jy_j}\}\in \mc{O}(\msf{M}_j)$, $\{\alpha_{a_ix_i}\}\in \mc{O}(\msf{P}_i)$, $ \{\alpha_{\vec{a}_{\mc{I}}\vec{b}_{\mc{J}}c}^{\vec{x}_{\mc{I}}\vec{y}_{\mc{J}}z}\}\in \mc{O}(\msf{T}^{\bar{\mc{J}}|\bar{\mc{I}}})$, and $\{\alpha_{cz}\}\cup\{\alpha^{\{\vec{x}^{\nu}_{\mc{I}_{\nu}}\vec{y}^{\nu}_{\mc{J}_{\nu}}z^{\nu}\}_{\nu}}_{\{\vec{a}^{\nu}_{\mc{I}_{\nu}}\vec{b}^{\nu}_{\mc{J}_{\nu}}c^{\nu} \}_{\nu}}\} \in \mc{O}(\widetilde{\msf{T}})$. 
\yujie 

It is, in general, very difficult to check all of these identities and decide if the circuit is classically explainable; however, the following proposition gives a necessary condition for it to be classically explainable, which is similar to the ones in Prop.~\ref{prop:bipartite} and Prop.~\ref{prop:channel}.
\begin{proposition}
\label{prop:general}
Any multipartite prepare-transform-measure circuit (of the type depicted in Fig.~\ref{fig:general}(a)) is classically explainable only if any circuit it induces by boxing the transformation with any subset of local preparations and/or local measurements (as depicted in Fig.~\ref{fig:general}(b)) is classically explainable.
\end{proposition}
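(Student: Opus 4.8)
The plan is to follow the same strategy used to establish Proposition~\ref{prop:bipartite} and Proposition~\ref{prop:channel}, now for an arbitrary choice of systems to be boxed. Assume the full circuit of Fig.~\ref{fig:general}(a) is classically explainable, so that (per Definition~\ref{defn:classical}) there is a linear and diagram-preserving ontological representation $\Xi$ reproducing the statistics via Eq.~\eqref{eq:cl-general} and satisfying every ontological identity in Eq.~\eqref{op:general}. Fix the subset $\mc{I}\subseteq[n]$ of local preparations and the subset $\mc{J}\subseteq[m]$ of local measurements that are to be boxed together with the multi-instrument, producing the effective transformation $\msf{T}^{\vec{B}_{\bar{\mc{J}}}|\vec{A}_{\bar{\mc{I}}}}$ of Eq.~\eqref{eq:seq-comp}. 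The boxed circuit is then a (multipartite) prepare-transform-measure circuit on the systems $\vec{A}_{\bar{\mc{I}}}$ and $\vec{B}_{\bar{\mc{J}}}$, with preparations $\{\msf{P}_i\}_{i\in\bar{\mc{I}}}$, the transformation $\msf{T}^{\vec{B}_{\bar{\mc{J}}}|\vec{A}_{\bar{\mc{I}}}}$, and measurements $\{\msf{M}_j\}_{j\in\bar{\mc{J}}}$; I must exhibit an ontological model for it.

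I would build this model directly from $\Xi$. For the systems in $\bar{\mc{I}}$ and $\bar{\mc{J}}$ I keep the same ontic state spaces and the same ontological representations of the unboxed preparations and measurements, and I represent the effective transformation $\msf{T}^{\vec{B}_{\bar{\mc{J}}}|\vec{A}_{\bar{\mc{I}}}}$ by the effective ontological process already displayed in the excerpt, namely the composition of $\Xi(\msf{T})$ with the $\Xi(\msf{P}_i)$ for $i\in\mc{I}$ and the $\Xi(\msf{M}_j)$ for $j\in\mc{J}$. Because $\Xi$ is diagram-preserving, this composite is exactly $\Xi(\msf{T}^{\vec{B}_{\bar{\mc{J}}}|\vec{A}_{\bar{\mc{I}}}})$, so the assignment is diagram-preserving by construction.

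Two properties then remain to be checked. First, the restricted model reproduces the statistics of the boxed circuit: the quantum statistics of the boxed circuit are the very same numbers $p(\vec{a}\vec{b}c|\vec{x}\vec{y}z)$, merely regrouped according to which outcomes and settings have been absorbed into the effective transformation, and the ontological composition in Eq.~\eqref{eq:cl-general} factors through the effective ontological process in exactly the way the quantum composition factors through $\msf{T}^{\vec{B}_{\bar{\mc{J}}}|\vec{A}_{\bar{\mc{I}}}}$. Second, and this is the crux, the restricted model must be linear, i.e.\ respect every operational identity of the boxed circuit. The basic identities among $\{\msf{P}_i\}_{i\in\bar{\mc{I}}}$ and $\{\msf{M}_j\}_{j\in\bar{\mc{J}}}$ are literally among the identities of the full circuit. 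The sequential identities of the boxed circuit---those obtained by boxing $\msf{T}^{\vec{B}_{\bar{\mc{J}}}|\vec{A}_{\bar{\mc{I}}}}$ further with subsets $\mc{I}'\subseteq\bar{\mc{I}}$ and $\mc{J}'\subseteq\bar{\mc{J}}$---coincide, by associativity of sequential composition, with the boxings of the original $\msf{T}$ against $\mc{I}\cup\mc{I}'$ and $\mc{J}\cup\mc{J}'$, and so are already constrained by Eq.~\eqref{instrcomp}; hence the full model's satisfaction of Eq.~\eqref{op:general} forces the restricted model to respect them as well.

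I expect the main obstacle to lie in the parallel-composition identities of the boxed circuit, i.e.\ its analogue of Eq.~\eqref{instrgen}. These involve parallel composites of effective transformations covering only $\bar{\mc{I}}$ and $\bar{\mc{J}}$, each built from its own copy of $\msf{T}$ contracted with boxed-in preparations and measurements, and such composites are not literally among the parallel composites (covering all of $[n]$ and $[m]$) enumerated in $\mc{O}(\widetilde{\msf{T}})$ for the full circuit. The task is therefore to show that each such boxed parallel identity is nonetheless inherited---either by exhibiting it as a special case of a full-circuit identity in $\mc{O}(\widetilde{\msf{T}})$ through a suitable choice of partitions and coefficients, or, more abstractly, by appealing to the fact that a linear diagram-preserving $\Xi$ must respect every linear relation holding among quantum processes of a fixed type, including those of type $\vec{A}_{\bar{\mc{I}}}\to\vec{B}_{\bar{\mc{J}}}$. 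Once this constraint-inheritance is established uniformly over all choices of $\mc{I}$ and $\mc{J}$, the restriction of the full model is a valid ontological model for every boxed circuit, which is precisely the claim; the argument is otherwise a direct, if notation-heavy, generalization of the proofs of Proposition~\ref{prop:bipartite} and Proposition~\ref{prop:channel}.
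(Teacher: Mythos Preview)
Your proposal is correct and follows precisely the template the paper intends: the paper proves only Proposition~\ref{prop:bipartite} explicitly (Appendix~\ref{steerproof}) and states that the proofs of Propositions~\ref{prop:channel} and~\ref{prop:general} are analogous, so your strategy---restricting the full ontological model to the unboxed systems, representing the boxed transformation by the composite ontological process, and checking that the required identities are inherited from those of the full circuit---is exactly what is meant.

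Your worry about the boxed circuit's own parallel-composition identities (its analogue of $\mc{O}(\widetilde{\msf{T}})$) is a legitimate subtlety that the paper's explicit proof never confronts, since there it boxes down to a unipartite prepare-measure scenario where no such identities exist. Of your two proposed resolutions, the abstract one is the right one and closes the argument cleanly: by Definition~\ref{defn:classical}, $\Xi$ is linear and diagram-preserving on \emph{all} composites built from the circuit's processes, not merely those appearing in the enumerated list Eq.~\eqref{op:general}; any operational identity among processes of type $\vec{A}_{\bar{\mc{I}}}\to\vec{B}_{\bar{\mc{J}}}$, whose constituents are such composites, is therefore automatically sent to the corresponding ontological identity. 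Your option~1 (embedding each boxed parallel identity into $\mc{O}(\widetilde{\msf{T}})$ of the full circuit) is unnecessary and, as you suspect, does not work directly because the processes being compared live on different system types.
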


\subsection{The classical-nonclassical divide for a general process}

One can always box the multipartite multi-instrument together with local preparations on all input subsystems and local measurements on all but one of the output subsystems, as depicted in   Fig.~\ref{fig:general}(b), to obtain an effective preparation on the remaining output subsystem.
Similarly, one can always box the multipartite multi-instrument with local measurements on all output subsystems and local preparations on all but one of the input subsystems to obtain an effective measurement on the remaining input subsystem. \yujie In this way, a multipartite prepare–transform–measure scenario defines a number of effective prepare–measure scenarios. Applying Prop.~\ref{prop:general}, we obtain the following necessary condition for nonclassicality of a multipartite multi-instrument.

\begin{proposition}\label{generalprop}
\yujie A multipartite multi-instrument $\{\{\mc{F}_{c|z}^{\vec{B}|\vec{A}}\}_c\}_z$ is classical only if every multi-source on quantum output $B_j$ that is obtained by contracting it with a multi-measurement on each of the remaining quantum outputs and a multi-source on each of the quantum inputs is classical, and every multi-measurement on quantum input $A_i$ that is obtained by contracting it with a multi-source on each of the remaining quantum inputs and a multi-measurement on each of the quantum outputs is classical.
\end{proposition}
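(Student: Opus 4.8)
The plan is to mirror the proofs of Proposition~\ref{propremote} and Proposition~\ref{channelprop}, promoting them to the fully general setting by invoking Proposition~\ref{prop:general} in place of Propositions~\ref{prop:bipartite} and~\ref{prop:channel}. First I would assume that the multi-instrument $\{\{\mc{F}_{c|z}^{\vec{B}|\vec{A}}\}_c\}_z$ is classical in the sense of Definition~\ref{maindefn}. By that definition, together with Theorem~\ref{thmdual}, the statistics generated by contracting it with every \emph{factorizing} dual comb are classically explainable. A factorizing dual comb for a process of the transformation variety decomposes precisely into a product of local states on the inputs $\vec{A}$ and a product of local effects on the outputs $\vec{B}$; allowing these local states and effects to carry classical setting and outcome labels (i.e., promoting them to local multi-sources and local multi-measurements) is merely the physical-variability way of ranging over all such duals, as discussed around Theorem~\ref{thmdual} and in Appendix~\ref{labels}. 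Hence every circuit of the form depicted in Fig.~\ref{fig:general}(a) is classically explainable.

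Next I would isolate a single output. Fixing an arbitrary local multi-source $\msf{P}_i^{A_i}$ on each input and an arbitrary local multi-measurement $\msf{M}_{j'}^{B_{j'}}$ on every output $B_{j'}$ with $j'\neq j$, I box these together with the multi-instrument, which is the special case of Proposition~\ref{prop:general} with $\mc{I}=[n]$ and $\mc{J}=[m]\setminus\{j\}$. The resulting effective process is a multi-source on $B_j$, namely the one described in the statement. By Proposition~\ref{prop:general}, the induced prepare-measure circuit---this effective multi-source contracted with the remaining local multi-measurement $\msf{M}_j^{B_j}$---is classically explainable. Crucially, because the original dual process ranges over \emph{all} factorizing combs, the local multi-measurement $\msf{M}_j^{B_j}$ ranges over every effect on $B_j$; thus the effective multi-source on $B_j$ is contracted with every possible effect, and each such contraction is classically explainable. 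This is exactly the condition for the effective multi-source on $B_j$ to be classical by Definition~\ref{def_multi-sourceNC}. The argument for the multi-measurement on an input $A_i$ is the symmetric one, boxing with $\mc{I}=[n]\setminus\{i\}$ and $\mc{J}=[m]$ and letting the preparation on $A_i$ range over all of $\mc{S}(\mc{H}^{A_i})$, so that Definition~\ref{def_mtmtNC} applies.

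The main obstacle I anticipate is bookkeeping rather than conceptual: one must verify that boxing the multi-instrument with the chosen subset of local processes really does produce an object of the multi-source (respectively multi-measurement) \emph{type} on the single remaining wire, and that ranging over all factorizing duals genuinely sweeps out all effects (respectively all states) on that wire, so that Definition~\ref{def_multi-sourceNC} (respectively Definition~\ref{def_mtmtNC}) applies verbatim. This is where the identification of the factorizing dual comb with independent local states and effects---and the equivalence between theoretical and physical variability---does the real work; once that identification is in hand, the result is an immediate application of Proposition~\ref{prop:general}, exactly as Proposition~\ref{channelprop} follows from Proposition~\ref{prop:channel}.
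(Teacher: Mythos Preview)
Your proposal is correct and follows essentially the same route as the paper: assume classicality, use Definition~\ref{maindefn} (with Theorem~\ref{thmdual}) to reduce to factorizing duals, apply Proposition~\ref{prop:general} to the appropriate boxing, and let the remaining local effect or state range over all possibilities so that Definitions~\ref{def_multi-sourceNC} and~\ref{def_mtmtNC} apply. The paper's own proof is in fact a terser version of exactly this argument, explicitly citing the analogy with Proposition~\ref{propremote}.
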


The proof is similar to that of Prop.~\ref{propremote}. If the multipartite multi-instrument is classical, then the circuit where it is contracted with the full set of product effects and the full set of product states  (depicted in Fig.~\ref{fig:general}(a)) is classically explainable.  By Proposition~\ref{prop:general}, this implies that all prepare-measure circuits induced by boxing the prepare-transform-measure circuit (e.g., Fig.~\ref{fig:general}(b)) are classically explainable, which indicates that the multi-states and multi-measurements that appear in the prepare-measure circuit are classical in the sense of Definition~\ref{def_multi-sourceNC} and Definition~\ref{def_mtmtNC}. However, whether the converse is true is yet unknown. 

Similar to Theorems~\ref{theoremprep}, \ref{theoremmeas}, \ref{ncbp}, and  \ref{ncchannelstruc}, we here provide the necessary and sufficient characterization of classicality for multipartite multi-instruments, which is given by the following theorem. 

Define $V_{B_j}$ as the operator space spanned by all possible states on $\mc{H}^{B_j}$ that can be generated by $\{\{\mc{F}_{c|z}^{\vec{B}|\vec{A}}\}_c\}_z$ when it is boxed with local preparations on all subsystems $\{A_i\}_{i=1}^n$ and local measurements on all subsystems but subsystem $B_j$.  Specifically, 
\begin{align}
&V_{B_j}:=\text{Span}(\{\tr_{\vec{B}_{\bar{j}}}[(\mbb{1}^{B_j}\otimes M^{\vec{B}_{\bar{j}}})\mc{F}_{c|z}^{\vec{B}|\vec{A}}( \rho^{\vec{A}})]\}_{c,z}),  
\end{align}
with $\rho^{\vec{A}}=  \underset{i\in [n]}\bigotimes \rho^{A_i} $ ranging over all
$ \rho^{A_i} \in
\mc S(\mc H^{A_i})$, and  $M^{\vec{B}_{\bar{j}}}=\underset{j'\ne j}\bigotimes M^{B_{j'}}$ ranging over all $M^{B_{j'}}\in \mc M(\mc H^{B_{j'}})$. 

Similarly, define $V_{A_i}$ as the operator space spanned by all possible effects on $\mc{H}^{A_i}$ that can be generated by $\{\{\mc{F}_{c|z}^{\vec{B}|\vec{A}}\}_c\}_z$ when it is boxed with local measurements on all subsystems $\{B_j\}_{j=1}^m$ and local preparations on all subsystems but  $A_i$. Specifically, 
\begin{align}
&V_{A_i}:=\text{Span}(\{\tr_{\vec{A}_{\bar{i}}}[(\mbb{1}^{A_i}\otimes \rho^{\vec{A}_{\bar{i}}})[\mc{F}_{c|z}^{\vec{B}|\vec{A}}]^{\dagger}( M^{\vec{B}})]\}_{c,z}), 
\end{align}
with $\rho^{\vec{A}_{\bar{i}}}=  \underset{i'\ne i}\bigotimes \rho^{A_{i'}} $ ranging over all
$ \rho^{A_{i'}} \in
\mc S(\mc H^{A_{i'}})$, and  $M^{\vec{B}}=\underset{j\in [m]}\bigotimes M^{B_{j}}$ ranging over all $M^{B_j}\in \mc M(\mc H^{B_{j}})$. 

\begin{theorem}\label{ncgen}
A multipartite multi-instrument $\{\{\mc{F}_{c|z}^{\vec{B}|\vec{A}}\}_c\}_z$,
where $\mc{F}_{c|z}^{\vec{B}|\vec{A}}$
: $\mc{L}(\bigotimes_{i=1}^n\mc{H}^{A_i})\mapsto \mc{L}(\bigotimes_{j=1}^m\mc{H}^{B_{j}})$, is classical if and only if it admits a frame representation: 
\begin{align}
\label{eq:generalframe}
&\mc{F}_{c|z}^{\vec{B}|\vec{A}}(\cdot)= \\
&\sum_{\lambda_{\vec{A}}\lambda_{\vec{B}}}\underset{j}{\bigotimes} \sigma_{\lambda_{B_j}}^{B_j}\tr[(\underset{j}{\bigotimes} H_{\lambda_{B_j}}^{B_j})\mc{F}_{c|z}^{\vec{B}|\vec{A}}(\underset{i}{\bigotimes} F^{A_i}_{\lambda_{A_i}})]\tr[(\underset{i}{\bigotimes} G_{\lambda_{A_i}}^{A_i})(\cdot)], \nonumber
\end{align}
where the frame on the operator space $V_{B_j}$ is a fixed set of density operators $\{\sigma^{B_j}_{\lambda_{B_j}}\}_{\lambda_{B_j}}$, and the dual frame is a set of Hermitian operators $\{{H}_{\lambda_{B_j}}^{B_j}\}_{\lambda_{B_j}}$ satisfying $\sum_{\lambda_{B_j}}H^{B_j}_{\lambda_{B_j}}=\mc{P}_{V_{B_j}}(\mbb 1^{{B_j}})$; the frame on the operator space $V_{A_i}$ is a set of Hermitian operators $\{{F}^{A_i}_{\lambda_{A_i}}\}_{\lambda_{A_i}}$ satisfying $\tr[{F}^{A_i}_{\lambda_{A_i}}]=1$, and the dual frame is a POVM $\{G_{\lambda_{A_i}}^{A_i}\}_{\lambda_{A_i}}$ satisfying $\sum_{\lambda_{A_i}}G_{\lambda_{A_i}}=\mc{P}_{V_{A_i}}(\mbb{1}^{A_i})=\mbb{1}^{A_i}$, 
with $\mc{P}_{V_{B_j(A_i)}}$ denoting the superoperator projecting onto $V_{B_j(A_i)}$, and such that
\begin{align}  
\tr[(\underset{j}{\bigotimes} H_{\lambda_{B_j}}^{B_j})\mc{F}_{c|z}^{\vec{B}|\vec{A}}(\underset{i}{\bigotimes} F^{A_i}_{\lambda_{A_i}})] \ge 0,~~\forall \lambda_{\vec{A}}, \lambda_{\vec{B}},c,z. \label{eq:cl-gen-cons}
\end{align}
\end{theorem}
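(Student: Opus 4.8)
The plan is to prove both directions by passing between the product frame representation~\eqref{eq:generalframe} and the explicit diagram-preserving ontological model~\eqref{eq:cl-general}, exactly as was done for the structure theorems of the preceding sections (Theorems~\ref{theoremprep},~\ref{theoremmeas},~\ref{ncbp}, and~\ref{ncchannelstruc}). The frame data will play a fixed, circuit-independent role: the POVM $\{G^{A_i}_{\lambda_{A_i}}\}$ on $V_{A_i}$ supplies the response functions that read out each input ontic state, the density operators $\{\sigma^{B_j}_{\lambda_{B_j}}\}$ on $V_{B_j}$ supply the output ontic states, and the transition matrix of the instrument is declared to be $p(c\lambda_{\vec{B}}|z\lambda_{\vec{A}})=\tr[(\bigotimes_j H^{B_j}_{\lambda_{B_j}})\mc{F}_{c|z}^{\vec{B}|\vec{A}}(\bigotimes_i F^{A_i}_{\lambda_{A_i}})]$.

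For the sufficiency direction I would assume~\eqref{eq:generalframe} holds and verify that this assignment is a legitimate ontological model for every factorizing dual process. Contracting with product states $\bigotimes_i\rho^{A_i}$ and product effects $\bigotimes_j M^{B_j}$ and substituting~\eqref{eq:generalframe} reproduces the quantum statistics in the form~\eqref{eq:cl-general}, with input distributions $\tr[G^{A_i}_{\lambda_{A_i}}\rho^{A_i}]$ and output response functions $\tr[M^{B_j}\sigma^{B_j}_{\lambda_{B_j}}]$. Nonnegativity of the transition matrix is exactly~\eqref{eq:cl-gen-cons}, nonnegativity of the input distributions follows from positivity of the $G^{A_i}_{\lambda_{A_i}}$, and normalization of the transition matrix follows by summing over $c$ (using that $\sum_c\mc{F}_{c|z}$ is trace-preserving), summing $\{H^{B_j}_{\lambda_{B_j}}\}$ to $\mc{P}_{V_{B_j}}(\mbb{1}^{B_j})$, and invoking $\tr[\mc{P}_V(\mbb{1})X]=\tr[X]$ for $X\in V$ together with $\tr[F^{A_i}_{\lambda_{A_i}}]=1$ and $\sum_{\lambda_{A_i}}G^{A_i}_{\lambda_{A_i}}=\mbb{1}^{A_i}$.

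The crux of the sufficiency argument, and the step I expect to be the main obstacle, is showing that this model respects not only the first-order operational identities~\eqref{instmeas1}--\eqref{instprep1} but also the compositional ones~\eqref{instrcomp}--\eqref{instrgen}. The mechanism is that each ontic representation is a fixed linear functional of the underlying operator (via the frames), while the higher-order identities are themselves linear dependences among \emph{composite} operators; I would show that because the frames factorize across subsystems, applying the dual frames $\bigotimes_j H^{B_j}$ and $\bigotimes_i F^{A_i}$ to a sequential or parallel composite factorizes into the product of applications to its pieces, so that the ontic representation of any composite process equals the same linear functional evaluated on that composite. Linearity of the functional then transports each operational identity in~\eqref{op:general} to its ontological counterpart automatically. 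This factorization-across-composition property is precisely what the product structure of~\eqref{eq:generalframe} buys, and it is the reason a product frame representation suffices to kill even the most subtle identities of $\mc{O}(\widetilde{\msf{T}})$.

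For the necessity direction I would run this construction in reverse. Assuming classicality, an ontological model of the form~\eqref{eq:cl-general} satisfying~\eqref{op:general} exists; for each input system $A_i$ the assignment $\rho^{A_i}\mapsto p(\lambda_{A_i}|\cdots)$ is linear, valued in $[0,1]$, and respects $\mc{O}(\msf{P}_i)$, so by the dual-frame construction reviewed in Appendix~\ref{framerepn} it is represented by a POVM $\{G^{A_i}_{\lambda_{A_i}}\}$ on $V_{A_i}$ with $\sum_{\lambda_{A_i}}G^{A_i}_{\lambda_{A_i}}=\mbb{1}^{A_i}$; dually, the output ontic states define density operators $\{\sigma^{B_j}_{\lambda_{B_j}}\}$ spanning $V_{B_j}$ with a Hermitian dual frame $\{H^{B_j}_{\lambda_{B_j}}\}$ summing to $\mc{P}_{V_{B_j}}(\mbb{1}^{B_j})$. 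Feeding these into the transition probabilities and using diagram preservation reassembles~\eqref{eq:generalframe}, while~\eqref{eq:cl-gen-cons} is inherited directly from nonnegativity of the transition matrix. As an alternative route that parallels the remark following Theorem~\ref{theoremmeas}, one could instead invoke the Choi isomorphism together with Lemma~\ref{lem:Choi} to reduce the entire statement to the multipartite generalization of Theorem~\ref{ncbp}, translating its frame representation back through the isomorphism; this would bypass the direct handling of~\eqref{instrgen} at the cost of first establishing the multipartite-state structure theorem.
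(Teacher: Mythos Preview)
Your plan is correct and follows essentially the same approach as the paper's proof in Appendix~\ref{app:proofmpmi}: sufficiency by defining the ontological representations via the frame data and checking all identities of~\eqref{op:general} through the factorization-across-composition mechanism you describe, and necessity by extracting the frame operators from the ontological model via Riesz/Gleason and reassembling the decomposition. One point worth sharpening in your necessity sketch: the dual-frame operators $\{H^{B_j}_{\lambda_{B_j}}\}$ and $\{F^{A_i}_{\lambda_{A_i}}\}$ are not chosen as arbitrary duals to $\{\sigma^{B_j}\}$ and $\{G^{A_i}\}$, but are themselves extracted (via Riesz) from the ontological representations of the \emph{composite} processes---the steered states on $B_j$ and effective effects on $A_i$---using the identities in~\eqref{instrcomp}; the paper then invokes a specific instance of the parallel-composition identity~\eqref{instrgen} to pin down $p(c\lambda_{\vec B}|z\lambda_{\vec A})=\tr[(\bigotimes_j H^{B_j}_{\lambda_{B_j}})\mc{F}_{c|z}(\bigotimes_i F^{A_i}_{\lambda_{A_i}})]$, from which~\eqref{eq:cl-gen-cons} follows by nonnegativity of the transition matrix.
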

\begin{figure}[htb!]
\centering
\includegraphics[width=0.5\textwidth]{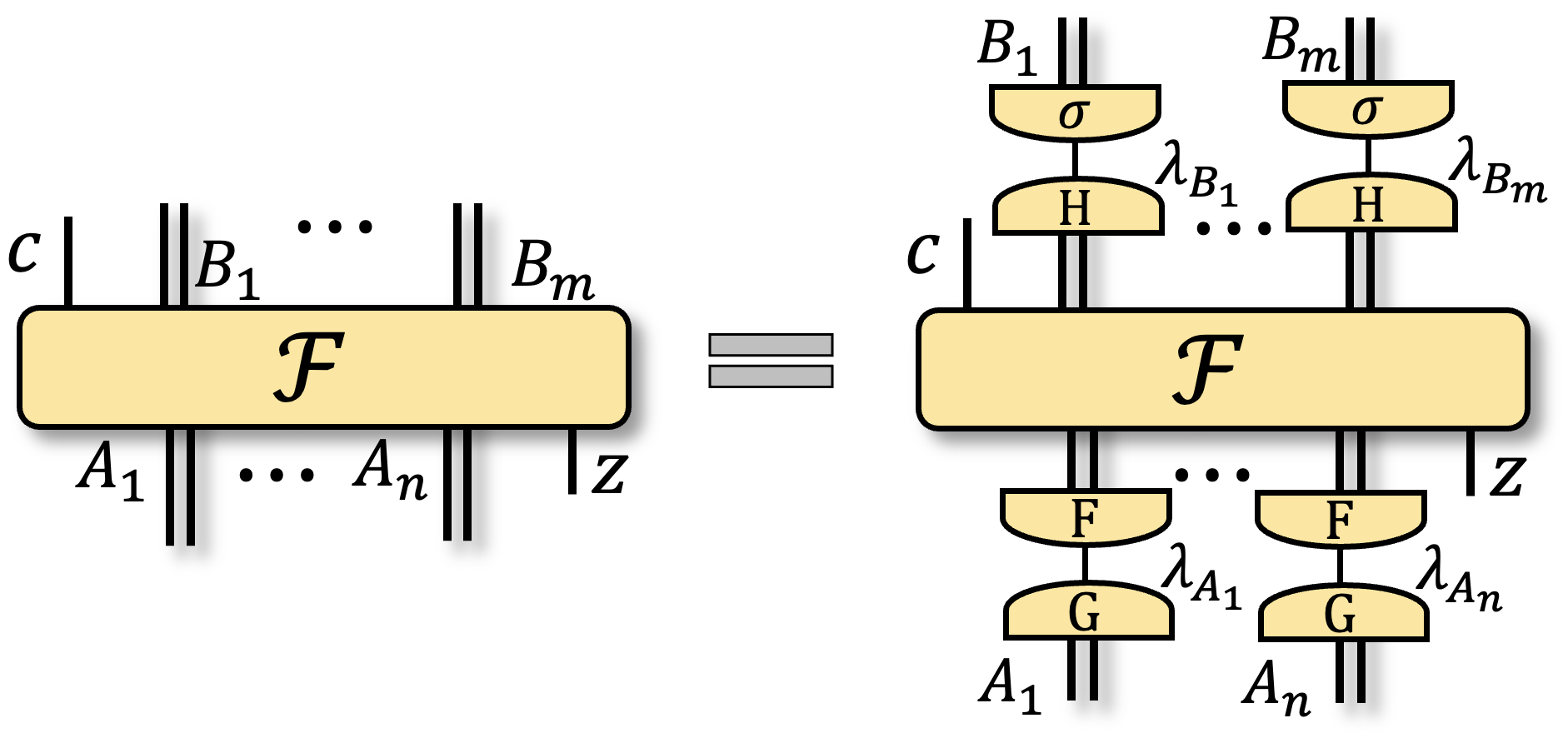}
\caption{\yujie A multipartite multi-instrument is classical if and only if it admits a frame representation as shown on the right-hand side of the equation, where the condition on the frames and dual frames are given in Theorem~\ref{ncgen}.\blk}
\label{framegen}
\end{figure}
\yujie

Such a frame decomposition is depicted diagrammatically in Fig.~\ref{framegen}.
The proof of this theorem is similar to that of Theorems~\ref{ncbp} and~\ref{ncchannelstruc}, and is outlined in detail in Appendix~\ref{app:proofmpmi} for completeness.

As a direct corollary of Theorem~\ref{ncgen}, we obtain the following necessary condition for the classicality (sufficient condition for the nonclassicality) of a multipartite state.
\begin{corollary}\label{thmmulti}
    A multipartite state is classical only if it is separable.
\end{corollary}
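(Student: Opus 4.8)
The plan is to obtain the result as an immediate specialization of Theorem~\ref{ncgen}. A multipartite state $\rho^{A_1\cdots A_n}$ is a process of the preparation variety: it has $n$ quantum outputs (the subsystems $A_1,\ldots,A_n$, playing the role of $\vec{B}$ in the theorem), a trivial quantum input $\vec{A}$, and trivial classical setting and outcome variables. First I would specialize the frame representation of Eq.~\eqref{eq:generalframe} to this case. With the input side trivial, the input frames $\{F^{A_i}_{\lambda_{A_i}}\}$ and dual frames $\{G^{A_i}_{\lambda_{A_i}}\}$ drop out of the decomposition, which collapses to
\begin{equation*}
\rho^{A_1\cdots A_n}=\sum_{\lambda_1\cdots\lambda_n} \tr\Big[\Big(\bigotimes_i H^{A_i}_{\lambda_i}\Big)\rho^{A_1\cdots A_n}\Big]\,\bigotimes_i \sigma^{A_i}_{\lambda_i},
\end{equation*}
where each $\sigma^{A_i}_{\lambda_i}$ is a density operator on $\mc{H}^{A_i}$.

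The key observation is then that this frame representation is already a separable decomposition. I would set $p_{\lambda_1\cdots\lambda_n}\coloneqq\tr[(\bigotimes_i H^{A_i}_{\lambda_i})\rho^{A_1\cdots A_n}]$ and verify the two defining properties of a probability distribution. Nonnegativity, $p_{\lambda_1\cdots\lambda_n}\ge 0$, is exactly the constraint Eq.~\eqref{eq:cl-gen-cons} in the state case. Normalization follows by taking the trace of both sides of the displayed equation: since $\tr[\bigotimes_i\sigma^{A_i}_{\lambda_i}]=1$ and $\tr[\rho^{A_1\cdots A_n}]=1$, one obtains $\sum_{\lambda_1\cdots\lambda_n}p_{\lambda_1\cdots\lambda_n}=1$. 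Hence $\rho^{A_1\cdots A_n}=\sum_{\lambda_1\cdots\lambda_n}p_{\lambda_1\cdots\lambda_n}\,\bigotimes_i\sigma^{A_i}_{\lambda_i}$ is a convex mixture of fully product states, which is precisely the definition of a separable state.

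The content of the argument is entirely inherited from Theorem~\ref{ncgen}, so the remaining work is just bookkeeping in specializing the general statement, and I do not expect any genuine obstacle. The one conceptual point I would be careful about---and the step most likely to trip up a naive proof---is that one must use the \emph{full} multipartite frame representation rather than attempting to reduce to the bipartite Corollary~\ref{corr:bipartiteentangle} across each bipartition. Applying the bipartite result to every cut would only certify that $\rho^{A_1\cdots A_n}$ is separable across each individual bipartition, which for $n\ge 3$ is strictly weaker than full separability. Theorem~\ref{ncgen} sidesteps this gap because its decomposition is into genuine $n$-fold tensor products $\bigotimes_i\sigma^{A_i}_{\lambda_i}$, delivering full separability in a single stroke.
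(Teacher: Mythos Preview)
Your proposal is correct and matches the paper's approach: the paper states the result as a direct corollary of Theorem~\ref{ncgen} without spelling out the argument, and your specialization of the frame representation to the no-input case, yielding a convex decomposition into product states, is exactly the intended reasoning (mirroring the bipartite justification given before Corollary~\ref{corr:bipartiteentangle}). Your additional remark about why reducing to bipartite cuts would not suffice is a nice clarification that goes beyond what the paper says.
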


Moreover, as another direct corollary of Theorem~\ref{ncgen}, another necessary condition for the classicality of a multipartite state can be found in a manner analogous to Proposition~\ref {generalprop}:

\begin{corollary}\label{thmmulti1}
A multipartite state $\rho^{A_1\cdots A_N}$ is classical only if, for every party $A_i$, the set of all multi-sources on $A_i$ obtained by contracting the multipartite state $\rho^{A_1\cdots A_N}$ with all the set of all multi-measurements on the other parties $A_{i'}\ne A_i$ is classical.  \blk
\end{corollary}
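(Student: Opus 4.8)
The plan is to obtain this as a near-immediate specialization of Proposition~\ref{generalprop} to the degenerate case in which the multipartite multi-instrument is just a multipartite state. First I would observe that a multipartite state $\rho^{A_1\cdots A_N}$ is exactly a multipartite multi-instrument $\{\{\mc{F}_{c|z}^{\vec{B}|\vec{A}}\}_c\}_z$ in which there are no quantum inputs (so $\vec{A}$ is trivial), the quantum outputs are $\vec{B}=(A_1,\cdots,A_N)$, and both the classical setting variable $z$ and the classical outcome variable $c$ are trivial. Under this identification the notion of classicality for the state (Definition~\ref{maindefn}) coincides with that for the multi-instrument, since a factorizing dual process for a process with outputs only and no inputs is simply a product effect on $A_1\cdots A_N$.

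With this identification in hand, I would invoke Proposition~\ref{generalprop} directly. That proposition states that a classical multi-instrument must yield a classical multi-source on each output $B_j$ whenever it is contracted with multi-measurements on the remaining outputs and multi-sources on all the inputs. In our degenerate case there are no quantum inputs, so the clause concerning multi-sources on the inputs is vacuous, while the ``remaining outputs'' $B_{j'}\ne B_j$ are precisely the parties $A_{i'}\ne A_i$. The statement of Proposition~\ref{generalprop} then reads: contracting $\rho^{A_1\cdots A_N}$ with a multi-measurement on each party other than $A_i$ produces a classical multi-source on $A_i$. Ranging over all such multi-measurements on the other parties gives exactly the asserted condition, for each fixed $i$.

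Alternatively, and more self-containedly, I would run the argument through Proposition~\ref{prop:general} and Definition~\ref{def_multi-sourceNC}, mirroring the proof of Proposition~\ref{propremote}. The key step is: by Definition~\ref{maindefn}, classicality of the state means that contracting it with \emph{any} product effect yields classically explainable statistics; Proposition~\ref{prop:general} then guarantees that the prepare-measure circuit obtained by boxing the state together with the multi-measurements on all parties $A_{i'}\ne A_i$ is classically explainable, for every effect used on $A_i$; and this is precisely the defining condition (Definition~\ref{def_multi-sourceNC}) for the steered multi-source on $A_i$ to be classical.

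I expect the only real obstacle to be bookkeeping rather than mathematical substance: one must verify that the boxing operation of Proposition~\ref{prop:general} indeed produces a prepare-measure circuit on $A_i$ whose effective preparation is the steered multi-source (the composition of $\rho^{A_1\cdots A_N}$ with the other parties' measurements) and whose effective measurement ranges over \emph{all} effects on $A_i$, so that classical explainability of this induced circuit matches the quantifier structure in Definition~\ref{def_multi-sourceNC}. Since Proposition~\ref{generalprop} already packages exactly this reduction, the corollary follows with no further computation.
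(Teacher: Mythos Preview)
Your proposal is correct and matches the paper's approach: the paper states the corollary as following from Theorem~\ref{ncgen} ``in a manner analogous to Proposition~\ref{generalprop}'', and you obtain it precisely by specializing Proposition~\ref{generalprop} to the case of a multipartite state viewed as a multi-instrument with no quantum inputs and trivial classical variables. Your alternative route via Proposition~\ref{prop:general} and Definition~\ref{def_multi-sourceNC}, paralleling the proof of Proposition~\ref{propremote}, is also exactly the kind of argument the paper has in mind.
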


In the multipartite case (just as in the bipartite case), separability is necessary but not sufficient for classicality. That is, there exist multipartite states that are not entangled across any partition but that are nonetheless nonclassical according to our proposal. It suffices to consider multipartite generalizations of Example~\ref{singularexample}. 

\yujie One can also find tripartite qubit separable states whose steering assemblages are nonclassical on every party (recall that by contrast, as shown in Proposition~\ref{prop: biqubit}, any bipartite qubit separable state is classical). 
\blk
\begin{example}
Given a tripartite state defined as
\begin{align}
\rho^{A_1A_2A_3}&=\frac{1}{4}(\op{000}{000}+\op{-+1}{-+1}  \\
&+\op{1-+}{1-+}+\op{+1-}{+1-}),
\end{align}
there exist measurements on any two subsystems that steer the remaining system to a set of states that is nonclassical.
In particular, if any two parties each perform the BB84 measurement, which is associated to the POVM \blk $\{\frac{1}{2}\op{0}{0},\frac{1}{2}\op{1}{1},\frac{1}{2}\op{+}{+},\frac{1}{2}\op{-}{-}\}$,  then they steer the third party's system to the ensemble containing the unnormalized states 
\begin{align}
&\left\{\frac{3}{64}(\frac{2}{3}\rho_x+\frac{1}{3}\tau_y),\frac{3}{64}(\frac{1}{3}\rho_x+\frac{2}{3}\tau_y)\right\}_{x,y}\notag \\
\cup&\left\{\frac{5}{64}(\frac{3}{5}\rho_x+\frac{1}{5}\mbb{1}),\frac{5}{64}(\frac{1}{5}\rho_x+\frac{2}{5}\mbb{1})\right\}_x\notag \\
\cup&\left\{\frac{5}{64}(\frac{3}{5}\tau_y+\frac{1}{5}\mbb{1}),\frac{5}{64}(\frac{1}{5}\tau_y+\frac{2}{5}\mbb{1})\right\}_y, 
\end{align}
where $x$ and $y$ are binary variables with $\{\rho_x\}_x=\{\op{0}{0},\op{1}{1}\}$ and $\{\tau_y\}_y=\{\op{+}{+},\op{-}{-}\}$. This ensemble can be shown to be nonclassical using the semidefinite program given in Ref.~\cite{zhang2024parellel}.   \end{example}

\yujie The tripartite state above is the uniform mixture of the four product states that have been shown to exhibit ``nonlocality without entanglement''~\cite{Bennett1999}. These product states form an unextendible product basis (UPB) and cannot be perfectly distinguished by LOCC measurements.\blk

\blk

An interesting open question is whether, when assessing the nonclassicality of a multipartite state, it suffices to consider the nonclassicality of the multi-sources to which one can steer on each subsystem of the full partition. We have not managed to settle this question one way or the other. 
If considering single-party steering possibilities {\em does} suffice, this will greatly simplify the task of characterizing nonclassicality of multipartite states (and hence general processes). If it does not, then this will constitute a new way in which the nonclassicality of a state can be judged to be {\em genuinely multipartite}~\cite{PhysRevLett.125.240505,Schmid2023understanding,Seevinck2002,Toth2005}.

\subsection{Nonclassicality of a process via the nonclassicality of an associated multipartite state}
Now, we will demonstrate how the nonclassicality of a multipartite multi-instrument can be deduced by analyzing the nonclassicality of an associated multipartite state.
\yujie
\begin{theorem}
\label{thm:associated state}
For any quantum process, define its unique associated multipartite state by:
\begin{itemize}
    \item applying the Choi isomorphism to all of its quantum inputs,
    \item applying flag-convexification to all of its classical inputs (setting variables),
    \item reconceptualizing all of its classical outputs (outcomes) as dephased quantum outputs.
\end{itemize}
The process is nonclassical if and only if its associated multipartite state is nonclassical.
\end{theorem}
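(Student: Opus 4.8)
The plan is to decompose the construction of the associated state into its three constituent operations and to show that each one, applied on its own, leaves the nonclassicality verdict unchanged; the full statement then follows by composing the three. Two of the three steps are already available in the paper: reconceptualizing a classical input or output as a dephased quantum input or output preserves the verdict by the invariance result of Section~\ref{consistency}, and applying the Choi isomorphism to a quantum input preserves the verdict by Lemma~\ref{lem:Choi}. The remaining ingredient, flag-convexification of a setting variable, is the one I would treat explicitly.

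The first observation I would make is that flag-convexification is not really a fourth independent operation but rather the composite of the other two applied to a setting variable. Concretely, I would first reconceptualize each classical setting $z$ as a dephased quantum input (Section~\ref{consistency}), turning the whole multi-instrument into an ordinary channel whose input and output systems include the dephased registers carrying the settings and outcomes. Applying the Choi isomorphism to every quantum input (Lemma~\ref{lem:Choi}) then converts all inputs---including the dephased setting register---into outputs. Because the Choi isomorphism feeds half of a maximally entangled state $\frac{1}{\sqrt d}\sum_i\ket{ii}$ into each input, a dephased input register is mapped to a classically correlated output register of the form $\sum_z \frac{1}{d}\op{z}{z}^Z\otimes(\cdots)$, i.e.\ a flag recording $z$ with uniform weight. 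This both explains the term ``convexification'' (the uniform weight is forced by the reference maximally entangled state) and accounts for the uniqueness asserted in the theorem.

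Having reduced everything to the two cited results, the core of the proof is to verify that they apply in the generality needed here. Lemma~\ref{lem:Choi} is stated for a single quantum input (a channel mapped to a bipartite state), so I would promote it to arbitrarily many quantum inputs $A_1,\dots,A_n$ by applying the single-input version one input at a time, checking at each stage that the factorizing dual processes used to assess classicality (Theorem~\ref{thmdual}) are carried to factorizing dual processes and that the associated linear, diagram-preserving ontological representations are in bijection. The correspondence is the standard one: an effect $E$ on a Choi copy system $A_i'$ pulls back, via $\tr[(\mbb{1}\otimes E^T)\op{\Phi}{\Phi}]$, to the preparation of the unnormalized state $E^T/d$ on the original input $A_i$, so the set of circuits $\{D\circ P\}$ defining classicality of the process maps onto the set defining classicality of its associated state with matching statistics. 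I would then confirm that this bijection also respects the subtle higher-order operational identities of the forms in Eqs.~\eqref{eq:OPnewbi}, \eqref{eq:OPnewch}, and \eqref{op:gencomp}, since these are exactly the constraints that distinguish our notion from a naive one.

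The step I expect to be the main obstacle is precisely this last piece of bookkeeping: ensuring that the identities generated by composing effective processes---not merely the first-order identities among the individual processes---are preserved under each operation, and in particular that no identity appears or disappears when a setting register is turned into a uniformly weighted flag. A clean way to sidestep much of this is to argue directly at the level of the structure theorems: by Theorem~\ref{ncgen} a process is classical if and only if it admits the stated product dual-frame representation, and one can check that such a representation for the process and one for its associated state are related by transposing each input frame/dual-frame pair into a Choi state and by reading the setting and outcome labels off the corresponding dephased registers. Exhibiting this explicit correspondence between frame representations would establish both directions simultaneously and would also make the weighting-independence (for any strictly positive $q_z$, not only the uniform one) manifest.
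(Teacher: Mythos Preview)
Your overall architecture---decompose into three classicality-preserving operations and compose---is exactly the paper's proof. The paper packages the three operations as Lemma~\ref{lem:Choi} (Choi), Lemma~\ref{lem:flag-convexified} (flag-convexification), and Lemma~\ref{lem:absorb_classical_handle} (output-dephasing), and Theorem~\ref{thm:associated state} follows immediately from their conjunction.

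Where you diverge is in handling flag-convexification. You propose to eliminate it as a primitive by observing that flag-convexification of a setting equals input-dephasing followed by Choi on that dephased register, and then citing Section~\ref{consistency} for input-dephasing. The observation is correct (indeed, the paper uses precisely this identity in Fig.~\ref{fig:outputstoo}), but the dependency runs the other way: in the paper, the input-dephasing result of Section~\ref{consistency} is \emph{derived from} Lemmas~\ref{lem:Choi}--\ref{lem:absorb_classical_handle}, in particular from Lemma~\ref{lem:flag-convexified}. So as literally written your route is circular. Your fallback---arguing directly via the frame-representation structure theorem (Theorem~\ref{ncgen})---would break the circularity and is in fact how the paper proves Lemmas~\ref{lem:Choi} and~\ref{lem:absorb_classical_handle}.

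That said, the paper's direct proof of Lemma~\ref{lem:flag-convexified} is much simpler than the detour you anticipate: flag-convexification merely rescales the $z$-component of the process by $1/|Z|$, so an ontological representation of the original process yields one for the flag-convexified process by the same rescaling, and conversely. No bookkeeping of higher-order identities is needed for this step. Your concern about tracking the identities in Eqs.~\eqref{eq:OPnewbi}, \eqref{eq:OPnewch}, \eqref{op:gencomp} is legitimate for Lemma~\ref{lem:Choi}, and there the paper does exactly what you suggest in your final paragraph: it exhibits the explicit correspondence of frame representations (the rescaling in Eqs.~\eqref{eq:choiupdate1}--\eqref{eq:choiupdate2}), which transports all constraints at once.
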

\blk

This theorem follows from three lemmas.

The first lemma states that verdicts concerning the
nonclassicality of a given process are preserved under the Choi isomorphism.\footnote{This is analogous to how any resource in the resource theory of local operations and shared randomness (LOSR) can always be transformed via LOSR operations into a process with only classical outputs and quantum inputs in such a way that the property of being nonfree is preserved~\cite{Schmid2020typeindependent}.}

\begin{lemma}
\label{lem:Choi}
The state that is Choi-isomorphic to a given process is nonclassical if and only if the process itself is nonclassical.
\end{lemma}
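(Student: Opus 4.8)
The plan is to show that the Choi isomorphism induces a statistics-preserving bijection between the family of closed circuits used to assess the classicality of a process $P$ and the family used to assess the classicality of its Choi-isomorphic state $J(P)$, and that this bijection preserves all of the linear and diagrammatic structure on which Definition~\ref{defn:classical} depends. Since classical explainability is a property of that structure alone, the two verdicts must then coincide. First I would set up the isomorphism explicitly, viewing $J$ as the linear bijection that converts each quantum input of $P$ into a quantum output (via a maximally entangled state), so that a process of the transformation variety is sent to a process of the preparation variety. By Theorem~\ref{thmdual} it suffices to range over \emph{factorizing} duals, so for such a process the relevant circuits are obtained by feeding a product state into each quantum input and applying a product effect to each quantum output.

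The core computation is the Choi identity $\tr[M^B\,\mc{E}^{B|A}(\tau^A)] = \tr[((\tau^A)^{T}\otimes M^B)\,J(\mc{E})]$ together with its multipartite generalization, which shows that contracting $P$ with a factorizing dual yields exactly the same probability as contracting $J(P)$ with the product effect obtained by transposing the input states and leaving the output effects unchanged (the overall normalization constant relating $J(\mc{E})$ to a genuine density operator is immaterial, as explained below). The key step is then to argue that ranging over \emph{all} factorizing duals of $P$ corresponds, under this identity, to ranging over all factorizing duals of $J(P)$. Combined with the linearity of $J$, which guarantees that every operational identity holding among the processes of $P$ and their effective compositions is mapped to an operational identity among the corresponding processes for $J(P)$ (and conversely, since $J$ is invertible with a Choi-type inverse), this yields two circuit families carrying identical statistics and identical systems of operational identities.

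Finally I would translate an ontological model for one family into a model for the other. A quantum input wire $A$ of $P$ and its Choi-image output wire are assigned the same ontic state space, and the substochastic matrix representing $P$ is reinterpreted as the substochastic distribution representing $J(P)$ by the very relabeling that implements the Choi identity at the ontological level. Because the correspondence is a bijection in both directions, a linear and diagram-preserving model reproducing the statistics of one family exists if and only if one exists for the other, which is precisely the claim of the lemma.

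I expect the main obstacle to be the bookkeeping in the middle step, since the factorizing duals do not match up on the nose: the input side produces \emph{transposed states} rather than arbitrary effects. Making this rigorous requires two observations, each of which I would isolate as a short sub-argument. First, the transpose is a bijection of the state space onto itself, so once one ranges over all duals it is absorbed as a mere relabeling of which dual one considers. Second, the passage from ``all states'' to ``all effects'' on those wires does not change the classicality verdict: an effect is a nonnegative multiple of a state, and because the ontological representation is linear, rescaling a state or effect by a nonnegative constant rescales its representation by the same constant, leaving the existence of a model unaffected (this is the same rescaling invariance already invoked after Example~\ref{BB84ex}, and it also disposes of the overall $1/d$ normalization). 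For general instruments one must additionally verify that the comb structure of the dual and its causal and normalization constraints are respected by the matching, and that the ontic separability imposed on the Choi-image outputs is consistent with the product structure already present on the inputs of $P$.
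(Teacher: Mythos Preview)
Your approach is correct in spirit and takes a genuinely different route from the paper. The paper's proof leverages the structure theorems (Theorems~\ref{ncbp} and~\ref{ncchannelstruc}) directly: it starts from the frame decomposition guaranteed for a classical channel, plugs it into the Choi construction, and then rescales the frame and dual-frame operators on the $A$ wire---replacing the POVM $\{G_{\lambda_A}\}$ by the states $\{G_{\lambda_A}^T/\tr[G_{\lambda_A}]\}$ and the unit-trace operators $\{F_{\lambda_A}\}$ by $\{\tr[G_{\lambda_A}]\,F_{\lambda_A}^T\}$---to obtain a frame decomposition of the Choi state satisfying Theorem~\ref{ncbp}; the reverse direction runs the same rescaling backwards. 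All of the operational-identity bookkeeping is absorbed into the structure theorems, so the proof reduces to a short algebraic manipulation.

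Your circuit-level argument avoids the structure theorems, but you underestimate one step. The remark that ``an effect is a nonnegative multiple of a state, so rescaling does not affect the verdict'' is not quite the right diagnosis of the mismatch on the $A$ wire. At the ontological level, the channel-input wire carries a map $\rho\mapsto p(\lambda_A|\rho)$ (a probability distribution, encoded via Gleason by a POVM $\{G_{\lambda_A}\}$), whereas the Choi-output wire must carry a map $M\mapsto p(M|\lambda_A)$ (a response function, encoded by density operators $\{\sigma_{\lambda_A}\}$). Converting one into the other is not a single global rescaling but a \emph{$\lambda$-dependent} one, $\sigma_{\lambda_A}=G_{\lambda_A}^T/\tr[G_{\lambda_A}]$, with the leftover factor $\tr[G_{\lambda_A}]/d$ absorbed into the joint distribution $p(\lambda_A,\lambda_B)$ so that it is nonnegative and normalized. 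Once you make this explicit (and verify that it intertwines the compositional identities $\mc{O}(\msf{M}\circ\msf{T})$, $\mc{O}(\msf{T}\circ\msf{P})$, $\mc{O}(\widetilde{\msf{T}})$ with their bipartite-state counterparts $\mc{O}(\msf{M}_i\circ\msf{P})$, $\mc{O}(\widetilde{\msf{P}})$), your argument goes through---but you will have reconstructed precisely the paper's rescaling, just without the structure-theorem packaging.
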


This is proven in Appendix~\ref{appendixL}. \blk

\begin{figure}[htb!]
\includegraphics[width=0.75\linewidth]{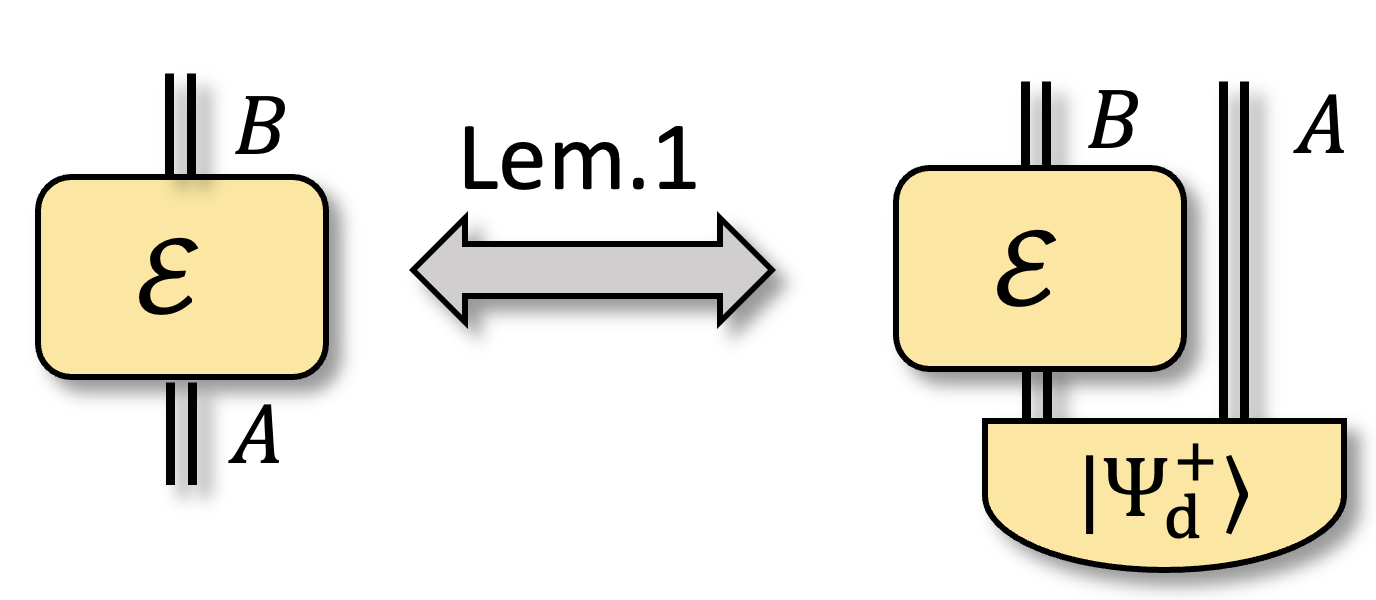}
    \caption{As an example of Lemma~\ref{lem:Choi}, a channel $\mc{E}^{B|A}$ is nonclassical if and only if the Choi-isomorphic bipartite state $\rho^{BA}_{\mc{E}} = (\mc{E}^{B|A}\otimes \mbb{1}^{A})\op{\Psi^+}{\Psi^+}^{AA}$ 
    is nonclassical. (Here, the second $A$ denotes the isomorphic copy used in the Choi construction)}
    \label{fig:Choi}
\end{figure}

Hence, to decide nonclassicality, one can first map any process to one with no quantum inputs using the Choi-isomorphism. This is depicted in Fig.~\ref{fig:Choi}. This enables, e.g., characterizations of nonclassicality for bipartite states to be transferred to channels, and those for unipartite sources to be transferred to unipartite measurements. \blk

The second lemma states that the verdict about the nonclassicality of a given process is preserved under mapping a classical input (a setting variable) into a classical output (an outcome variable) via {\em flag-convexification}, a notion introduced in Ref.~\cite{selby2023incompatibility}. In flag-convexification, the classical input variable is sampled from a full-support probability distribution, then copied, and the copy becomes a classical output variable. We depict an example of this on the right-hand side of Fig.~\ref{fig:lem5}. (In this example and henceforth, we take the probability distribution in question to be the uniform distribution, although our results hold just as well for any other full-support distribution.)

\begin{lemma}
\label{lem:flag-convexified}
The image of a given process under flag-convexification of its classical input is nonclassical if and only if the process itself is nonclassical.
\end{lemma}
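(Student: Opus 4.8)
The plan is to prove both directions of the biconditional by explicitly translating an ontological model for one family of circuits into one for the other, exploiting the fact that flag-convexification leaves the quantum inputs and outputs — and hence the set of admissible dual processes — untouched. Write $P=\{\{\mc{F}_{c|z}^{\vec{B}|\vec{A}}\}_c\}_z$ for the given process with classical input $z$, and write $P'$ for its flag-convexification, namely the single process with outcome variable $(c,z)$ whose quantum representation is $\mc{G}_{cz}^{\vec{B}|\vec{A}}:=q(z)\,\mc{F}_{c|z}^{\vec{B}|\vec{A}}$, where $q$ is the chosen full-support distribution. Since the dual processes of $P$ and $P'$ coincide, for every factorizing dual process $D$ the circuit $D\circ P$ produces statistics $p(c|z)$ (one closed circuit per value of the setting $z$) while $D\circ P'$ produces $q(z)\,p(c|z)$ with $z$ now a recorded outcome; because $q(z)>0$, the two data sets carry the same information.

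The easy direction is that classicality of $P$ implies classicality of $P'$. Given a linear, diagram-preserving ontological model reproducing $\{D\circ P\}_D$, with $P$ represented by the substochastic maps $p(c\lambda_{\vec{B}}|z\lambda_{\vec{A}})$, I would define the representation of $P'$ by $p(cz\lambda_{\vec{B}}|\lambda_{\vec{A}}):=q(z)\,p(c\lambda_{\vec{B}}|z\lambda_{\vec{A}})$ and retain the representations of all dual processes. One checks immediately that this is substochastic and correctly normalized (since $\sum_{z}q(z)=1$), that it reproduces the statistics $q(z)p(c|z)$, and that it respects every operational identity: the linear dependences among the $\{\mc{G}_{cz}\}$ are in bijection, via multiplication by $q(z)$, with those among the $\{\mc{F}_{c|z}\}$, and composite identities correspond because the copy operation implementing the flag is represented by deterministic classical copying, which is itself linear and diagram-preserving.

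The substantive direction is the converse: a model for $\{D\circ P'\}_D$, with $P'$ represented by $p(cz\lambda_{\vec{B}}|\lambda_{\vec{A}})$, must be converted into a model for $P$. The natural candidate is $p(c\lambda_{\vec{B}}|z\lambda_{\vec{A}}):=p(cz\lambda_{\vec{B}}|\lambda_{\vec{A}})/q(z)$, which is nonnegative because $q(z)>0$, reproduces $p(c|z)$, and inherits all operational identities through the same $q(z)$-reparametrization as above. The main obstacle is verifying that this candidate is correctly normalized as a representation of $P$, i.e.\ that $\sum_{c\lambda_{\vec{B}}}p(c\lambda_{\vec{B}}|z\lambda_{\vec{A}})=1$ for \emph{each} fixed $z$; the ontological model for $P'$ only guarantees the global normalization $\sum_{cz\lambda_{\vec{B}}}p(cz\lambda_{\vec{B}}|\lambda_{\vec{A}})=1$, so per-$z$ normalization is not automatic.

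I expect to resolve this precisely through an operational identity that the flag-convexified model is forced to preserve. Composing each slice with the discard effect $u^{\vec{B}}$ gives $u^{\vec{B}}\circ\sum_c\mc{G}_{cz}=q(z)\,u^{\vec{A}}$, since $\sum_c\mc{F}_{c|z}$ is trace-preserving for every $z$, and hence the identities $q(z)^{-1}\,u^{\vec{B}}\circ\sum_c\mc{G}_{cz}=u^{\vec{A}}$ hold uniformly in $z$. Translating these into the ontological model yields $q(z)^{-1}\sum_{c\lambda_{\vec{B}}}p(cz\lambda_{\vec{B}}|\lambda_{\vec{A}})=1$ for all $\lambda_{\vec{A}}$, which is exactly the per-$z$ normalization needed. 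The remaining bookkeeping — checking that the composite operational identities of $\mc{O}(\widetilde{\msf{T}})$ type also transfer under the $q(z)$-reparametrization, and that the construction specializes correctly to multi-sources, multi-measurements, and multi-channels — is routine and follows the same pattern.
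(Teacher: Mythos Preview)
Your proposal is correct and follows essentially the same rescaling idea the paper sketches: since flag-convexification multiplies each $\mc{F}_{c|z}$ by $q(z)>0$, ontological representations can be rescaled in the same way, with operational identities carried along by the invertible reparametrization. The paper's own argument is much terser---it simply asserts the rescaling and cites prior work---so your explicit treatment of the per-$z$ normalization obstacle (and its resolution via the trace-preservation identity $u^{\vec B}\circ\sum_c\mc{G}_{cz}=q(z)\,u^{\vec A}$) supplies detail the paper omits.

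One point worth noting: the paper also offers a second, independent route via the structure theorem (Theorem~\ref{ncgen}). There, classicality of $\{\{\mc{F}_{c|z}\}_c\}_z$ is equivalent to the existence of local frames such that $\tr[(\bigotimes_j H^{B_j}_{\lambda_{B_j}})\mc{F}_{c|z}(\bigotimes_i F^{A_i}_{\lambda_{A_i}})]\ge 0$ for all $c,z,\lambda_{\vec A},\lambda_{\vec B}$. Since the operator spaces $V_{A_i},V_{B_j}$ are unchanged under flag-convexification (because $q(z)>0$ leaves spans invariant), and since the nonnegativity condition for $\mc{G}_{cz}=q(z)\mc{F}_{c|z}$ differs from that for $\mc{F}_{c|z}$ only by the positive factor $q(z)$, the equivalence is immediate. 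This frame-theoretic argument bypasses the normalization bookkeeping entirely; your direct approach is more elementary but has to earn the per-$z$ normalization through the operational identity you identified.
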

\begin{figure}[htb!]
\includegraphics[width=0.75\linewidth]{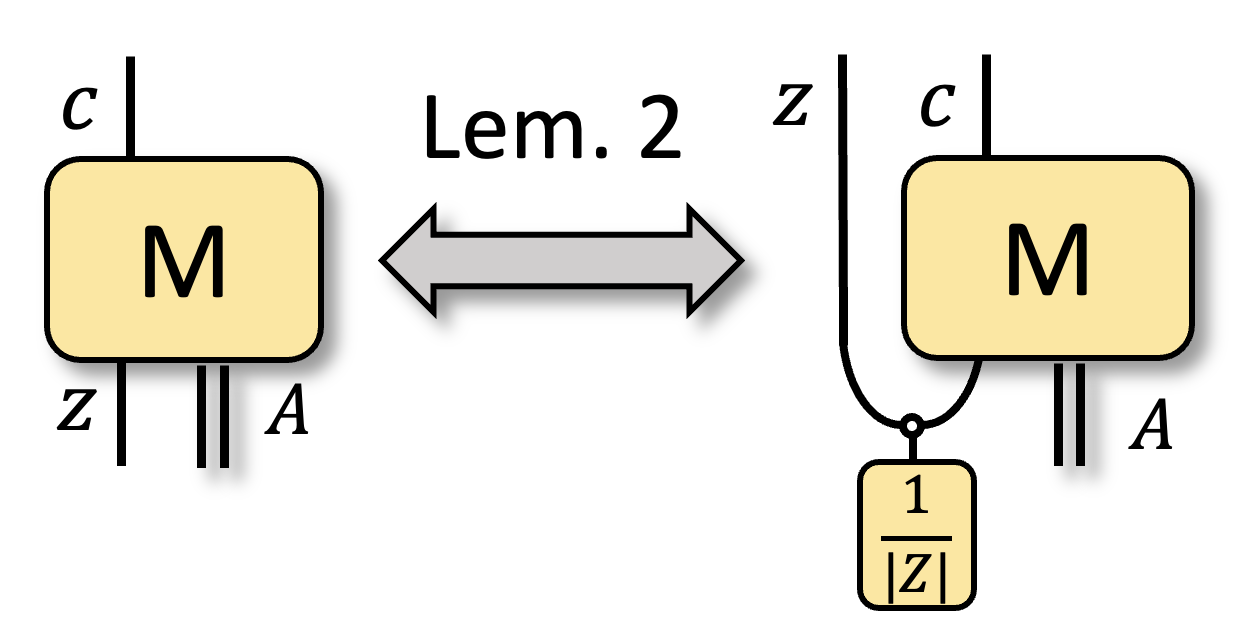}
\caption{As an example of Lemma~\ref{lem:flag-convexified}, a multi-measurement $\{\{M_{c|z}\}_c\}_z$ is nonclassical if and only if the single measurement $\{\tilde{M}_{c,z}\}_{c,z}$
obtained from flag-convexification of the setting variable $z$ is nonclassical. Assuming the uniform distribution for sampling, we have $\tilde{M}_{c,z}=\frac{1}{|Z|}M_{c|z}$.}\label{fig:lem5}
\end{figure}

The proof follows from facts about the ontological representation of flag-convexification described in
Refs.~\cite{selby2023incompatibility, schmid2024PTM}. Specifically, it follows from the fact that all that changes when one moves from a given process to its flag-convexified counterpart is that it is rescaled by a constant factor, and so an ontological representation of the given process can simply be rescaled by the same constant factor to give an ontological representation of the flag-convexified process. \yujie This lemma also follows from the fact that the frame-representation characterization of the classicality of a process (as in Theorem~\ref{ncgen}) is preserved under flag-convexification of the setting variable. \blk

The third lemma states that the nonclassicality of a process does not depend on whether or not one reconceptualizes its classical output (its outcome) as a quantum output that is dephased in a fixed basis.

\begin{figure}[htb!]
\includegraphics[width=0.9\linewidth]{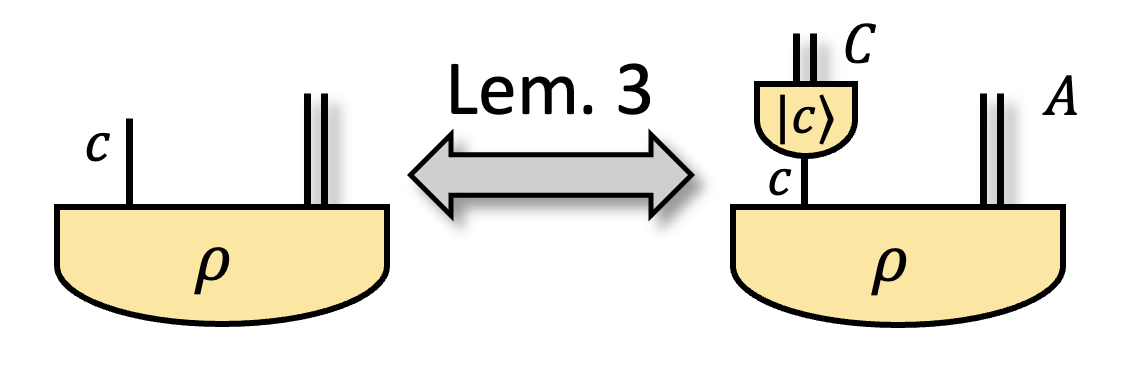}
\caption{\yujie As an example of Lemma~\ref{lem:absorb_classical_handle}, a source $\{p(c)\rho_{c}^A\}_c$ is nonclassical if and only if the bipartite state 
$\tilde{\rho}^{CA}=\sum_c p(c)\op{c}{c}^C\otimes \rho_{c}^{A}$, obtained by treating its classical outcome as a dephased quantum output, is nonclassical.
Here, the dephased quantum output is represented by preparing one of the orthogonal
set of states $\{\op{c}{c}\}_c$ on system $C$ conditional on the classical outcome $c$.\blk }\label{fig:view0}
\end{figure}

\begin{lemma}
\label{lem:absorb_classical_handle}
Whether or not a process is nonclassical  (in the sense of Definition~\ref{maindefn})  is independent of whether one reconceptualizes its classical outputs as dephased quantum outputs. 
\end{lemma}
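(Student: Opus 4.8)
The plan is to prove that the two nonclassicality verdicts coincide by showing that the sets of circuits whose classical explainability defines them are equivalent, in the precise sense that an ontological model for one can be converted into an ontological model for the other. Let $P$ be the process carrying a classical output $c$, and let $\tilde P$ be obtained by reconceptualizing $c$ as a quantum output on a system $C$ of dimension $|c|$ that is dephased in a fixed orthonormal basis $\{\ket{c}\}$; concretely $\tilde P$ acts as $P$ but emits $\op{c}{c}^C$ in place of flagging the outcome $c$ (it suffices to treat a single classical output, the general case being identical with $C$ a product system). By Definition~\ref{def:dualprocess} the duals of $\tilde P$ carry one extra quantum input, on $C$; by Theorem~\ref{thmdual} it suffices to range over factorizing duals, so each such dual is a factorizing dual $D$ of $P$ tensored with a single local effect $N^C$ on $C$.

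First I would record the elementary but crucial fact that, because the $C$-output of $\tilde P$ is dephased in $\{\ket{c}\}$, contracting it with $N^C$ depends only on the diagonal of $N^C$: one may replace $N^C$ by $\sum_c n_c\op{c}{c}$ with $n_c=\bra{c}N^C\ket{c}\in[0,1]$ without changing any statistics. This yields the correspondence
\begin{equation}
p_{(D\otimes N^C)\circ\tilde P}(\vec o\,|\,\vec s)=\sum_c n_c\, p_{D\circ P}(c,\vec o\,|\,\vec s),
\end{equation}
where $\vec o,\vec s$ denote the remaining outcome and setting variables. Taking $N^C=\op{c_0}{c_0}$ recovers the individual probability $p_{D\circ P}(c_0,\vec o\,|\,\vec s)$, so the two families of statistics encode the same information up to the invertible classical post-processing of reading out $c$ on the diagonal.

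The heart of the argument is then the observation that the dephased system $C$ is effectively classical: its steerable states all lie in the commutative algebra $V_C=\mathrm{Span}(\{\op{c}{c}\})$ spanned by the linearly independent orthogonal projectors $\{\op{c}{c}\}_c$ (cf. Corollary~\ref{lem:linearinde}), and the only admissible probes are their diagonal readouts. I would therefore assign $C$ the ontic state space $\Lambda_C=\{c\}$ and represent it by the canonical classical embedding $\op{c}{c}\mapsto\delta_{\lambda_C,c}$, a linear isomorphism from $V_C$ onto distributions over $\Lambda_C$ carrying (sub)normalized diagonal states to (sub)normalized distributions. Given a classically-explaining ontological model for $P$, one obtains one for $\tilde P$ by adjoining this faithful classical representation of $C$: $\tilde P$ deterministically sets $\lambda_C=c$ whenever $P$'s representation emits outcome $c$, and an effect $\sum_c n_c\op{c}{c}$ is represented by the response function $n_{\lambda_C}$. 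Conversely, given a model for $\tilde P$, the effects $\{\op{c}{c}\}_c$, which satisfy the operational identity $\sum_c\op{c}{c}=\mbb{1}^C$, are represented by response functions $\{\eta_c(\lambda_C)\}_c$ obeying $\sum_c\eta_c=1$, and reading these out of the $C$-register reproduces $P$ with $c$ as a genuine classical outcome. In both directions the statistics match by the correspondence above, and linearity and diagram-preservation are inherited because the classical embedding of $C$ is itself linear and diagram-preserving.

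The step I expect to be the main obstacle is verifying that all operational identities are preserved, in particular the subtle composition-induced identities (of the type in $\mc{O}(\widetilde{\msf{T}})$, cf. Eqs.~\eqref{eq:OPnewbi} and~\eqref{eq:OPnewch}) that arise precisely because $C$ is now a genuine quantum wire that can be composed in parallel with copies of the process. The resolution is exactly the classicality of $C$: since $\{\op{c}{c}\}_c$ is a basis of $V_C$ mapped bijectively onto the point-distribution basis, the classical embedding preserves every linear relation among diagonal operators and preserves positivity, so every operational identity involving the $C$-wire descends to an automatically-satisfied ontological identity. For this reason I would avoid routing the hard direction through the frame representation of Theorem~\ref{ncgen}, where an arbitrary (possibly overcomplete) frame on $V_C$ makes the positivity constraint~\eqref{eq:cl-gen-cons} couple the different values of $c$; the forward direction via Theorem~\ref{ncgen} is clean (adjoin the simplicial frame $\sigma_{c}=H_{c}=\op{c}{c}$ on $C$), but the ontological-model argument handles both directions uniformly and sidesteps that coupling.
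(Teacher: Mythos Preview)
Your argument is correct and its underlying construction coincides with the paper's, though the packaging differs. The paper treats the source case and routes both directions through the structure theorems: for ``source classical $\Rightarrow$ $\rho^{CA}$ classical'' it takes the frame furnished by Theorem~\ref{theoremprep}, adjoins the simplicial frame $\sigma_c^C=H_c^C=\op{c}{c}$ on $C$, and verifies the positivity condition of Theorem~\ref{ncbp}; for the converse it simply invokes Proposition~\ref{propremote} (measuring $\{\op{c}{c}\}_c$ on $C$ steers to the original source). Your direct ontological-model construction---setting $\Lambda_C=\{c\}$ with the point-mass embedding---is exactly that simplicial frame in disguise, and your backward readout via the response functions $\eta_c(\lambda_C)$ is precisely the steering argument behind Proposition~\ref{propremote}. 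What you gain is a self-contained proof that does not lean on the structure theorems and that makes transparent why the dephased wire cannot obstruct classicality; what the paper gains is modularity, since all checking of operational identities (including the composition-induced $\mc{O}(\tilde{\msf{P}})$-type ones you flag as the main obstacle) is already packaged inside Theorems~\ref{theoremprep} and~\ref{ncbp}. One minor correction: your worry that an overcomplete frame on $V_C$ would obstruct the backward direction via Theorem~\ref{ncbp} is unfounded---the dual-frame identity still forces $\tilde p(c,\lambda_A)=\tr[H_{\lambda_A}^A\,p(c)\rho_c^A]$, which is linear in $p(c)\rho_c^A$ and nonnegative regardless---but this does not affect the route you actually take.
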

This is proven in Appendix~\ref{appendixL} and an example is depicted in Fig.~\ref{fig:view0}.

(In the next section, we show that the analogous result is also true for classical inputs.)

Theorem~\ref{thm:associated state} follows immediately by combining these three lemmas as depicted in Fig.~\ref{fig:Choi-all}. This implies that characterizations of nonclassicality of arbitrary multipartite states can be transferred to the characterization of the nonclassicality of \emph{arbitrary} processes in quantum theory.

\begin{figure}[htb!]
\includegraphics[width=0.9\linewidth]{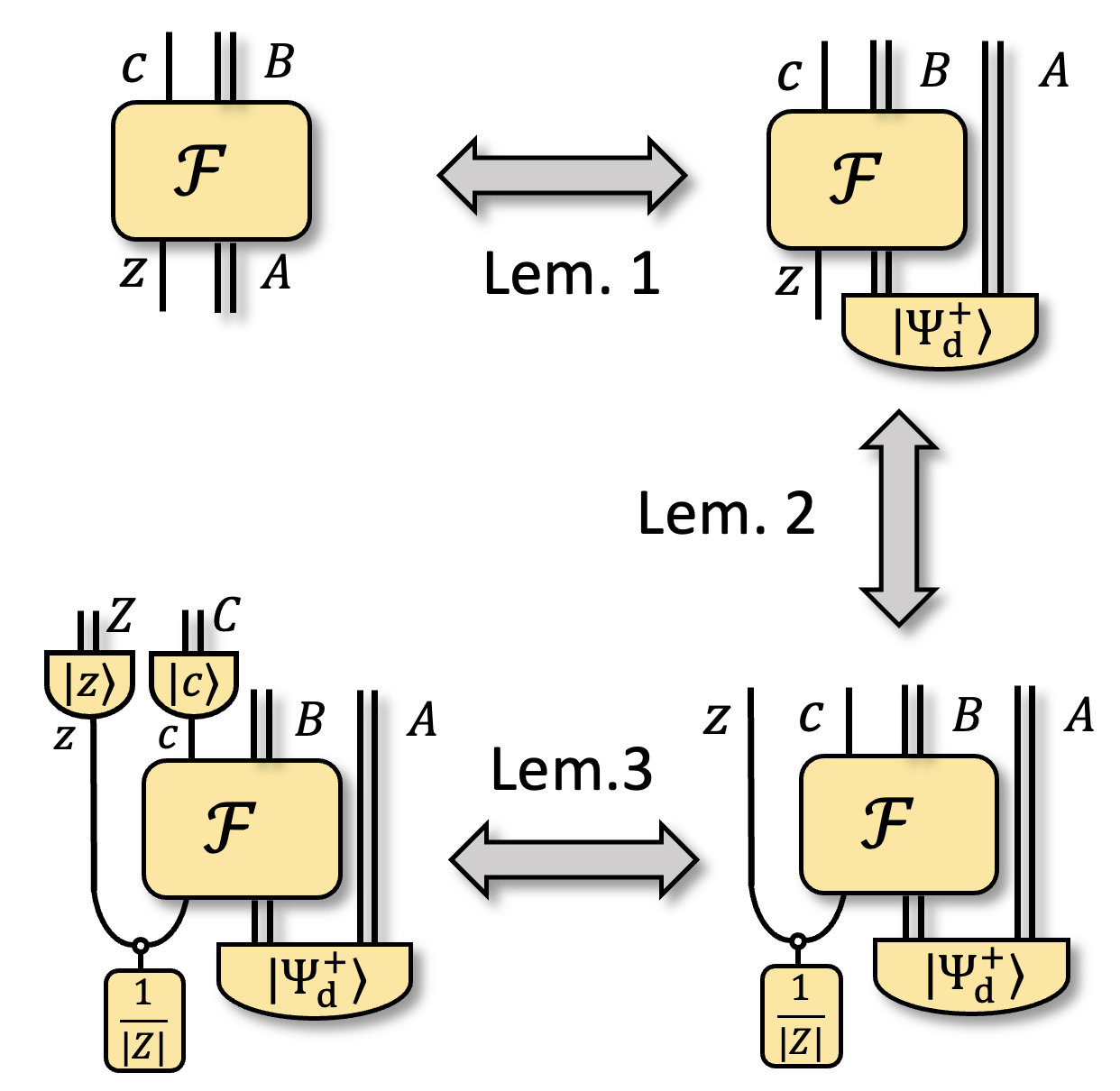}
    \caption{A multi-instrument $\{\{\mc{F}_{c|z}^{B|A}\}_c\}_z$ is nonclassical if and only if the associated multipartite state $\rho^{ZCBA}=\sum_{cz}\frac{1}{|Z|}p(c|z)\op{z}{z}^Z\otimes\op{c}{c}^C\otimes\rho_{c|z}^{BA}$ is nonclassical, where $p(c|z)\rho_{c|z}^{BA} = (\mc{F}_{c|z}^{B|A}\otimes \mbb{1}^{A})\op{\Psi^+}{\Psi^+}_{AA}$. }
    \label{fig:Choi-all}
\end{figure}
\blk 

\section{Consistency under reconceptualizing settings and outcomes as dephased quantum systems}
\label{consistency}

\begin{figure}[htb!]

\includegraphics[width=0.9\linewidth]{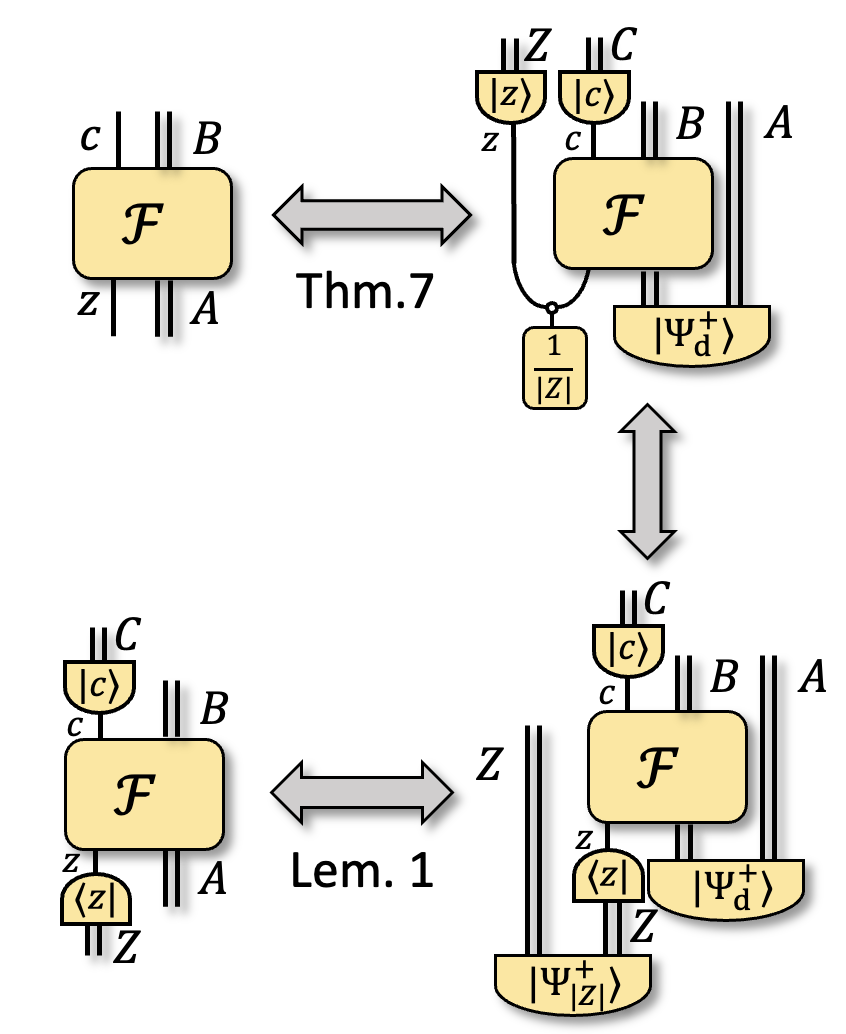}
    \caption{The proof that nonclassicality is preserved under reconceptualizing classical inputs and outputs as dephased quantum outputs. \yujie The implication described by the vertical arrow follows from the identity: $\tr_2[(\mbb{1}\otimes\op{z}{z})\op{\Psi^+_{|Z|}}{\Psi^+_{|Z|}}]=\frac{1}{|Z|}\op{z}{z}$, where $\ket{\Psi^+_{|Z|}}=\frac{1}{\sqrt{|Z|}}\sum_{z=1}^{|Z|}\ket{z}\ket{z}$. \blk}\label{fig:outputstoo}
\end{figure}

A basic consistency condition for any notion of classicality is that it should not depend on the theorist's representational choices. In particular, it should not depend on whether one represents the settings and outcomes of a given process in the conventional manner, as classical variables, or as quantum systems that are dephased in an appropriate basis.
\yujie This is an instance of the {\em principle of cut motility}~\cite{SpekkensCut}.\blk \footnote{\yujie For any physical theory purporting to have universal applicability, describing an experimental phenomenon typically involves making a cut between (i) the systems that are modeled explicitly within the theory and are the objects of interventions (preparations, measurements, transformations), and (ii) the systems that make up the devices that implement these interventions.  
A theory is said to satisfy the principle of cut-motility if its predictions are independent of the placement of the cut~\cite{SpekkensCut}. \blk }

The notion of classicality we have introduced here satisfies this consistency condition. In the case of reconceptualization of classical outputs, Lemma~\ref{lem:absorb_classical_handle} already gives the result, and we can use the three lemmas from the previous section to extend this result to reconceptualization of classical settings, following the logic shown in Fig.~\ref{fig:outputstoo}.

Although this kind of consistency may seem quite basic, na\"{i}ve attempts to stitch together different notions of classicality found in the literature can fail to be consistent in this manner, and so can fail to provide a unified notion of classicality. 

\yujie 
Consider a multi-state on a bipartite system, as shown in the left-hand side of Fig.~\ref{fig:view}. If one endorses the view that entanglement is necessary for nonclassicality of a bipartite state (call this the {\em nonclassical-only-if-entangled view}), it is reasonable to endorse the view  that the nonclassical bipartite multi-states are all and only those for which there is 
at least one value of the setting variable that yields an entangled state. 
Now consider any bipartite multi-state
for which the output system $C$ is dephased in some basis.
Taking the nonclassical-only-if-entangled view would imply that such a ``one-output-dephased'' process must be considered classical.

But the type of process just described can also be conceptualized as a multi-{\em source} on a {\em uni}partite system.  Specifically, the dephased quantum system $C$ can be reconceptualized as the classical outcome $c$ of a multi-source on $A$. This is depicted on the right-hand side of Fig.~\ref{fig:view}. Now recall that there is a conventional notion of nonclassicality for a special class of multi-sources (those obtained in a steering scenario), namely, steerability~\cite{Wiseman2007}.  Consequently, some processes of this type (namely, the steerable ones,  or equivalently, the LOSR-nonfree ones~\cite{Schmid2020typeindependent,Zjawin2023quantifyingepr})
will be judged to be nonclassical under this view.

In this way, the consistency criterion may be violated.\footnote{\yujie
Regardless of whether anyone has ever endorsed these two (ultimately inconsistent) positions, our example shows that there are type-specific notions of nonclassicality---notions that appear very reasonable at first glance---that fail the consistency test. 
}
A concrete example of this type is as follows.

\begin{example}
\yujie 
Consider the following multi-source on a unipartite system (i.e., assemblage):  $$\{\{\frac{1}{2}\op{0}{0},\frac{1}{2}\op{1}{1}\},\{\frac{1}{2}\op{+}{+},\frac{1}{2}\op{-}{-}\}\}.$$ 
Note that it is a steerable assemblage.
If one  treats the classical outcome variable as a dephased quantum output, the process is reconceptualized as a multi-state on a bipartite system, namely, the one associated to the  set of bipartite  states $\{\rho_1, \rho_2\}$, where
\begin{align}
\rho_1&=\tfrac{1}{2}\op{0}{0}\otimes\op{0}{0}+\tfrac{1}{2}\op{1}{1}\otimes\op{1}{1},\\
\rho_2&=\tfrac{1}{2}\op{0}{0}\otimes\op{+}{+}+\tfrac{1}{2}\op{1}{1}\otimes\op{-}{-}.
\end{align}
Note that both of these states are separable.
If one takes the view that the classical-nonclassical divide for a multi-source is the unsteerable-steerable divide, then one will judge the process to be nonclassical. 
Meanwhile, if one takes the view that a multi-state on a bipartite system is nonclassical only if it yields an entangled state for at least one value of the setting variable, then one will judge the process to be classical.\footnote{\yujie
Defining nonclassicality via steering is usually done
for the particular class of multi-sources that arise in steering scenarios, namely, those wherein there is no causal influence of the setting variable $z$ on the quantum output $A$.  One might wonder whether this nuance could impact the point we are trying to make here.  Specifically, might refining the class of bipartite multi-states in the same way, namely, to those wherein there is no causal influence of  $z$ on $A$, change the assessment of classicality of such a multi-state?  Not if the assessment is made based on the notion that entanglement is necessary for nonclassicality {\em regardless of how the bipartite state is prepared}.  The latter strikes us as being the conventional view, and so the nuance in question does not seem to impact our conclusion.}
\label{ex:dephase-bipartite}
\end{example}

\yujie This, therefore, provides an example of how seemingly reasonable views about the classical-nonclassical divide for different types of processes need not satisfy our consistency criterion. It follows that achieving such consistency is not trivial, and that if one is proposing a unified notion of nonclassicality, one should check for such consistency.\footnote{\yujie It is worth noting that the notion of resourcefulness in the LOSR resource theory is indeed consistent in this sense~\cite{Schmid2020typeindependent}.}

For the unified notion of nonclassicality proposed in this paper, the process in Example~\ref{ex:dephase-bipartite} above is found to be nonclassical regardless of how it is conceptualized. 
Indeed, the requirement of consistency in 
this example provides further evidence that 
containing only separable states is not a sufficient condition for a bipartite multi-state to be considered classical. 
\blk

\begin{figure}[htb!]
\includegraphics[width=0.8\linewidth]{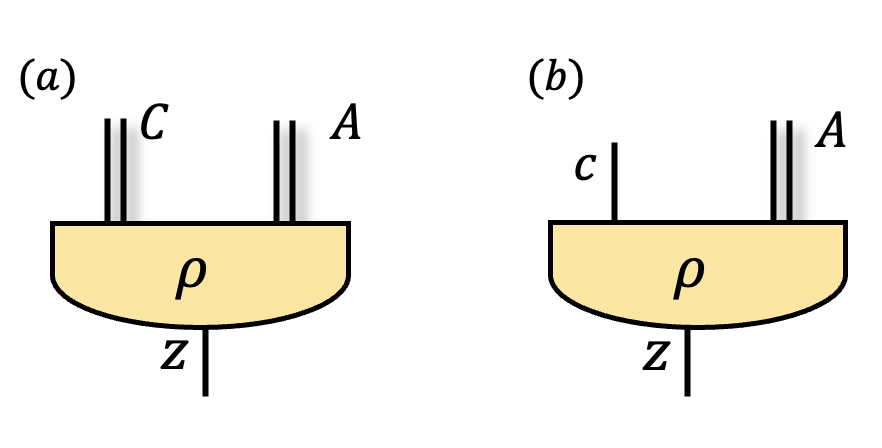}
    \caption{
    \yujie
    Example of a challenge to the consistency criterion when reconceptualizing a classical outcome as a dephased quantum output. (a) A multi-state on a bipartite system. (b) A multi-source on a unipartite system.  We can obtain an instance of the former from an instance of the latter by reconceptualizing the classical outcome as a dephased quantum output.   
     Every bipartite multi-state obtained in this way
    contains only separable states, such that if one adopted the conventional view that entanglement is necessary for nonclassicality, it would be deemed classical.  Meanwhile, if the unipartite multi-source is of the steerable variety, then under the conventional view that steerability is sufficient for nonclassicality, the same process would be judged to be nonclassical.
    \blk }\label{fig:view}
\end{figure}

\yujie
 A second example of how seemingly reasonable type-specific notions of nonclassicality can fail to satisfy the consistency criterion is as follows. 

Consider a multi-channel, as shown in the left-hand side of Fig.~\ref{fig:view1}. If one holds the view that for bipartite states, being entangled is necessary for nonclassicality, then it is also reasonable to take the view that for channels, being non-entanglement-breaking is necessary for nonclassicality. Furthermore, it is reasonable to then endorse the following view of nonclassicality for {\em multi-channels}:  
the nonclassical multi-channels 
are all and only those  for which the set of channels accessible by variation of the setting variable 
contains at least one that is non-entanglement-breaking.

Now consider any multi-channel from $A$ to $C$ 
for which the output system $C$ is dephased in some basis.
Under the view just described, such a process must be considered classical.  

But the type of process just described can also be conceptualized as a multi-{\em measurement} on a {\em uni}partite system.  Specifically, the dephased quantum system $C$ can be reconceptualized as the classical outcome $c$ of a multi-measurement on $A$. This is depicted on the right-hand side of Fig.~\ref{fig:view1},

Now recall that a conventional view about the classical-nonclassical divide for multi-measurements (equivalently, sets of measurements) is that incompatibility is necessary for nonclassicality.  Consequently, some of the multi-measurements that correspond to a multi-channel with a dephased output will be judged to be nonclassical under this view.

In this way, the consistency criterion may be violated. The following provides a concrete example.

\begin{figure}[htb!]
\includegraphics[width=0.7\linewidth]{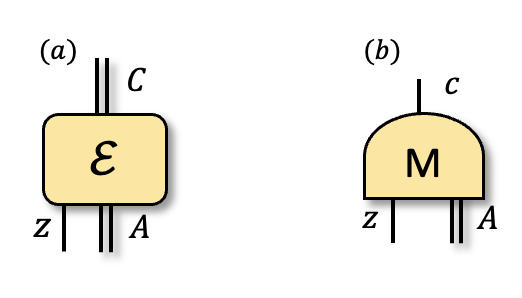}
    \caption{
       \yujie  Another example of a challenge to the consistency criterion when reconceptualizing a classical outcome as a dephased quantum output.  (a) A multi-channel. (b) A multi-measurement. 
          We can obtain an instance of the former from an instance of the latter by reconceptualizing the classical outcome as a dephased quantum output. 
     Every multi-channel obtained in this way
    contains only entanglement-breaking channels, such that if one adopted the conventional view that for channels, being non-entanglement-breaking is necessary for nonclassicality, the multi-channel would be deemed classical.  Meanwhile, if the multi-measurement contains incompatible measurements, then, under the conventional view that 
    such incompatibility is sufficient for nonclassicality, the same process would be judged to be nonclassical. 
    }\label{fig:view1}
\end{figure}
\begin{example}
Consider the following multi-measurement (equivalently, set of measurements):  $$\{\{\op{0}{0},\op{1}{1}\},\{\op{+}{+},\op{-}{-}\}\}.$$ 
Note that the two measurements in this set are incompatible.
If one  treats the classical outcome variable as a dephased quantum output, the process is reconceptualized as a multi-channel, namely, the one associated to the  set of channels  $\{\mc E_1, \mc E_2\}$, where
\begin{align}
\mc E_1(\cdot)&=\tr[\op{0}{0}(\cdot)]\op{0}{0}+\tr[\op{1}{1}(\cdot)]\op{1}{1},\\
\mc E_2(\cdot)&=\tr[\op{+}{+}(\cdot)]\op{0}{0}+\tr[(\op{-}{-}(\cdot)]\op{1}{1}.
\end{align}
Note that both of these channels are entanglement-breaking. 

If one takes the view that the classical-nonclassical divide for a multi-measurement is the compatible-incompatible divide, then one will judge the process to be nonclassical. 
Meanwhile, if one takes the view that a multi-channel is nonclassical only if it yields a non-entanglement-breaking channel for at least one value of the setting variable, then one will judge the process to be classical.
\label{ex:dephased-channel}
\end{example}

Again, we see that seemingly reasonable views about the classical-nonclassical divide for different types of processes need not satisfy our consistency criterion.

For the unified notion of nonclassicality proposed here, the process in Example~\ref{ex:dephased-channel} is found to be nonclassical regardless of how it is conceptualized. This possibility follows from the fact that, relative to our notion, incompatibility in a multi-measurement is a sufficient condition for nonclassicality, and a multi-channel can be deemed nonclassical even if all the channels it contains are entanglement-breaking.\footnote{ We note that the set of channels in question would also be considered nonclassical if one took  the notion of channel incompatibility~\cite{Heinosaari2014} to be a sufficient condition for nonclassicality.\blk}

\blk

\blk

\section{Related works}\label{related}

Another approach to defining a classical-nonclassical divide in a manner that follows a unified principle which applies across a wide range of types of processes is the type-independent resource theory presented in Refs.~\cite{Schmid2020typeindependent,Schmid2023understanding}. This approach defines a notion of {\em nonclassicality of common cause} which applies to processes such as multipartite states, multipartite boxes (i.e., processes with only classical inputs and classical outputs), assemblages, and so on. However, the classical-nonclassical divide defined by this approach is trivial for unipartite processes, and it is moreover not clear how to generalize this approach to arbitrary causal structures (although first steps were discussed in Appendix A.3 of Ref.~\cite{Wolfe2020quantifyingbell}.

It is nonetheless useful to contrast this approach with the one we propose here. 
One way to understand the relationship between the two is that classicality, as defined in the LOSR approach, demands that the ontological accounts of the statistics have a causal structure that mirrors that of the quantum circuit (an instance of the assumption of diagram preservation),  while the approach described here demands, in addition, that the ontological representation map be linear. 
This stronger assumption is motivated by the principle of generalized noncontextuality and implies that the set of processes that is found to be nonclassical is broader than the set that is found when one does not assume linearity.   For instance, there are processes on unipartite systems that are judged to be nonclassical.

\yujie 
One may also wonder how our results are consistent with certain foils to quantum theory, such as the stabilizer subtheory of qutrits and Gaussian quantum mechanics, which exhibit entanglement, incompatibility, and steering despite being classically explainable~\cite{bartlett2012reconstruction,epistricted} (i.e., despite being consistent with a generalized-noncontextual ontological model). This does not conflict with our results because these foil theories are subtheories of quantum theory that contain a strict subset of the states and measurements allowed in quantum theory.
and the notion of classicality proposed here depends on considering  {\em all possible dual processes} allowed in the theory, which will be different in these subtheories compared to quantum theory.  

Extending our analysis of nonclassicality from quantum theory to more general GPTs, and using this to characterize different theories---in the spirit of the GPT reconstructions of quantum theory---is a natural and interesting direction for future work.
\blk

Several previous works~\cite{Plavala2024,PhysRevResearch.2.013011,Wrigh2023} also studied the properties of a bipartite state that are pertinent to its ability to support a proof of generalized contextuality, in the following sense.  In a steering scenario for a bipartite system $A_1A_2$, if one considers implementing any local measurement on $A_1$, the bipartite state on $A_1A_2$ induces a multi-source on $A_2$.  This multi-source, together with an arbitrary measurement on $A_2$ constitutes a prepare-measure circuit, and one can ask whether the statistics of this prepare-measure circuit admit of a noncontextual ontological model.  In particular, these prior works have studied the connection between these sorts of properties of a bipartite state and their entanglement properties, steering properties, and nonlocality properties.

However, none of these works considered the full set of operational identities in the bipartite scenario (listed in Eq.~\eqref{op:bipartiterep}), which is essential if one wishes to define or fully characterize the classicality of a bipartite state in the sense of finding necessary and sufficient conditions. Essentially, all these works consider the prepare-measure scenario induced by steering from one party to another. Consequently, they only consider the operational identities $\mc{O}(\msf{M}_1\circ \msf{P})$ and $\mc{O}(\msf{M}_2)$ (or a subset thereof), while omitting the identities that arise when considering the preparation steered in the opposite direction, from $A_2$ to $A_1$, namely $\mc{O}(\msf{M}_2\circ \msf{P})$ and $\mc{O}(\msf{M}_1)$, and \yujie while also omitting
the identities $\mc{O}(\tilde{\msf{P}})$ defined in~Eq.~\eqref{eq:OPnewbi}.\blk

To be more specific, the work by Tavakoli and Uola \cite{Tavakoli2020} considers only the operational identities in $\mc{O}(\msf{M}_2)$ and a {\em subset} of the operational identities in $\mc{O}(\msf{M}_1\circ \msf{P})$, namely, those implied by the no-signaling principle. 
The results of Wright and Farkas \cite{Wright2023invertible} rely on a still smaller set of operational identities, including those following from no-signaling, together with a subset of the operational identities in $\mc{O}(\msf{M}_2)$, specifically those arising from the normalization condition $\sum_bM_{b|y}=\sum_bM_{b|y'}$.
Plavala and Gühne \cite{Plavala2024}, in comparison, do consider all operational identities in the prepare-measure scenario induced by steering from $A_1$ to $A_2$, but they do not consider all operational identities on the full bipartite scenario, \yujie in particular, not the operational identities $\mc{O}(\tilde{\msf{P}})$ defined in Eq.~\eqref{eq:OPnewbi}.\blk

Because these earlier works impose fewer constraints on the ontological model than we do here, the results they derive do not fully characterize the classical-nonclassical divide for 
bipartite states according to our proposal. This is consistent with statements in many of those works, which explicitly state that the results merely provide necessary but not sufficient conditions for noncontextual realizability.
The one exception to this is Theorem 2 in~\cite{PhysRevResearch.2.013011}, whose statement suggests that it leverages {\em all} ontological constraints that follow from the assumption of noncontextuality in the scenario under study, when in fact it does not. We discuss this further in Ref.~\cite{zhang2024parellel}.

In a companion paper~\cite{zhang2025forthcoming}, we also consider the sufficient conditions for nonclassicality implied by different subsets of the set of all constraints that arise from noncontextuality in bipartite Bell scenarios, and we show that for some particularly simple choices of these subsets, one singles out local, unsteerable, and entangled states.

\section{Conclusion}

In this work, we proposed a definition of the classical-nonclassical divide 
for individual quantum processes.  There are many avenues for future work. 

\yujie If one seeks only sufficient conditions for nonclassicality of given process, then rather than considering the classical explainability of the set of circuits induced by composing it with {\em all} dual processes, it suffices to show the failure of classical explainability for a subset of dual processes. \blk
One can consider the partial order of such subsets as a means to understand what is required to witness nonclassicality. For instance, one could seek to find the least elements in this partial order. 

\yujie
Another important question which we study in forthcoming work~\cite{zhang2026forthcoming} is how classicality behaves under composition. For example, we prove therein that the composition of two classical processes (either in parallel or in sequence) is always classical; that is, there exists no superactivation of nonclassicality.
\blk

This work focused on the decision problem regarding the classical-nonclassical divide: assessing whether a given process is classical or nonclassical. An important open problem is that of {\em quantifying} the degree of nonclassicality according to the approach introduced here.  

In a companion paper~\cite{zhang2024parellel}, we give some basic tools for doing so for multi-sources and multi-measurements, based on specific measures of how robust a given process is to noise. 

The most principled and informative approach to the quantification problem, however, is to pursue it within the framework of resource theories~\cite{coecke2016mathematical}.  Roughly speaking, one expects that the correct choice of the divide between free and nonfree resources in the resource theory should follow from the divide between classical and nonclassical processes.

One can already recognize some aspects that such a resource theory will satisfy. For example, it will not be convex, since (for example) any set of product states is classical, while some separable states are nonclassical. Similarly, the convex combination of two classical measurements can result in a nonclassical measurement. One way to understand this nonconvexity is to note that the process of mixing can introduce new operational identities, and hence new constraints on the ontological model,  which in turn can activate nonclassicality. 

Moreover, this work opens new possibilities for constructing a resource theory of nonclassicality based on nonrealizability by a noncontextual ontological model; for instance, it could be used to define a resource theory in which the objects under study (the resources) are themselves individual quantum processes. Prior to this work, it seemed that the resources would need to be more complicated mathematical objects (such as sets of states together with sets of effects, or sets of states together with operational equivalences among these; see, for example, Ref.~\cite{catani2024resource}). 

An example of a potential application of such a resource theory is to give a resource-theoretic account of the simultaneous self-testing of the nonclassicality in the bipartite state and in the local measurements of a Bell scenario. This cannot be achieved in the LOSR approach to quantifying nonclassicality, because although it can yield a resource-theoretic account of self-testing for the bipartite state, as described in Ref.~\cite{Schmid2023understanding}, it cannot yield a resource-theoretic characterization of self-testing for the local measurements, since all local measurements are taken to be free in the LOSR resource theory.

In the introduction, we noted that the problem of understanding the scope of quantum-over-classical advantages for various tasks motivates understanding the classical-nonclassical divide.  In recent years, it has become apparent that the framework of resource theories provides a particularly powerful perspective on this problem. \yujie Understanding quantum-over-classical advantages, therefore, provides another motivation for developing a unified 
resource theory of nonclassicality, i.e., one wherein the notion of freeness is defined in a manner that can be particularized to any type of process. Furthermore, by leveraging the similarity of the structure of nonclassical resources for processes of different types (for instance, multi-sources and multi-measurements), a known quantum-over-classical advantage in one task might be used to infer an advantage in a distinct task.\blk

{\em Acknowledgements---}  \yujie We thank two anonymous referees for catching some mistakes in an earlier version of the article and for comments that led us to make many improvements in the presentation. \blk
RWS thanks Gilad Gour and Teiko Heinosaari for early discussions on whether incompatibility of measurements should be viewed as a type of nonclassicality. We thank John Selby, Ana Bel\'en Sainz, Roberto D. Baldijão, Rafael Wagner, and Martin Pl\'avala for helpful discussions. YZ was supported by the Natural Sciences and Engineering Research Council of Canada
(NSERC) and Canada Foundation for Innovation (CFI), YZ is grateful for the hospitality of Perimeter Institute, where this work was carried out.  DS, RWS, and YY were supported by Perimeter Institute for Theoretical Physics. Research at Perimeter Institute is supported in part by the Government of Canada through the Department of Innovation, Science and Economic Development and by the Province of Ontario through the Ministry of Colleges and Universities. YY was also supported by the Natural Sciences and Engineering Research Council of Canada (Grant No. RGPIN-2024-04419).

\bibliography{ref}

@article{zanardi2004quantum,
  title = {Quantum Tensor Product Structures are Observable Induced},
  author = {Zanardi, Paolo and Lidar, Daniel A. and Lloyd, Seth},
  journal = {Phys. Rev. Lett.},
  volume = {92},
  issue = {6},
  pages = {060402},
  numpages = {4},
  year = {2004},
  month = {Feb},
  publisher = {American Physical Society},
  doi = {10.1103/PhysRevLett.92.060402},
  url = {https://link.aps.org/doi/10.1103/PhysRevLett.92.060402}
}

@misc{ying2025quantumtheoryneedscomplex,
      title={On whether quantum theory needs complex numbers: the foil theories perspective}, 
      author={Yìlè Yīng and Maria Ciudad Alañón and Daniel Centeno and Jacopo Surace and Marina Maciel Ansanelli and Ruizhi Liu and David Schmid and Robert W. Spekkens},
      year={2025},
      eprint={2506.08091},
      archivePrefix={arXiv},
      primaryClass={quant-ph},
      url={https://arxiv.org/abs/2506.08091}, 
}

@article{kunjwal2015kochen,
  title = {From the Kochen-Specker Theorem to Noncontextuality Inequalities without Assuming Determinism},
  author = {Kunjwal, Ravi and Spekkens, Robert W.},
  journal = {Phys. Rev. Lett.},
  volume = {115},
  issue = {11},
  pages = {110403},
  numpages = {5},
  year = {2015},
  month = {Sep},
  publisher = {American Physical Society},
  doi = {10.1103/PhysRevLett.115.110403},
  url = {https://link.aps.org/doi/10.1103/PhysRevLett.115.110403}
}

@article{wagner2024inequalities,
url={https://journals.aps.org/pra/abstract/10.1103/PhysRevA.109.032220},
  title={Inequalities witnessing coherence, nonlocality, and contextuality},
  author={Wagner, Rafael and Barbosa, Rui Soares and Galv{\~a}o, Ernesto F},
  journal={Physical Review A},
  volume={109},
  number={3},
  pages={032220},
  year={2024},
  publisher={APS}
}

@article{Nielsen2021,
  doi = {10.22331/q-2021-10-05-557},
  url = {https://doi.org/10.22331/q-2021-10-05-557},
  title = {Gate {S}et {T}omography},
  author = {Nielsen, Erik and Gamble, John King and Rudinger, Kenneth and Scholten, Travis and Young, Kevin and Blume-Kohout, Robin},
  journal = {{Quantum}},
  issn = {2521-327X},
  publisher = {{Verein zur F{\"{o}}rderung des Open Access Publizierens in den Quantenwissenschaften}},
  volume = {5},
  pages = {557},
  month = oct,
  year = {2021}
}

@article{Tavakoli2020,
  title = {Measurement incompatibility and steering are necessary and sufficient for operational contextuality},
  author = {Tavakoli, Armin and Uola, Roope},
  journal = {Phys. Rev. Res.},
  volume = {2},
  issue = {1},
  pages = {013011},
  numpages = {7},
  year = {2020},
  month = {Jan},
  publisher = {American Physical Society},
  doi = {10.1103/PhysRevResearch.2.013011},
  url = {https://link.aps.org/doi/10.1103/PhysRevResearch.2.013011}
}

@article{Selby2023,
  title = {Contextuality without Incompatibility},
  author = {Selby, John H. and Schmid, David and Wolfe, Elie and Sainz, Ana Bel\'en and Kunjwal, Ravi and Spekkens, Robert W.},
  journal = {Phys. Rev. Lett.},
  volume = {130},
  issue = {23},
  pages = {230201},
  numpages = {7},
  year = {2023},
  month = {Jun},
  publisher = {American Physical Society},
  doi = {10.1103/PhysRevLett.130.230201},
  url = {https://link.aps.org/doi/10.1103/PhysRevLett.130.230201}
}

@article{Spekkens2005,
  title = {Contextuality for preparations, transformations, and unsharp measurements},
  author = {Spekkens, R. W.},
  journal = {Phys. Rev. A},
  volume = {71},
  issue = {5},
  pages = {052108},
  numpages = {17},
  year = {2005},
  month = {May},
  publisher = {American Physical Society},
  doi = {10.1103/PhysRevA.71.052108},
  url = {https://link.aps.org/doi/10.1103/PhysRevA.71.052108}
}

@article{Guhne2023,
  title = {Colloquium: Incompatible measurements in quantum information science},
  author = {G\"uhne, Otfried and Haapasalo, Erkka and Kraft, Tristan and Pellonp\"a\"a, Juha-Pekka and Uola, Roope},
  journal = {Rev. Mod. Phys.},
  volume = {95},
  issue = {1},
  pages = {011003},
  numpages = {25},
  year = {2023},
  month = {Feb},
  publisher = {American Physical Society},
  doi = {10.1103/RevModPhys.95.011003},
  url = {https://link.aps.org/doi/10.1103/RevModPhys.95.011003}
}

@article{Modi2012,
  title = {The classical-quantum boundary for correlations: Discord and related measures},
  author = {Modi, Kavan and Brodutch, Aharon and Cable, Hugo and Paterek, Tomasz and Vedral, Vlatko},
  journal = {Rev. Mod. Phys.},
  volume = {84},
  issue = {4},
  pages = {1655--1707},
  numpages = {0},
  year = {2012},
  month = {Nov},
  publisher = {American Physical Society},
  doi = {10.1103/RevModPhys.84.1655},
  url = {https://link.aps.org/doi/10.1103/RevModPhys.84.1655}
}

@article{Werner1989,
  title = {Quantum states with {Einstein}-{Podolsky}-{Rosen} correlations admitting a hidden-variable model},
  author = {Werner, Reinhard F.},
  journal = {Phys. Rev. A},
  volume = {40},
  issue = {8},
  pages = {4277--4281},
  numpages = {0},
  year = {1989},
  month = {Oct},
  publisher = {American Physical Society},
  doi = {10.1103/PhysRevA.40.4277},
  url = {https://link.aps.org/doi/10.1103/PhysRevA.40.4277}
}

@article{Wiseman2007,
  title = {Steering, Entanglement, Nonlocality, and the {Einstein}-{Podolsky}-{Rosen} Paradox},
  author = {Wiseman, H. M. and Jones, S. J. and Doherty, A. C.},
  journal = {Phys. Rev. Lett.},
  volume = {98},
  issue = {14},
  pages = {140402},
  numpages = {4},
  year = {2007},
  month = {Apr},
  publisher = {American Physical Society},
  doi = {10.1103/PhysRevLett.98.140402},
  url = {https://link.aps.org/doi/10.1103/PhysRevLett.98.140402}
}

@article{PhysRevLett.125.240505,
  title = {Genuine Network Multipartite Entanglement},
  author = {Navascu\'es, Miguel and Wolfe, Elie and Rosset, Denis and Pozas-Kerstjens, Alejandro},
  journal = {Phys. Rev. Lett.},
  volume = {125},
  issue = {24},
  pages = {240505},
  numpages = {6},
  year = {2020},
  month = {Dec},
  publisher = {American Physical Society},
  doi = {10.1103/PhysRevLett.125.240505},
  url = {https://link.aps.org/doi/10.1103/PhysRevLett.125.240505}
}

@article{Mazurek2021,
  title = {Experimentally Bounding Deviations From Quantum Theory in the Landscape of Generalized Probabilistic Theories},
  author = {Mazurek, Michael D. and Pusey, Matthew F. and Resch, Kevin J. and Spekkens, Robert W.},
  journal = {PRX Quantum},
  volume = {2},
  issue = {2},
  pages = {020302},
  numpages = {33},
  year = {2021},
  month = {Apr},
  publisher = {American Physical Society},
  doi = {10.1103/PRXQuantum.2.020302},
  url = {https://link.aps.org/doi/10.1103/PRXQuantum.2.020302}
}

@article{Pusey2018,
  title = {Robust preparation noncontextuality inequalities in the simplest scenario},
  author = {Pusey, Matthew F.},
  journal = {Phys. Rev. A},
  volume = {98},
  issue = {2},
  pages = {022112},
  numpages = {8},
  year = {2018},
  month = {Aug},
  publisher = {American Physical Society},
  doi = {10.1103/PhysRevA.98.022112},
  url = {https://link.aps.org/doi/10.1103/PhysRevA.98.022112}
}

@article{Busch2003,
  title = {Quantum States and Generalized Observables: A Simple Proof of Gleason's Theorem},
  author = {Busch, P.},
  journal = {Phys. Rev. Lett.},
  volume = {91},
  issue = {12},
  pages = {120403},
  numpages = {4},
  year = {2003},
  month = {Sep},
  publisher = {American Physical Society},
  doi = {10.1103/PhysRevLett.91.120403},
  url = {https://link.aps.org/doi/10.1103/PhysRevLett.91.120403}
}

@article{Horodecki2003a,
author = {Horodecki, Michael and Shor, Peter W. and Ruskai, Mary Beth},
title = {Entanglement Breaking Channels},
journal = {Reviews in Mathematical Physics},
volume = {15},
number = {06},
pages = {629-641},
year = {2003},
doi = {10.1142/S0129055X03001709},
URL = {https://doi.org/10.1142/S0129055X03001709}
}

@article{Ferrie2008,
doi = {10.1088/1751-8113/41/35/352001},
url = {https://dx.doi.org/10.1088/1751-8113/41/35/352001},
year = {2008},
month = {jul},
publisher = {},
volume = {41},
number = {35},
pages = {352001},
author = {Christopher Ferrie and Joseph Emerson},
title = {Frame representations of quantum mechanics and the necessity of negativity in quasi-probability representations},
journal = {Journal of Physics A: Mathematical and Theoretical}
}

@article{schmid2024addressing,
  title = {Addressing some common objections to generalized noncontextuality},
  author = {Schmid, David and Selby, John H. and Spekkens, Robert W.},
  journal = {Phys. Rev. A},
  volume = {109},
  issue = {2},
  pages = {022228},
  numpages = {18},
  year = {2024},
  month = {Feb},
  publisher = {American Physical Society},
  doi = {10.1103/PhysRevA.109.022228},
  url = {https://link.aps.org/doi/10.1103/PhysRevA.109.022228}
}

@article{Spekken2009,
  title = {Preparation Contextuality Powers Parity-Oblivious Multiplexing},
  author = {Spekkens, Robert W. and Buzacott, D. H. and Keehn, A. J. and Toner, Ben and Pryde, G. J.},
  journal = {Phys. Rev. Lett.},
  volume = {102},
  issue = {1},
  pages = {010401},
  numpages = {4},
  year = {2009},
  month = {Jan},
  publisher = {American Physical Society},
  doi = {10.1103/PhysRevLett.102.010401},
  url = {https://link.aps.org/doi/10.1103/PhysRevLett.102.010401}
}

@article{Wrigh2023,
  title = {Invertible Map between {Bell} Nonlocal and Contextuality Scenarios},
  author = {Wright, Victoria J. and Farkas, M\'at\'e},
  journal = {Phys. Rev. Lett.},
  volume = {131},
  issue = {22},
  pages = {220202},
  numpages = {6},
  year = {2023},
  month = {Nov},
  publisher = {American Physical Society},
  doi = {10.1103/PhysRevLett.131.220202},
  url = {https://link.aps.org/doi/10.1103/PhysRevLett.131.220202}
}

@article{giordani2023experimental,
url={https://www.science.org/doi/10.1126/sciadv.adj4249},
  title={Experimental certification of contextuality, coherence, and dimension in a programmable universal photonic processor},
  author={Giordani, Taira and Wagner, Rafael and Esposito, Chiara and Camillini, Anita and Hoch, Francesco and Carvacho, Gonzalo and Pentangelo, Ciro and Ceccarelli, Francesco and Piacentini, Simone and Crespi, Andrea and others},
  journal={Science Advances},
  volume={9},
  number={44},
  pages={eadj4249},
  year={2023},
  publisher={American Association for the Advancement of Science}
}

@article{Plavala2024,
  title = {Contextuality as a Precondition for Quantum Entanglement},
  author = {Pl\'avala, Martin and G\"uhne, Otfried},
  journal = {Phys. Rev. Lett.},
  volume = {132},
  issue = {10},
  pages = {100201},
  numpages = {7},
  year = {2024},
  month = {Mar},
  publisher = {American Physical Society},
  doi = {10.1103/PhysRevLett.132.100201},
  url = {https://link.aps.org/doi/10.1103/PhysRevLett.132.100201}
}

@article{Jevtic2014,
  title = {Quantum Steering Ellipsoids},
  author = {Jevtic, Sania and Pusey, Matthew and Jennings, David and Rudolph, Terry},
  journal = {Phys. Rev. Lett.},
  volume = {113},
  issue = {2},
  pages = {020402},
  numpages = {5},
  year = {2014},
  month = {Jul},
  publisher = {American Physical Society},
  doi = {10.1103/PhysRevLett.113.020402},
  url = {https://link.aps.org/doi/10.1103/PhysRevLett.113.020402}
}

@article{zhang2024,
  title = {Exact Steering Bound for Two-Qubit Werner States},
  author = {Zhang, Yujie and Chitambar, Eric},
  journal = {Phys. Rev. Lett.},
  volume = {132},
  issue = {25},
  pages = {250201},
  numpages = {7},
  year = {2024},
  month = {Jun},
  publisher = {American Physical Society},
  doi = {10.1103/PhysRevLett.132.250201},
  url = {https://link.aps.org/doi/10.1103/PhysRevLett.132.250201}
}

@article{Zjawin2023quantifyingepr,
  doi = {10.22331/q-2023-02-16-926},
  url = {https://doi.org/10.22331/q-2023-02-16-926},
  title = {Quantifying {EPR}: the resource theory of nonclassicality of common-cause assemblages},
  author = {Zjawin, Beata and Schmid, David and Hoban, Matty J. and Sainz, Ana Bel{\'{e}}n},
  journal = {{Quantum}},
  issn = {2521-327X},
  publisher = {{Verein zur F{\"{o}}rderung des Open Access Publizierens in den Quantenwissenschaften}},
  volume = {7},
  pages = {926},
  month = feb,
  year = {2023}
}

@misc{zhang2024parellel,
      title={Quantifiers and witnesses for the nonclassicality of measurements and of states}, 
      author={Yujie Zhang and Yìlè Yīng and David Schmid},
      year={2025},
      eprint={2504.02944},
      archivePrefix={arXiv},
      primaryClass={quant-ph},
      url={https://arxiv.org/abs/2504.02944}, 
}

@article{Chiribella2008,
  title = {Quantum Circuit Architecture},
  author = {Chiribella, G. and D'Ariano, G. M. and Perinotti, P.},
  journal = {Phys. Rev. Lett.},
  volume = {101},
  issue = {6},
  pages = {060401},
  numpages = {4},
  year = {2008},
  month = {Aug},
  publisher = {American Physical Society},
  doi = {10.1103/PhysRevLett.101.060401},
  url = {https://link.aps.org/doi/10.1103/PhysRevLett.101.060401}
}

@article{Chiribella2009,
  title = {Theoretical framework for quantum networks},
  author = {Chiribella, Giulio and D'Ariano, Giacomo Mauro and Perinotti, Paolo},
  journal = {Phys. Rev. A},
  volume = {80},
  issue = {2},
  pages = {022339},
  numpages = {20},
  year = {2009},
  month = {Aug},
  publisher = {American Physical Society},
  doi = {10.1103/PhysRevA.80.022339},
  url = {https://link.aps.org/doi/10.1103/PhysRevA.80.022339}
}

@article{grabowecky2021experimentally,
  title = {Experimentally bounding deviations from quantum theory for a photonic three-level system using theory-agnostic tomography},
  author = {Grabowecky, Michael J. and Pollack, Christopher A. J. and Cameron, Andrew R. and Spekkens, Robert W. and Resch, Kevin J.},
  journal = {Phys. Rev. A},
  volume = {105},
  issue = {3},
  pages = {032204},
  numpages = {16},
  year = {2022},
  month = {Mar},
  publisher = {American Physical Society},
  doi = {10.1103/PhysRevA.105.032204},
  url = {https://link.aps.org/doi/10.1103/PhysRevA.105.032204}
}

@article{schmid2024shadows,
      title={Shadows and subsystems of generalized probabilistic theories: when tomographic incompleteness is not a loophole for contextuality proofs}, 
      author={David Schmid and John H. Selby and Vinicius P. Rossi and Roberto D. Baldijão and Ana Belén Sainz},
      year={2024},
      eprint={2409.13024},
      journal={arXiv},
      primaryClass={quant-ph},
      url={https://arxiv.org/abs/2409.13024}, 
}

@article{Wright2023invertible,
  title = {Invertible Map between {Bell} Nonlocal and Contextuality Scenarios},
  author = {Wright, Victoria J. and Farkas, M\'at\'e},
  journal = {Phys. Rev. Lett.},
  volume = {131},
  issue = {22},
  pages = {220202},
  numpages = {6},
  year = {2023},
  month = {Nov},
  publisher = {American Physical Society},
  doi = {10.1103/PhysRevLett.131.220202},
  url = {https://link.aps.org/doi/10.1103/PhysRevLett.131.220202}
}

@article{selby2023accessible,
  title = {Accessible fragments of generalized probabilistic theories, cone equivalence, and applications to witnessing nonclassicality},
  author = {Selby, John H. and Schmid, David and Wolfe, Elie and Sainz, Ana Bel\'en and Kunjwal, Ravi and Spekkens, Robert W.},
  journal = {Phys. Rev. A},
  volume = {107},
  issue = {6},
  pages = {062203},
  numpages = {21},
  year = {2023},
  month = {Jun},
  publisher = {American Physical Society},
  doi = {10.1103/PhysRevA.107.062203},
  url = {https://link.aps.org/doi/10.1103/PhysRevA.107.062203}
}

@article{selby2023incompatibility,
  title = {Contextuality without Incompatibility},
  author = {Selby, John H. and Schmid, David and Wolfe, Elie and Sainz, Ana Bel\'en and Kunjwal, Ravi and Spekkens, Robert W.},
  journal = {Phys. Rev. Lett.},
  volume = {130},
  issue = {23},
  pages = {230201},
  numpages = {7},
  year = {2023},
  month = {Jun},
  publisher = {American Physical Society},
  doi = {10.1103/PhysRevLett.130.230201},
  url = {https://link.aps.org/doi/10.1103/PhysRevLett.130.230201}
}

@article{contextmetrology,
  title = {{Certifying Quantum Signatures in Thermodynamics and Metrology via Contextuality of Quantum Linear Response}},
  author = {Lostaglio, Matteo},
  journal = {Phys. Rev. Lett.},
  volume = {125},
  issue = {23},
  pages = {230603},
  numpages = {6},
  year = {2020},
  month = {Dec},
  publisher = {American Physical Society},
  doi = {10.1103/PhysRevLett.125.230603},
  url = {https://link.aps.org/doi/10.1103/PhysRevLett.125.230603}
}

@Article{Heinosaari2014,
author={Heinosaari, Teiko
and Miyadera, Takayuki
and Reitzner, Daniel},
title={Strongly Incompatible Quantum Devices},
journal={Foundations of Physics},
year={2014},
month={Jan},
day={01},
volume={44},
number={1},
pages={34-57},
issn={1572-9516},
doi={10.1007/s10701-013-9761-1},
url={https://doi.org/10.1007/s10701-013-9761-1}
}

@article{PuseydelRio,
	title        = {{Contextuality without access to a tomographically complete set}},
	author       = {{Pusey}, Matthew F. and {del Rio}, L{\'\i}dia and {Meyer}, Bettina},
	year         = 2019,
	month        = apr,
	journal      = {arXiv:1904.08699},
	url          = {https://arxiv.org/abs/1904.08699},
	keywords     = {Quantum Physics},
	eid.hide     = {arXiv:1904.08699},
	pages.hide   = {arXiv:1904.08699},
	archiveprefix = {arXiv},
	eprint.hide  = {1904.08699},
	primaryclass = {quant-ph},
	adsurl       = {https://ui.adsabs.harvard.edu/abs/2019arXiv190408699P},
	adsnote      = {Provided by the SAO/NASA Astrophysics Data System}
}

@article{POM,
	title        = {{Preparation Contextuality Powers Parity-Oblivious Multiplexing}},
	author       = {Spekkens, Robert W. and Buzacott, D. H. and Keehn, A. J. and Toner, Ben and Pryde, G. J.},
	year         = 2009,
	month        = jan,
	journal      = {Phys. Rev. Lett.},
	publisher    = {American Physical Society},
	volume       = 102,
	pages        = {010401},
	doi          = {10.1103/PhysRevLett.102.010401},
	url          = {http://link.aps.org/doi/10.1103/PhysRevLett.102.010401},
	issue        = 1,
	numpages     = 4
}

@inbook{epistricted,
	title        = {{Quasi-Quantization: Classical Statistical Theories with an Epistemic Restriction}},
	author       = {Spekkens, Robert W.},
	year         = 2016,
	booktitle    = {{Quantum Theory: Informational Foundations and Foils}},
	publisher    = {Springer Netherlands},
	address      = {Dordrecht},
	pages        = {83--135},
	doi          = {10.1007/978-94-017-7303-4_4},
	isbn         = {978-94-017-7303-4},
	editor       = {Chiribella, Giulio and Spekkens, Robert W.}
}

@article{BB84,
title = {Quantum cryptography: Public key distribution and coin tossing},
journal = {Theoretical Computer Science},
volume = {560},
pages = {7-11},
year = {2014},
note = {Theoretical Aspects of Quantum Cryptography – celebrating 30 years of BB84},
issn = {0304-3975},
doi = {https://doi.org/10.1016/j.tcs.2014.05.025},
url = {https://www.sciencedirect.com/science/article/pii/S0304397514004241},
author = {Charles H. Bennett and Gilles Brassard}
}

@article{KS,
	title        = {The problem of hidden variables in quantum mechanics},
	author       = {Simon Kochen and Ernst Specker},
	year         = 1967,
	journal      = {J. Math. \& Mech.},
	volume       = 17,
	pages        = {59--87},
	doi          = {10.1007/978-94-010-1795-4\_17},
	issn         = {0022-2518},
	note         = {Also available from the \href[pdfnewwindow]{http://www.iumj.indiana.edu/IUMJ/fulltext.php?year=1968\&volume=17\&artid=17004}{Indiana Univ. Math. J.}},
	coden        = {IUMJAB}
}

@article{Leifer,
	title        = {{Maximally Epistemic Interpretations of the Quantum State and Contextuality}},
	author       = {Leifer, M. S. and Maroney, O. J. E.},
	year         = 2013,
	month        = mar,
	journal      = {Phys. Rev. Lett.},
	publisher    = {American Physical Society},
	volume       = 110,
	pages        = 120401,
	url          = {http://link.aps.org/doi/10.1103/PhysRevLett.110.120401},
	issue        = 12,
	numpages     = 5,
	doi.hide     = {10.1103/PhysRevLett.110.120401}
}

@article{gencontext,
	title        = {Contextuality for preparations, transformations, and unsharp measurements},
	author       = {Spekkens, R. W.},
	year         = 2005,
	month        = may,
	journal      = {Phys. Rev. A},
	publisher    = {American Physical Society},
	volume       = 71,
	pages        = {052108},
	url          = {http://link.aps.org/doi/10.1103/PhysRevA.71.052108},
	issue        = 5,
	numpages     = 17,
	doi.hide     = {10.1103/PhysRevA.71.052108}
}

@article{Bell,
	title        = {{On the {Einstein} {Podolsky} {Rosen} paradox}},
	author       = {Bell, J. S.},
	year         = 1964,
	month        = nov,
	journal      = {Physics},
	publisher    = {American Physical Society},
	volume       = 1,
	pages        = {195--200},
	doi          = {10.1103/PhysicsPhysiqueFizika.1.195},
	url          = {https://link.aps.org/doi/10.1103/PhysicsPhysiqueFizika.1.195},
	issue        = 3,
	numpages     = 6
}

@article{negativity,
	title        = {{Negativity and Contextuality are Equivalent Notions of Nonclassicality}},
	author       = {Spekkens, Robert W.},
	year         = 2008,
	month        = jul,
	journal      = {Phys. Rev. Lett.},
	publisher    = {American Physical Society},
	volume       = 101,
	pages        = {020401},
	url          = {http://link.aps.org/doi/10.1103/PhysRevLett.101.020401},
	issue        = 2,
	numpages     = 4,
	doi.hide     = {10.1103/PhysRevLett.101.020401}
}

@misc{Leibniz,
	title        = {{The ontological identity of empirical indiscernibles: Leibniz's methodological principle and its significance in the work of Einstein}},
	author       = {{Spekkens}, Robert W.},
      year={2019},
      eprint={1909.04628},
      archivePrefix={arXiv},
      primaryClass={physics.hist-ph}
}

@article{AWV,
	title        = {{Anomalous Weak Values Are Proofs of Contextuality}},
	author       = {Pusey, Matthew F.},
	year         = 2014,
	month        = nov,
	journal      = {Phys. Rev. Lett.},
	publisher    = {American Physical Society},
	volume       = 113,
	pages        = 200401,
	url          = {http://link.aps.org/doi/10.1103/PhysRevLett.113.200401},
	issue        = 20,
	numpages     = 5,
	doi.hide     = {10.1103/PhysRevLett.113.200401}
}

@article{spekkens2007evidence,
	title        = {{Evidence for the epistemic view of quantum states: A toy theory}},
	author       = {Spekkens, Robert W.},
	year         = 2007,
	month        = mar,
	journal      = {Phys. Rev. A},
	publisher    = {American Physical Society},
	volume       = 75,
	pages        = {032110},
	url          = {http://link.aps.org/doi/10.1103/PhysRevA.75.032110},
	issue        = 3,
	numpages     = 30,
	doi.hide     = {10.1103/PhysRevA.75.032110}
}

@article{RAC,
	title        = {Optimal bounds for parity-oblivious random access codes},
	author       = {Chailloux, Andre and Kerenidis, Iordanis  and Kundu, Srijita  and Sikora, Jamie},
	year         = {2016},
	journal      = {New J. Phys.},
	volume       = {18},
	number       = {4},
	pages        = {045003},
	url          = {http://stacks.iop.org/1367-2630/18/i=4/a=045003}
}

@article{Saha_2019,
	title        = {State independent contextuality advances one-way communication},
	author       = {Debashis Saha and Pawe{\l} Horodecki and Marcin Paw{\l}owski},
	year         = 2019,
	month        = sep,
	journal      = {New J. Phys.},
	publisher    = {{IOP} Publishing},
	volume       = 21,
	number       = 9,
	pages        = {093057},
	doi          = {10.1088/1367-2630/ab4149},
	url          = {https://doi.org/10.1088/2F1367-2630/2Fab4149}
}

@article{GPT_Barrett,
  title = {Information processing in generalized probabilistic theories},
  author = {Barrett, Jonathan},
  journal = {Phys. Rev. A},
  volume = {75},
  issue = {3},
  pages = {032304},
  numpages = {21},
  year = {2007},
  month = {Mar},
  publisher = {American Physical Society},
  doi = {10.1103/PhysRevA.75.032304},
  url = {https://link.aps.org/doi/10.1103/PhysRevA.75.032304}
}

@article{cloningcontext,
  doi = {10.22331/q-2020-04-27-258},
  url = {https://doi.org/10.22331/q-2020-04-27-258},
  year = {2020},
  month = apr,
  publisher = {Verein zur Forderung des Open Access Publizierens in den Quantenwissenschaften},
  volume = {4},
  pages = {258},
  author = {Matteo Lostaglio and Gabriel Senno},
  title = {Contextual advantage for state-dependent cloning},
  journal = {Quantum}
}

@article{Chaturvedi2021characterising,
  doi = {10.22331/q-2021-06-29-484},
  url = {https://doi.org/10.22331/q-2021-06-29-484},
  title = {Characterising and bounding the set of quantum behaviours in contextuality scenarios},
  author = {Chaturvedi, Anubhav and Farkas, M{\'{a}}t{\'{e}} and Wright, Victoria J},
  journal = {{Quantum}},
  issn = {2521-327X},
  publisher = {{Verein zur F{\"{o}}rderung des Open Access Publizierens in den Quantenwissenschaften}},
  volume = {5},
  pages = {484},
  month = jun,
  year = {2021}
}

@article{Schmid2018,
	title        = {All the noncontextuality inequalities for arbitrary prepare-and-measure experiments with respect to any fixed set of operational equivalences},
	author       = {Schmid, David and Spekkens, Robert W. and Wolfe, Elie},
	year         = 2018,
	month        = jun,
	journal      = {Phys. Rev. A},
	publisher    = {American Physical Society},
	volume       = 97,
	pages        = {062103},
	doi          = {10.1103/PhysRevA.97.062103},
	url          = {https://link.aps.org/doi/10.1103/PhysRevA.97.062103},
	issue        = 6,
	numpages     = 12
}

@article{PP1,
	title        = {{Pre- and Post-Selection Paradoxes and Contextuality in Quantum Mechanics}},
	author       = {Leifer, M. S. and Spekkens, Robert W.},
	year         = 2005,
	month        = nov,
	journal      = {Phys. Rev. Lett.},
	publisher    = {American Physical Society},
	volume       = 95,
	pages        = 200405,
	url          = {http://link.aps.org/doi/10.1103/PhysRevLett.95.200405},
	issue        = 20,
	numpages     = 4,
	doi.hide     = {10.1103/PhysRevLett.95.200405}
}

@article{KLP19,
	title        = {{Anomalous weak values and contextuality: Robustness, tightness, and imaginary parts}},
	author       = {Kunjwal, Ravi and Lostaglio, Matteo and Pusey, Matthew F.},
	year         = 2019,
	month        = oct,
	journal      = {Phys. Rev. A},
	publisher    = {American Physical Society},
	volume       = 100,
	pages        = {042116},
	doi          = {10.1103/PhysRevA.100.042116},
	url          = {https://doi.org/10.1103/PhysRevA.100.042116},
	issue        = 4,
	numpages     = 19
}

@article{Kunjwal19,
	title        = {{Beyond the {C}abello-{S}everini-{W}inter framework: {M}aking sense of contextuality without sharpness of measurements}},
	author       = {Kunjwal, Ravi},
	year         = 2019,
	month        = sep,
	journal      = {{Quantum}},
	publisher    = {{Verein zur F{\"{o}}rderung des Open Access Publizierens in den Quantenwissenschaften}},
	volume       = 3,
	pages        = 184,
	doi          = {10.22331/q-2019-09-09-184},
	issn         = {2521-327X},
	url          = {https://doi.org/10.22331/q-2019-09-09-184}
}

@article{Kunjwal20,
	title        = {Hypergraph framework for irreducible noncontextuality inequalities from logical proofs of the {K}ochen-{S}pecker theorem},
	author       = {Kunjwal, Ravi},
	year         = 2020,
	month        = jan,
	journal      = {{Quantum}},
	publisher    = {{Verein zur F{\"{o}}rderung des Open Access Publizierens in den Quantenwissenschaften}},
	volume       = 4,
	pages        = 219,
	doi          = {10.22331/q-2020-01-10-219},
	issn         = {2521-327X},
	url          = {https://doi.org/10.22331/q-2020-01-10-219}
}

@article{Kunjwal16,
	title        = {{Contextuality beyond the Kochen-Specker theorem}},
	author       = {Kunjwal, Ravi},
	year         = 2016,
	journal      = {arXiv:1612.07250},
	url          = {https://arxiv.org/abs/1612.07250}
}

@article{Roch2021,
  title = {Quantum vs Noncontextual Semi-Device-Independent Randomness Certification},
  author = {Roch i Carceller, Carles and Flatt, Kieran and Lee, Hanwool and Bae, Joonwoo and Brask, Jonatan Bohr},
  journal = {Phys. Rev. Lett.},
  volume = {129},
  issue = {5},
  pages = {050501},
  numpages = {6},
  year = {2022},
  month = {Jul},
  publisher = {American Physical Society},
  doi = {10.1103/PhysRevLett.129.050501},
  url = {https://link.aps.org/doi/10.1103/PhysRevLett.129.050501}
}

@article{Yadavalli2020,
  doi = {10.22331/q-2022-10-13-839},
  url = {https://doi.org/10.22331/q-2022-10-13-839},
  title = {Contextuality in entanglement-assisted one-shot classical communication},
  author = {Yadavalli, Shiv Akshar and Kunjwal, Ravi},
  journal = {{Quantum}},
  issn = {2521-327X},
  publisher = {{Verein zur F{\"{o}}rderung des Open Access Publizierens in den Quantenwissenschaften}},
  volume = {6},
  pages = {839},
  month = oct,
  year = {2022}
}

@article{fonseca2024robustness,
  title = {Robustness of contextuality under different types of noise as quantifiers for parity-oblivious multiplexing tasks},
  author = {Fonseca, Amanda M. and Rossi, Vinicius P. and Baldij\~ao, Roberto D. and Selby, John H. and Sainz, Ana Bel\'en},
  journal = {Phys. Rev. A},
  volume = {111},
  issue = {2},
  pages = {022217},
  numpages = {12},
  year = {2025},
  month = {Feb},
  publisher = {American Physical Society},
  doi = {10.1103/PhysRevA.111.022217},
  url = {https://link.aps.org/doi/10.1103/PhysRevA.111.022217}
}

@article{Hardy,
	title        = {{Quantum Theory From Five Reasonable Axioms}},
	author       = {{Hardy}, L.},
	year         = 2001,
	month        = jan,
	journal      = {arXiv:quant-ph/0101012},
	eprint.hide       = {quant-ph/0101012},
	url = {https://arxiv.org/abs/quant-ph/0101012},
	keywords     = {Quantum Physics, High Energy Physics - Theory},
	adsurl       = {http://adsabs.harvard.edu/abs/2001quant.ph..1012H},
	adsnote      = {Provided by the SAO/NASA Astrophysics Data System}
}

@article{specker,
	title        = {Fine's theorem, noncontextuality, and correlations in Specker's scenario},
	author       = {Kunjwal, Ravi},
	year         = 2015,
	month        = feb,
	journal      = {Phys. Rev. A},
	publisher    = {American Physical Society},
	volume       = 91,
	pages        = {022108},
	doi          = {10.1103/PhysRevA.91.022108},
	url          = {https://link.aps.org/doi/10.1103/PhysRevA.91.022108},
	issue        = 2,
	numpages     = 11
}

@article{Gonda2018almostquantum,
  doi = {10.22331/q-2018-08-27-87},
  url = {https://doi.org/10.22331/q-2018-08-27-87},
  title = {Almost {Q}uantum {C}orrelations are {I}nconsistent with {S}pecker's {P}rinciple},
  author = {Gonda, Tom{\'{a}}{\v{s}} and Kunjwal, Ravi and Schmid, David and Wolfe, Elie and Sainz, Ana Bel{\'{e}}n},
  journal = {{Quantum}},
  issn = {2521-327X},
  publisher = {{Verein zur F{\"{o}}rderung des Open Access Publizierens in den Quantenwissenschaften}},
  volume = {2},
  pages = {87},
  month = aug,
  year = {2018}
}

@article{rossi2023contextuality,
  title = {Contextuality with vanishing coherence and maximal robustness to dephasing},
  author = {Rossi, Vinicius P. and Schmid, David and Selby, John H. and Sainz, Ana Bel\'en},
  journal = {Phys. Rev. A},
  volume = {108},
  issue = {3},
  pages = {032213},
  numpages = {10},
  year = {2023},
  month = {Sep},
  publisher = {American Physical Society},
  doi = {10.1103/PhysRevA.108.032213},
  url = {https://link.aps.org/doi/10.1103/PhysRevA.108.032213}
}

@Article{Shin2021,
AUTHOR = {Shin, Jaehee and Ha, Donghoon and Kwon, Younghun},
TITLE = {Quantum Contextual Advantage Depending on Nonzero Prior Probabilities in State Discrimination of Mixed Qubit States},
JOURNAL = {Entropy},
VOLUME = {23},
YEAR = {2021},
NUMBER = {12},
ARTICLE-NUMBER = {1583},
URL = {https://www.mdpi.com/1099-4300/23/12/1583},
ISSN = {1099-4300},
DOI = {10.3390/e23121583}
}

@article{mukherjee2021discriminating,
  title = {Discriminating three mirror-symmetric states with a restricted contextual advantage},
  author = {Mukherjee, Sumit and Naonit, Shivam and Pan, A. K.},
  journal = {Phys. Rev. A},
  volume = {106},
  issue = {1},
  pages = {012216},
  numpages = {9},
  year = {2022},
  month = {Jul},
  publisher = {American Physical Society},
  doi = {10.1103/PhysRevA.106.012216},
  url = {https://link.aps.org/doi/10.1103/PhysRevA.106.012216}
}

@article{flatt2021contextual,
  title = {Contextual Advantages and Certification for Maximum-Confidence Discrimination},
  author = {Flatt, Kieran and Lee, Hanwool and Carceller, Carles Roch I and Brask, Jonatan Bohr and Bae, Joonwoo},
  journal = {PRX Quantum},
  volume = {3},
  issue = {3},
  pages = {030337},
  numpages = {19},
  year = {2022},
  month = {Sep},
  publisher = {American Physical Society},
  doi = {10.1103/PRXQuantum.3.030337},
  url = {https://link.aps.org/doi/10.1103/PRXQuantum.3.030337}
}

@article{coecke2016mathematical,
	title        = {A mathematical theory of resources},
	author       = {Coecke, Bob and Fritz, Tobias and Spekkens, Robert W},
	year         = 2016,
	month        = oct,
	journal      = {Info. Comp.},
	publisher    = {Elsevier},
	volume       = 250,
	pages        = {59--86},
	doi          = {10.1016/j.ic.2016.02.008}
}

@article{lostaglio2018,
  title = {Quantum Fluctuation Theorems, Contextuality, and Work Quasiprobabilities},
  author = {Lostaglio, Matteo},
  journal = {Phys. Rev. Lett.},
  volume = {120},
  issue = {4},
  pages = {040602},
  numpages = {6},
  year = {2018},
  month = {Jan},
  publisher = {American Physical Society},
  doi = {10.1103/PhysRevLett.120.040602},
  url = {https://link.aps.org/doi/10.1103/PhysRevLett.120.040602}
}

@inproceedings{mellies2006functorial,
	title        = {Functorial boxes in string diagrams},
	author       = {Melli{\`e}s, Paul-Andr{\'e}},
	year         = 2006,
	booktitle    = {International Workshop on Computer Science Logic},
	pages        = {1--30},
	url          = {https://link.springer.com/chapter/10.1007/11874683_1},
	organization = {Springer}
}

@article{Schmidcausal,
	title        = {Why initial system-environment correlations do not imply the failure of complete positivity: A causal perspective},
	author       = {Schmid, David and Ried, Katja and Spekkens, Robert W.},
	year         = 2019,
	month        = aug,
	journal      = {Phys. Rev. A},
	publisher    = {American Physical Society},
	volume       = 100,
	pages        = {022112},
	doi          = {10.1103/PhysRevA.100.022112},
	url          = {https://link.aps.org/doi/10.1103/PhysRevA.100.022112},
	issue        = 2,
	numpages     = 26
}

@article{chiribella2010probabilistic,
	title        = {Probabilistic theories with purification},
	author       = {Chiribella, Giulio and D'Ariano, Giacomo Mauro and Perinotti, Paolo},
	year         = 2010,
	month        = jun,
	journal      = {Phys. Rev. A},
	publisher    = {American Physical Society},
	volume       = 81,
	pages        = {062348},
	doi          = {10.1103/PhysRevA.81.062348},
	url          = {https://link.aps.org/doi/10.1103/PhysRevA.81.062348},
	issue        = 6,
	numpages     = 40
}

@article{wood2015lesson,
	title        = {The lesson of causal discovery algorithms for quantum correlations: Causal explanations of {Bell}-inequality violations require fine-tuning},
	author       = {Wood, Christopher J and Spekkens, Robert W},
	year         = 2015,
	journal      = {New J. Phys.},
	publisher    = {IOP Publishing},
	volume       = 17,
	number       = 3,
	pages        = {033002},
	url          = {https://dx.doi.org/10.1088/1367-2630/17/3/033002}
}

@article{schmid2018contextual,
	title        = {Contextual advantage for state discrimination},
	author       = {Schmid, David and Spekkens, Robert W},
	year         = 2018,
	journal      = {Phys. Rev. X},
	publisher    = {APS},
	volume       = 8,
	number       = 1,
	pages        = {011015},
	url          = {https://journals.aps.org/prx/abstract/10.1103/PhysRevX.8.011015}
}

@article{shahandeh2021quantum,
  title={Quantum computational advantage implies contextuality},
  author={Shahandeh, Farid},
  journal={arXiv:2112.00024},
  year={2021},
url = {https://arxiv.org/abs/2112.00024}
}

@article{baldijao2021noncontextuality,
	title        = {{Emergence of Noncontextuality under Quantum Darwinism}},
	author       = {Baldij\~ao, Roberto D. and Wagner, Rafael and Duarte, Cristhiano and Amaral, B\'arbara and Cunha, Marcelo Terra},
	year         = 2021,
	month        = {Sep},
	journal      = {PRX Quantum},
	publisher    = {American Physical Society},
	volume       = 2,
	pages        = {030351},
	doi          = {10.1103/PRXQuantum.2.030351},
	url          = {https://link.aps.org/doi/10.1103/PRXQuantum.2.030351},
	issue        = 3,
	numpages     = 17,
}

@article{SchmidGPT,
	title        = {{Characterization of Noncontextuality in the Framework of Generalized Probabilistic Theories}},
	author       = {Schmid, David and Selby, John H. and Wolfe, Elie and Kunjwal, Ravi and Spekkens, Robert W.},
	year         = 2021,
	month        = feb,
	journal      = {PRX Quantum},
	publisher    = {American Physical Society},
	volume       = 2,
	pages        = {010331},
	doi          = {10.1103/PRXQuantum.2.010331},
	url          = {https://link.aps.org/doi/10.1103/PRXQuantum.2.010331},
	issue        = 1,
	numpages     = 12
}

@article{catani2022nonclassical,
  title = {What is Nonclassical about Uncertainty Relations?},
  author = {Catani, Lorenzo and Leifer, Matthew and Scala, Giovanni and Schmid, David and Spekkens, Robert W.},
  journal = {Phys. Rev. Lett.},
  volume = {129},
  issue = {24},
  pages = {240401},
  numpages = {6},
  year = {2022},
  month = {Dec},
  publisher = {American Physical Society},
  doi = {10.1103/PhysRevLett.129.240401},
  url = {https://link.aps.org/doi/10.1103/PhysRevLett.129.240401}
}

@article{bartlett2012reconstruction,
	title        = {{Reconstruction of Gaussian quantum mechanics from Liouville mechanics with an epistemic restriction}},
	author       = {Bartlett, Stephen D and Rudolph, Terry and Spekkens, Robert W},
	year         = 2012,
	journal      = {Phys. Rev. A},
	publisher    = {APS},
	volume       = 86,
	number       = 1,
	pages        = {012103},
	url          = {https://journals.aps.org/pra/abstract/10.1103/PhysRevA.86.012103}
}

@article{RAC2,
	title        = {Parity oblivious d-level random access codes and class of noncontextuality inequalities},
	author       = {Ambainis, Andris and Banik, Manik and Chaturvedi, Anubhav and Kravchenko, Dmitry and Rai, Ashutosh},
	year         = 2019,
	journal      = {Quantum Inf. Process.},
	publisher    = {Springer},
	volume       = 18,
	number       = 4,
	pages        = 111,
	url          = {https://link.springer.com/article/10.1007/s11128-019-2228-3}
}

@article{SpekkensCut,
	title        = {Motility of the internal-external cut as a foundational principle},
	author       = {Spekkens, Robert},
	journal      = {PIRSA:18040073},
	url          = {https://www2.perimeterinstitute.ca/videos/motility-internal-external-cut-foundational-principle},
	year ={2018}
}

@article{catani2022reply,
  title={Reply to ``Comment on `Why interference phenomena do not capture the essence of quantum theory'''},
  author={Catani, Lorenzo and Leifer, Matthew and Schmid, David and Spekkens, Robert W},
  journal={arXiv:2207.11791},
  year={2022},
  url={https://arxiv.org/abs/2207.11791}
}

@article{nielsen2001quantum,
	title        = {Quantum computation and quantum information},
	author       = {Nielsen, Michael A and Chuang, Isaac L},
	year         = 2001,
	journal      = {Phys. Today},
	volume       = 54,
	number       = 2,
	pages        = 60,
	url          = {https://www.cambridge.org/core/books/quantum-computation-and-quantum-information/01E10196D0A682A6AEFFEA52D53BE9AE}
}

@article{Schmid2023understanding,
  doi = {10.22331/q-2023-12-04-1194},
  url = {https://doi.org/10.22331/q-2023-12-04-1194},
  title = {Understanding the interplay of entanglement and nonlocality: motivating and developing a new branch of entanglement theory},
  author = {Schmid, David and Fraser, Thomas C. and Kunjwal, Ravi and Sainz, Ana Belen and Wolfe, Elie and Spekkens, Robert W.},
  journal = {{Quantum}},
  issn = {2521-327X},
  publisher = {{Verein zur F{\"{o}}rderung des Open Access Publizierens in den Quantenwissenschaften}},
  volume = {7},
  pages = {1194},
  month = dec,
  year = {2023}
}

@article{zhang2025cost,
  doi = {10.22331/q-2025-10-31-1902},
  url = {https://doi.org/10.22331/q-2025-10-31-1902},
  title = {Cost of {S}imulating {E}ntanglement in {S}teering {S}cenarios},
  author = {Zhang, Yujie and Zhang, Jiaxuan and Chitambar, Eric},
  journal = {{Quantum}},
  issn = {2521-327X},
  publisher = {{Verein zur F{\"{o}}rderung des Open Access Publizierens in den Quantenwissenschaften}},
  volume = {9},
  pages = {1902},
  month = oct,
  year = {2025}
}

@article{Schmid2020typeindependent,
	title        = {The type-independent resource theory of local operations and shared randomness},
	author       = {Schmid, David and Rosset, Denis and Buscemi, Francesco},
	year         = 2020,
	month        = apr,
	journal      = {{Quantum}},
	publisher    = {{Verein zur F{\"{o}}rderung des Open Access Publizierens in den Quantenwissenschaften}},
	volume       = 4,
	pages        = 262,
	doi          = {10.22331/q-2020-04-30-262},
	issn         = {2521-327X},
	url          = {https://doi.org/10.22331/q-2020-04-30-262}
}

@article{riesz1914demonstration,
  author  = {Riesz, Fr{\'e}d{\'e}ric},
  title   = {D{\'e}monstration nouvelle d'un th{\'e}or{\`e}me concernant les op{\'e}rations fonctionnelles lin{\'e}aires},
  journal = {Annales scientifiques de l'{\'E}cole Normale Sup{\'e}rieure},
  series  = {3e s{\'e}rie},
  volume  = {31},
  pages   = {9--14},
  year    = {1914},
  doi     = {10.24033/asens.669},
  url     = {https://www.numdam.org/articles/10.24033/asens.669/}
}

@article{kunjwal2018from,
	title        = {From statistical proofs of the Kochen-Specker theorem to noise-robust noncontextuality inequalities},
	author       = {Kunjwal, Ravi and Spekkens, Robert W.},
	year         = 2018,
	month        = may,
	journal      = {Phys. Rev. A},
	publisher    = {American Physical Society},
	volume       = 97,
	pages        = {052110},
	doi          = {10.1103/PhysRevA.97.052110},
	url          = {https://link.aps.org/doi/10.1103/PhysRevA.97.052110},
	issue        = 5,
	numpages     = 13,
}

@article{Schmid2024structuretheorem,
  doi = {10.22331/q-2024-03-14-1283},
  url = {https://doi.org/10.22331/q-2024-03-14-1283},
  title = {A structure theorem for generalized-noncontextual ontological models},
  author = {Schmid, David and Selby, John H. and Pusey, Matthew F. and Spekkens, Robert W.},
  journal = {{Quantum}},
  issn = {2521-327X},
  publisher = {{Verein zur F{\"{o}}rderung des Open Access Publizierens in den Quantenwissenschaften}},
  volume = {8},
  pages = {1283},
  month = mar,
  year = {2024}
}

@article{Renner2024,
  title = {Compatibility of Generalized Noisy Qubit Measurements},
  author = {Renner, Martin J.},
  journal = {Phys. Rev. Lett.},
  volume = {132},
  issue = {25},
  pages = {250202},
  numpages = {8},
  year = {2024},
  month = {Jun},
  publisher = {American Physical Society},
  doi = {10.1103/PhysRevLett.132.250202},
  url = {https://link.aps.org/doi/10.1103/PhysRevLett.132.250202}
}

@article{Schmid2024reviewreformulation,
  doi = {10.22331/q-2024-01-03-1217},
  url = {https://doi.org/10.22331/q-2024-01-03-1217},
  title = {A review and reformulation of macroscopic realism: resolving its deficiencies using the framework of generalized probabilistic theories},
  author = {Schmid, David},
  journal = {{Quantum}},
  issn = {2521-327X},
  publisher = {{Verein zur F{\"{o}}rderung des Open Access Publizierens in den Quantenwissenschaften}},
  volume = {8},
  pages = {1217},
  month = jan,
  year = {2024}
}

@article{Catani2023whyinterference,
  doi = {10.22331/q-2023-09-25-1119},
  url = {https://doi.org/10.22331/q-2023-09-25-1119},
  title = {Why interference phenomena do not capture the essence of quantum theory},
  author = {Catani, Lorenzo and Leifer, Matthew and Schmid, David and Spekkens, Robert W.},
  journal = {{Quantum}},
  issn = {2521-327X},
  publisher = {{Verein zur F{\"{o}}rderung des Open Access Publizierens in den Quantenwissenschaften}},
  volume = {7},
  pages = {1119},
  month = sep,
  year = {2023}
}

@article{Wolfe2020quantifyingbell,
	title        = {Quantifying {B}ell: the {R}esource {T}heory of {N}onclassicality of {C}ommon-{C}ause {B}oxes},
	author       = {Wolfe, Elie and Schmid, David and Sainz, Ana Bel{\'{e}}n and Kunjwal, Ravi and Spekkens, Robert W.},
	year         = 2020,
	month        = jun,
	journal      = {{Quantum}},
	volume       = 4,
	pages        = 280,
	doi          = {10.22331/q-2020-06-08-280},
	issn         = {2521-327X},
	url          = {https://doi.org/10.22331/q-2020-06-08-280}
}

@article{sq,
	title        = {All Entangled Quantum States Are Nonlocal},
	author       = {Buscemi, Francesco},
	year         = 2012,
	month        = may,
	journal      = {Phys. Rev. Lett.},
	publisher    = {American Physical Society},
	volume       = 108,
	pages        = 200401,
	doi          = {10.1103/PhysRevLett.108.200401},
	url          = {https://link.aps.org/doi/10.1103/PhysRevLett.108.200401},
	issue        = 20,
	numpages     = 5
}

@article{wisesteer,
	title        = {Steering, Entanglement, Nonlocality, and the {Einstein}-{Podolsky}-{Rosen} Paradox},
	author       = {Wiseman, H. M. and Jones, S. J. and Doherty, A. C.},
	year         = 2007,
	month        = apr,
	journal      = {Phys. Rev. Lett.},
	publisher    = {American Physical Society},
	volume       = 98,
	pages        = 140402,
	doi          = {10.1103/PhysRevLett.98.140402},
	url          = {https://link.aps.org/doi/10.1103/PhysRevLett.98.140402},
	issue        = 14,
	numpages     = 4
}

@article{Schmid2022Stabilizer,
  title = {Uniqueness of Noncontextual Models for Stabilizer Subtheories},
  author = {Schmid, David and Du, Haoxing and Selby, John H. and Pusey, Matthew F.},
  journal = {Phys. Rev. Lett.},
  volume = {129},
  issue = {12},
  pages = {120403},
  numpages = {6},
  year = {2022},
  month = {Sep},
  publisher = {American Physical Society},
  doi = {10.1103/PhysRevLett.129.120403},
  url = {https://link.aps.org/doi/10.1103/PhysRevLett.129.120403}
}

@article{Pusey2013,
	title        = {{Negativity and steering: A stronger Peres conjecture}},
	author       = {Pusey, Matthew F.},
	year         = 2013,
	month        = sep,
	journal      = {Phys. Rev. A},
	publisher    = {American Physical Society ({APS})},
	volume       = 88,
	number       = 3,
	pages        = {032313},
	doi          = {10.1103/PhysRevA.88.032313},
	issn         = {1050-2947},
	url          = {http://link.aps.org/doi/10.1103/PhysRevA.88.032313}
}

@article{Bennett1999,
  title = {Unextendible Product Bases and Bound Entanglement},
  author = {Bennett, Charles H. and DiVincenzo, David P. and Mor, Tal and Shor, Peter W. and Smolin, John A. and Terhal, Barbara M.},
  journal = {Phys. Rev. Lett.},
  volume = {82},
  issue = {26},
  pages = {5385--5388},
  numpages = {0},
  year = {1999},
  month = {Jun},
  publisher = {American Physical Society},
  doi = {10.1103/PhysRevLett.82.5385},
  url = {https://link.aps.org/doi/10.1103/PhysRevLett.82.5385}
}

@article{catani2023aspects,
  title = {Aspects of the phenomenology of interference that are genuinely nonclassical},
  author = {Catani, Lorenzo and Leifer, Matthew and Scala, Giovanni and Schmid, David and Spekkens, Robert W.},
  journal = {Phys. Rev. A},
  volume = {108},
  issue = {2},
  pages = {022207},
  numpages = {11},
  year = {2023},
  month = {Aug},
  publisher = {American Physical Society},
  doi = {10.1103/PhysRevA.108.022207},
  url = {https://link.aps.org/doi/10.1103/PhysRevA.108.022207}
}

@article{Wagner2024coherence,
  doi = {10.22331/q-2024-02-05-1240},
  url = {https://doi.org/10.22331/q-2024-02-05-1240},
  title = {Coherence and contextuality in a {M}ach-{Z}ehnder interferometer},
  author = {Wagner, Rafael and Camillini, Anita and Galv{\~{a}}o, Ernesto F.},
  journal = {{Quantum}},
  issn = {2521-327X},
  publisher = {{Verein zur F{\"{o}}rderung des Open Access Publizierens in den Quantenwissenschaften}},
  volume = {8},
  pages = {1240},
  month = feb,
  year = {2024}
}

@article{PhysRevResearch.2.013011,
  title = {Measurement incompatibility and steering are necessary and sufficient for operational contextuality},
  author = {Tavakoli, Armin and Uola, Roope},
  journal = {Phys. Rev. Research},
  volume = {2},
  issue = {1},
  pages = {013011},
  numpages = {7},
  year = {2020},
  month = {Jan},
  publisher = {American Physical Society},
  doi = {10.1103/PhysRevResearch.2.013011},
  url = {https://link.aps.org/doi/10.1103/PhysRevResearch.2.013011}
}

@article{PhysRevLett.119.220402,
  title = {Communication Games Reveal Preparation Contextuality},
  author = {Hameedi, Alley and Tavakoli, Armin and Marques, Breno and Bourennane, Mohamed},
  journal = {Phys. Rev. Lett.},
  volume = {119},
  issue = {22},
  pages = {220402},
  numpages = {5},
  year = {2017},
  month = {Nov},
  publisher = {American Physical Society},
  doi = {10.1103/PhysRevLett.119.220402},
  url = {https://link.aps.org/doi/10.1103/PhysRevLett.119.220402}
}

@article{Zjawin2023resourcetheoryof,
  doi = {10.22331/q-2023-10-10-1134},
  url = {https://doi.org/10.22331/q-2023-10-10-1134},
  title = {The resource theory of nonclassicality of channel assemblages},
  author = {Zjawin, Beata and Schmid, David and Hoban, Matty J. and Sainz, Ana Bel{\'{e}}n},
  journal = {{Quantum}},
  issn = {2521-327X},
  publisher = {{Verein zur F{\"{o}}rderung des Open Access Publizierens in den Quantenwissenschaften}},
  volume = {7},
  pages = {1134},
  month = oct,
  year = {2023}
}

@misc{zhang2025forthcoming,
      title={All entangled states can be certified by a violation of noncontextuality inequalities}, 
      author={Yujie Zhang and Jonah Spodek and David Schmid and Carter Reid and Liam J. Morrison and Thomas Jennewein and Kevin J. Resch and Robert W. Spekkens},
      year={2025},
      eprint={2507.01122},
      archivePrefix={arXiv},
      primaryClass={quant-ph},
      url={https://arxiv.org/abs/2507.01122}, 
}

@article{zhang2026forthcoming,
  title={Towards resource theory of nonclassical processes: compositional structure of nonclassical processes},
  author={Yujie Zhang and Yìlè Yīng and David Schmid and Robert W. Spekkens},
  journal={Forthcoming},
  year={2026}
}

@article{operationalks,
  title = {{From the Kochen-Specker Theorem to Noncontextuality Inequalities without Assuming Determinism}},
  author = {Kunjwal, Ravi and Spekkens, Robert W.},
  journal = {Phys. Rev. Lett.},
  volume = {115},
  issue = {11},
  pages = {110403},
  numpages = {5},
  year = {2015},
  month = {Sep},
  publisher = {American Physical Society},
  doi = {10.1103/PhysRevLett.115.110403},
  url = {https://link.aps.org/doi/10.1103/PhysRevLett.115.110403}
}

@article{catani2024resource,
      title={Resource-theoretic hierarchy of contextuality for general probabilistic theories}, 
      author={Lorenzo Catani and Thomas D. Galley and Tomáš Gonda},
      year={2024},
      eprint={2406.00717},
      journal={arXiv},
      primaryClass={quant-ph},
      url={https://arxiv.org/abs/2406.00717}, 
}

@article{schmid2024PTM,
      title={Noncontextuality inequalities for prepare-transform-measure scenarios}, 
      author={David Schmid and Roberto D. Baldijão and John H. Selby and Ana Belén Sainz and Robert W. Spekkens},
      year={2024},
      eprint={2407.09624},
      journal={arXiv},
      primaryClass={quant-ph},
      url={https://arxiv.org/abs/2407.09624}, 
}

@article{mazurek2016experimental,
  title={An experimental test of noncontextuality without unphysical idealizations},
  author={Mazurek, Michael D and Pusey, Matthew F and Kunjwal, Ravi and Resch, Kevin J and Spekkens, Robert W},
  journal={Nature communications},
  volume={7},
  number={1},
  pages={1--7},
  year={2016},
  publisher={Nature Publishing Group},
  url={https://www.nature.com/articles/ncomms11780},
  doi={10.1038/ncomms11780}
}

@misc{schmid2020unscrambling,
      title={Unscrambling the omelette of causation and inference: The framework of causal-inferential theories},
      author={David Schmid and John H. Selby and Robert W. Spekkens},
      year={2021},
      eprint={2009.03297},
      archivePrefix={arXiv},
      primaryClass={quant-ph}
}

@article{comar2024contextuality,
      title={Contextuality in anomalous heat flow}, 
      author={Naim Elias Comar and Danilo Cius and Luis Felipe Santos and Rafael Wagner and Bárbara Amaral},
      year={2024},
      eprint={2406.09715},
      journal={arXiv},
    url={https://arxiv.org/pdf/2405.04573}
}

@article{jokinen2024nobroadcasting,
      title={No-broadcasting characterizes operational contextuality}, 
      author={Pauli Jokinen and Mirjam Weilenmann and Martin Plávala and Juha-Pekka Pellonpää and Jukka Kiukas and Roope Uola},
      year={2024},
      eprint={2406.07305},
      journal={arXiv},
    url={https://arxiv.org/abs/2406.07305}
}

@book{christensen2016introduction,
  title={An Introduction to Frames and Riesz Bases},
  author={Christensen, O.},
  isbn={9783319256139},
  series={Applied and Numerical Harmonic Analysis},
  url={https://books.google.ca/books?id=KGRBDAAAQBAJ},
  year={2016},
  publisher={Springer International Publishing}
}

@article{Seevinck2002,
  title = {Bell-Type Inequalities for Partial Separability in $N$-Particle Systems and Quantum Mechanical Violations},
  author = {Seevinck, Michael and Svetlichny, George},
  journal = {Phys. Rev. Lett.},
  volume = {89},
  issue = {6},
  pages = {060401},
  numpages = {4},
  year = {2002},
  month = {Jul},
  publisher = {American Physical Society},
  doi = {10.1103/PhysRevLett.89.060401},
  url = {https://link.aps.org/doi/10.1103/PhysRevLett.89.060401}
}

@article{Toth2005,
  title = {Entanglement detection in the stabilizer formalism},
  author = {T\'oth, G\'eza and G\"uhne, Otfried},
  journal = {Phys. Rev. A},
  volume = {72},
  issue = {2},
  pages = {022340},
  numpages = {14},
  year = {2005},
  month = {Aug},
  publisher = {American Physical Society},
  doi = {10.1103/PhysRevA.72.022340},
  url = {https://link.aps.org/doi/10.1103/PhysRevA.72.022340}
}
\section{Appendix}
\appendix
\renewcommand{\addcontentsline}[3]{}
\section{Frame representations}\label{framerepn}
Consider a complex Hilbert space $\mc{H}$ with some finite
dimension $d$. The Hermitian operators $\text{Herm}(\mc{H})$ on $\mc{H}$ form a real vector space of dimension $d^2$ with the Hilbert-Schmidt inner product $\langle A,B\rangle \coloneqq\tr[AB]$ for $A,B \in \text{Herm}(\mc{H})$. \par 

Therefore, some standard vector space concepts can be applied to the operator space. For instance, it is well-known that a basis of a vector space uniquely defines a dual basis in the dual vector space (i.e., the space of functionals on the vector space). We will leverage this fact, but in a slightly different form: 
\begin{lemma}\label{lemmaunique}
Given any basis $\{F_\lambda\}_\lambda$ for a $d^2$-dimensional real vector space $\text{Herm}(\mc{H})$ of Hermitian operators on a Hilbert space $\mathcal{H}$, there is a unique set $\{D_\lambda\}_\lambda$ of $d^2$ Hermitian operators satisfying
\begin{equation}
\label{eq:dualFrame1}
\tr(D_{\lambda'} F_\lambda)=\delta_{\lambda,\lambda'},
\end{equation}
and $\{D_\lambda\}_\lambda$ also forms a basis for $\text{Herm}(\mc{H})$.
\end{lemma}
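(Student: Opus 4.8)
The plan is to recognize this as the standard existence-and-uniqueness statement for a dual basis with respect to a nondegenerate symmetric bilinear form, specialized to the real inner product space $(\text{Herm}(\mc{H}),\langle\cdot,\cdot\rangle)$ with $\langle A,B\rangle \coloneqq \tr[AB]$. Everything reduces to the nondegeneracy (indeed positive-definiteness) of this form together with elementary linear algebra.

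First I would verify that the Hilbert--Schmidt form is a genuine inner product on $\text{Herm}(\mc{H})$. For Hermitian $A$ one has $\langle A,A\rangle = \tr[A^2] = \sum_i a_i^2 \ge 0$, where the $a_i$ are the real eigenvalues of $A$, with equality if and only if $A=\mbb{0}$. Hence the form is positive definite, and in particular nondegenerate, on the $d^2$-dimensional real space $\text{Herm}(\mc{H})$.

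For existence, I would form the Gram matrix $G_{\lambda\lambda'}\coloneqq \tr[F_\lambda F_{\lambda'}]$. Because $\{F_\lambda\}_\lambda$ is a basis and the form is positive definite, $G$ is a real symmetric positive-definite matrix and is therefore invertible. I then define $D_\lambda \coloneqq \sum_\mu (G^{-1})_{\lambda\mu} F_\mu$, which is a real linear combination of Hermitian operators and hence Hermitian. A direct computation gives $\tr[D_\lambda F_{\lambda'}] = \sum_\mu (G^{-1})_{\lambda\mu} G_{\mu\lambda'} = \delta_{\lambda,\lambda'}$, establishing Eq.~\eqref{eq:dualFrame1}. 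For uniqueness, suppose $\{D'_\lambda\}_\lambda$ also satisfies the defining condition; then $\tr[(D_\lambda - D'_\lambda)F_{\lambda'}]=0$ for every $\lambda'$, and since the $F_{\lambda'}$ span $\text{Herm}(\mc{H})$ and the form is nondegenerate, $D_\lambda = D'_\lambda$. Finally, $\{D_\lambda\}_\lambda$ is itself a basis: it is the image of the basis $\{F_\mu\}_\mu$ under the invertible linear map encoded by $G^{-1}$; equivalently, any dependence $\sum_\lambda c_\lambda D_\lambda = \mbb{0}$ forces $c_{\lambda'} = \tr[(\sum_\lambda c_\lambda D_\lambda)F_{\lambda'}] = 0$ for all $\lambda'$.

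The main obstacle is that there is essentially no substantive obstacle---this is a textbook fact about dual bases. The only point warranting a moment's care is confirming positive-definiteness (hence nondegeneracy) of the trace form specifically on the \emph{real} space of Hermitian operators, as opposed to treating $\tr[AB]$ as a form on all of $\mc{L}(\mc{H})$, where the Hermiticity of both factors is precisely what guarantees $\tr[A^2]\ge 0$. Once that is in hand, invertibility of the Gram matrix and the entire argument follow immediately.
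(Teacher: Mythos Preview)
Your proof is correct. It differs in approach from the paper's, which argues via the dual space: the paper notes that a basis $\{F_\lambda\}$ of $\text{Herm}(\mc{H})$ determines a unique dual basis $\{\mathcal{D}_\lambda\}$ of linear functionals in $\text{Herm}(\mc{H})^*$ satisfying $\mathcal{D}_{\lambda'}(F_\lambda)=\delta_{\lambda,\lambda'}$, and then invokes the Riesz representation theorem to write each $\mathcal{D}_\lambda$ as $\tr[(\cdot)D_\lambda]$ for a unique Hermitian $D_\lambda$. Your Gram-matrix argument is more explicit and constructive, producing the closed-form expression $D_\lambda=\sum_\mu (G^{-1})_{\lambda\mu}F_\mu$, and it directly verifies that the $D_\lambda$ are Hermitian and form a basis without appealing to Riesz. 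The paper's route is slightly more conceptual and emphasizes the dual-space picture that underlies frame representations elsewhere in the appendix; your route buys a concrete formula and avoids any named theorems beyond basic linear algebra. Either is entirely adequate for this elementary lemma.
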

 \begin{proof} 
Consider any basis $\{ F_\lambda\}_\lambda$ of $\text{Herm}(\mc{H})$. It uniquely specifies a basis $\{\mathcal{D}_\lambda\}_\lambda$ of the dual vector space $\text{Herm}(\mc{H})^*$, where $\{\mathcal{D}_\lambda\}_\lambda$ are linear functionals satisfying $\mathcal{D}_{\lambda'}(F_\lambda)=\delta_{\lambda,\lambda'}$.\footnote{To see that this is unique, consider a linear functional $\mathcal{D}'_{\lambda'}$ satisfying ${D}'_{\lambda'}(F_\lambda) = \delta_{\lambda,\lambda'}$ for all $\lambda$. Since a linear functional is fully specified by its action on a basis, $\mathcal{D}'_{\lambda'}$ is the exact same functional as $\mathcal{D}_{\lambda'}$.} Now, in order to obtain again a set of Hermitian operators $\{D_\lambda\}_\lambda$, we use the Riesz representation Theorem~\cite{riesz1914demonstration}, which states that each of these linear functionals $\mathcal{D}_\lambda$ can be written as the Hilbert-Schmidt inner product with a unique Hermitian operator $D_\lambda$, namely
$\mathcal{D}_{\lambda}(\cdot)= \tr{ [ (\cdot) D_{\lambda}] }.$
This picks out a unique basis $\{D_{\lambda}\}_\lambda$ which satisfies Eq.~\eqref{eq:dualFrame1}. 
\end{proof}
More generally, one can define a {\em frame and a dual frame} as a generalization of the notion of a basis and a dual basis~\cite{Ferrie2008}, where the vectors in both the frame and the dual frame are assumed to be spanning, but may or may not be linearly independent~\cite {Ferrie2008}. If $\{F_{\lambda}\}_{\lambda}$ and $\{D_{\lambda}\}_{\lambda}$ denote a frame and a dual frame in $\text{Herm}(\mc{H})$, respectively, then it follows that, for any $A\in \text{Herm}(\mc{H})$, one has
\begin{align}
 & A=\sum_{\lambda}\tr[A D_{\lambda}]F_{\lambda} \notag \\
 & A=\sum_{\lambda}\tr[A F_{\lambda}]D_{\lambda}.
 \label{eq:statereconstr}
\end{align}
These reconstruction equations show explicitly how every vector in the space can be expanded in terms of the spanning frame operators or the spanning dual frame operators.
Note that, unlike dual bases, dual frames are not necessarily unique. A detailed exploration of frames, dual frames, and their applications in quantum information science lies beyond the scope of this letter. For a more comprehensive introduction, we refer the reader to Ref.~\cite{Ferrie2008}.

\section{Operational identities in a quantum circuit}
\label{sec:gauge}
\yujie 
In Subsection~\ref{classicalexplainability} of the main text, we discussed the classical explainability of a given prepare-transform-measure circuit, an analysis that relies on enumerating all operational identities in the quantum circuit. In this appendix, we make explicit the assumptions underlying that enumeration and justify why, under these assumptions, the operational identities listed in Section~\ref{classicalexplainability} are the only ones that are relevant. We also briefly explain how this family of operational identities would change under alternative modelling choices, and why we do not adopt those alternatives in our analysis.

When assessing classical explainability, one must account for all operational identities implied by linearity and diagram preservation. Concretely, one should (i) consider all quantum processes obtainable via finite parallel and sequential compositions of the set of processes under consideration, since, in principle, each such composite process can be inferred from the experimental statistics, and (ii) group these processes according to their quantum input and output system types, since an operational equivalence condition can only hold among processes that have the same quantum inputs and outputs.

The prepare-transform-measure scenario discussed in Section~\ref{classicalexplainability} contains a multi-state on system $A$: $\{\rho_{\cdot|x}^A\}_x$, a multi-channel from system $A$ to $B$: $\{\mc{E}_{\cdot|z}^{B|A}\}_z$, and a measurement on system $B$: $\{M_{b}^{B}\}_b$, as depicted in Figure~\ref{PTM}. Considering arbitrary compositions of these processes leads one to consider also a multi-state on system $B$: $\{\tilde{\rho}^B_{\cdot|xz}\}_{xz}$, a multi-measurement on system $A$: $\{\{\tilde{M}^{A}_{b|z}\}_b\}_z$ and a measure-and-prepare instrument from system $A$ to $B$: $\{\{\tilde{\mc{E}}^{B|A}_{b''|z'x'z''}\}_b\}_{xz}$. The operational identities among these processes are described in \cref{eq:opidentity,eq:op-PTM-t,eq:op-PTM-m,eq:op-PTM-s,eq:op-PTM-pr} in Section~\ref{classicalexplainability}.

Under the assumption that $A$ and $B$ are distinct system types, the list above is indeed the set of all operational identities. In particular, it \textit{does not} include any that refer to 
measure-and-prepare channels constructed by composing states $\rho^A_{\cdot|x}$ and effects $M^B_b$. This is the case since the only operational identities on these measure-and-prepare channels from $B$ to $A$ are implied by those holding among the individual states and among the individual effects. Unlike in the case of channels from $A$ to $B$, there are no other possible channels from $B$ to $A$ within this scenario that could be combined with these to lead to new operational identities.

In principle, one could also consider the case where $A$ and $B$ have the same system type. In this case, additional operational identities can arise, for example, relating $\rho^A_x$ and $\mc{E}^{B|A}_{\cdot|z}(\rho^A_{\cdot|x})$,  $M_b^A$ and $[\mc{E}^{B|A}_{\cdot|z}]^{\dagger }(M_b^A)$, or involving sequential compositions of the channels themselves (note that such sequential compositions are not possible if the system type $A$ and $B$ are distinct). 

Throughout the main text, we assume that any two distinct quantum inputs and outputs belong to different system types. This choice is motivated by the problem of gauge freedom from GPT tomography and gate-set tomography~\cite{zhang2025forthcoming, Mazurek2021, Nielsen2021}. Specifically, in self-consistent tomography, one does not obtain a unique output of the processes in a quantum circuit; rather, one obtains an equivalence class of representations related by invertible linear transformations on each system, which are referred to as tomographic {\em gauge freedoms}, following Ref.~\cite{Nielsen2021}. Without additional assumptions beyond the circuit data, there is no canonical identification between the representations for different quantum systems; therefore, treating two systems as the same ``type" is itself a gauge choice. 

Thus, by treating distinct quantum systems in the circuit as distinct types, we restrict ourselves to operational identities that are gauge-independent. 
Therefore, our result can be seen as a gauge-independent assessment of the nature of the process, which is readily experimentally implementable using schemes similar to various self-consistent tomographies~\cite{Nielsen2021, Mazurek2021, zhang2025forthcoming}. 
\blk
\section{Why the form of variability over dual processes is irrelevant} \label{labels}

\yujie 
In the main text (especially Section~\ref{variation}), we noted that there are many ways of ranging over all quantum states. For example, one can range over all quantum states using a multi-state process, where the setting variable ranges over all possible quantum states, or by post-selecting on the outcome of a source associated with the unitarily covariant ensemble of states, or via a multi-source with both setting and outcome variables. We noted that the different choices one can make in this regard make no difference for qualitative assessments of nonclassicality. \blk 
This is because every multi-source defines a collection of (possibly subnormalized) quantum states, and the ontological representation of a collection of states defines the associated collection of representations of the states; moreover, by the linearity of any ontological model, one can uniquely extend the ontological representation of the normalized states to an ontological representation of subnormalized versions of those same states.

The story is similar for measurements. An ontological model for some set of measurements constitutes a collection of representations of all of the effects appearing in those measurements. These response functions serve to define a valid and empirically adequate ontological model for any other set of measurements that contains the same set of effects. Consequently, the existence of a linear and empirically adequate mapping from the set of all quantum effects to their ontological representation as response functions implies the existence of such a mapping for any multi-measurement one could construct with those effects.

It is worth noting that the way of understanding why verdicts of nonclassicality are unchanged under flag-convexification (see lemma~\ref{lem:flag-convexified}). Whether one fixes a certain input setting value, or postselects on the corresponding flag variable taking that same value, one is simply choosing between two different physical instantiations of the same variability, so an ontological representation for one directly gives an ontological representation for the other.

\section{Proof of the sufficiency of considering factorizing duals }\label{secdualsthm}

We now prove Theorem~\ref{thmdual}, that the definition of nonclassicality obtained by considering all dual processes and the one obtained by considering all factorizing dual processes are the same. In other words, we prove that Definition~\ref{def:og_process} is equivalent to Definition~\ref{maindefn}.

In the following two subsections, we prove the result for bipartite states and for transformations, respectively. The generalization of the proof to arbitrary process types is straightforward.

Although the proof is lengthy, its basic idea is simple. First, we note that every quantum process with multipartite inputs and/or outputs can be written as a linear combination of processes that factorize across the inputs and outputs. From this and the assumption that an ontological model is diagram-preserving and linear, one can then uniquely extend any given ontological model defined for all product processes in quantum theory to a quasiprobabilistic representation of {\em all } processes in quantum theory, including the nonproduct ones. It remains only to show that for the relevant processes in question, this quasiprobabilistic representation is in fact a valid probabilistic representation (e.g., is positive and normalized appropriately), a fact which follows easily from frame representation theory~\cite{christensen2016introduction}.

\subsection{Factorizing duals for bipartite states}

We first prove the theorem for the special case where the given process is a bipartite state. That is, we prove that the definition of nonclassicality for a bipartite state is the same whether it is defined relative to all dual processes or to all factorizing dual processes.
\begin{lemma}
The statistics generated by a bipartite state together with {\em all} bipartite effects are classical explainable (in the sense of definition~\ref{defn:classical}) if and
only if the statistics generated by it with all {\em product} effects are classically explainable.
\label{lem:bi_global}
\end{lemma}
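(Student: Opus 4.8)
The plan is to prove the two directions separately, since all of the content lies in the ``if'' direction. The ``only if'' direction is immediate: every product effect $E^{A_1}\otimes F^{A_2}$ is in particular a bipartite effect, so the set of circuits obtained by contracting the state with all product effects is a subset of the set obtained by contracting it with all bipartite effects. A linear and diagram-preserving ontological representation of the larger set restricts to one of the smaller set, so classical explainability against all bipartite effects implies classical explainability against all product effects.

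For the ``if'' direction, suppose we are given a linear, diagram-preserving ontological representation of $\rho^{A_1A_2}$ contracted with all product effects. By diagram preservation (ontic separability), this associates to the state a distribution $p(\lambda_1\lambda_2)$ on the Cartesian product $\Lambda_1\times\Lambda_2$, to each single-system effect $E^{A_1}$ a response function $\lambda_1\mapsto p(E|\lambda_1)$, to each $F^{A_2}$ a response function $\lambda_2\mapsto p(F|\lambda_2)$, and to a product effect the product $p(E|\lambda_1)\,p(F|\lambda_2)$, reproducing $\tr[\rho^{A_1A_2}(E^{A_1}\otimes F^{A_2})]=\sum_{\lambda_1\lambda_2}p(\lambda_1\lambda_2)\,p(E|\lambda_1)\,p(F|\lambda_2)$. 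Because the probe ranges over \emph{all} effects, these span $\text{Herm}(\mc H^{A_i})$, and by linearity, for each fixed $\lambda_i$ the map $E\mapsto p(E|\lambda_i)$ is a linear functional on $\text{Herm}(\mc H^{A_i})$; by the Riesz representation theorem (as in Appendix~\ref{framerepn}) there is a unique $F_{\lambda_i}\in\text{Herm}(\mc H^{A_i})$ with $p(E|\lambda_i)=\tr[E\,F_{\lambda_i}]$.

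The key step is to observe that each $F_{\lambda_i}$ is in fact a density operator. Validity of the given representation forces $\tr[E\,F_{\lambda_i}]\in[0,1]$ for every effect $E$; taking $E=\op{\psi}{\psi}$ gives $\bra{\psi}F_{\lambda_i}\ket{\psi}\ge 0$ for all $\ket\psi$, so $F_{\lambda_i}\ge 0$, and normalization applied to the two-outcome measurement $\{E,\mbb 1-E\}$ gives $\tr[F_{\lambda_i}]=1$. I then extend the representation to an arbitrary bipartite effect $M^{A_1A_2}$ by setting $\tilde p(M|\lambda_1\lambda_2):=\tr[M\,(F_{\lambda_1}\otimes F_{\lambda_2})]$. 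This is manifestly linear in $M$ (hence automatically respects every operational identity among bipartite effects) and agrees with the given representation on product effects. Validity is then automatic, since $F_{\lambda_1}\otimes F_{\lambda_2}$ is a genuine product state: $\tilde p(M|\lambda_1\lambda_2)\in[0,1]$ and $\sum_b\tilde p(M_b|\lambda_1\lambda_2)=\tr[\mbb 1\,(F_{\lambda_1}\otimes F_{\lambda_2})]=1$ for any bipartite POVM $\{M_b\}_b$. Reproduction of the statistics for all $M$ follows from reproduction on product effects: writing $\tilde\rho:=\sum_{\lambda_1\lambda_2}p(\lambda_1\lambda_2)\,F_{\lambda_1}\otimes F_{\lambda_2}$, the product-effect condition gives $\tr[(E\otimes F)\tilde\rho]=\tr[(E\otimes F)\rho^{A_1A_2}]$ for all $E,F$, and since product effects span the bipartite operator space, $\tilde\rho=\rho^{A_1A_2}$, whence $\sum_{\lambda_1\lambda_2}p(\lambda_1\lambda_2)\,\tilde p(M|\lambda_1\lambda_2)=\tr[M\,\rho^{A_1A_2}]$.

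The main obstacle is verifying that the extended response functions are valid probability assignments for \emph{arbitrary}, possibly entangled, bipartite measurements, together with the normalization condition. This is exactly where ranging over all product effects (in particular all rank-one effects on each factor) does the work: it upgrades the dual-frame operators $F_{\lambda_i}$ from merely Hermitian operators to bona fide density operators, so that $F_{\lambda_1}\otimes F_{\lambda_2}$ is a legitimate separable state and positivity and normalization of $\tilde p(M|\lambda_1\lambda_2)$ come for free. Everything else reduces to linear algebra, and the argument for transformations and for general process types proceeds along the same lines.
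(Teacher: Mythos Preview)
Your proof is correct and follows essentially the same approach as the paper's. The only cosmetic difference is that the paper invokes the generalized Gleason theorem to obtain density operators $\sigma^{A_i}_{\lambda_i}$ with $p(M^{A_i}|\lambda_i)=\tr[M^{A_i}\sigma^{A_i}_{\lambda_i}]$, whereas you use Riesz representation and then verify positivity and unit trace by hand; these amount to the same thing.
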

\begin{proof}
The proof for the \emph{only if} direction is simply that factorizing product bipartite effects are special cases of general bipartite effects. So if the statistics generated by a bipartite state with {\em all} bipartite effects are classically explainable, then those generated by that state with all {\em product} effects are classically explainable.

Now we prove the \emph{if} direction. If the statistics generated by a bipartite state $\rho^{A_1A_2}$ with every product bipartite effect $M^{A_1}\otimes M^{A_2}$ are classically explainable, \yujie then there exists an ontological model for the statistics such that 
\begin{align}
\tr[M^{A_1}&\otimes M^{A_2} \rho^{A_1A_2}] \notag \\
&=\sum_{\lambda_1\lambda_2} p(\lambda_1\lambda_2)p(M^{A_1}|\lambda_1)  p(M^{A_2}|\lambda_2)
\end{align}
where $p(M^{A_i}|\lambda)$s depends linearly on the effect $M^{A_i} \in \mc{M}(\mc{H}^{A_i})$. Moreover since $p(M^{A_i}|\lambda)\in [0,1]$ and $p(\mbb{1}^{A_i}|\lambda)=1$, by the generalized Gleason Theorem~\cite{Busch2003}, there exists a unique set of density operators $\{ \sigma^{A_i}_{\lambda_i}\}_{\lambda_i}$ such that $p(M^{A_i}|\lambda_i)=\tr[M^{A_i}\sigma^{A_i}_{\lambda_i}]$, 
and so we have 
\begin{align}
&\tr[M^{A_1}\otimes M^{A_2} \rho^{A_1A_2}]\notag \\
=&\sum_{\lambda_1\lambda_2} p(\lambda_1\lambda_2)\tr[M^{A_1}\sigma^{A_1}_{\lambda_1}]\tr[M^{A_2}\sigma^{A_2}_{\lambda_2}]\notag  \\
=&\tr[(M^{A_1}\otimes M^{A_2} )\sum_{\lambda_1\lambda_2} p(\lambda_1\lambda_2)\sigma^{A_1}_{\lambda_1}\otimes \sigma^{A_2}_{\lambda_2}]
\label{eq_prodMM}
\end{align}
for any $M^{A_i} \in \mc{M}(\mc{H}^{A_i})$. Because the set of all product effects is informationally complete, \cref{eq_prodMM} implies that 
\begin{equation}
    \rho^{A_1A_2}=\sum_{\lambda_1\lambda_2} p(\lambda_1\lambda_2)\sigma^{A_1}_{\lambda_1}\otimes \sigma^{A_2}_{\lambda_2},
\end{equation}
\blk
and so the state $\rho^{A_1A_2}$ must be separable.

In quantum theory, any bipartite effect $M^{A_1A_2}\in \mc{M}({\mc{H}^{A_1}\otimes \mc{H}^{A_2}})$ can always be decomposed into a linear combination of product effects (see Fig.~\ref{Globaleffect}), as
\begin{align}
    M^{A_1A_2}=\sum_{k_1,k_2} r_{k_1k_2} M_{k_1}^{A_1}\otimes  M_{k_2}^{A_2}
    \label{eq:joint_decomp}
\end{align}
with some real coefficients $\{r_{k_1k_2}\}_{k_1k_2}$. Consequently, we can define a unique quasiprobability representation for $M^{A_1A_2}$ as the unique linear extension of the ontological representation of the product \yujie effects, via
\begin{align}
p(M^{A_1A_2}|\lambda_1\lambda_2)\coloneqq&\sum_{k_1k_2}r_{k_1k_2} p(M_{k_1}^{A_1}|\lambda_1)  p(M^{A_2}_{k_2}|\lambda_2)\notag \\ 
    =& \sum_{k_1,k_2} r_{k_1k_2}\tr[M^{A_1}_{k_1}\sigma^{A_1}_{\lambda_1}]\tr[M^{A_2}_{k_2}\sigma^{A_2}_{\lambda_2}]\notag  \\
  =&\tr[M^{A_1A_2}\sigma^{A_1}_{\lambda_1}\otimes \sigma^{A_2}_{\lambda_2}]
    \label{entangrepn}
\end{align}
\blk
The form of Eq.~\eqref{entangrepn} immediately guarantees that the representation map is linear and diagram-preserving. Furthermore, since $\sigma^{A_1}_{\lambda_1}$ and $\sigma^{A_2}_{\lambda_2}$ are valid quantum states, $p(M^{A_1A_2}|\lambda_1\lambda_2)\in[0,1]$ defines a valid stochastic representation.

\begin{figure}[htb!]
\centering
\includegraphics[width=0.45\textwidth]{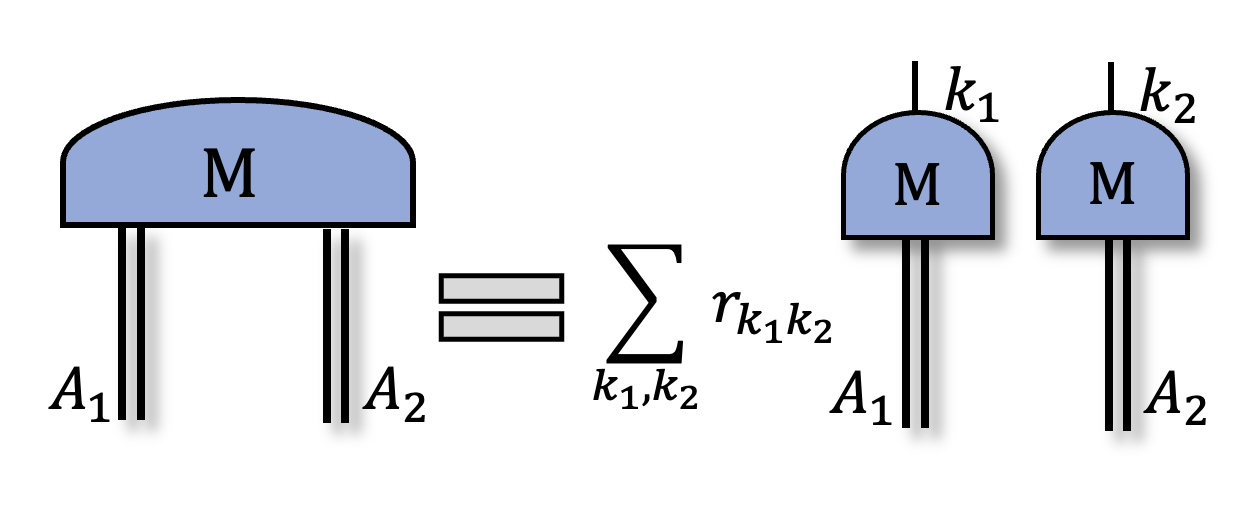}
\caption{In quantum theory, every effect can be decomposed as a linear combination of product effects.}
\label{Globaleffect}
\end{figure}

Finally, it is straightforward to verify that the ontological model for the given effect and state $\rho^{A_1A_2}$ reproduces the quantum statistics:
\begin{align}
& \sum_{\lambda_1\lambda_2}
p(M^{A_1A_2}|\lambda_1\lambda_2) p(\lambda_1\lambda_2)  \notag \\
=&  \sum_{\lambda_1\lambda_2}p(\lambda_1\lambda_2)\sum_{k_1,k_2} r_{k_1k_2}\tr[M^{A_1}_{k_1}\sigma^{A_1}_{\lambda_1}]\tr[M^{A_2}_{k_2}\sigma^{A_2}_{\lambda_2}]\notag  \\
=&\tr[M^{A_1A_2} \rho^{A_1A_2}].
\label{def:op-multi}
\end{align}
Consequently, any statistics generated by $\rho^{A_1A_2}$ with any bipartite effect (not necessarily factorizing) are classically explainable. This concludes the proof of the if direction.
\end{proof}

\subsection{Factorizing duals for transformations }

Next, we prove that the statistics for the set of circuits generated by contracting the transformation of interest with all dual combs are classically explainable if and only if this is the case with all factorizing dual combs. 

Before proving this, we include an aside about the ontological representation of quantum combs. To our knowledge, this has not been explicitly discussed in the literature. Nevertheless, the fully compositional definition of the ontological representation map given in Ref.~\cite{Schmid2024structuretheorem} can be applied to this special case, and we do so here. 

We consider the specific class of combs which are dual processes for a given transformation $\cal E_{B|A}: \mc{L}(\mc{H}^A)\mapsto \mc{L}(\mc{H}^B)$, as shown in the top left of Figure~\ref{Globalcomb}. These combs are denoted $\mc{C}^{A|B}$. $\mc{C}^{A|B}$ is a special case of 2-combs defined in Ref.~\cite{Chiribella2009}, which only has a quantum output on $\mc{L}(\mc{H}^{A})$ and a quantum input on $\mc{L}(\mc{H}^B)$.

\yujie An ontological model of the circuit in \cref{Globalcomb} associates to $\cal E$ a stochastic map $p_{\mc{E}}(\lambda_B|\lambda_A)$ and to $\mc{C}$ a sub stochastic map $\tilde{p}_{\mc{C}}(\lambda_A|\lambda_B)$ (note that in our case,  a comb is define by a single effect so that its ontological model is subnormalized, i.e.,  $\sum_{\lambda_A}\tilde{p}_{\mc{C}}(\lambda_A|\lambda_B)$ need not to be 1). 
As always for an ontological model, the map from the set of quantum processes of a given type(in this case, from the set of combs) to their ontological representations must be linear, and moreover, the composition of the processes within quantum theory must equal the composition of their ontological representations (so that the empirical predictions of quantum theory are reproduced). Linearity implies that the $\tilde{p}_{\mc{C}}(\lambda_A|\lambda_B)$ obey the constraints implied by all operational identities among combs, while the latter fact implies that
\begin{equation}
    \mc{C}^{A|B}\circ \mc{E}^{B|A} = \sum_{\lambda_A\lambda_B}\tilde{p}_{\mc{C}}(\lambda_A|\lambda_B)p_{\mc{E}}(\lambda_B|\lambda_A).
    \label{eq:OM_comb}
\end{equation}

\begin{figure}[htb!]
\centering
\includegraphics[width=0.5\textwidth]{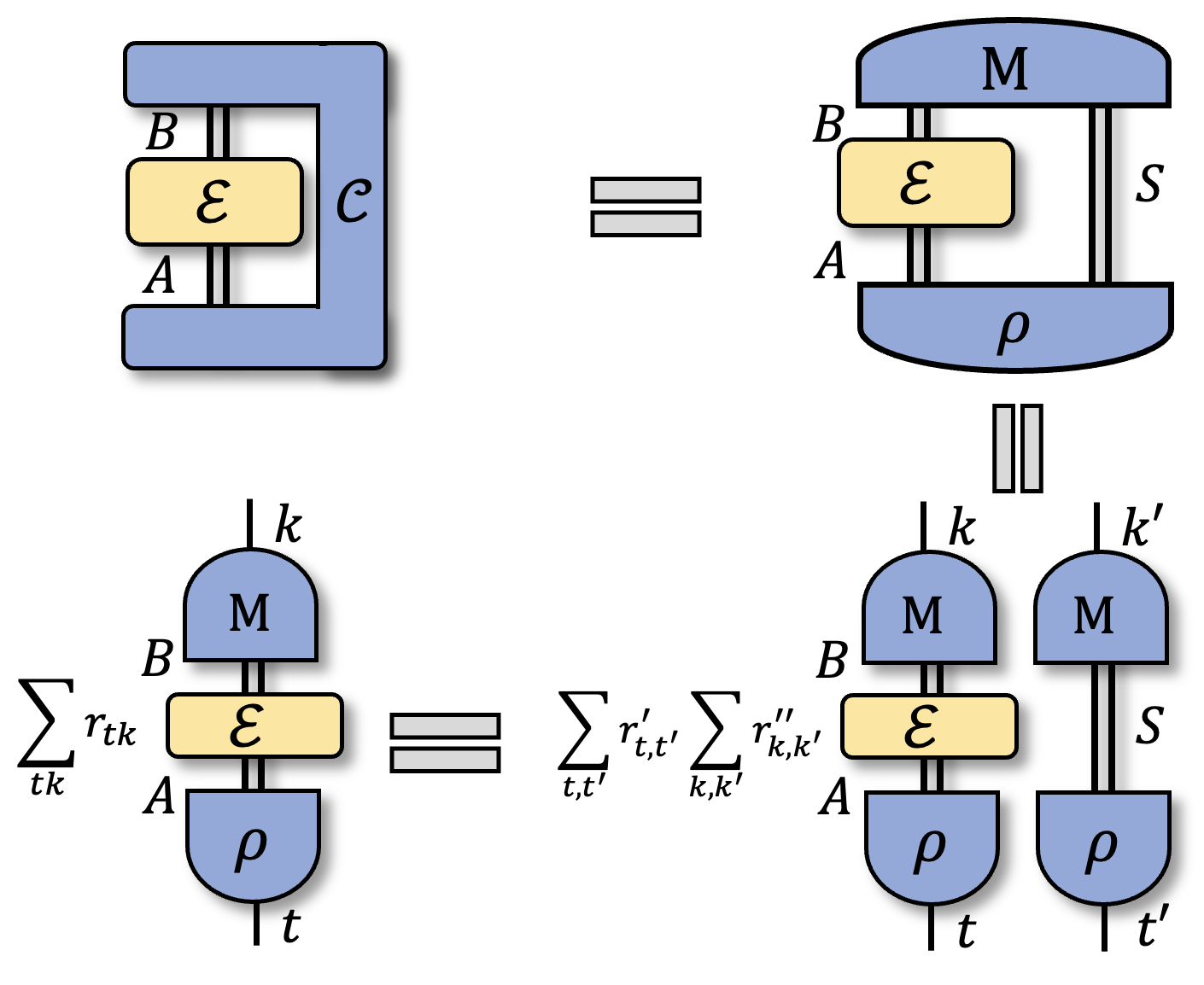}
\caption{A comb can be decomposed using state-effect pairs. }
\label{Globalcomb}
\end{figure}
\blk

We can now give the main argument. Note that a factorizing comb with quantum output $A$ and input $B$ is simply a state on $A$ and an effect on $B$.
\begin{lemma}
The statistics of the set of circuits obtained by composing a quantum channel together with every possible comb is classically explainable if and only if the statistics of the set of circuits obtained by composing it with every possible factorizing comb is classically explainable. 
\label{lem:comb_global}
\end{lemma}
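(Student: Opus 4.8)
The plan is to mirror the structure of the proof of Lemma~\ref{lem:bi_global}, using the Choi isomorphism to transport the bipartite-state argument to the setting of combs. The \emph{only if} direction is immediate: factorizing combs (state-effect pairs) are a special case of general combs, so if the statistics for all combs are classically explainable, then those for the factorizing subfamily are as well. The substance is the \emph{if} direction.

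For the \emph{if} direction, suppose the statistics generated by composing the channel $\mc{E}^{B|A}$ with every factorizing comb---that is, with every state $\rho^A$ on the input and every effect $M^B$ on the output---are classically explainable. By Eq.~\eqref{eq:OM_comb} specialized to factorizing combs, this means there is a stochastic map $p_{\mc{E}}(\lambda_B|\lambda_A)$ and, for each state-effect pair, a product substochastic representation, so that
\begin{equation}
\tr[M^B \mc{E}^{B|A}(\rho^A)] = \sum_{\lambda_A\lambda_B} p(M^B|\lambda_B)\, p_{\mc{E}}(\lambda_B|\lambda_A)\, p(\rho^A|\lambda_A),
\end{equation}
where $p(\rho^A|\lambda_A)$ depends linearly on $\rho^A$ and $p(M^B|\lambda_B)$ depends linearly on $M^B$. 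As in Lemma~\ref{lem:bi_global}, the generalized Gleason argument (applied to the linear, bounded, normalized functionals $p(M^B|\lambda_B)$) and the Riesz representation argument (applied to $p(\rho^A|\lambda_A)$) furnish a fixed set of density operators $\{\sigma^B_{\lambda_B}\}_{\lambda_B}$ and a fixed set of Hermitian operators (effects) $\{E^A_{\lambda_A}\}_{\lambda_A}$ such that $p(M^B|\lambda_B)=\tr[M^B\sigma^B_{\lambda_B}]$ and $p(\rho^A|\lambda_A)=\tr[E^A_{\lambda_A}\rho^A]$. Substituting these back and using that product states $\rho^A$ and product effects $M^B$ together form an informationally complete set on $\mc{L}(\mc{H}^A)$ and $\mc{L}(\mc{H}^B)$ respectively, one concludes that the channel admits the measure-and-prepare (entanglement-breaking) decomposition
\begin{equation}
\mc{E}^{B|A}(\cdot) = \sum_{\lambda_A\lambda_B} p_{\mc{E}}(\lambda_B|\lambda_A)\, \sigma^B_{\lambda_B}\, \tr[E^A_{\lambda_A}(\cdot)].
\end{equation}

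Now I would extend the representation to an \emph{arbitrary} (non-factorizing) comb $\mc{C}^{A|B}$. The key structural fact, analogous to Eq.~\eqref{eq:joint_decomp} and depicted in Fig.~\ref{Globalcomb}, is that any such comb can be written as a finite real-linear combination of factorizing combs, $\mc{C}^{A|B} = \sum_{k} r_k\, \sigma^A_k\otimes M^B_k$, where $\sigma^A_k$ is a state on the output wire $A$ and $M^B_k$ an effect on the input wire $B$. I would then \emph{define} the ontological representation of $\mc{C}^{A|B}$ by the unique linear extension,
\begin{equation}
\tilde{p}_{\mc{C}}(\lambda_A|\lambda_B) \coloneqq \sum_k r_k\, p(\sigma^A_k|\lambda_A)\, p(M^B_k|\lambda_B) = \tr[E^A_{\lambda_A}\,\Gamma_{\mc{C}}(\sigma^B_{\lambda_B})],
\end{equation}
for an appropriate reading of the comb as a map $\Gamma_{\mc{C}}$, exactly paralleling Eq.~\eqref{entangrepn}. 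Linearity and diagram-preservation are then automatic from this form, and because the $\sigma^B_{\lambda_B}$ are genuine states and the $E^A_{\lambda_A}$ genuine effects, the resulting $\tilde{p}_{\mc{C}}(\lambda_A|\lambda_B)$ lies in $[0,1]$ and is appropriately substochastic, giving a valid (sub)stochastic representation. A final short computation, mirroring Eq.~\eqref{def:op-multi}, verifies that $\sum_{\lambda_A\lambda_B}\tilde{p}_{\mc{C}}(\lambda_A|\lambda_B)\,p_{\mc{E}}(\lambda_B|\lambda_A)$ reproduces $\mc{C}^{A|B}\circ\mc{E}^{B|A}$ for every comb $\mc{C}$, so the full set of circuits is classically explainable.

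The main obstacle I anticipate is not any single calculation but getting the comb bookkeeping right: unlike the pure bipartite-state case, a comb has both an open output wire $A$ and an open input wire $B$, so one must be careful that the decomposition into factorizing combs respects the causal/temporal ordering of the comb and that the ``state on $A$, effect on $B$'' reading is the correct dual pairing against the channel. Establishing the analogue of informational completeness in this setting---ensuring that ranging over all product combs pins down the representation of $\mc{E}$ uniquely, and that the linear extension to all combs is well-defined independent of the chosen decomposition---is the step requiring the most care; it rests on the fact that product states span $\mc{L}(\mc{H}^A)$ and product effects span $\mc{L}(\mc{H}^B)$, together with the linearity guaranteed by the ontological representation framework of Ref.~\cite{Schmid2024structuretheorem}.
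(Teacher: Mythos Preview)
Your proposal is correct and follows essentially the same route as the paper: use Gleason/Riesz on the factorizing-comb data to extract a POVM $\{G^A_{\lambda_A}\}$ and states $\{\sigma^B_{\lambda_B}\}$, conclude that $\mc{E}^{B|A}$ is entanglement-breaking, decompose an arbitrary comb as a real-linear combination of factorizing combs, and define $\tilde p_{\mc C}(\lambda_A|\lambda_B)=\tr[G^A_{\lambda_A}\,\mc{C}^{A|B}(\sigma^B_{\lambda_B})]$ by linear extension, with positivity inherited from the validity of the frame operators. The only cosmetic differences are that the paper obtains the product-comb decomposition via the Chiribella et al.\ realization of a comb as a bipartite state followed by a bipartite effect on an ancilla (rather than asserting it directly), and your notation $p(\rho^A|\lambda_A)$ should be $p(\lambda_A|\rho^A)$.
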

\begin{proof}
The \textit{only if} direction is trivial, as a factorizing comb is a special case of a comb.

Now we prove the \emph{if} direction. If the statistics generated by the circuit consists of a quantum channel $\mc{E}^{B|A}$ and a factorizing comb defined by a state $\rho^A$ and effect $M^{B}$ that are classically explainable, then we have an ontological model for the operational statistics 
\begin{align}
\tr[M^{B}\mc{E}^{B|A}(\rho^A)]&=\sum_{\lambda_A\lambda_B}p(M^{B}|\lambda_B)p(\lambda_B|\lambda_A)p(\lambda_A|\rho^A)  
\end{align}
where $p(M^{B}|\lambda_B)$ depends linearly on $M^{B}$ and $p(\lambda_A|\rho^A) $ depends linearly on all $\rho^A$, respectively. Since $p(M^{B}|\lambda_B)\in[0,1]$ with $p(\mbb{1}^{B}|\lambda_B)=1$, from generalized Gleason's theorem (and the Riesz representation theorem), \yujie there exists a unique POVM $\{G_{\lambda_A}^{A}\}_{\lambda_A}$ such that $ p(\lambda_A|\rho^A)=\tr[\rho^AG_{\lambda_A}^{A}]$ and that there exists a unique set of density operator $\{\sigma^{B}_{\lambda_B}\}_{\lambda_B}$ such that $p(M^{B}|\lambda_B)=\tr[M^{B}\sigma^{B}_{\lambda_B}]$. Therefore: 
\begin{align}
\tr[M^{B}&\mc{E}^{B|A}(\rho^A)] \label{eq:combrep}\\
=&\sum_{\lambda_A\lambda_B}p(\lambda_B|\lambda_A)\tr[\rho^AG_{\lambda_A}^{A}]\tr[M^{B}\sigma^{B}_{\lambda_B}] \notag \\
=& \tr\big[M^{B} \sum_{\lambda_A\lambda_B}p(\lambda_B|\lambda_A)\tr[G_{\lambda_A}^{A}\rho^A]\sigma^{B}_{\lambda_B}\big]\notag
\end{align}
Since this holds for all quantum effects $M^{B}\in\mc{M}(\mc{H}^B)$, we have
\begin{align}
\label{eq:holevo form}
    \mc{E}^{B|A}(\rho^A)=  \sum_{\lambda_A\lambda_B}p(\lambda_B|\lambda_A)\tr[G_{\lambda_A}^{A}\rho^A]\sigma^{B}_{\lambda_B}
\end{align}
for all $\rho^A\in\mc{S}(\mc H^A)$, which means that $\mc{E}^{B|A}$ can be expressed in the Holevo form and therefore must be entanglement breaking~\cite{Horodecki2003a}. 
\blk
Now, we will use these ontological representations to show that we can also construct an ontological representation of the comb $\cal C$. 

First, we will rewrite the comb $\mc{C}^{A|B}$ within quantum theory as a linear combinations of factorizing combs as shown in \cref{Globalcomb}; then, we will use the above representations of the states and effects appearing in the factorizing comb to construct a quasiprobabilistic representation of $\mc{C}^{A|B}$. This representation is then proven to be nonnegative and therefore to define a valid ontological representation. The detailed steps are as follows. 

Specifically, it has been shown~\cite{Chiribella2009} that any comb $\mc{C}^{A|B}$ can be decomposed as a bipartite state $\rho^{AS}$ followed by a bipartite measurement (effect) $M^{BS}$ with an identity channel acting on system $S$. That is
\begin{align}
\mc{C}^{A|B}(\mc{E}^{B|A}):&=\tr[M^{BS}  (\mc{E}^{B|A} \otimes \mbb{1}^{S|S} ) \rho^{AS}].
\end{align}
Furthermore, any bipartite quantum state and bipartite measurement can be decomposed as 
\begin{subequations}
\begin{align}
&\rho^{AS}=\sum_{tt'} r_{tt'}' \rho_{t}^A\otimes \rho^S_{t'}\\
&M^{BS}=\sum_{kk'} r_{kk'}^{''}M_{k}^{B}\otimes M_{k'}^{S} 
\end{align}
\end{subequations}
with some real coefficients $\{r_{tt'}'\}_{tt'}$ and $\{r''_{kk'}\}_{kk'}$. As such, the action of the comb $\mc{C}^{A|B}$ on the channel ${\cal E}^{B|A}$ can be further decomposed as
\begin{align}
&\mc{C}^{A|B}(\mc{E}^{B|A})\coloneqq \\
&\tr[(\sum_{kk'} r_{kk'}^{''}M_{k}^{B}\otimes M_{k'}^{S})  (\mc{E}^{B|A} \otimes \mathbb{1} ) (\sum_{tt'} r_{tt'}' \rho_{t}^A\otimes \rho^S_{t'})]. \nonumber
\end{align}
Defining the new real coefficients
\begin{align}
    r_{tk}\coloneqq\sum_{tt'kk'}r_{tt'}'r''_{kk'}\tr[ M_{k'}^{S} \rho^S_{t'}],
\end{align}
the expression for the comb $\mc{C}^{A|B}$ acting on $\mc{E}^{B|A}$ can be further simplified to
\begin{align}
\mc{C}^{A|B}(\mc{E}^{B|A}) = \sum_{tk} r_{tk}\tr[M_{k}^{B}\mc{E}^{B|A}(\rho_{t}^A)]. 
\end{align}
This final decomposition of the comb $\mc{C}^{A|B}$ is shown on the bottom left diagram of \cref{Globalcomb}.

Now, given our choice to represent states and effects as in Eq.~\eqref{eq:combrep}, the unique linear and diagram-preserving representation of the comb $\mc{C}^{A|B}$ can be defined as
\begin{align}
    \tilde{p}_{\mc{C}}(\lambda_A|\lambda_B)\coloneqq&\sum_{tk}r_{tk} p(\lambda_A|\rho_{t}^A)p(M_{k}^{B}|\lambda_B)\notag \\ 
    =& \sum_{tk}r_{tk} \tr[G_{\lambda_{A}}^{A}\rho_{t}^A]\tr[M_{k}^{B}\sigma^{B}_{\lambda_B}]\notag \\
    =&\tr[G_{\lambda_{A}}^{A}\mc{C}^{A|B}(\sigma^{B}_{\lambda_B})].
    \label{combOM}
\end{align}
Furthermore, since $G_{\lambda_A}^{A}$ are valid effect and $\sigma_{\lambda
}^{B}$ are valid quantum states, $\tilde{p}_{\mc{C}}(\lambda_A|\lambda_B)\in[0,1]$ defines a valid substochastic representation.

Finally, it is straightforward to verify that these representations reproduce the statistics:
\begin{align}
 \sum_{\lambda_A\lambda_B}&\tilde{p}_{\mc{C}}(\lambda_A|\lambda_B)p_{\mc{E}}(\lambda_B|\lambda_A) \notag \\
 =& \sum_{\lambda_A\lambda_B}p_{\mc{E}}(\lambda_B|\lambda_A)\tr[G_{\lambda_{A}}^{A}\mc{C}^{A|B}(\sigma^{B}_{\lambda_B})] \notag \\
=&\mc{E}^{B|A}\circ\mc{C}^{A|B}. 
\label{eq:op-multi}
\end{align}
where in the second equality, we used the Holevo form of the channel ${\cal E}_{B|A}$ from \cref{eq:holevo form} (also see Fig.~\ref{Combonto}).

\begin{figure}[htb!]
\centering
\includegraphics[width=0.5\textwidth]{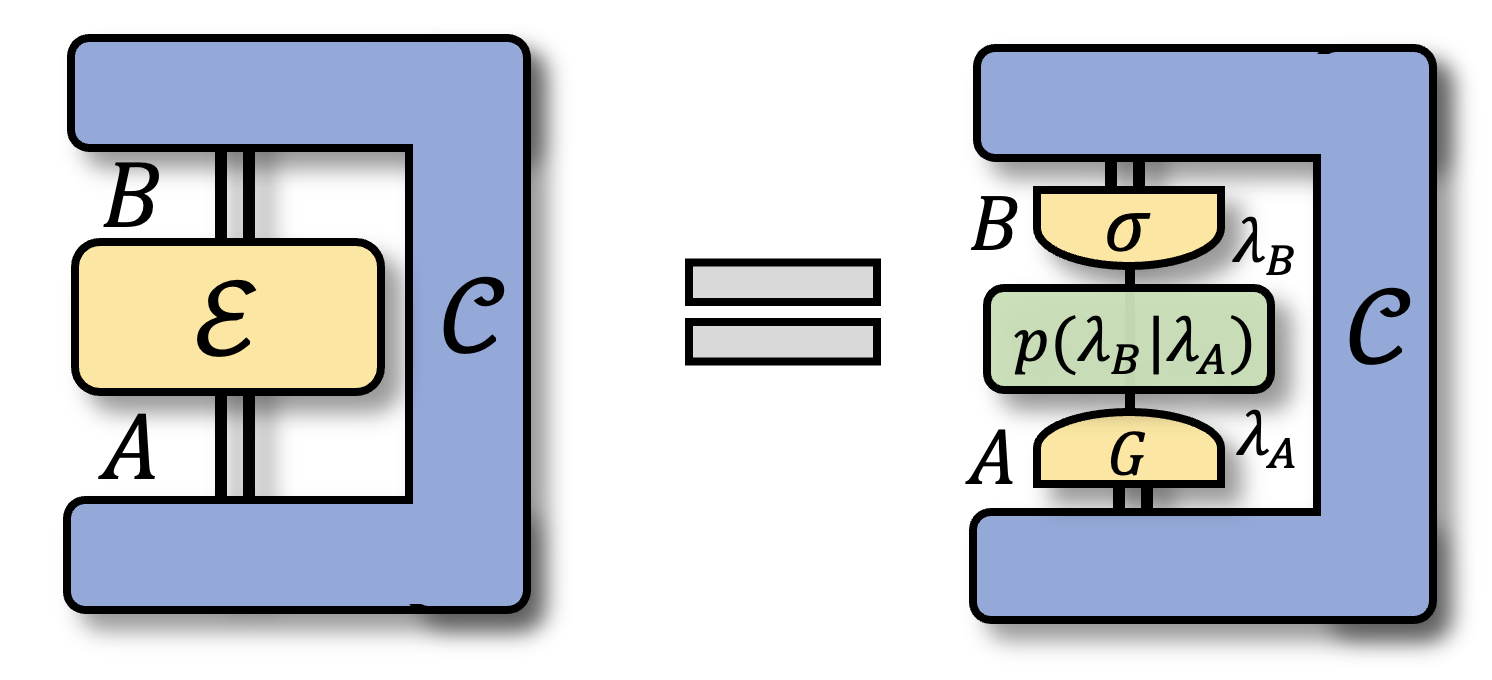}
\caption{An entanglement-breaking channel and a dual.}
\label{Combonto}
\end{figure}

Thus, $\tilde{p}_{\mc{C}}(\lambda_A|\lambda_B)$ is a valid ontological representation for a generic quantum comb $\mc{C}^{A|B}$. That is, we have seen that whenever the statistics generated by ${\cal E}^{B|A}$ with all factorizing pairs of effect and state are classically explainable, one can write down such a representation for any comb $\mc{C}^{A|B}$ (possibly nonfactorizing). It follows that the statistics generated by ${\cal E}^{B|A}$ with any comb must also be classically explainable. This concludes the if direction of the proof. 
\end{proof}

\subsection{Factorizing duals for arbitrary process}
\setcounter{theorem}{0}
\yujie 
With the two special cases discussed in the previous subsections, we now prove our main theorem analogously for the most general quantum process-- a multipartite multi-instrument. 
\begin{theorem}
A quantum process is such that there exists some set of dual processes that it can be contracted with to obtain statistics that are not classically explainable, if and only if there exists some set of {\em factorizing} dual processes for which this is the case. 
\end{theorem}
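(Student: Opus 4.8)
The plan is to prove the general theorem (Theorem~\ref{thmdual}) for an arbitrary multipartite multi-instrument $\{\{\mc{F}_{c|z}^{\vec{B}|\vec{A}}\}_c\}_z$ by following exactly the template established in the two special cases (Lemmas~\ref{lem:bi_global} and~\ref{lem:comb_global}). The \emph{only if} direction is immediate, since every factorizing dual process is a special case of a general dual process; hence if the statistics generated with all dual processes are classically explainable, so are those generated with the factorizing subset. The substance is the \emph{if} direction: assuming classical explainability against all \emph{factorizing} duals, I must construct a valid (diagram-preserving, linear, substochastic) ontological representation of an \emph{arbitrary} dual process and verify it reproduces the quantum statistics.

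First I would recall that the general dual process for a multi-instrument from $\vec{A}$ to $\vec{B}$ is a comb (an $N$-comb, cf.\ Appendix~\ref{sec:appK}) whose factorizing specializations consist of a product of local states on the input systems $\{A_i\}$ and a product of local effects on the output systems $\{B_j\}$. Assuming classical explainability against all such factorizing duals gives an ontological model in which each local preparation on $A_i$ is represented by a response-function--like quantity $p(\lambda_{A_i}|\rho^{A_i})$ depending linearly on $\rho^{A_i}$, and each local effect on $B_j$ by $p(M^{B_j}|\lambda_{B_j})$ depending linearly on $M^{B_j}$. The key structural step is to invoke the generalized Gleason theorem~\cite{Busch2003} together with the Riesz representation theorem (exactly as in Lemma~\ref{lem:comb_global}) to extract, for each input system, a POVM $\{G_{\lambda_{A_i}}^{A_i}\}_{\lambda_{A_i}}$ with $p(\lambda_{A_i}|\rho^{A_i})=\tr[\rho^{A_i}G_{\lambda_{A_i}}^{A_i}]$, and for each output system a set of density operators $\{\sigma^{B_j}_{\lambda_{B_j}}\}_{\lambda_{B_j}}$ with $p(M^{B_j}|\lambda_{B_j})=\tr[M^{B_j}\sigma^{B_j}_{\lambda_{B_j}}]$.

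The crux is then the decomposition-and-extension argument. I would use the fact (as in Figs.~\ref{Globaleffect} and~\ref{Globalcomb}) that in quantum theory every multipartite effect decomposes as a real linear combination of product effects, and every comb decomposes---via its Choi representation as a bipartite state followed by a bipartite effect with identity channels on the memory systems~\cite{Chiribella2009}---into a real linear combination of factorizing combs with coefficients $\{r_{\mu}\}$. Define the quasiprobabilistic representation of an arbitrary dual comb as the unique \emph{linear extension} of the already-fixed representations of the factorizing duals, i.e.\ $\tilde{p}_{\mc{C}}(\lambda_{\vec{A}}|\lambda_{\vec{B}})\coloneqq\sum_{\mu}r_{\mu}\prod_i\tr[G^{A_i}_{\lambda_{A_i}}\rho^{A_i}_{\mu}]\prod_j\tr[M^{B_j}_{\mu}\sigma^{B_j}_{\lambda_{B_j}}]$. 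Collapsing the sum shows this equals $\tr[(\bigotimes_i G^{A_i}_{\lambda_{A_i}})\,\mc{C}(\bigotimes_j \sigma^{B_j}_{\lambda_{B_j}})]$, a manifestly linear and diagram-preserving expression, and crucially one that is nonnegative and bounded by $1$ because the $G$'s are POVM elements and the $\sigma$'s are states---so it defines a valid substochastic map. Reproduction of the statistics then follows by direct substitution, exactly paralleling Eqs.~\eqref{entangrepn}--\eqref{def:op-multi} and~\eqref{combOM}--\eqref{eq:op-multi}.

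The main obstacle I anticipate is bookkeeping rather than conceptual: one must verify that the single representation extracted from the factorizing case is \emph{simultaneously} consistent with \emph{all} the operational identities among general (nonfactorizing) duals, and in particular that the linear extension respects the comb normalization structure (partial-trace/causality constraints on multi-time combs) rather than merely producing a positive functional. Concretely, the subtlety is that the coefficients $\{r_{\mu}\}$ in the factorizing decomposition are not unique, so I must check the constructed $\tilde{p}_{\mc{C}}$ is well-defined (independent of the chosen decomposition)---this follows because the collapsed closed form $\tr[(\bigotimes_i G^{A_i}_{\lambda_{A_i}})\,\mc{C}(\bigotimes_j \sigma^{B_j}_{\lambda_{B_j}})]$ depends only on $\mc{C}$ itself and not on its decomposition. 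Handling the sequential-memory structure of a genuine $N$-comb (as opposed to the single-time combs of Lemma~\ref{lem:comb_global}) requires iterating the Gleason/Riesz argument at each time step, but the informational completeness of product effects and product states at each step makes this routine; I would defer the full inductive treatment of the $N$-comb case to the generalization noted at the end of Appendix~\ref{sec:appK}.
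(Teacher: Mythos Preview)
Your proposal is correct and follows essentially the same approach as the paper's own proof: both invoke the generalized Gleason and Riesz representation theorems to extract POVMs $\{G_{\lambda_{A_i}}^{A_i}\}$ and density operators $\{\sigma_{\lambda_{B_j}}^{B_j}\}$ from the ontological model against factorizing duals, then decompose an arbitrary dual comb as a real linear combination of factorizing ones (via tomographic locality) and define its representation by linear extension, collapsing to the closed form $\tr[(\bigotimes_i G^{A_i}_{\lambda_{A_i}})\,\mc{C}(\bigotimes_j \sigma^{B_j}_{\lambda_{B_j}})]$ whose nonnegativity is immediate. Your anticipated obstacle about well-definedness (independence of the decomposition) is resolved exactly as you say, and your remark about iterating over time steps for genuine $N$-combs is an unnecessary elaboration---the paper simply treats the multipartite multi-instrument as a process with a set of quantum inputs and outputs and applies the argument once, without needing to track the comb's internal causal structure.
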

\begin{proof}
The core idea of the proof is similar to the ones for the ``if'' direction in Lemmas \ref{lem:bi_global} and \ref{lem:comb_global}. That is, if one can construct an ontological model for the given quantum process contracted with every possible factorizing dual process, then one can obtain an ontological representation for every dual process (possibly nonfactorizing) by building it from the representations of the factorizing dual processes in a linear, diagram-preserving manner.  

In this paper, the most general process we consider is a multipartite multi-instrument $\{\{\mc{F}_{c|z}^{\vec{B}|\vec{A}}\}_c\}_z$: $\mc{L}(\bigotimes_{i=1}^n\mc{H}^{A_i})\mapsto \mc{L}(\bigotimes_{j=1}^m\mc{H}^{B_{j}})$, i.e., each  $\mc{F}_{c|z}^{\vec{B}|\vec{A}}$ is a CP trace-nonincreasing map such that $\sum_c\mc{F}_{c|z}^{\vec{B}|\vec{A}}$ defines quantum channel. \par

If the statistics generated by a multipartite multi-instrument $\{\{\mc{F}_{c|z}^{\vec{B}|\vec{A}}\}_c\}_z$ is classical for factorizing duals composed of the set of states $\{\rho^{A_i}\}$ and the set of effects $\{M^{B_j}\}$, then there necessarily exists an ontological model for the operational statistics such that
\begin{align}
\tr&\left[\left(\bigotimes_{j=1}^m M^{B_j}\right)\mc{F}_{c|z}^{\vec{B}|\vec{A}} \left(\bigotimes_{i=1}^n\rho^{A_i}\right)\right] \label{eq:fdual-general} \\
=&\sum_{\lambda_{\vec{B}}\lambda_{\vec{A}}} p(c\lambda_{\vec{B}}|z\lambda_{\vec{A}})\prod_{j}p(M^{B_j}|\lambda_{B_j})\prod_{i}p(\lambda_{A_i}|\rho^{A_i}),\notag \\
=&\sum_{\lambda_{\vec{B}}\lambda_{\vec{A}}} p(c\lambda_{\vec{B}}|z\lambda_{\vec{A}})\prod_{j}\tr[M^{B_j}\sigma^{B_j}_{\lambda_{B_j}}]\prod_{i}\tr[G^{A_i}_{\lambda_{A_i}}\rho^{A_i}],\notag 
\end{align}
where in the second equality we write $p(M^{B_j}|\lambda_{B_j})=\tr[M^{B_j}\sigma^{B_j}_{\lambda_{B_j}}]$ and $p(\lambda_{A_i}|\rho^{A_i})=\tr[G^{A_i}_{\lambda_{A_i}}\rho^{A_i}]$ using Gleason's theorem and Riesz representation theorem since these ontological processes depend linearly on all quantum effects $M^{B_j}$ and all quantum states $\rho^{A_i}$, respectively. Because the set of all product effects and states is informationally complete for characterizing the multipartite multi-instrument, it follows that
\begin{align} \mc{F}_{c|z}^{\vec{B}|\vec{A}} (\cdot)=\sum_{\lambda_{\vec{B}}\lambda_{\vec{A}}} p(c\lambda_{\vec{B}}|z\lambda_{\vec{A}})\bigotimes_{j}\sigma^{B_j}_{\lambda_{B_j}}\tr[\bigotimes_{i}G^{A_i}_{\lambda_{A_i}}(\cdot)]
\end{align}

Similar to the construction of ontological models for general dual processes in Lemmas \ref{lem:bi_global} and \ref{lem:comb_global}, we can now proceed to define the ontological model for a general dual process $\mc{C}^{\vec{A}|\vec{B}}$: $\mc{L}(\bigotimes_{i=1}^n\mc{H}^{A_i})\mapsto \mc{L}(\bigotimes_{j=1}^m\mc{H}^{B_{j}})$ for the target process $\{\{\mc{F}_{c|z}^{\vec{B}|\vec{A}}\}_c\}_z$.  To do so, we will need to first find a decomposition of the general dual process $\mc{C}^{\vec{A}|\vec{B}}$ (a multipartite comb) in terms of product states and product effects as we did in \cref{Globaleffect} and~\ref{Globalcomb}. Such a decomposition always exists because quantum theory satisfies the principle of tomographic locality. That is, for any channel $\mc{E}^{\vec{B}|\vec{A}}$, we have
\begin{align}
\mc{C}^{\vec{A}|\vec{B}}\circ \mc{E}^{\vec{B}|\vec{A}}\coloneqq\sum_{\vec{t}\vec{k}}r_{_{\vec{t}\vec{k}}}\tr[\bigotimes_j M_{k_j}^{B_j}\mc{E}^{\vec{B}|\vec{A}}(\bigotimes_i \rho_{t_i}^{A_i})]
\end{align}
for some real coefficients $\{r_{_{\vec{t}\vec{k}}}\}$. Now, given the ontological representation we have for states and effects in Eq.~\eqref{eq:fdual-general}, we can construct the ontological representations for the general dual process $\mc{C}^{\vec{A}|\vec{B}}$ as:
\begin{align}
\tilde{p}_{\mc{C}}(\lambda_{\vec{A}}|\lambda_{\vec{B}}):&=\sum_{\vec{t}\vec{k}}r_{_{\vec{t}\vec{k}}}\prod_{j}p(M_{k_j}^{B_j}|\lambda_{B_j})\prod_{i}p(\lambda_{A_i}|\rho_{t_i}^{A_i})\notag \\
&=\sum_{\vec{t}\vec{k}}r_{_{\vec{t}\vec{k}}}\prod_{j}\tr[M^{B_j}\sigma^{B_j}_{\lambda_{B_j}}]\prod_{i}\tr[G^{A_i}_{\lambda_{A_i}}\rho^{A_i}],\notag \\
&=\tr[\bigotimes_iG^{A_i}_{\lambda_{A_i}}\mc{C}^{\vec{A}|\vec{B}}(\bigotimes_i\sigma^{B_j}_{\lambda_{B_j}})].
\end{align}
Since $G^{A_i}_{\lambda_{A_i}}$ and $\sigma^{B_j}_{\lambda_{B_j}}$ are valid effects and states, we have $p_{\mc{C}}(\lambda_{\vec{A}}|\lambda_{\vec{B}})\in[0,1]$ defining a substochastic representation for any valid dual process $\mc{C}^{\vec{A}|\vec{B}}$. 

Therefore,  once one has constructed an ontological representation of a multipartite multi-instrument along with all factorizing combs dual to it,  one can uniquely extend this ontological representation to all combs (including those that are nonfactorizing).
\end{proof}
\blk

\section{Proof of Theorem~\ref{theoremprep}}\label{proofthmprep}

We now repeat and then prove Theorem~\ref{theoremprep}.

\setcounter{theorem}{1}
\begin{theorem}\label{theoremprep_app}
For a multi-source $\msf{P}\coloneqq\{\{p(a|x)\rho_{a|x}\}_{a}\}_x$, the following statements are equivalent:\\
(0) It is classical (in the sense of Definition~\ref{def_multi-sourceNC}).\\
\blk
(1) Each of the unnormalized states 
$p(a|x)\rho_{a|x}$ can be decomposed as 
    \begin{align}
    \label{eq_rhodecom_app} {p(a|x)\rho_{a|x} = \sum_{\lambda} p(a\lambda|x)\sigma_{\lambda} \quad \forall a,x}
     \end{align}
    for some fixed set of normalized states $\{\sigma_{\lambda}\}_{\lambda}$  and some conditional probability distribution $p(a\lambda|x)$ satisfying 
    \begin{align}
    \label{eq_paxOp_app}
&{\sum_{ax} \alpha_{ax} p(a \lambda|x)=0 \qquad     \forall \{\alpha_{ax}\}_{ax}\in\mc{O}(\msf{P})}
    \end{align}
for all $\lambda$, where $\mc{O}(\msf{P})$ is defined in Eq.~\eqref{op:prep}. \blk
\\(2) Each of the 
normalized states $\rho_{a|x}$ can be decomposed within a frame representation on $V=\text{Span}(\{\rho_{a|x}\}_{ax})$ as
\begin{align}
        \label{eq_rhodecFrame_app}
        \rho_{a|x} = \sum_{\lambda} \tr[H_{\lambda} \rho_{a|x}]\sigma_{\lambda}, \quad \forall a,x
\end{align}
where the frame is a fixed set of density operators 
$\{\sigma_{\lambda}\}_{\lambda}$ and the dual frame is a set of Hermitian operators, i.e.,  $\{H_{\lambda}\}_{\lambda}\subset \text{Herm}(\mc H)$ satisfying $\sum_{\lambda}H_{\lambda}=\mathcal{P}_V(\mbb{1})$ and $\tr[H_{\lambda} \rho_{a|x}]\in [0,1]$ for all $a,x,\lambda$, with $\mathcal{P}_V$ being the superoperator that projects onto the operator space $V$. This condition is depicted in \cref{framesource}.\blk
\end{theorem}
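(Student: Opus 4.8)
The plan is to prove the two biconditionals $(0)\Leftrightarrow(1)$ and $(1)\Leftrightarrow(2)$ separately, isolating the operational content in the former and the frame-theoretic content in the latter. The key feature driving $(0)\Leftrightarrow(1)$ is that, by Definition~\ref{def_multi-sourceNC}, classicality demands a \emph{single} ontological model that reproduces the statistics of the multi-source contracted with \emph{every} effect at once; combined with the prepare–measure analysis of Section~\ref{ncstates}, this forces the effect representations into Gleason form. For $(0)\Rightarrow(1)$, classicality provides response functions $p(M|\lambda)$, linear in the effect $M$ with $p(M|\lambda)\in[0,1]$ and $p(\mbb{1}|\lambda)=1$, together with a preparation representation $p(a\lambda|x)$ obeying the constraint Eq.~\eqref{eq_paxOp_app}, such that $p(a|x)\tr[M\rho_{a|x}]=\sum_\lambda p(M|\lambda)\,p(a\lambda|x)$. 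The generalized Gleason theorem~\cite{Busch2003} yields density operators $\sigma_\lambda$ with $p(M|\lambda)=\tr[M\sigma_\lambda]$, and informational completeness of the effects then upgrades the scalar identity to Eq.~\eqref{eq_rhodecom_app}. For the converse $(1)\Rightarrow(0)$, I simply set $p(M|\lambda):=\tr[M\sigma_\lambda]$ for every effect: linearity in $M$ makes the measurement ontological identities automatic, Eq.~\eqref{eq_paxOp_app} supplies the preparation identities, and a one-line trace computation recovers the quantum statistics for all effects simultaneously.

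The frame-theoretic equivalence $(1)\Leftrightarrow(2)$ is where the real work lies. For $(1)\Rightarrow(2)$, the constraint Eq.~\eqref{eq_paxOp_app} says precisely that the assignment $p(a|x)\rho_{a|x}\mapsto p(a\lambda|x)$ annihilates every element of $\mc{O}(\msf{P})$, hence descends to a well-defined linear functional on $V=\text{Span}(\{\rho_{a|x}\})$; the Riesz representation theorem (Appendix~\ref{framerepn}) produces a unique Hermitian $H_\lambda\in V$ with $\tr[H_\lambda\rho_{a|x}]=p(a\lambda|x)/p(a|x)$. Dividing Eq.~\eqref{eq_rhodecom_app} by $p(a|x)$ gives Eq.~\eqref{eq_rhodecFrame_app}; the coefficients lie in $[0,1]$ since they are conditional probabilities; and $\sum_\lambda\tr[H_\lambda\rho_{a|x}]=1=\tr[\rho_{a|x}]$ for every $a,x$ forces $\sum_\lambda H_\lambda-\mbb{1}\perp V$, i.e.\ $\sum_\lambda H_\lambda=\mathcal{P}_V(\mbb{1})$ once the $H_\lambda$ are taken inside $V$. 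For $(2)\Rightarrow(1)$, I define $p(a\lambda|x):=p(a|x)\tr[H_\lambda\rho_{a|x}]$; nonnegativity and the recovery of Eq.~\eqref{eq_rhodecom_app} are immediate, normalization $\sum_{a\lambda}p(a\lambda|x)=1$ follows from $\tr[\mathcal{P}_V(\mbb{1})\rho_{a|x}]=\tr[\rho_{a|x}]=1$, and Eq.~\eqref{eq_paxOp_app} follows by pulling the coefficients $\alpha_{ax}$ through the trace against the operational identity defining $\mc{O}(\msf{P})$.

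The main obstacle I expect is the careful bookkeeping of the operator space $V$ and its projector $\mathcal{P}_V$. One must verify that the Riesz representative can be chosen within $V$, so that the scalar normalization $\sum_\lambda\tr[H_\lambda\rho_{a|x}]=\tr[\rho_{a|x}]$ upgrades to the operator identity $\sum_\lambda H_\lambda=\mathcal{P}_V(\mbb{1})$ rather than the generally false $\sum_\lambda H_\lambda=\mbb{1}$ (cf.\ Remark~\ref{remark:Pv}). A minor secondary point is the treatment of pairs $(a,x)$ with $p(a|x)=0$, for which the subnormalized state and its ontological image both vanish and Eq.~\eqref{eq_rhodecFrame_app} is to be read vacuously; flagging this keeps the passage between the subnormalized form Eq.~\eqref{eq_rhodecom_app} and the normalized form Eq.~\eqref{eq_rhodecFrame_app} clean.
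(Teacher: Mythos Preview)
Your proof is correct and uses the same key tools as the paper: the generalized Gleason theorem on the effect side to extract the $\sigma_\lambda$, and Riesz/linearity on the preparation side to extract the $H_\lambda$. The only difference is organizational. The paper proves the cycle $(0)\Rightarrow(2)\Rightarrow(1)\Rightarrow(0)$: in the single step $(0)\Rightarrow(2)$ it applies Gleason to the effect representation and Riesz to the preparation representation simultaneously, reading both $\sigma_\lambda$ and $H_\lambda$ directly off the ontological model. You instead prove $(0)\Leftrightarrow(1)$ and $(1)\Leftrightarrow(2)$ as two separate biconditionals, so you must establish $(1)\Rightarrow(2)$ as a standalone step---this is where you invoke Riesz on the map $p(a|x)\rho_{a|x}\mapsto p(a\lambda|x)$, which the paper never does directly (it obtains that implication only via $(1)\Rightarrow(0)\Rightarrow(2)$). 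Your organization cleanly isolates the operational content (Gleason) from the frame-theoretic content (Riesz), and your handling of the $\mathcal{P}_V$ issue and the $p(a|x)=0$ edge case is more careful than the paper's; the paper's cycle is marginally more economical in that it proves three implications rather than four.
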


\begin{proof}
We first prove that condition (0) implies $(2)$. From \cref{def_multi-sourceNC}, we know that the  statistics generated by a classical multi-source $\{\{p(a|x)\rho_{a|x}\}_a\}_x$ with any quantum effect can be expressed as 
\begin{align*}
\tr[M\rho_{a|x}]=\sum_{\lambda}p(M|\lambda)p(\lambda|a,x)~~~\forall a,x,M\in\mc{M}(\mc{H})
\end{align*}
where $\{p(\lambda|a,x)\}_{ax}$ depends linearly on  $\{\rho_{a|x}\}_{ax}$  and  $p(M|\lambda)$ depends linearly on all quantum effects $M\in\mc{M}(\mc{H})$. \par 

\yujie Since $p(M|\lambda)\in[0,1]$ with $p(\mbb{1}|\lambda)=1$ for unit effect $\mbb{1}$\blk, we can use the generalized Gleason's Theorem~\cite{Busch2003} to infer that $p(M|\lambda)=\tr{[M\sigma_{\lambda}]}$ for a unique normalized density matrix $\sigma_{\lambda}$. Similarly, for any linear map from states $\{\rho_{a|x}\}_{ax}$ to probability distributions $\{p(\lambda|a,x)\}_{ax}$, there always exists a Hermitian operator $H_{\lambda}$ for each $\lambda$ such that $p(\lambda|a,x)=\tr [H_{\lambda}\rho_{a|x}]$.
\yujie Given that $p(\lambda|a,x)\in[0,1]$ and $\sum_{\lambda}p(\lambda|a,x)=1$ for all $\rho_{a|x}$, we infer that $\sum_{\lambda}H_{\lambda}=\mc{P}_V(\mbb{1})$ where $\mc{P}_V$ is the superoperator projects $\mbb{1}$ onto the operator space $V=\text{Span}(\{\rho_{a|x}\}_{a.x})$. \blk We thus have, for any given $a$ and $x$,
\begin{align}
    \tr[M\rho_{a|x}] &=
\sum_{\lambda}\tr{[M\sigma_{\lambda} ]}\tr [H_{\lambda}\rho_{a|x}] \\
&= \tr\big[M\sum_{\lambda}\tr [H_{\lambda}\rho_{a|x}]\sigma_{\lambda} \big]
, \quad \forall M, \nonumber
\label{eq:allM}
\end{align}
Because \cref{eq:allM} must hold for all $M$, it follows that
\begin{align}
    \rho_{a|x} &=\sum_{\lambda}\tr [H_{\lambda}\rho_{a|x}]\sigma_{\lambda}.
\end{align}

To show $(2)\rightarrow (1)$,  we can simply define
\begin{align}
    &p(\lambda|a,x) \coloneqq \tr[H_{\lambda}\rho_{a|x}].\notag \\
   \Rightarrow &p(a\lambda|x) \coloneqq p(a|x)p(\lambda|a,x)=\tr[H_{\lambda}p(a|x)\rho_{a|x}]
\end{align}
It follows that $\{p(a\lambda|x)\}_{ax}$ depends linearly on $\{p(a|x)\rho_{a|x}\}_{ax}$ and consequently respects the operational identities $\mc{O}(\msf{P})$.

To show that condition $(1)$ implies condition $(0)$, we can define $p(M|\lambda)\coloneqq\tr[\sigma_{\lambda}M]$ and thus we have:
\begin{align}
    &\tr[M \rho_{a|x}]
= \sum_{\lambda}\tr{[M\sigma_{\lambda} ]} p(\lambda|a,x), \quad \forall M, a,x,\\
    &p(a|x)\tr[M\rho_{a|x}]
= \sum_{\lambda}\tr{[M\sigma_{\lambda} ]} p(a\lambda|x), \quad \forall M, a,x,  \notag 
\end{align}
where $\{p(a\lambda|x)\}_{ax}$ respects $\mc{O}(\msf{P})$ and $p(M|\lambda)$ defines a stochastic map that is a linear functional on the set of all quantum effects $M\in\mc{M}(\mc{H})$. Thus, there exists an ontological model for the statistics generated by the multi-source contracted with any quantum effect;
it follows that $\{\{p(a|x)\rho_{a|x}\}_a\}_x$ is classical. 
\end{proof}

\section{Proof of Theorem~\ref{theoremmeas}}\label{proofthmmes}

We now repeat and then prove Theorem~\ref{theoremmeas}.

\begin{theorem}\label{theoremmeas_app}
For a multi-measurement $\msf{M}=\{\{M_{b|y}\}_b\}_y$, the following statements are equivalent:\\
(0) It is classical (in the sense of Definition~\ref{def_mtmtNC})\\
\blk
(1) Its effects can be decomposed as
    \begin{equation}
    \label{eq_Edecom_app}
        M_{b|y}=\sum_{\lambda} p(b|y\lambda)G_{\lambda} \quad \forall b,y
    \end{equation}
for some POVM $\{G_{\lambda}\}_{\lambda}$ and some conditional probability distribution satisfying 
    \begin{align}
    \label{eq_pbyOp_app}
\sum_{by} \beta_{by} p(b|y\lambda)=0 \qquad     \forall \{\beta_{by}\}_{by}\in\mc{O}(\msf{M}).
    \end{align}
for all $\lambda$, where $\mc{O}(\msf{P})$ is defined in Eq.~\eqref{op:meas}.    
(2) Its effects can be decomposed within a frame representation on $V=\text{Span}(\{M_{b|y}\})$ as
    \begin{equation}
    \label{eq_MdecFrame_app}
        M_{b|y}=\sum_{\lambda} \tr[M_{b|y}F_{\lambda}] G_{\lambda} \quad \forall b,y
    \end{equation}
where $\{G_{\lambda}\}_{\lambda}$ is a POVM with \yujie $\sum_{\lambda}G_{\lambda}=\mc{P}_{V}(\mbb{1})=\mbb{1}$ \blk and $\{F_{\lambda}\}_{\lambda}\subset\text{Herm}(\mc H)$ is a set of Hermitian operators satisfying $\tr[F_{\lambda}]=1$ and $\tr[M_{b|y}F_{\lambda}]\in [0,1]$ for all $b,y,\lambda$. This condition is depicted in \cref{framemeas}.
\end{theorem}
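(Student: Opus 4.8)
The plan is to establish the cyclic chain of implications $(0)\Rightarrow(2)\Rightarrow(1)\Rightarrow(0)$, mirroring almost exactly the proof of Theorem~\ref{theoremprep} but with the roles of states and effects interchanged, since here the dual processes one ranges over are states rather than effects. The two workhorses will be the generalized Gleason theorem~\cite{Busch2003} (to convert the ontological representation of the family of dual states into a genuine POVM) and the Riesz representation theorem (to express the response functions as Hilbert--Schmidt inner products with dual-frame operators). I note at the outset that an entirely independent proof is available by invoking the Choi isomorphism: a multi-measurement is Choi-isomorphic to a steering assemblage (a special multi-source), and since Lemma~\ref{lem:Choi} guarantees that the verdict of nonclassicality is preserved under this isomorphism, conditions (1) and (2) here are just the term-by-term translations of conditions (1) and (2) of Theorem~\ref{theoremprep}. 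I will present the direct argument, as it is self-contained.

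For $(0)\Rightarrow(2)$, I would start from Definition~\ref{def_mtmtNC}: classicality means that contracting $\msf{M}$ with every state $\rho$ yields a prepare-measure scenario admitting a linear, diagram-preserving ontological model, so that $\tr[M_{b|y}\rho]=\sum_\lambda p(b|y\lambda)\,p(\lambda|\rho)$ where $p(\lambda|\rho)$ is linear in $\rho$. Because $p(\lambda|\rho)\in[0,1]$ and $\sum_\lambda p(\lambda|\rho)=1$ for all states $\rho$, generalized Gleason yields a unique POVM $\{G_\lambda\}_\lambda$ with $p(\lambda|\rho)=\tr[G_\lambda\rho]$; in particular $\sum_\lambda G_\lambda=\mbb{1}$. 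Since this holds for all $\rho$, one reads off $M_{b|y}=\sum_\lambda p(b|y\lambda)G_\lambda$. It remains to represent the coefficients: for each $\lambda$, the map $M_{b|y}\mapsto p(b|y\lambda)$ is a real linear functional on $V=\text{Span}(\{M_{b|y}\})$, so by Riesz there is a Hermitian $F_\lambda$ with $p(b|y\lambda)=\tr[M_{b|y}F_\lambda]$. The normalization $\sum_b p(b|y\lambda)=1$, together with $\sum_b M_{b|y}=\mbb{1}$, forces $\tr[F_\lambda]=1$, and $p(b|y\lambda)\in[0,1]$ gives $\tr[M_{b|y}F_\lambda]\in[0,1]$; because $\mbb{1}\in V$ we have $\sum_\lambda G_\lambda=\mc{P}_V(\mbb{1})=\mbb{1}$. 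This is exactly condition (2). The step $(2)\Rightarrow(1)$ is then immediate: define $p(b|y\lambda)\coloneqq\tr[M_{b|y}F_\lambda]$; linearity in $M_{b|y}$ guarantees that any operational identity $\sum_{by}\beta_{by}M_{b|y}=\mbb{0}$ in $\mc{O}(\msf{M})$ is inherited as $\sum_{by}\beta_{by}p(b|y\lambda)=0$, which is Eq.~\eqref{eq_pbyOp_app}.

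Finally, for $(1)\Rightarrow(0)$ I would simply build the ontological model from the given data: represent each dual state by the valid distribution $p(\lambda|\rho)\coloneqq\tr[G_\lambda\rho]$ (a distribution precisely because $\{G_\lambda\}_\lambda$ is a POVM) and each effect by the response function $p(b|y\lambda)$. The decomposition Eq.~\eqref{eq_Edecom_app} then reproduces the statistics, $\tr[M_{b|y}\rho]=\sum_\lambda p(b|y\lambda)\tr[G_\lambda\rho]=\sum_\lambda p(b|y\lambda)p(\lambda|\rho)$, and the constraint Eq.~\eqref{eq_pbyOp_app} is exactly what is needed for this representation to respect all operational identities among the effects, so the model is linear and diagram-preserving. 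I expect the only genuinely delicate point to be the bookkeeping in $(0)\Rightarrow(2)$---namely verifying that the abstractly-given response functions can be realized by a single dual frame $\{F_\lambda\}_\lambda$ with the correct trace normalization and positivity, and correctly identifying that it is the POVM side (not the dual-frame side) that sums to $\mbb{1}$ here, in contrast to Theorem~\ref{theoremprep}, where $\sum_\lambda H_\lambda$ need only equal $\mc{P}_V(\mbb{1})$. This asymmetry is traced entirely to the fact that $V=\text{Span}(\{M_{b|y}\})$ always contains $\mbb{1}$, whereas the state span generally does not.
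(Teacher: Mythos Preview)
Your proposal is correct and follows essentially the same route as the paper's proof: the same cyclic chain $(0)\Rightarrow(2)\Rightarrow(1)\Rightarrow(0)$, the same use of Gleason/Riesz to extract the POVM $\{G_\lambda\}$ and the dual-frame operators $\{F_\lambda\}$, and the same construction of the ontological model for $(1)\Rightarrow(0)$. You also correctly note the alternative Choi-isomorphism route and the key asymmetry with Theorem~\ref{theoremprep} regarding $\sum_\lambda G_\lambda=\mbb{1}$, both of which the paper mentions as well.
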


\begin{proof}
We first prove that condition $(0)$ implies condition $(2)$. From \cref{def_mtmtNC}, we know that the statistics generated by a classical multi-measurement $\{\{M_{b|y}\}_b\}_y$ contracted with every possible quantum state can be expressed as 
\begin{align}
\tr[M_{b|y}\rho]=\sum_{\lambda} p(b|y\lambda)p(\lambda|\rho)~~~~\forall b,y, \forall\rho\in \mc{S}(\mc{H})
\end{align}
where $\{p(b|y\lambda)\}_{by}$ depends linearly on $\{M_{b|y}\}_{by}$ (and so respects $\mc{O}(\msf{M})$) and $p(\lambda|\rho)$ depends linearly on the quantum state $\rho\in \mc{S}(\mc{H})$. 
\par 
Using the  Riesz representation theorem, we can write this latter linear function as $p(\lambda|\rho)=\tr{[G_{\lambda}\rho]}$ for some effect $G_{\lambda}\in \mc{M}(\mc H)$. Furthermore, since $\sum_{\lambda}p(\lambda|\rho)=1$, we must have that \yujie $\sum_{\lambda}G_{\lambda}=\mc{P}_{V}(\mbb{1})=\mbb{1}$ \blk. Thus, $\{G_{\lambda}\}_{\lambda}$ is a POVM. Similarly, we can always write $p(b|y\lambda)=\tr [M_{b|y}F_{\lambda}]$ for some Hermitian operator $F_{\lambda}$. \yujie The fact that $\sum_{b}p(b|y\lambda) =1$ implies that $\tr[F_{\lambda}]=1$ \blk. Thus, we have:
\begin{align}
\tr[M_{b|y}\rho ]&=\sum_{\lambda} \tr[M_{b|y}F_{\lambda}]\tr[G_{\lambda}\rho] \nonumber \\
&=\tr\big[\sum_{\lambda} \tr[M_{b|y}F_{\lambda}] G_{\lambda}\rho\big]
\quad \forall \rho \in \mc{S}(\mc{H})
\label{eq:allrho}
\end{align}
where $\tr[M_{b|y}F_{\lambda}]=p(b|y\lambda)\in [0,1]$ (as $p(b|y\lambda)\in[0,1]$) and $\tr[F_{\lambda}]=1$ (as $\sum_{b}p(b|y\lambda) =1$) for all $b,y\lambda$.  Since \cref{eq:allrho} must hold for all $\rho$, it follows that 
\begin{align}
    M_{b|y}=\sum_{\lambda} \tr[M_{b|y}F_{\lambda}]G_{\lambda}.
\end{align}

To show $(2)\rightarrow (1)$,  we can simply define 
\begin{align}
    p(b|y\lambda) \coloneqq \tr[M_{b|y}F_{\lambda}], \notag 
\end{align}
It follows that $p(b|y\lambda)$ depends linearly on $M_{b|y}$, which implies that it respects the operational identities in the set ${\cal O}_{\msf{M}}$. Furthermore, since $\tr[M_{b|y}F_{\lambda}]\in[0,1]$ and $\sum_b \tr[M_{b|y}F_{\lambda}]=1$, we have that the $p(b|y\lambda)$s are conditional probabilities.

To show that condition $(1)$ implies condition $(0)$,  we can define $p(\lambda|\rho)\coloneqq\tr[G_{\lambda} \rho ]$, i.e., define a stochastic map that is linear on all quantum states $\rho$. Thus $\forall b,y,\rho\in \mc{S}(\mc{H})$, we have:
\begin{align}
    \tr[M_{b|y} \rho]=  \sum_{\lambda}p(b|y\lambda)\tr{[G_{\lambda} \rho ]}
\end{align}
where $\{p(b|y\lambda)\}_{by}$ respects $\mc{O}(\msf{M})$, and $p(\lambda|\rho)$ respects $\mc O_{\msf{P}}$. Thus, the statistics of the set of circuits obtained by contracting the multi-measurement with all possible quantum states are classically explainable. Hence the multi-measurement$\{\{M_{b|y}\}_b\}_y$ is classical. 
\end{proof}

\section{Proof of Corollary~\ref{simplexcor}} \label{simpproof}
We now repeat and then prove Corollary~\ref{simplexcor}.
\setcounter{corollary}{2}
\begin{corollary} 
If a multi-state consists of a \blk set of states that fits inside a simplex within the quantum state space, then it is classical.
\end{corollary}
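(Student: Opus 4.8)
The plan is to verify condition (1) of Theorem~\ref{theoremprep}, specialized to the case where the outcome variable $a$ is trivial, which is equivalent to classicality. Saying that the multi-state $\{\rho_{\cdot|x}\}_x$ fits inside a simplex within $\mc{S}(\mc{H})$ means precisely that there is a finite set of \emph{affinely independent} density operators $\{\sigma_\lambda\}_\lambda$---the vertices of the simplex---whose convex hull lies in $\mc{S}(\mc{H})$ and contains every $\rho_{\cdot|x}$. By affine independence each $\rho_{\cdot|x}$ has \emph{unique} barycentric coordinates, giving a decomposition
\[
\rho_{\cdot|x} = \sum_\lambda p(\lambda|x)\,\sigma_\lambda,
\]
and because $\rho_{\cdot|x}$ lies in the convex hull, these coordinates satisfy $p(\lambda|x)\ge 0$ and $\sum_\lambda p(\lambda|x)=1$, i.e.\ they form a genuine conditional probability distribution. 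This is exactly the form of Eq.~\eqref{eq_rhodecom}, so the only thing left to check is that this distribution respects the ontological-identity constraint Eq.~\eqref{eq_paxOp}.

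I expect the verification of Eq.~\eqref{eq_paxOp} to be the crux of the argument, since all the geometric content is packaged into affine independence. Take any operational identity $\{\alpha_x\}_x \in \mc{O}(\msf{P})$, i.e.\ $\sum_x \alpha_x \rho_{\cdot|x} = \mbb{0}$. Substituting the decomposition gives $\sum_\lambda c_\lambda \sigma_\lambda = \mbb{0}$ with $c_\lambda \coloneqq \sum_x \alpha_x p(\lambda|x)$. The key observation is that these coefficients automatically sum to zero: tracing $\sum_x\alpha_x\rho_{\cdot|x}=\mbb 0$ and using $\tr[\rho_{\cdot|x}]=1$ yields $\sum_x \alpha_x = 0$, whence $\sum_\lambda c_\lambda = \sum_x \alpha_x \sum_\lambda p(\lambda|x) = \sum_x \alpha_x = 0$. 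Affine independence of $\{\sigma_\lambda\}_\lambda$ is precisely the statement that $\sum_\lambda c_\lambda \sigma_\lambda = \mbb{0}$ together with $\sum_\lambda c_\lambda = 0$ forces $c_\lambda = 0$ for all $\lambda$. Hence $\sum_x \alpha_x p(\lambda|x)=0$ for every $\lambda$, which is exactly Eq.~\eqref{eq_paxOp}.

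Having exhibited a decomposition of the required form satisfying the required constraints, condition (1) of Theorem~\ref{theoremprep} holds, and therefore the multi-state is classical. The main obstacle is the linear-algebraic step just described, namely tying the vanishing of $\{c_\lambda\}_\lambda$ to affine (rather than merely linear) independence; the extra ingredient that makes it go through is the normalization condition $\sum_x \alpha_x = 0$, which is what distinguishes affine from linear independence and is automatic for operational identities among normalized states. (Note that this is also why Corollary~\ref{lem:linearinde} is subsumed: linearly independent vertices are in particular affinely independent.)
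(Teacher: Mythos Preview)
Your proof is correct and follows essentially the same strategy as the paper: take the simplex vertices $\{\sigma_\lambda\}_\lambda$ as the frame, write each $\rho_{\cdot|x}$ in barycentric coordinates, and check that condition (1) of Theorem~\ref{theoremprep} holds.

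The only difference is in how the operational-identity constraint Eq.~\eqref{eq_paxOp} is verified. The paper first observes that affinely independent trace-1 operators are in fact \emph{linearly} independent (your trace argument $\sum_x\alpha_x=0$ is precisely why), so $\{\sigma_\lambda\}_\lambda$ admits a dual basis $\{H_\lambda\}_\lambda$ with $\tr[H_{\lambda'}\sigma_\lambda]=\delta_{\lambda\lambda'}$. One then simply \emph{defines} $p(\lambda|x)\coloneqq\tr[H_\lambda\rho_{\cdot|x}]$, making the map $\rho_{\cdot|x}\mapsto p(\lambda|x)$ manifestly linear, so that Eq.~\eqref{eq_paxOp} is automatic. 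Your route instead keeps affine independence as the working hypothesis and verifies Eq.~\eqref{eq_paxOp} by hand via the $\sum_\lambda c_\lambda=0$ observation. The two arguments are equivalent---indeed, your uniqueness-of-barycentric-coordinates step is exactly what guarantees the dual basis exists---but the paper's packaging makes the linearity of the ontological representation transparent without a separate verification step.
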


\begin{proof}
Assume the existence of such a simplex and denote its vertices by $\{\sigma_{\lambda}\}_{\lambda}$ (by definition, they are affinely independent trace-1 operators, thus linearly independent). As each element of the multi-state $\{\rho_{\cdot|x}\}_x$ is in the convex hull of the $\{\sigma_{\lambda}\}_{\lambda}$, it follows that the span of the former is contained in the span of the latter. Thus the linearly independent set $\{\sigma_{\lambda}\}_{\lambda}$ forms a basis for the span of the $\{\rho_{\cdot|x}\}$, and so there exists a dual basis $\{H_{\lambda}\}_{\lambda}$ such that (recalling Eq.~\eqref{eq:statereconstr})
\begin{align}
        \rho_{\cdot|x} = \sum_{\lambda}\tr[H_{\lambda}\rho_{\cdot|x}]\sigma_{\lambda}.
\end{align}
Thus, we  can define an ontological representation of $\rho_{a|x}$ by
\begin{align} 
p(\lambda|x)\coloneqq& \tr[H_{\lambda}\rho_{\cdot|x}]. 
\end{align}
Given that this a linear functional on the set of quantum states, it follows that the set $\{p(\lambda|x)\}_x$ satisfies all the operational identities in $\mc{O}(\msf{P})$ defined by $\{\rho_{\cdot|x}\}_x$. By condition 1 of Theorem~\ref{theoremprep}, one concludes that the multi-state $\{\rho_{\cdot|x}\}_x$ is classical.
\end{proof}

\section{Proof of \cref{purestatescorollary}}
\label{app:purestatescorollary}
\setcounter{corollary}{3}
\begin{corollary}
A multi-state consisting of a set of distinct \emph{pure} states on a single system is classical if and only if the states are linearly independent. 

\end{corollary}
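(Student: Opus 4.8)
The ``if'' direction is immediate: if the distinct pure states are linearly independent, then the set $\mc{O}(\msf{P})$ of operational identities is empty, so Corollary~\ref{lem:linearinde} already delivers classicality. The content lies in the ``only if'' direction, which I would establish in contrapositive form: if a set of distinct pure states $\{\rho_x\}_x$ (writing $\rho_x$ for $\rho_{\cdot|x}$, since the outcome label $a$ is trivial for a multi-state) is linearly \emph{dependent}, then the multi-state is nonclassical. The plan is to assume toward a contradiction that it is classical, invoke condition (2) of Theorem~\ref{theoremprep}, and show that purity together with distinctness is incompatible with the linear dependence.

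Condition (2) supplies a frame $\{\sigma_\lambda\}_\lambda$ of density operators and a dual frame $\{H_\lambda\}_\lambda$ of Hermitian operators with $\sum_\lambda H_\lambda = \mathcal{P}_V(\mbb{1})$, $\tr[H_\lambda\rho_x]\in[0,1]$, and $\rho_x = \sum_\lambda \tr[H_\lambda\rho_x]\,\sigma_\lambda$ for every $x$. The first step I would carry out is to check that the coefficients $p(\lambda|x):=\tr[H_\lambda\rho_x]$ form a genuine probability distribution over $\lambda$ for each fixed $x$: nonnegativity is assumed, and since $\rho_x\in V$ and $\mathcal{P}_V$ is self-adjoint in the Hilbert--Schmidt inner product, $\sum_\lambda p(\lambda|x)=\tr[\mathcal{P}_V(\mbb{1})\rho_x]=\tr[\mbb{1}\,\mathcal{P}_V(\rho_x)]=\tr[\rho_x]=1$. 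Thus the frame decomposition exhibits each $\rho_x$ as a genuine convex mixture $\rho_x=\sum_\lambda p(\lambda|x)\,\sigma_\lambda$.

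The crux is extremality. Because each $\rho_x$ is pure, it is an extreme point of the state space, so in any convex decomposition every $\sigma_\lambda$ carrying positive weight must equal $\rho_x$. Setting $S_x:=\{\lambda:p(\lambda|x)>0\}$, I therefore obtain $\sigma_\lambda=\rho_x$ for all $\lambda\in S_x$, with $S_x\neq\emptyset$ (the weights sum to $1$). Now I would use \emph{distinctness}: if some $\lambda$ lay in both $S_x$ and $S_{x'}$ with $x\neq x'$, then $\rho_x=\sigma_\lambda=\rho_{x'}$, a contradiction; hence the supports $\{S_x\}_x$ are pairwise disjoint, so for each fixed $\lambda$ at most one of the numbers $\{p(\lambda|x)\}_x$ is nonzero.

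Finally I would feed in the linear dependence, which furnishes real coefficients $\{\alpha_x\}$, not all zero, with $\sum_x\alpha_x\rho_x=\mbb{0}$; this is an operational identity, so $\{\alpha_x\}\in\mc{O}(\msf{P})$ and the linearity constraint Eq.~\eqref{eq_paxOp} forces $\sum_x\alpha_x\,p(\lambda|x)=0$ for every $\lambda$. Fixing any $x_0$ and choosing $\lambda\in S_{x_0}$ (nonempty), disjointness collapses the sum to the single term $\alpha_{x_0}\,p(\lambda|x_0)=0$, and $p(\lambda|x_0)>0$ gives $\alpha_{x_0}=0$; as $x_0$ is arbitrary, all $\alpha_x$ vanish, contradicting linear dependence. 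Hence no frame representation of the required form exists and the multi-state is nonclassical. I expect the only genuinely delicate step to be the normalization argument showing the frame weights sum to one (so that extremality of pure states can legitimately be invoked); once that is in place, the support-disjointness and the final cancellation are combinatorially straightforward. This argument is the exact analogue, for states, of the proof of Corollary~\ref{coro: measurement} for rank-$1$ effects.
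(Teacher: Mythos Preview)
Your proof is correct and follows essentially the same route as the paper: assume classicality, invoke Theorem~\ref{theoremprep} to get a convex decomposition $\rho_x=\sum_\lambda p(\lambda|x)\sigma_\lambda$ respecting the operational identities, use purity to force the decomposition to be essentially trivial, and then read off $\alpha_x=0$ from the preserved identity. The only difference is cosmetic: the paper invokes condition~(1) directly (so normalization of $p(\lambda|x)$ is already built in) and phrases the extremality consequence more tersely as ``$p(\lambda|x)=\delta_{\lambda,x}$'', whereas you work from condition~(2), verify normalization via $\mathcal{P}_V$, and make the disjoint-support structure explicit---a slightly more careful treatment of the same argument.
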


\begin{proof}
The proof for the \emph{if} direction follows directly from \cref{lem:linearinde}, since a set of linearly independent states forms a simplex. 
The proof for the \emph{only if} direction can be shown by contradiction. Suppose the states in the set $\{\rho_{\cdot|x}\}_x$ constituting the multi-state are linearly dependent, then we must have at least one equation of the following form:
\begin{align}
    \sum_x \alpha_x \rho_{\cdot|x} =0,
\end{align}
where $\alpha_x \neq 0$ for some $x$. If the set is classical, then there exists a set of states $\{\sigma_{\lambda}\}_{\lambda}$ such that $\rho_{\cdot|x}$ can be expressed as a convex mixture of the form $\rho_{\cdot|x}=\sum_{\lambda}p(\lambda|x)\sigma_{\lambda}$ where
\begin{align}
    \sum_x\alpha_x p(\lambda|x)=0.
    \label{eq:alpha}
\end{align}
However, since the $\rho_{\cdot|x}$s are assumed to be pure, the convex mixture must be trivial, that is, one must have $p(\lambda|x)=\delta_{\lambda, x}$ and $\sigma_{x}=\rho_{\cdot|x}^A$. Then, \cref{eq:alpha} reduces to
\begin{align}
    \sum_x \alpha_x \delta_{\lambda, x}=0,
\end{align}
implying that $\alpha_x=0$ for all $x$, contradicting our earlier assumption of linear dependence. Hence $\{\rho_{\cdot|x}\}_x$ must be a linearly independent set. 
\end{proof}

\section{Proof of Proposition~\ref{prop:bipartite}}\label{steerproof}

We now repeat and then prove Proposition~\ref{prop:bipartite}. We note that because the proof is relatively simple, and because the proof of Proposition~\ref {prop:channel} and Proposition~\ref {prop:general} are analogous, we do not provide the details of the latter, but rather trust that the reader will see how to generalize the proof of the Proposition~\ref{prop:bipartite} to proofs of Proposition~\ref {prop:channel} and Proposition~\ref {prop:general}.

\setcounter{proposition}{0}
\begin{proposition}
Any circuit \blk with local measurements on a bipartite state (as depicted in Fig.~\ref{fig:steer}(a)) is classically explainable only if the two prepare-measure circuits \blk it induces by boxing to obtain a steering scenario (in one or the other direction, as in Fig.~\ref{fig:steer}(b)) are classically explainable.
\end{proposition}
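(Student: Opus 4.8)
The plan is to establish the stated \emph{only if} direction constructively: starting from a linear, diagram-preserving ontological model of the full bipartite circuit of Fig.~\ref{fig:steer}(a), I would exhibit an explicit ontological model for each of the two induced prepare-measure circuits of Fig.~\ref{fig:steer}(b). The key realization, already anticipated in \cref{eq:steeronto}, is that the ontological representation of each steered assemblage is \emph{forced} by the model of the full circuit, so no new ingredients need to be constructed---one merely restricts and marginalizes the existing model.

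Concretely, suppose the circuit of Fig.~\ref{fig:steer}(a) is classically explainable, so that there exist a joint distribution $p(\lambda_1\lambda_2)$ and response functions $p(b_1|y_1\lambda_1)$, $p(b_2|y_2\lambda_2)$ reproducing the statistics through \cref{eq:cl-bipartite} and obeying all the ontological identities collected in \cref{op:bipartiterep}. For the prepare-measure circuit obtained by boxing $\msf{M}_1$ with $\rho^{A_1A_2}$---a multi-source $\{\{\tilde{\rho}^{A_2}_{b_1|y_1}\}_{b_1}\}_{y_1}$ on $A_2$ probed by the multi-measurement $\msf{M}_2$---I would take the ontic space to be $\Lambda_2$, represent the multi-source by the marginal $\tilde p(b_1\lambda_2|y_1)$ defined in \cref{eq:steeronto}, and represent $\msf{M}_2$ by the unchanged response functions $p(b_2|y_2\lambda_2)$. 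That these reproduce the prepare-measure statistics is immediate: summing $p(b_2|y_2\lambda_2)\,\tilde p(b_1\lambda_2|y_1)$ over $\lambda_2$ and expanding the definition of $\tilde p$ recovers exactly the right-hand side of \cref{eq:cl-bipartite}, which equals $p(b_1b_2|y_1y_2)$.

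It then remains to verify that this is a \emph{valid} ontological representation in the sense of Definition~\ref{defn:classical}, i.e., that it is nonnegative, normalized, and respects the linearity constraints. Nonnegativity of $\tilde p(b_1\lambda_2|y_1)$ follows from that of $p(b_1|y_1\lambda_1)$ and $p(\lambda_1\lambda_2)$; moreover $\sum_{b_1}p(b_1|y_1\lambda_1)=1$ makes $\sum_{b_1}\tilde p(b_1\lambda_2|y_1)$ equal to the marginal $\sum_{\lambda_1}p(\lambda_1\lambda_2)=p(\lambda_2)$, which is independent of $y_1$ and normalized, so $\tilde p$ is a legitimate ontological representation of a multi-source. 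The linearity constraints a prepare-measure scenario imposes are precisely the ontological identities among the states and among the effects, i.e., those indexed by $\mc{O}(\msf{M}_1\circ\msf{P})$ and $\mc{O}(\msf{M}_2)$ (cf.\ \cref{eq:PMNC,nc:PM} in \cref{ncstates}). These are exactly \cref{ncprep1,ncmeas2}, already satisfied by the full model. Hence the induced prepare-measure circuit on $A_2$ is classically explainable; the argument for the circuit on $A_1$ is identical after exchanging the roles of the two subsystems, using \cref{ncprep2,ncmeas1} in place of \cref{ncprep1,ncmeas2}.

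I expect the only genuinely delicate point to be the justification that a prepare-measure scenario demands \emph{nothing beyond} the state-identities and effect-identities---so that the additional constraints of the full bipartite model (\cref{ncprep1,ncprep2,ncnewstate}, and in particular the $\mc{O}(\tilde{\msf{P}})$ constraint \cref{ncnewstate} arising from parallel composition of steered states) are simply not required of the induced circuits and may be discarded. This is secured by the characterization of classical explainability of a prepare-measure scenario given in \cref{ncstates}, which lists $\mc{O}(\msf{P})$ and $\mc{O}(\msf{M})$ as the complete set of operational identities in that scenario. Everything else is routine bookkeeping, and no obstruction arises precisely because passing from the full circuit to a boxed one can only \emph{drop} constraints, never add them.
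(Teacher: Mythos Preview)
Your proposal is correct and follows essentially the same approach as the paper: take the given ontological model of the bipartite circuit, define the multi-source representation on $A_2$ by the marginal $\tilde p(b_1\lambda_2|y_1)=\sum_{\lambda_1}p(b_1|y_1\lambda_1)p(\lambda_1\lambda_2)$, keep the response functions $p(b_2|y_2\lambda_2)$ unchanged, and observe that these reproduce the statistics while satisfying the two operational-identity constraints $\mc{O}(\msf{M}_1\circ\msf{P})$ and $\mc{O}(\msf{M}_2)$ demanded of a prepare-measure scenario (with the symmetric argument for $A_1$). Your explicit checks of nonnegativity and normalization, and your remark that boxing can only \emph{drop} constraints, make the argument slightly more complete than the paper's version, but the underlying idea is identical.
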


\begin{proof}
If the statistics generated by a bipartite state $\rho^{A_1A_2}$ with any multi-measurements $\{\{M_{b_1|y_1}^{A_1}\}_{b_1}\}_{y_1}$ and $\{\{M_{b_2|y_2}^{A_2}\}_{b_2}\}_{y_2}$ are classically explainable, then they can be expressed as
\begin{align}
&  p(b_1b_2|y_1y_2)=\sum_{\lambda_1\lambda_2} p(\lambda_1\lambda_2)p(b_1|y_1\lambda_1)  p(b_2|y_2\lambda_2), 
\end{align}
where $p(b_1|y_1\lambda_1)$ respects $\mc{O}(\msf{M}_1)$, $p(b_2|y_2\lambda_2)$ respects $\mc{O}(\msf{M}_2)$, the composite
$\sum_{\lambda_1} p(b_1|y_1\lambda_1)p(\lambda_1\lambda_2)$ respects $\mc{O}(\msf{M}_1\circ \msf{P})$, the composite $\sum_{\lambda_2} p(b_2|y_2\lambda_2)p(\lambda_1\lambda_2)$ respects $\mc{O}(\msf{M}_2\circ \msf{P})$, and $p(\lambda_1\lambda_2)$ respects $\mc{O}({ \msf{\tilde{P}}})$ as in \cref{op:bipartiterep} in the main text.

The fact that $\lambda_1$ screens-off the dependence of $b_1$ to $\lambda_2$ in the bipartite circuit in Fig.~\ref{fig:steer}(a) allows us to define 
\begin{align}
   \tilde{p}(b_1 \lambda_2|y_1) \coloneqq \sum_{\lambda_1} p(b_1|y_1\lambda_1)p(\lambda_1\lambda_2),
\end{align}
we thus have that
\begin{align}
p(b_1b_2|y_1y_2)=\sum_{\lambda_2}p(b_2|y_2\lambda_2) \tilde{p}(b_1\lambda_2|y_1),
\end{align}
which recovers the statistics of a prepare-measure of the following pair of processes: (1) the preparation $\{\{\tilde{\rho}^{A_2})_{b_1|y_1}\}_{b_1}\}_{y_1}$ on $\mc{H}^{A_2}$, where
\begin{align}\label{eq:B2state}
&\tilde{\rho}^{A_2}_{b_1|y_1}=\tr_{A_1}[(M_{b_1|y_1}^{A_1}\otimes\mbb{1}^{A_2})\rho^{A_1A_2}], 
\end{align}
and (2) the measurement $\{\{M_{b_2|y_2}^{A_2}\}_{b_2}\}_{y_2}$ on $\mc{H}^{A_2}$. Evidently, $\tilde{p}(b_1 \lambda_2|y_1)$ respects the identities satisfied by $\{\{\tilde{\rho}_{b_1|y_1}\}_{b_1}\}_{y_1}$, namely $\mc{O}(\msf{M}_1\circ \msf{P})$, and $p(b_2|y_2\lambda_2)$ respects the identities satisfied by the multi-measurement, namely $\mc{O}(\msf{M}_2)$. Thus, such prepare-measure statistics are classically explainable. Using analogous arguments, we infer that the statistics of the prepare-measure scenario on $A_1$ is also classically explainable.
\end{proof}

\section{Proof of frame decomposition}
\subsection{Proof of Theorem~\ref{ncbp}}
\label{appendixK}
\cmt 
\setcounter{theorem}{3}
\begin{theorem}\label{ncbp_app}
A bipartite state $\rho^{A_1A_2}\in\mc S(\mc H^{A_1}\otimes\mc H^{A_2})$ is classical if and only if it admits a frame decomposition:
\begin{equation}\label{eq:bipartiteframe_app}
\rho^{A_1A_2}
=\sum_{\lambda_1,\lambda_2}
\tr\big[(H^{A_1}_{\lambda_1}\otimes H^{A_2}_{\lambda_2}) \rho^{A_1A_2}\big]
\sigma^{A_1}_{\lambda_1}\otimes\sigma^{A_2}_{\lambda_2},
\end{equation}
where the frame on the operator space $V_i$ is a fixed set of density operators $\{\sigma^{A_i}_{\lambda_i}\}_{\lambda_i}$, and the dual frame is a set of Hermitian operators $\{H^{A_i}_{\lambda_i}\}_{\lambda_i}$ 
satisfying $\sum_{\lambda_{i}}H^{A_i}_{\lambda_i}=\mc{P}_{V_i}(\mbb 1^{A_i})$ with $\mc{P}_{V_i}$ denoting the superoperator projecting onto $V_i$, and such that
\begin{equation}\label{eq:constraints_app}
\tr\big[(H^{A_1}_{\lambda_1}\otimes H^{A_2}_{\lambda_2}) \rho^{A_1A_2}\big]\ge 0
\quad \forall \lambda_1,\lambda_2.
\end{equation} 
\end{theorem}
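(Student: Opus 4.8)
The plan is to prove the two implications of \cref{ncbp} separately. For the \emph{if} direction I will exhibit an explicit ontological model of the circuit in \cref{fig:steer}(a); for the \emph{only if} direction I will extract such a model from the assumed classicality, upgrade its ingredients to frame/dual-frame pairs via the single-system reconstruction of \cref{theoremprep}, and then use the subtle bipartite operational identity \eqref{eq:OPnewbi} to rewrite the ontic joint distribution as a product-frame overlap.

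\emph{If direction.} Given the decomposition \eqref{eq:bipartiteframe}, I would set $p(\lambda_1\lambda_2):=\tr[(H^{A_1}_{\lambda_1}\otimes H^{A_2}_{\lambda_2})\rho^{A_1A_2}]$ and $p(M^{A_i}|\lambda_i):=\tr[M^{A_i}\sigma^{A_i}_{\lambda_i}]$. These are valid ontological data: nonnegativity of $p(\lambda_1\lambda_2)$ is exactly \eqref{eq:constraints}, the response functions are valid because the $\sigma^{A_i}_{\lambda_i}$ are density operators, and normalization $\sum_{\lambda_1\lambda_2}p(\lambda_1\lambda_2)=1$ follows from $\sum_{\lambda_i}H^{A_i}_{\lambda_i}=\mc{P}_{V_i}(\mbb{1}^{A_i})$ together with $\rho^{A_1A_2}\in V_1\otimes V_2$ (immediate here, since frame vectors lie in $V_i$). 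Reproduction of the statistics is just \eqref{eq:bipartiteframe} traced against $M^{A_1}\otimes M^{A_2}$. The only real work is checking the five families of operational identities in \eqref{op:bipartiterep}: the first two hold because the response functions are linear in the effects, and the steering identities \eqref{ncprep1}--\eqref{ncnewstate} all reduce to the single reconstruction identity $\sum_{\lambda_1}\tr[M^{A_1}\sigma^{A_1}_{\lambda_1}]\,\tr_{A_1}[(H^{A_1}_{\lambda_1}\otimes\mbb{1}^{A_2})\rho^{A_1A_2}]=\tr_{A_1}[(M^{A_1}\otimes\mbb{1}^{A_2})\rho^{A_1A_2}]$ (and its $A_2$ mirror), which I would prove by applying the frame-reconstruction property on $V_1$ to the frame vectors appearing when $\rho^{A_1A_2}$ is expanded via \eqref{eq:bipartiteframe}.

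\emph{Only if direction.} Classicality (\cref{def:bipartite}) supplies a model $p(\lambda_1\lambda_2),\,p(M^{A_i}|\lambda_i)$ for the statistics $\tr[(M^{A_1}\otimes M^{A_2})\rho^{A_1A_2}]$ obtained from all product effects. Since we range over all effects with $p(\mbb{1}|\lambda_i)=1$, the generalized Gleason theorem~\cite{Busch2003} yields density operators $\sigma^{A_i}_{\lambda_i}$ with $p(M^{A_i}|\lambda_i)=\tr[M^{A_i}\sigma^{A_i}_{\lambda_i}]$, and informational completeness of product effects forces the separable form $\rho^{A_1A_2}=\sum_{\lambda_1\lambda_2}p(\lambda_1\lambda_2)\sigma^{A_1}_{\lambda_1}\otimes\sigma^{A_2}_{\lambda_2}$ (this step is already contained in \cref{lem:bi_global}). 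Next I would manufacture the dual frames: the representation $\tilde p(b_1\lambda_2|y_1)$ of the assemblage steered onto $A_2$ respects $\mc{O}(\msf{M}_1\circ\msf{P})$ by \eqref{ncprep1}, hence descends to a well-defined linear functional of the steered state $\tilde\rho^{A_2}_{b_1|y_1}$; Riesz then gives a unique $H^{A_2}_{\lambda_2}$ with $\tilde p(b_1\lambda_2|y_1)=\tr[H^{A_2}_{\lambda_2}\tilde\rho^{A_2}_{b_1|y_1}]$, and summing over $\lambda_2$ gives $\sum_{\lambda_2}H^{A_2}_{\lambda_2}=\mc{P}_{V_2}(\mbb{1}^{A_2})$ (using $\tr[\sigma^{A_1}_{\lambda_1}]=1$). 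The operators $H^{A_1}_{\lambda_1}$ arise symmetrically, and a direct computation shows $\{\sigma^{A_i}_{\lambda_i}\},\{H^{A_i}_{\lambda_i}\}$ is a frame/dual-frame pair on $V_i$.

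The crucial and, I expect, hardest step is to prove $p(\lambda_1\lambda_2)=\tr[(H^{A_1}_{\lambda_1}\otimes H^{A_2}_{\lambda_2})\rho^{A_1A_2}]$, since this is exactly what converts the ontic probability into the bilinear coefficient of \eqref{eq:bipartiteframe} and hence delivers the positivity \eqref{eq:constraints} for free. Here the bipartite operational identity $\mc{O}(\tilde{\msf{P}})$ of \eqref{eq:OPnewbi} is indispensable: an operator Schmidt argument shows $V_i$ is the span of the nonzero-weight Schmidt factors, so that $\rho^{A_1A_2}\in V_1\otimes V_2=\mathrm{Span}\{\tilde\rho^{A_1}_{b_2|y_2}\otimes\tilde\rho^{A_2}_{b_1|y_1}\}$, and hence there is an $\{\alpha\}\in\mc{O}(\tilde{\msf{P}})$ with $\rho^{A_1A_2}=\sum\alpha\,\tilde\rho^{A_1}_{b_2|y_2}\otimes\tilde\rho^{A_2}_{b_1|y_1}$; substituting this $\{\alpha\}$ into the last constraint \eqref{ncnewstate} and inserting the Riesz forms of $\tilde p(b_2\lambda_1|y_2)$ and $\tilde p(b_1\lambda_2|y_1)$ collapses its right-hand side to $\tr[(H^{A_1}_{\lambda_1}\otimes H^{A_2}_{\lambda_2})\rho^{A_1A_2}]$. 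Combining this with the separable form then yields \eqref{eq:bipartiteframe}. I expect the main obstacle to be conceptual rather than computational: recognizing that it is precisely \eqref{ncnewstate}, and not any of the one-sided steering constraints, that forces the joint distribution to be a genuine product-frame overlap---which is why, as the paper stresses, separability together with classicality of both steered assemblages is strictly weaker than classicality of the bipartite state.
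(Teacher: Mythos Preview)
Your proposal is correct and follows essentially the same route as the paper's proof: Gleason/Riesz to extract $\sigma^{A_i}_{\lambda_i}$ and $H^{A_i}_{\lambda_i}$, establishing the frame/dual-frame property on $V_i$ from the one-sided steering constraints \eqref{ncprep1}--\eqref{ncprep2}, and then invoking the bipartite identity \eqref{ncnewstate} to obtain $p(\lambda_1\lambda_2)=\tr[(H^{A_1}_{\lambda_1}\otimes H^{A_2}_{\lambda_2})\rho^{A_1A_2}]$. Your operator-Schmidt argument verifying that $\mc{O}(\tilde{\msf{P}})$ is nonempty (i.e., that $\rho^{A_1A_2}\in V_1\otimes V_2$) is a detail the paper leaves implicit but which is indeed needed for step~(5) of its proof to go through.
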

\begin{proof}
\cmt 
We first prove the \enquote{only if} direction. Assume $\rho^{A_1A_2}$ is classical in the sense of Definition~\ref{def:bipartite} in the main text. Then, for the statistics generated by the circuits consisting of the bipartite state $\rho^{A_1A_2}$ on $\mc{H}^{A_1A_2}$ and every possible multi-measurement $\{M^{A_1}_{b_1|y_1}\}_{b_1,y_1}$ on $\mc{H}^{A_1}$ and $\{M^{A_2}_{b_2|y_2}\}_{b_2,y_2}$ on $\mc{H}^{A_2}$, there exists an ontological model:
\begin{subequations}
\begin{align}
    \tr[&(M^{A_1}_{b_1|y_1}\otimes M^{A_2}_{b_2|y_2})\rho^{A_1A_2}]\notag \\    &=\sum_{\lambda_1\lambda_2}p(\lambda_1\lambda_2)p(b_1|y_1\lambda_1)p(b_2|y_2\lambda_2),  \label{eq:bp-onto}  \\&=\sum_{\lambda_1}p(b_1|y_1\lambda_1)\tilde{p}(b_2\lambda_1|y_2), \label{eq:bp-onto1} 
    \\&=\sum_{\lambda_2}p(b_2|y_2\lambda_2)\tilde{p}(b_1\lambda_2|y_1 ), \label{eq:bp-onto2}  
\end{align}
\end{subequations}
where $\tilde{p}(b_2\lambda_1|y_2)$ and $\tilde{p}(b_1\lambda_2|y_1 )$ are defined in \cref{eq:steeronto} and where the ontological identities in \cref{op:bipartiterep} are satisfied. In particular, these constraints imply that:

(1) From \cref{ncmeas1}, for each $\lambda_1$, the map $M^{A_1}_{b_1|y_1}\mapsto p(b_1|y_1\lambda_1)$
is a normalized\footnote{The normalization $p(\mbb{1}^{A_1}|\lambda_1)=1$ follows from the fact that $\tr[(\mbb{1}^{A_1}\otimes \mbb{1}^{A_2})\rho^{A_1A_2}]=1$, and by \cref{eq:bp-onto}, $p(\mbb{1}^{A_i}|\lambda_i)$ must equal $1$.} positive linear functional on the space of all effects on $A_1$, i.e., $\mc{M}(\mc H^{A_1})$. Therefore, by the generalized Gleason theorem, there exists a valid quantum state $\sigma^{A_1}_{\lambda_1}$ such that \begin{equation}
p(b_1|y_1\lambda_1)=\tr[M^{A_1}_{b_1|y_1}\sigma^{A_1}_{\lambda_1}].  \label{eq:frbp1}
\end{equation}

(2) From \cref{ncmeas2}, for each $\lambda_2$, the map $M^{A_2}_{b_2|y_2}\mapsto p(b_2|y_2\lambda_2)$
is a normalized positive linear functional on the space of all effects on $A_2$, i.e., $\mc{M}(\mc H^{A_2})$. Therefore, by the generalized Gleason theorem, there exists a valid quantum state $\sigma^{A_2}_{\lambda_2}$ such that: 
\begin{equation}
p(b_2|y_2\lambda_2)=\tr[M^{A_2}_{b_2|y_2}\sigma^{A_2}_{\lambda_2}].\label{eq:frbp2}
\end{equation}

(3) From \cref{ncprep1}, for each $\lambda_2$, the map $\tilde\rho^{A_2}_{b_1|y_1}\mapsto \tilde p(b_1\lambda_2|y_1)$ (where $\tilde\rho^{A_2}_{b_1|y_1}$ is the steered state on $A_2$, defined in \cref{eq:steer A}) is a positive linear functional on the operator space $V_2:=\text{Span}(\{\tilde\rho^{A_2}_{b_1|y_1}\})$, 
hence there exists a Hermitian operator $H^{A_2}_{\lambda_2}$ on $\mc{H}^{A_2}$ such that:
\begin{align}
\tilde{p}(b_1\lambda_2|y_1) =\tr[H_{\lambda_2}^{A_2}\tilde{\rho}^{A_2}_{b_1|y_1}]. \label{eq:frbp3}
   \end{align}

(4) From \cref{ncprep2}, for each $\lambda_1$ the map $\tilde{\rho}^{A_1}_{b_2|y_2}\mapsto \tilde p(b_2\lambda_1|y_2)$
(where $\tilde\rho^{A_1}_{b_2|y_2}$ is the steered state on $A_1$, defined in \cref{eq:steer B}) is a positive linear functional on the operator space $V_1=\text{Span}(\{\tilde\rho^{A_1}_{b_2|y_2}\})$, 
hence there exists a Hermitian operator $H^{A_1}_{\lambda_1}$ on $\mc{H}^{A_1}$ such that:
\begin{align}
\tilde{p}(b_2\lambda_1|y_2) =\tr[H_{\lambda_1}^{A_1}\tilde{\rho}^{A_1}_{b_2|y_2}]. \label{eq:frbp4}
\end{align}

Plugging \cref{eq:frbp1} and \cref{eq:frbp4} into  \cref{eq:bp-onto1} yields
\begin{align}
\tr[(M^{A_1}_{b_1|y_1}\otimes& M^{A_2}_{b_2|y_2})\rho^{A_1A_2}] \\
&=\sum_{\lambda_1}\tr[M^{A_1}_{b_1|y_1}\sigma^{A_1}_{\lambda_1}]\tr[H_{\lambda_1}^{A_1}\tilde{\rho}^{A_1}_{b_2|y_2}].\notag 
\end{align}
Recalling the definition of $\tilde{\rho}^{A_1}_{b_2|y_2}$ provided in \cref{eq:steer B}, we have
\begin{align}
\tr[(M^{A_1}_{b_1|y_1}\otimes& M^{A_2}_{b_2|y_2})\rho^{A_1A_2}]=\tr[M^{A_1}_{b_1|y_1}\tilde{\rho}^{A_1}_{b_2|y_2}],
\end{align}
and combining these two expressions, we obtain 
\begin{align}
\tr[M^{A_1}_{b_1|y_1}\tilde{\rho}^{A_1}_{b_2|y_2}]=\sum_{\lambda_1}\tr[M^{A_1}_{b_1|y_1}\sigma^{A_1}_{\lambda_1}]\tr[H_{\lambda_1}^{A_1}\tilde{\rho}^{A_1}_{b_2|y_2}].\notag 
\end{align}

Since this holds for all $M^{A_1}_{b_1|y_1}\in \mc{M}(\mc H^{A_1})$ and these are tomographically complete, it follows that
\begin{align}
\tilde{\rho}^{A_1}_{b_2|y_2}=\sum_{\lambda_1}\tr[\tilde{\rho}^{A_1}_{b_2|y_2}H^{A_1}_{\lambda_1}] \sigma^{A_1}_{\lambda_1}.
\end{align}
That is, $\{ \sigma^{A_1}_{\lambda_1}\}_{\lambda_1}$ and $\{H^{A_1}_{\lambda_1}\}_{\lambda_1}$ define a frame and its dual frame on $V_1=\text{Span}(\{\tilde{\rho}^{A_1}_{b_2|y_2}\})$. Moreover, taking the trace of both sides yields
\begin{align}
\tr[\tilde{\rho}^{A_1}_{b_2|y_2}]=\tr[\tilde{\rho}^{A_1}_{b_2|y_2} \sum_{\lambda_1} H^{A_1}_{\lambda_1}].
\end{align}
Noting that this holds for all $\tilde{\rho}^{A_1}_{b_2|y_2} \in V_1$ implies that
$\sum_{\lambda_1}H^{A_1}_{\lambda_1}= \mathcal{P}_{V_1}(\mbb{1}^{A_1})$, where we denote the superoperator that projects onto the operator space $V_1$ by $\mathcal{P}_{V_1}$.\cmt

Similarly, substituting \cref{eq:frbp2}, and \cref{eq:frbp3} into  \cref{eq:bp-onto2}, we can prove that $\{\sigma^{A_2}_{\lambda_2}\}_{\lambda_2}$ and the set of Hermitian operators $\{H^{A_2}_{\lambda_2}\}_{\lambda_2}$ (with $\sum_{\lambda_2}H^{A_2}_{\lambda_2}= \mathcal{P}_{V_2}(\mbb{1}^{A_2})$) defines a pair of frame and dual frame on $V_2=\text{Span}(\{\tilde{\rho}^{A_2}_{b_1|y_1}\})$, and in particular
\begin{align}
\tilde{\rho}^{A_2}_{b_1|y_1}=\sum_{\lambda_2}\tr[\tilde{\rho}^{A_2}_{b_1|y_1}H^{A_2}_{\lambda_2}] \sigma^{A_2}_{\lambda_2},
\end{align}

(5) Finally, using the ontological identities in \cref{ncnewstate} and substituting \cref{eq:frbp3,eq:frbp4}, we obtain
\begin{align}
p(\lambda_1\lambda_2)&=\sum_{{b_1b_2y_1y_2}} \alpha_{b_1b_2y_1y_2} \tilde{p}(b_1\lambda_2|y_1)\tilde{p}(b_2\lambda_1|y_2)  \\
&=\sum_{{b_1b_2y_1y_2}} \alpha_{b_1b_2y_1y_2} \tr[H_{\lambda_2}^{A_2}\tilde{\rho}^{A_2}_{b_1|y_1}]\tr[H_{\lambda_1}^{A_1}\tilde{\rho}^{A_1}_{b_2|y_2}].\notag
 \end{align}
Recalling the operational identity in \cref{eq:OPnewbi}, we thus have 
\begin{align}
p(\lambda_1\lambda_2)=\tr[(H_{\lambda_1}^{A_1}\otimes H_{\lambda_2}^{A_2})\rho^{A_1A_2}]. \label{eq:bipositive}
\end{align}

Combining it with \cref{eq:frbp1,eq:frbp2,eq:bp-onto} gives
\begin{align}
    &\tr[(M^{A_1}_{b_1|y_1}\otimes M^{A_2}_{b_2|y_2})\rho^{A_1A_2}]\\ &=\sum_{\lambda_1\lambda_2}\tr[(H_{\lambda_1}^{A_1}\otimes H_{\lambda_2}^{A_2})\rho^{A_1A_2}]\tr[M^{A_1}_{b_1|y_1}\sigma^{A_1}_{\lambda_1}]\tr[M^{A_2}_{b_2|y_2}\sigma^{A_2}_{\lambda_2}],  \notag 
\end{align}
Since this holds for all $M^{A_i}_{b_i|y_i}\in \mc{M}(\mc H^{A_i})$, which are tomographically complete, we conclude that
\begin{align}
\rho^{A_1A_2}=\sum_{\lambda_1\lambda_2}\tr[(H_{\lambda_1}^{A_1}\otimes H_{\lambda_2}^{A_2})\rho^{A_1A_2}]\sigma_{\lambda_1}^{A_1}\otimes \sigma_{\lambda_2}^{A_2}.
\end{align}
Moreover, the constraint $\tr[(H_{\lambda_1}^{A_1}\otimes H_{\lambda_2}^{A_2})\rho^{A_1A_2}]\ge 0$ in \cref{eq:constraints_app} holds by \cref{eq:bipositive} and the existence of the ontological model in \cref{eq:bp-onto}. \par

Now we prove the \enquote{if} direction. Suppose there exists a frame decomposition of $\rho^{A_1A_2}$ as in \cref{eq:bipartiteframe_app} and satisfying the requisite constraints. For the bipartite state $\rho^{A_1A_2}$, any effect $M_{b_1|y_1}^{A_1}\in\mc{M}(\mc H^{A_1})$ and any effect $M_{b_2|y_2}^{A_2}\in\mc{M}(\mc H^{A_2})$, we can define ontological representations of each of these via the linear functionals induced by the frames and dual frames:
\begin{subequations}
\begin{align} 
p(\lambda_1\lambda_2)&\coloneqq\tr[(H_{\lambda_1}^{A_1}\otimes H_{\lambda_2}^{A_2})\rho^{A_1A_2}], \label{eq: birecon1}\\
p(b_1|y_1\lambda_1)&=\tr[\sigma^{A_1}_{\lambda_1}M^{A_1}_{b_1|y_1}], \label{eq: birecon2}\\
p(b_2|y_2\lambda_2)&=\tr[\sigma^{A_2}_{\lambda_2}M^{A_2}_{b_2|y_2}]\label{eq: birecon3}.
\end{align}
The nonnegativity and normalization of these ontological representations follow from \cref{eq:constraints_app} together with the fact that the frame $\{\sigma^{A_i}_{\lambda_i}\}_{\lambda_i}$ is a set of density operators.

Next, we show that the effective ontological representations obtained by sequential composition depend linearly on the corresponding effective quantum processes. This follows from the fact that $\{\sigma^{A_i}_{\lambda_i}\}_{\lambda_i}$ and $\{H_{\lambda_i}^{A_i}\}_{\lambda_i}$ form a frame and its dual frame on $V_i$. In particular, 
\begin{align}
\tilde{p}&(b_1\lambda_2|y_1 )\coloneqq\sum_{\lambda_1}p(\lambda_1\lambda_2)p(b_1|y_1\lambda_1)\notag \\&=\sum_{\lambda_1}\tr[(H_{\lambda_1}^{A_1}\otimes H_{\lambda_2}^{A_2})\rho^{A_1A_2}]\tr[M^{A_1}_{b_1|y_1}\sigma^{A_1}_{\lambda_1}] \notag \\
&=\tr[(M^{A_1}_{b_1|y_1}\otimes H_{\lambda_2}^{A_2})\rho^{A_1A_2}]=\tr[H_{\lambda_2}^{A_2}\tilde{\rho}^{A_2}_{b_1|y_1}], \label{eq:dualframe2}
\end{align}
where we use \cref{eq: birecon1,eq: birecon2} in the first equality, and the definition of $\tilde{\rho}^{A_2}_{b_1|y_1}$, provided in \cref{eq:steer A}, in the last equality.  Similarly, we can show that
\begin{align}
\tilde{p}&(b_2\lambda_1|y_2 )\coloneqq\sum_{\lambda_2}p(\lambda_1\lambda_2)p(b_2|y_2\lambda_2)\notag \\
&=\sum_{\lambda_2}\tr[(H_{\lambda_1}^{A_1}\otimes H_{\lambda_2}^{A_2})\rho^{A_1A_2}]\tr[M^{A_2}_{b_2|y_2}\sigma^{A_2}_{\lambda_2}] \notag \\
&=\tr[(H_{\lambda_1}^{A_1}\otimes M^{A_2}_{b_2|y_2})\rho^{A_1A_2}]=\tr[H_{\lambda_1}^{A_1}\tilde{\rho}^{A_1}_{b_2|y_2}], \label{eq:dualframe1}
\end{align}
\end{subequations}
using \cref{eq: birecon1,eq: birecon3} in the first equality, and the definition of $\tilde{\rho}^{A_1}_{b_2|y_2}$, provided in \cref{eq:steer B}, in the last equality.

Finally, the ontological model reproduces the empirical statistics:
\begin{align}
&\sum_{\lambda_1\lambda_2}p(\lambda_1\lambda_2)p({b_1|y_1}\lambda_1)p({b_2|y_2}\lambda_2) \notag \\
&=\sum_{\lambda_1\lambda_2}\tr[(H_{\lambda_1}^{A_1}\otimes H_{\lambda_2}^{A_2})\rho^{A_1A_2}]\tr[M^{A_1}_{b_1|y_1}\sigma^{A_1}_{\lambda_1}]\tr[M^{A_2}_{b_2|y_2}\sigma^{A_2}_{\lambda_2}]\notag \\
&=\sum_{\lambda_2}\tr[(M^{A_1}_{b_1|y_1}\otimes H_{\lambda_2}^{A_2})\rho^{A_1A_2}]\tr[M^{A_2}_{b_2|y_2}\sigma^{A_2}_{\lambda_2}]\notag \\
&=\tr[(M^{A_1}_{b_1|y_1}\otimes M^{A_2}_{b_2|y_2})\rho^{A_1A_2}],
\end{align}
where in the last two steps, we again use the properties of frame representations.

Since all these ontological representations depend linearly on the corresponding physical processes and are nonnegative, the circuit is classically explainable. Moreover, since we allowed all possible factorizing dual processes, namely, all local multi-measurements, it follows that the bipartite state $\rho^{A_1A_2}$ is classical by Definition~\ref{def:bipartite}. 
\end{proof}
\subsection{Proof of Theorem~\ref{ncchannelstruc}}
\label{app:proof7}
\cmt
\setcounter{theorem}{4}
\begin{theorem}
\label{ncchannelstruc_app}
A channel $\mc{E}^{B|A}$: $\mc{L}(\mc{H}^{A})\mapsto \mc{L}(\mc{H}^{B})$ is classical if and only if it admits a frame representation:
\begin{equation}
\mc{E}^{B|A}(\cdot)=\sum_{\lambda_A\lambda_B}\sigma^{B}_{\lambda_B}\tr[H_{\lambda_B}^{B}\mc{E}^{B|A}(F^{A}_{\lambda_A})]\tr[G_{\lambda_A}^{A}(\cdot)]
\label{eq:channelframe_app}
\end{equation}
where the frame on $V_A$ is a set of Hermitian operators $\{{F}^{A}_{\lambda_A}\}_{\lambda_A}$ each with unit trace, $\tr[{F}^{A}_{\lambda_A}]=1$, and the dual frame on $V_A$ is a POVM $\{G_{\lambda_A}^{A}\}_{\lambda_A}$ satisfying $\sum_{\lambda_A}G_{\lambda_A}^{A}=\mc{P}_{V_A}(\mbb{1}^A)=\mbb{1}^{A}$, while the frame on $V_B$ is a set of density operators $\{\sigma^{B}_{\lambda_B}\}_{\lambda_B}$ and the dual frame on $V_B$  is a set of Hermitian operators $\{{H}_{\lambda_B}^{B}\}_{\lambda_B}$ satisfying $\sum_{\lambda_B}{H}_{\lambda_B}^{B}=\mc{P}_{V_B}(\mbb{1}^{B})$, with $\mc{P}_{V_{B(A)}}$ denoting the superoperator projecting onto $V_{B(A)}$, and such that  
\begin{align} 
\tr[H_{\lambda_B}^{B}\mc{E}^{B|A}(F^{A}_{\lambda_A})] \ge 0~~~~\forall \lambda_A, \lambda_B. \label{eq:cl-cons-all_app}
\end{align}
\end{theorem}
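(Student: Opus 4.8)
The plan is to mirror the argument already used for bipartite states (Theorem~\ref{ncbp_app}, proven in Appendix~\ref{appendixK}), now applied to the prepare-transform-measure scenario in which the channel $\mc{E}^{B|A}$ is contracted with every input state $\rho^A$ and every output effect $M^B$. The relevant ontological constraints are exactly those collected in \cref{op:channel}, and the operator spaces $V_A$ and $V_B$ will play the roles that $V_1$ and $V_2$ played in the bipartite case. I would treat the two directions separately.

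For the ``only if'' direction, assume the channel is classical, so that for all $\rho^A$ and $M^B$ there is an ontological model reproducing $\tr[M^B\mc{E}^{B|A}(\rho^A)]=\sum_{\lambda_A\lambda_B}p(M^B|\lambda_B)p(\lambda_B|\lambda_A)p(\lambda_A|\rho^A)$. Since $\rho^A\mapsto p(\lambda_A|\rho^A)$ is linear, positive and normalized, generalized Gleason's theorem yields a POVM $\{G^A_{\lambda_A}\}$ with $p(\lambda_A|\rho^A)=\tr[G^A_{\lambda_A}\rho^A]$ and $\sum_{\lambda_A}G^A_{\lambda_A}=\mbb{1}^A$, while $M^B\mapsto p(M^B|\lambda_B)$ gives density operators $\sigma^B_{\lambda_B}$ with $p(M^B|\lambda_B)=\tr[M^B\sigma^B_{\lambda_B}]$. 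Next I would exploit the composite constraints of \cref{op:channel}: the effective effects $[\mc{E}^{B|A}]^{\dagger}(M^B)$ on $A$ (governed by \cref{channelprep1}) are represented by a positive linear functional on $V_A$, so Riesz's theorem furnishes a Hermitian $F^A_{\lambda_A}$ with $\tilde{p}(M^B|\lambda_A)=\tr[\mc{E}^{B|A}(F^A_{\lambda_A})M^B]$; similarly the effective states $\mc{E}^{B|A}(\rho^A)$ on $B$ (governed by \cref{channelprep2}) give a Hermitian $H^B_{\lambda_B}$ with $\tilde{p}(\lambda_B|\rho^A)=\tr[H^B_{\lambda_B}\mc{E}^{B|A}(\rho^A)]$. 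Regrouping the ontological model two different ways and invoking tomographic completeness of the product duals then forces the reconstructions $W=\sum_{\lambda_B}\tr[H^B_{\lambda_B}W]\sigma^B_{\lambda_B}$ for all $W\in V_B$ and $W=\sum_{\lambda_A}\tr[F^A_{\lambda_A}W]G^A_{\lambda_A}$ for all $W\in V_A$, which establish the two frame/dual-frame pairs and, upon tracing, the normalizations $\sum_{\lambda_B}H^B_{\lambda_B}=\mc{P}_{V_B}(\mbb{1}^B)$ and $\sum_{\lambda_A}G^A_{\lambda_A}=\mbb{1}^A$; the choice $M^B=\mbb{1}^B$ gives $\tr[F^A_{\lambda_A}]=1$.

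The decisive step is to pin down the channel's stochastic matrix $p(\lambda_B|\lambda_A)$. Here I would invoke the fifth constraint \cref{channel3} together with the operational identity $\mc{O}(\widetilde{\msf{T}})$ coming from the measure-and-prepare instrument $\tilde{\mc{E}}^{B|A}_{ab|xy}$, exactly as \cref{eq:OPnewbi} was used in the bipartite proof. Substituting the representations of the effective effects and effective states into $\mc{E}^{B|A}=\sum_{abxy}\alpha_{abxy}\tilde{\mc{E}}^{B|A}_{ab|xy}$ collapses the double sum to the clean identity $p(\lambda_B|\lambda_A)=\tr[H^B_{\lambda_B}\mc{E}^{B|A}(F^A_{\lambda_A})]$. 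Feeding this back into the ontological model and removing the arbitrary $M^B$ and $\rho^A$ then yields precisely the frame representation \cref{eq:channelframe_app}, with the nonnegativity \cref{eq:cl-cons-all_app} inherited directly from $p(\lambda_B|\lambda_A)\ge 0$.

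For the ``if'' direction, I would run the construction backward: given the frame data, define $p(\lambda_A|\rho^A)\coloneqq\tr[G^A_{\lambda_A}\rho^A]$, $p(\lambda_B|\lambda_A)\coloneqq\tr[H^B_{\lambda_B}\mc{E}^{B|A}(F^A_{\lambda_A})]$ and $p(M^B|\lambda_B)\coloneqq\tr[M^B\sigma^B_{\lambda_B}]$, verify that these are valid (sub)stochastic maps --- positivity from \cref{eq:cl-cons-all_app} and the POVM/state properties, normalization of the channel map from $\tr[F^A_{\lambda_A}]=1$ together with trace preservation and the fact that $\mc{P}_{V_B}(\mbb{1}^B)$ acts as the identity under the trace against elements of $V_B$ --- and then check that they reproduce the statistics and respect every identity in \cref{op:channel}, the latter holding automatically because each map is a linear functional on the corresponding process (as in \cref{eq:channelonto}). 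I expect the main obstacle to lie in the bookkeeping forced by a possibly non-injective channel: since $\rho^A$ generally lies outside $V_A$ and $\mc{E}^{B|A}$ may have nontrivial kernel, one must insist that all frame reconstructions are asserted only on $V_A$ and $V_B$, and confirm that $\mc{E}^{B|A}\!\big(\sum_{\lambda_A}\tr[G^A_{\lambda_A}\rho^A]F^A_{\lambda_A}\big)=\mc{E}^{B|A}(\rho^A)$, which follows from the dual-frame relation restricted to $V_A$ but cannot be obtained by naive substitution of a (false) global reconstruction of $\rho^A$.
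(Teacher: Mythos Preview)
Your proposal is correct and follows essentially the same route as the paper's proof: both directions use the constraints in \cref{op:channel} in exactly the way you describe, with Gleason/Riesz producing $\sigma^B_{\lambda_B}$, $G^A_{\lambda_A}$, $H^B_{\lambda_B}$, $F^A_{\lambda_A}$, the frame/dual-frame relations on $V_A$ and $V_B$ obtained by regrouping and invoking tomographic completeness, and the crucial identification $p(\lambda_B|\lambda_A)=\tr[H^B_{\lambda_B}\mc{E}^{B|A}(F^A_{\lambda_A})]$ coming from \cref{channel3} and the operational identity $\mc{O}(\widetilde{\msf{T}})$. Your caution about restricting the frame reconstructions to $V_A$ and $V_B$ is exactly the subtlety the paper handles (it works with $[\mc{E}^{B|A}]^\dagger(M^B)\in V_A$ rather than with $\rho^A$ directly), so nothing is missing.
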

\begin{proof}
We first prove the `only if' direction. Assume $\mc{E}^{B|A}$ is classical in the sense of Definition~\ref{def:PTM} in the main text. Then, for the statistics generated in the prepare-transform-measure circuits consisting of the channel $\mc{E}^{B|A}$ and every possible multi-source $\{\{p(a|x)\rho^A_{a|x}\}_a\}_x$ and multi-measurement $\{\{M^B_{b|y}\}_b\}_y$, there exists an ontological model: 
\begin{subequations}
\begin{align}
\tr[&M^B_{b|y}\mc{E}^{B|A}(p(a|x)\rho^A_{a|x})]\notag \\
&=\sum_{\lambda_A\lambda_B}p(b|y\lambda_B)p(\lambda_B|\lambda_A)p(a\lambda_A|x)     \label{eq:chonto} \\
&=\sum_{\lambda_B}p(b|y\lambda_B)\tilde{p}(a\lambda_B|x)     \label{eq:chonto1} \\
&=\sum_{\lambda_A}\tilde{p}(b|y\lambda_A)p(a\lambda_A|x),     \label{eq:chonto2}
\end{align}
\end{subequations}
where $\tilde{p}(a\lambda_B|x)$ and $\tilde{p}(b|y\lambda_A)$ are defined in \cref{eq:channelonto}, and where the ontological identities in \cref{op:channel} are respected. In particular, these constraints imply that:

(1) From \cref{channelmeas1}, for each $\lambda_B$, the map   
$M^{B}_{b|y}\mapsto p(b|y\lambda_B)$
is a normalized\footnote{Normalization $p(\mbb{1}^{B}|\lambda_B)=1$ follows from $\tr[\mbb{1}^{B}\mc{E}^{B|A}(\rho^A)]=1$, and by \cref{eq:chonto}, $p(\mbb{1}^{B}|\lambda_B)$ must equal $1$.} positive linear functional on the space of all effects on $B$, i.e., $\mc{M}(\mc H^{B})$.  Thus, by the generalized Gleason theorem, there exists a quantum state $\sigma_{\lambda_B}^B$ such that
\begin{align}
    p(b|y\lambda_B)=\tr[M^B_{b|y}\sigma_{\lambda_B}^B]. \label{eq:frch2}
\end{align}

(2) From \cref{channelmeas2}, for each $\lambda_A$, the map $p(a|x)\rho^A_{a|x}\mapsto p(a\lambda_A|x) $ is a positive linear functional on the space of all unnormalized states on $A$. Therefore, by the Riesz representation theorem, there exists a positive semi-definite operator $G_{\lambda_A}^A$ such that
\begin{align}
    p(a\lambda_A|x)=\tr[G_{\lambda_A}^Ap(a|x)\rho^A_{a|x}]. \label{eq:frch1}
\end{align}

(3)From \cref{channelprep1},  for each $\lambda_A$, the map   
$\tilde M^{A}_{b|y}\mapsto \tilde p(b|y\lambda_A)$
is a normalized positive linear functional on $V_A=\text{Spam}(\{\tilde M^A_{b|y} \})$ with $\tilde M^A_{b|y}=\mc{E}^{B|A}]^{\dagger}(M^{B}_{b|y})$,\footnote{This does not include all effects unless $\mc{E}^{B|A}$ is a unitary channel} there exists a Hermitian operator $F_{\lambda_A}^A$ such that
\begin{align}
    \tilde p(b|y\lambda_A)=\tr[\tilde M^A_{b|y}F_{\lambda_A}^A]. \label{eq:frch3}
\end{align}
Since $\sum_b\tr[\tilde M^A_{b|y}F_{\lambda_A}^A]=1$, $F_{\lambda_A}^A$ is trace-1.

(4) From \cref{channelprep2},  for each $\lambda_B$, the map   
$\tilde \rho^{B}_{a|x}\mapsto \tilde{p}(a\lambda_B|x)$ is a positive linear functional on $V_B=\text{Spam}(\{\tilde \rho^{B}_{a|x} \})$ with  $\tilde \rho^{B}_{a|x}=\mc{E}^{B|A}(p(a|x)\rho^{A}_{a|x})$. Thus, there exists a Hermitian operator $H_{\lambda_B}^B$ such that
\begin{align}
    \tilde{p}(a\lambda_B|x)=\tr[H_{\lambda_B}^B\tilde \rho^{B}_{a|x}]. \label{eq:frch4}
\end{align}

Substituting \cref{eq:frch2,eq:frch4} into \cref{eq:chonto1} yields:
\begin{align}
\tr[M^B_{b|y}&\mc{E}^{B|A}(p(a|x)\rho^A_{a|x})]\notag \\
&=\sum_{\lambda_B}\tr[M^B_{b|y}\sigma_{\lambda_B}^B]\tr[H_{\lambda_B}^B \tilde \rho^{B}_{a|x}].
\end{align}
Recalling the definition of $\tilde \rho^B_{a|x}$ in \cref{eq:ch-seq-B}, we have
\begin{align}
\tr[M^B_{b|y}&\mc{E}^{B|A}(p(a|x)\rho^A_{a|x})]:=\tr[M^B_{b|y}\tilde \rho^B_{a|x}],
\end{align}
and combining these two expressions, we obtain
\begin{align}
\tr[M^B_{b|y}\tilde \rho^B_{a|x})]=\sum_{\lambda_B}\tr[M^B_{b|y}\sigma_{\lambda_B}^B]\tr[H_{\lambda_B}^B \tilde \rho^{B}_{a|x}].
\end{align}

Importantly, since this holds for all $M^{B}_{b|y}\in \mc{M}(\mc H^{B})$, which are tomographically complete, this implies
\begin{align}
\tilde \rho^B_{a|x}=\sum_{\lambda_B}\tr[H_{\lambda_B}^B \tilde \rho^{B}_{a|x}]\sigma_{\lambda_B}^B.
\end{align}
That is, $\{ \sigma^{B}_{\lambda_B}\}_{\lambda_B}$ and $\{H^{B}_{\lambda_B}\}_{\lambda_B}$ define a frame and its dual frame on $V_B=\text{Span}(\{\tilde{\rho}^{B}_{a|x}\})$. Moreover, taking the trace of both sides yields
\begin{align}
\tr[\tilde \rho^B_{a|x}]=\sum_{\lambda_B}\tr[H_{\lambda_B}^B \tilde \rho^{B}_{a|x}].
\end{align}
Noting that, this holds for all $\tilde \rho^B_{a|x}$ on $V_B$ implies that $\sum_{\lambda_B}H^{B}_{\lambda_B}=\mc{P}_{V_B}(\mbb{1}^{B})$, where we denote the superoperator that projects onto $V_B$ by $\mc{P}_{V_B}$.

Similarly, substituting \cref{eq:frch1,eq:frch3} into \cref{eq:chonto2}, we have:
\begin{align}
\tr[M^B_{b|y}&\mc{E}^{B|A}(p(a|x)\rho^A_{a|x})]\notag \\
&=\sum_{\lambda_A}\tr[\tilde{M}^A_{b|y}F_{\lambda_A}^A]\tr[G_{\lambda_A}^A p(a|x) \rho^A_{a|x})].
\end{align}
Recalling the definition of $\tilde{M}^A_{b|y}$ in \cref{eq:ch-seq-B}, we have
\begin{align}
\tr[M^B_{b|y}&\mc{E}^{B|A}(p(a|x)\rho^A_{a|x})]=\tr[\tilde{M}^A_{b|y}p(a|x)\rho^A_{a|x}],
\end{align}
and combining these two expressions, we obtain
\begin{align}
\tr[\tilde{M}^A_{b|y}p(a|x)\rho^A_{a|x}]=\sum_{\lambda_B}\tr[\tilde{M}^A_{b|y}F_{\lambda_A}^A]\tr[G_{\lambda_A}^A p(a|x) \rho^A_{a|x}]
\end{align}

Importantly, since this hold for all unnormalized state $p(a|x) \rho^A_{a|x}$ with $\rho^{A}_{a|x}\in \mc{S}(\mc{H^A})$, which are tomographically complete, we have 
\begin{align}
\tilde M^A_{b|y}=\sum_{\lambda_A}\tr[\tilde{M}^A_{b|y}F_{\lambda_A}^A]G_{\lambda_A}^A,
\end{align}
That is $\{F_{\lambda_A}^A\}_{\lambda_A}$ 
and $\{G_{\lambda_A}\}_{\lambda_A}$ defines a frame and dual frame on $V_A=\text{Span}(\{\tilde{M}^{A}_{b|y}\})$. Moreover, taking the summation over $b$ on both sides yields
\begin{align}
\sum_b \tilde  M^A_{b|y}&=\sum_b\sum_{\lambda_A}\tr[\tilde{M}^A_{b|y}F_{\lambda_A}^A]G_{\lambda_A}^A\notag \\
&=\sum_{\lambda_A}G_{\lambda_A}^A=\mc P_V(\mbb{1}^{A})=\mbb{1}^{A}.
\end{align}
Therefore, $\{G_{\lambda_A}\}_{\lambda_A}$ is a valid POVM. 

(5)Finally, plugging \cref{eq:frch3,eq:frch4} into the assumed decomposition in \cref{channel3}, we obtain
\begin{align}
p(\lambda_B|\lambda_A)&=\sum_{{abxy}}\alpha_{abxy} \tilde{p}(b|y\lambda_{A})\tilde{p}(a\lambda_{B}|x) \notag \\
&=\sum_{{abxy}}\alpha_{abxy} \tr[\tilde M^A_{b|y}F_{\lambda_A}^A]\tr[H_{\lambda_B}^B\tilde \rho^{B}_{a|x}].
\end{align}
Recalling the operational identity in \cref{eq:OPnewch}, we thus have
\begin{align}
p(\lambda_B|\lambda_A)=\tr[H_{\lambda_B}^B\mc{E}^{B|A}(F_{\lambda_A}^A)].
\end{align}

Combining it with\cref{eq:frch2,eq:frch1,eq:chonto} yields
\begin{align}
&\tr[M^B_{b|y}\mc{E}^{B|A}(p(a|x)\rho^A_{a|x})]\notag \\
&=\sum_{\lambda_A\lambda_B}\tr[H_{\lambda_B}^B\mc{E}^{B|A}(F_{\lambda_A}^A)]\tr[M^B_{b|y}\sigma_{\lambda_B}^B]\tr[G_{\lambda_A}^Ap(a|x)\rho^A_{a|x}],
\end{align}
since this holds for all $M^B_{b|y}$ and $p(a|x)\rho^A_{a|x}$, which are tomographically complete, we have
\begin{align}
\mc{E}^{B|A}(\cdot)=\sum_{\lambda_A\lambda_B}\tr[H_{\lambda_B}^B\mc{E}^{B|A}(F_{\lambda_A}^A)]\tr[G_{\lambda_A}^A(\cdot)]\sigma_{\lambda_B}^B.
\end{align}

Now let us prove the ``if'' part. Suppose there exists a frame decomposition of the channel of $\mc{E}^{B|A}$ as in Eq.~\eqref{eq:channelframe_app}, then for the channel $\mc{E}^{B|A}$,
any effect $M^{B}_{b|y}$ and any unnormalized state $p(a|x)\rho^A_{a|x}$, 
we can define ontological representations of each of these via the linear functionals induced by the frames and dual frames:
\begin{subequations}
\begin{align} 
p(\lambda_B|\lambda_A)&\coloneqq\tr[H^B_{\lambda_B}\mc{E}^{B|A}(F^A_{\lambda_A})]\label{eq: chrecon1},\\
p({b|y}\lambda_B)&\coloneqq\tr[M^{B}_{b|y}\sigma^{B}_{\lambda_B}]\label{eq: chrecon2},\\
p(a\lambda_A|x)&\coloneqq\tr[G^A_{\lambda_A}p(a|x)\rho^A_{a|x}]\label{eq: chrecon3}, 
\end{align}
where the nonnegativity and normalization are ensured by \cref{eq:cl-cons-all_app}, $\sigma^{B}_{\lambda_B}$ being a density operator and $\{G_{\lambda_A}\}_{\lambda_A}$ being a POVM. 

We can also show that the effective ontological processes obtained by sequential compositions are also linear to the corresponding effective quantum processes obtained by sequential compositions. This is because $\{\sigma^{A_i}_{\lambda_i}\}_{\lambda_i}$ and $\{H_{\lambda_i}^{A_i}\}_{\lambda_i}$ are a frame and its dual frame on $V_i$. In particular, 
\begin{align}
\tilde p(a\lambda_B|x):&=\sum_{\lambda_A}p(\lambda_B|\lambda_A)p(a\lambda_A|x)\label{PTMcom1}\\&=\sum_{\lambda_A}\tr[H^B_{\lambda_B}\mc{E}^{B|A}(F^A_{\lambda_A})]\tr[G^A_{\lambda_A}p(a|x)\rho^{A}_{a|x}]  \notag \\
&=\tr[H^B_{\lambda_B}\mc{E}^{B|A}(p(a|x)\rho^{A}_{a|x})]=\tr[H^B_{\lambda_B}\tilde{\rho}^B_{a|x}],  \notag 
\end{align}
where we use \cref{eq: chrecon1,eq: chrecon3} in the first equality, and the definition of $\tilde{\rho}^{B}_{a|x}$ in the last equality.  Similarly, using \cref{eq: chrecon1,eq: chrecon2}, we can show that
\begin{align}
\tilde{p}(b|y\lambda_A):&=\sum_{\lambda_B}p(b|y\lambda_B)p(\lambda_B|\lambda_A)\label{PTMcom2} 
\\&=\sum_{\lambda_B}\tr[M^{B}_{b|y}\sigma^{B}_{\lambda_B}]\tr[H^B_{\lambda_B}\mc{E}^{B|A}(F^A_{\lambda_A})]  \notag \\
    &=\tr[M^{B}_{b|y}\mc{E}^{B|A}(F^A_{\lambda_A})]=\tr[\tilde{M}^A_{b|y}F^A_{\lambda_A}]. \notag 
\end{align}
\end{subequations}

Finally, the ontological model also reproduces the empirical statistics:
\begin{align}
&\sum_{\lambda_A\lambda_B}p(b|y\lambda_B)p(\lambda_B|\lambda_A)p(a\lambda_A|x) \\
&=\sum_{\lambda_A\lambda_B}\tr[M^{B}_{b|y}\sigma^{B}_{\lambda_B}]\tr[H^B_{\lambda_B}\mc{E}^{B|A}(F^A_{\lambda_A})]\tr[G^A_{\lambda_A}p(a|x)\rho^{A}_{a|x}] \notag \\
&=\sum_{\lambda_A}\tr[M^B_{b|y}\mc{E}^{B|A}(F^A_{\lambda_A})]\tr[G^A_{\lambda_A}p(a|x)\rho^{A}_{a|x}] \notag \\
&=\tr[M^{B}_{b|y}\mc{E}^{B|A}(p(a|x)\rho^{A}_{a|x})]. \notag
\end{align}
In the last two equations, we again use properties of frame representations. 

Since all these ontological representations depend linearly on the corresponding physical processes and are nonnegative, the circuit is classically explainable. Moreover, since we allowed all possible factorizing dual processes, this, in turn, implies that the channel $\mc{E}^{B|A}$ is classical by Definition~\ref{channelprop}.

\end{proof}

\subsection{Proof of Theorem~\ref{ncgen}}
\label{app:proofmpmi}
\setcounter{theorem}{5}
\begin{theorem}\label{ncgen_app}
A multipartite multi-instrument $\{\{\mc{F}_{c|z}^{\vec{B}|\vec{A}}\}_c\}_z$,
where $\mc{F}_{c|z}^{\vec{B}|\vec{A}}$
: $\mc{L}(\bigotimes_{i=1}^n\mc{H}^{A_i})\mapsto \mc{L}(\bigotimes_{j=1}^m\mc{H}^{B_{j}})$, is classical if and only if it admits a frame representation: 
\begin{align}
\label{eq:generalframe_app}
&\mc{F}_{c|z}^{\vec{B}|\vec{A}}(\cdot)= \\
&\sum_{\lambda_{\vec{A}}\lambda_{\vec{B}}}\underset{j}{\bigotimes} \sigma_{\lambda_{B_j}}^{B_j}\tr[(\underset{j}{\bigotimes} H_{\lambda_{B_j}}^{B_j})\mc{F}_{c|z}^{\vec{B}|\vec{A}}(\underset{i}{\bigotimes} F^{A_i}_{\lambda_{A_i}})]\tr[(\underset{i}{\bigotimes} G_{\lambda_{A_i}}^{A_i})(\cdot)], \notag
\end{align}
where the frame on the operator space $V_{B_j}$ is a fixed set of density operators $\{\sigma^{B_j}_{\lambda_{B_j}}\}_{\lambda_{B_j}}$, and the dual frame is a set of Hermitian operators $\{{H}_{\lambda_{B_j}}^{B_j}\}_{\lambda_{B_j}}$ satisfying $\sum_{\lambda_{B_j}}H^{B_j}_{\lambda_{B_j}}=\mc{P}_{V_{B_j}}(\mbb 1^{{B_j}})$; the frame on the operator space $V_{A_i}$ is a set of Hermitian operators $\{{F}^{A_i}_{\lambda_{A_i}}\}_{\lambda_{A_i}}$ satisfying $\tr[{F}^{A_i}_{\lambda_{A_i}}]=1$, and the dual frame is a POVM $\{G_{\lambda_{A_i}}^{A_i}\}_{\lambda_{A_i}}$ satisfying $\sum_{\lambda_{A_i}}G_{\lambda_{A_i}}=\mc{P}_{V_{A_i}}(\mbb{1}^{A_i})=\mbb{1}^{{A_i}}$, and such that
\begin{align}  
\tr[(\underset{j}{\bigotimes} H_{\lambda_{B_j}}^{B_j})\mc{F}_{c|z}^{\vec{B}|\vec{A}}(\underset{i}{\bigotimes} F^{A_i}_{\lambda_{A_i}})] \ge 0,~~\forall \lambda_{\vec{A}}, \lambda_{\vec{B}},c,z. \label{eq:cl-gen-cons_app}
\end{align}

\end{theorem}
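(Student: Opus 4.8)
The plan is to prove Theorem~\ref{ncgen_app} by directly generalizing the two-directional argument used in the proofs of Theorem~\ref{ncbp_app} (bipartite states) and Theorem~\ref{ncchannelstruc_app} (channels). The structure is the same ``only if'' / ``if'' split, but now one must handle $n$ inputs and $m$ outputs simultaneously, and correspondingly a frame on each of the $2(n+m)$-many operator spaces $\{V_{A_i}\}_i$ and $\{V_{B_j}\}_j$. The key conceptual point, already visible in the bipartite and channel proofs, is that the full list of ontological constraints in Eq.~\eqref{op:general} decomposes into constraints on each local effect/state response function (Eqs.~\eqref{instmeas1}--\eqref{instprep1}) together with constraints on the various sequential compositions (Eq.~\eqref{instrcomp}) and the parallel-composition identity (Eq.~\eqref{instrgen}); applying generalized Gleason and Riesz at each local system converts these into the existence of the requisite frames and dual frames.

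For the ``only if'' direction, I would assume the multipartite multi-instrument is classical in the sense of Definition~\ref{maindefn}, so that for every choice of local multi-sources $\{\msf{P}_i\}$ and local multi-measurements $\{\msf{M}_j\}$ there is an ontological model of the form \eqref{eq:cl-general}. First, for each output $B_j$, the constraint \eqref{instmeas1} says that $M^{B_j}_{b_j|y_j}\mapsto p(b_j|y_j\lambda_{B_j})$ is a normalized positive linear functional on $\mc{M}(\mc{H}^{B_j})$, so generalized Gleason gives a density operator $\sigma^{B_j}_{\lambda_{B_j}}$ with $p(b_j|y_j\lambda_{B_j})=\tr[M^{B_j}_{b_j|y_j}\sigma^{B_j}_{\lambda_{B_j}}]$. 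Dually, for each input $A_i$, the constraint \eqref{instprep1} makes $p(a_i|x_i)\rho^{A_i}_{a_i|x_i}\mapsto p(a_i\lambda_{A_i}|x_i)$ a positive linear functional on unnormalized states, so Riesz gives a POVM element $G^{A_i}_{\lambda_{A_i}}$ with $p(a_i\lambda_{A_i}|x_i)=\tr[G^{A_i}_{\lambda_{A_i}}\,p(a_i|x_i)\rho^{A_i}_{a_i|x_i}]$; normalization $\sum_{b_j}p(b_j|y_j\lambda_{B_j})=1$ and $\sum_{\lambda_{A_i}}G^{A_i}_{\lambda_{A_i}}=\mbb{1}^{A_i}$ follow as in the channel proof. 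Next, using the sequential-composition constraints of the form \eqref{instrcomp} for the effective processes $\msf{T}^{\vec{B}_{\bar{\mc{J}}}|\vec{A}_{\bar{\mc{I}}}}$ defined in \eqref{eq:seq-comp}, together with tomographic completeness of local effects/states, I would show that $\{\sigma^{B_j}_{\lambda_{B_j}}\}$ and the associated Hermitian operators $\{H^{B_j}_{\lambda_{B_j}}\}$ form a frame/dual-frame pair on $V_{B_j}$ with $\sum_{\lambda_{B_j}}H^{B_j}_{\lambda_{B_j}}=\mc{P}_{V_{B_j}}(\mbb{1}^{B_j})$, and dually that $\{F^{A_i}_{\lambda_{A_i}}\}$ with unit trace and $\{G^{A_i}_{\lambda_{A_i}}\}$ form a frame/dual-frame pair on $V_{A_i}$. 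Finally, the parallel-composition identity \eqref{instrgen} (governed by $\mc{O}(\widetilde{\msf{T}})$), combined with the multipartite operational identity \eqref{op:gencomp}, pins down the transition weight as $p(c\lambda_{\vec{B}}|z\lambda_{\vec{A}})=\tr[(\bigotimes_j H^{B_j}_{\lambda_{B_j}})\,\mc{F}^{\vec{B}|\vec{A}}_{c|z}(\bigotimes_i F^{A_i}_{\lambda_{A_i}})]$; substituting everything back and invoking tomographic completeness of product effects and product states yields \eqref{eq:generalframe_app}, while nonnegativity \eqref{eq:cl-gen-cons_app} follows from positivity of $p(c\lambda_{\vec{B}}|z\lambda_{\vec{A}})$.

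For the ``if'' direction, assuming the frame representation \eqref{eq:generalframe_app}, I would \emph{define} the ontological representations by the natural frame/dual-frame functionals: $p(c\lambda_{\vec{B}}|z\lambda_{\vec{A}}):=\tr[(\bigotimes_j H^{B_j}_{\lambda_{B_j}})\,\mc{F}^{\vec{B}|\vec{A}}_{c|z}(\bigotimes_i F^{A_i}_{\lambda_{A_i}})]$, $p(b_j|y_j\lambda_{B_j}):=\tr[M^{B_j}_{b_j|y_j}\sigma^{B_j}_{\lambda_{B_j}}]$, and $p(a_i\lambda_{A_i}|x_i):=\tr[G^{A_i}_{\lambda_{A_i}}\,p(a_i|x_i)\rho^{A_i}_{a_i|x_i}]$. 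Nonnegativity and normalization are immediate from the stated frame conditions (density operators $\sigma^{B_j}_{\lambda_{B_j}}$, POVMs $\{G^{A_i}_{\lambda_{A_i}}\}$, and the constraint \eqref{eq:cl-gen-cons_app}). One then verifies that every sequential and parallel composition of these ontological objects reproduces the corresponding effective quantum process—exactly the frame-reconstruction computations done in \eqref{eq:dualframe2}--\eqref{eq:dualframe1} and \eqref{PTMcom1}--\eqref{PTMcom2}, now carried out tensor factor by tensor factor—so that all the ontological identities in Eq.~\eqref{op:general} are automatically satisfied and the quantum statistics are recovered. Since this holds for all factorizing dual processes, Theorem~\ref{thmdual} (equivalently Definition~\ref{maindefn}) certifies classicality.

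\textbf{The main obstacle} I anticipate is purely organizational rather than conceptual: correctly bookkeeping the constraints from the parallel-composition identities \eqref{op:gencomp}, which range over \emph{all} disjoint partitions $\{\bar{\mc{I}}_\nu\}_\nu$ of $[n]$ and $\{\bar{\mc{J}}_\nu\}_\nu$ of $[m]$. In the bipartite and channel cases there was essentially a single such ``new'' identity to exploit (\eqref{eq:OPnewbi} and \eqref{eq:OPnewch} respectively), whereas here one must argue that the family $\mc{O}(\widetilde{\msf{T}})$ is rich enough to fix $p(c\lambda_{\vec{B}}|z\lambda_{\vec{A}})$ uniquely as the claimed tensor-product trace expression, and simultaneously that no \emph{additional} independent constraint beyond those in Eq.~\eqref{op:general} can arise. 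Establishing that the product frame operators $\{\bigotimes_j H^{B_j}_{\lambda_{B_j}}\}$ and $\{\bigotimes_i F^{A_i}_{\lambda_{A_i}}\}$ indeed act as a frame/dual-frame pair on the tensor product space, and that tomographic locality guarantees that matching all product effects and product states suffices to match the full process, is where the argument must be made carefully; fortunately both facts follow from tomographic locality of quantum theory and the multiplicativity of the Hilbert–Schmidt inner product over tensor factors, so the generalization is routine once the partition bookkeeping is set up. For this reason I would present the single-direction local analysis in full and then state that the tensor-factor extension proceeds exactly as in Theorems~\ref{ncbp_app} and~\ref{ncchannelstruc_app}, deferring the lengthy but mechanical verification to the appendix.
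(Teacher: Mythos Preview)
Your proposal is correct and follows essentially the same approach as the paper's proof: the same ``only if'' / ``if'' split, the same use of generalized Gleason and Riesz at each local system from \eqref{instmeas1}--\eqref{instprep1}, the same passage through sequential-composition constraints \eqref{instrcomp} to establish the frame/dual-frame pairs on each $V_{B_j}$ and $V_{A_i}$, and the same use of the parallel-composition identity \eqref{instrgen} to pin down $p(c\lambda_{\vec B}|z\lambda_{\vec A})$.

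Regarding your ``main obstacle'': the paper resolves it exactly as you anticipate, by observing that among the many identities in \eqref{instrcomp} and \eqref{instrgen} it suffices to use only very specific special cases---namely, the identities associated to the effective unipartite processes $\tilde\rho^{B_j}_{\vec a\vec b_{\bar j}c|\vec x\vec y_{\bar j}z}$ and $\tilde M^{A_i}_{\vec a_{\bar i}\vec b c|\vec x_{\bar i}\vec y z}$ (the singleton-partition boxings), and then the single parallel-composition identity expressing $\mc F^{\vec B|\vec A}_{c|z}$ as a linear combination of tensor products $(\bigotimes_i \tilde M^{A_i})\otimes(\bigotimes_j\tilde\rho^{B_j})$. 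The existence of this latter identity is guaranteed by tomographic locality, as you correctly note, so the bookkeeping reduces to writing down this one identity rather than ranging over all disjoint partitions.
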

\begin{proof}
We first prove the `only if' direction. Assume the multipartite multi-instrument $\{\{\mc{F}_{c|z}^{\vec{B}|\vec{A}}\}_c\}_z$ is classical. Then, for the statistics generated in the multipartite prepare-transform-measure circuit consisting of the multipartite multi-instrument $\{\{\mc{F}_{c|z}^{\vec{B}|\vec{A}}\}_c\}_z$
and every possible multi-source $\{\{p(a_i|x_i)\rho^{A_i}_{a_i|x_i}\}_{a_i}\}_{x_i}$ on $\mc{H}^{A_i}$ and multi-measurement $\{\{M^{B_j}_{b_j|y_j}\}_{b_j}\}_{y_j}$ on $\mc{H}^{B_j}$, there exists an ontological model: 
\begin{subequations}
\begin{align}
&\tr[(\underset{j\in [m]}{\bigotimes} M_{b_j|y_j}^{B_j})\mc{F}_{c|z}^{\vec{B}|\vec{A}} (\underset{i\in [n]}\bigotimes p(a_i|x_i)\rho^{A_i}_{a_i|x_i})]\notag \\
&=\sum_{\lambda_{\vec{B}}\lambda_{\vec{A}}} p(c\lambda_{\vec{B}}|z\lambda_{\vec{A}})\prod_{j}p(b_j|y_j\lambda_{B_j})\prod_{i}p(a_i\lambda_{A_i}|x_i)    \label{eq:genonto} \\
&=\sum_{\lambda_{B_j}}p(b_j|y_j\lambda_{B_j})\tilde{p}(\vec{a}\vec{b}_{\bar{j}}c\lambda_{B_j}|\vec{x}\vec{y}_{\bar{j}}z)     \label{eq:genonto1} \\
&=\sum_{\lambda_{A_i}}p(a_i\lambda_{A_i}|x_i)\tilde{p}(\vec{a}_{\bar{i}}\vec{b}c|\vec{x}_{\bar{i}}\vec{y}z\lambda_{A_i}),     \label{eq:genonto2}
\end{align}
\end{subequations}
where the ontological identities in \cref{op:general} are respected. Here we denote $\bar{j}=\{j'|j'\ne j\}$ and $\bar{i}=\{i'|i'\ne i\}$ and define
\begin{subequations}
\begin{align}
\tilde{p}&(\vec{a}\vec{b}_{\bar{j}}c\lambda_{B_j}|\vec{x}\vec{y}_{\bar{j}}z) \\
&=\sum_{\lambda_{\vec{B}_{\bar{j}}}\lambda_{\vec{A}}} p(c\lambda_{\vec{B}}|z\lambda_{\vec{A}})\prod_{j'\ne j}p(b_{j'}|y_{j'}\lambda_{B_{j'}})\prod_{i}p(a_i\lambda_{A_i}|x_i), \notag \\
\tilde{p}&(\vec{a}_{\bar{i}}\vec{b}c|\vec{x}_{\bar{i}}\vec{y}z\lambda_{A_i}) \\
&=\sum_{\lambda_{\vec{B}}\lambda_{\vec{A}_{\bar{i}}}} p(c\lambda_{\vec{B}}|z\lambda_{\vec{A}})\prod_{j}p(b_j|y_j\lambda_{B_j})\prod_{i'\ne i}p(a_{i'}\lambda_{A_{i'}}|x_{i'}).  \notag
\end{align}
\label{eq:gen-state-effect}
\end{subequations}
Specifically, the constraints in \cref{op:general} then imply that:

(1) From \cref{instmeas1}, for each $j\in [m]$ each $\lambda_{B_j}$, the map  
$M^{B_j}_{b_j|y_j}\mapsto p(b_j|y_j,\lambda_{B_j})$
is a normalized positive linear functional on the space of all effects on $B_j$ i.e., $\mc{M}(\mc H^{B_j})$.  Thus, by the generalized Gleason theorem, there exists a quantum state $\sigma_{\lambda_{B_j}}^{B_j}$ such that
\begin{align}
p(b_j|y_j\lambda_{B_j})=\tr[M^{B_j}_{b_j|y_j}\sigma_{\lambda_{B_j}}^{B_j}]. \label{eq:frinst2}
\end{align}

(2) From \cref{instprep1}, for each $i\in [n]$, for each $\lambda_{A_i}$, the map $p(a_i|x_i)\rho^{A_i}_{a_i|x_i}\mapsto p(a_i\lambda_{A_i}|x_i) $ is a positive linear functional on the space of all unnormalized states. Therefore, there exists a positive semi-definite operator $G_{\lambda_{A_i}}^{A_i}$ such that
\begin{align}
p(a_i\lambda_{A_i}|x_i)=\tr[G_{\lambda_{A_i}}^{A_i}p(a_i|x_i)\rho^{A_i}_{a_i|x_i}]. \label{eq:frinst1}
\end{align}

There are many ontological identities included in \cref{instrcomp}; however, it turns out that it is sufficient to consider those for the ontological representation in \cref{eq:gen-state-effect} defined for the effective physical processes 
\begin{subequations}
\begin{align}
\tilde{\rho}^{B_j}_{\vec{a}\vec{b}_{\bar{j}}c|\vec{x}\vec{y}_{\bar{j}}z} &=\tr_{\vec{B}_{\bar{j}}}[M_{\vec{b}_{\bar{j}}|\vec{y}_{\bar{j}}}^{\vec{B}}\mc{F}_{c|z}^{\vec{B}|\vec{A}}(\tilde{\rho}^{\vec{A}}_{\vec{a}|\vec{x}} )], \label{eq:gen-state}\\
\tilde{M}^{A_i}_{\vec{a}_{\bar{i}}\vec{b}c|\vec{x}_{\bar{i}}\vec{y}z} &=\tr_{\vec{A}_{\bar{i}}}[\tilde{\rho}^{\vec{A}_{\bar{i}}}_{\vec{a}_{\bar{i}}|\vec{x}_{\bar{i}}}[\mc{F}_{c|z}^{\vec{B}|\vec{A}}]^{\dagger}(M_{\vec{b}|\vec{y}}^{\vec{B}})], \label{eq:gen-effect}
\end{align}
\label{eq:gen-state-effect_q}
\end{subequations}
where we denote $M^{\vec{B}}_{\vec{b}|\vec{y}}=\underset{j}{\bigotimes} M_{{b_{j}|y_{j}}}^{B_{j}}$ and $M^{\vec{B}}_{\vec{b}_{\bar{j}}|\vec{y}_{\bar{j}}}\coloneqq\mbb{1}^{B_{j}}\otimes\underset{j'\ne j}{\bigotimes} M_{{b_{j'}|y_{j'}}}^{B_{j'}}$, similarly 
$\tilde{\rho}^{\vec{A}}_{\vec{a}|\vec{x}}\coloneqq\underset{i}{\bigotimes} p(a_i|x_i)\rho^{A_i}_{a_i|x_i}$ and $\tilde{\rho}^{\vec{A}_{\bar{i}}}_{\vec{a}_{\bar{i}}|\vec{x}_{\bar{i}}}\coloneqq\mbb{1}^{A_i}\otimes\underset{i'\ne i}{\bigotimes} p(a_{i'}|x_{i'})\rho^{A_{i'}}_{a_{i'}|x_{i'}}$.

(3) As a special case of \cref{instrcomp}, for each $j\in [m]$ and each $\lambda_{B_j}$, the map $\tilde{\rho}^{B_j}_{\vec{a}\vec{b}_{\bar{j}}c|\vec{x}\vec{y}_{\bar{j}}z}\mapsto\tilde{p}(\vec{a}\vec{b}_{\bar{j}}c\lambda_{B_j}|\vec{x}\vec{y}_{\bar{j}}z)$ is a positive linear functional on the space of all states $\tilde{\rho}^{B_j}_{\vec{a}\vec{b}_{\bar{j}}c|\vec{x}\vec{y}_{\bar{j}}z}\in V_{B_j}$ in \cref{eq:gen-state}. Thus, there exists a Hermitian operator $H^{B_j}_{\lambda_{B_j}}$ such that
\begin{align}
\tilde{p}(\vec{a}\vec{b}_{\bar{j}}c\lambda_{B_j}|\vec{x}\vec{y}_{\bar{j}}z)=\tr[H^{B_j}_{\lambda_{B_j}}\tilde{\rho}^{B_j}_{\vec{a}\vec{b}_{\bar{j}}c|\vec{x}\vec{y}_{\bar{j}}z}].\label{eq:frinst3}
\end{align}

(4) As another special case of \cref{instrcomp}, for each $i\in [n]$ and each $\lambda_{A_i}$, the map $\tilde{M}^{A_i}_{\vec{a}_{\bar{i}}\vec{b}c|\vec{x}_{\bar{i}}\vec{y}z} \mapsto\tilde{p}(\vec{a}_{\bar{i}}\vec{b}c|\vec{x}_{\bar{i}}\vec{y}z\lambda_{A_i})$ is a positive linear functional on the space of all effects $\tilde{M}^{A_i}_{\vec{a}_{\bar{i}}\vec{b}c|\vec{x}_{\bar{i}}\vec{y}z}\in V_{A_i}$ in \cref{eq:gen-effect}. Thus, by the Riesz representation theorem, there exists a Hermitian operator $F^{A_i}_{\lambda_{A_i}}$ such that
\begin{align}
\tilde{p}(\vec{a}_{\bar{i}}\vec{b}c|\vec{x}_{\bar{i}}\vec{y}z\lambda_{A_i})=\tr[F^{A_i}_{\lambda_{A_i}}\tilde{M}^{A_i}_{\vec{a}_{\bar{i}}\vec{b}c|\vec{x}_{\bar{i}}\vec{y}z} ].\label{eq:frinst4}
\end{align}

Plugging \cref{eq:frinst2,eq:frinst3} into \cref{eq:genonto1} yields
\begin{align}
&\tr[(\underset{j\in [m]}{\bigotimes} M_{b_j|y_j}^{B_j})\mc{F}_{c|z}^{\vec{B}|\vec{A}} (\underset{i\in [n]}\bigotimes p(a_i|x_i)\rho^{A_i}_{a_i|x_i})]\\
&=\sum_{\lambda_{B_j}}\tr[M^{B_j}_{b_j|y_j}\sigma_{\lambda_{B_j}}^{B_j}]\tr[H^{B_j}_{\lambda_{B_j}}\tilde{\rho}^{B_j}_{\vec{a}\vec{b}_{\bar{j}}c|\vec{x}\vec{y}_{\bar{j}}z}]. \notag 
\end{align}
Combining it with \cref{eq:gen-state}, we have
\begin{align}
&\tr[M_{b_j|y_j}^{B_j}\tilde{\rho}^{B_j}_{\vec{a}\vec{b}_{\bar{j}}c|\vec{x}\vec{y}_{\bar{j}}z} ]\\
&=\sum_{\lambda_{B_j}}\tr[M^{B_j}_{b_j|y_j}\sigma_{\lambda_{B_j}}^{B_j}]\tr[H^{B_j}_{\lambda_{B_j}}\tilde{\rho}^{B_j}_{\vec{a}\vec{b}_{\bar{j}}c|\vec{x}\vec{y}_{\bar{j}}z}]. \notag 
\end{align}
Importantly, since this holds for all $M_{b_j|y_j}^{B_j}\in\mc M(\mc H^{B_j})$, which are tomographically complete, this implies
\begin{align}
\tilde{\rho}^{B_j}_{\vec{a}\vec{b}_{\bar{j}}c|\vec{x}\vec{y}_{\bar{j}}z} =\sum_{\lambda_{B_j}}\tr[H^{B_j}_{\lambda_{B_j}}\tilde{\rho}^{B_j}_{\vec{a}\vec{b}_{\bar{j}}c|\vec{x}\vec{y}_{\bar{j}}z}]\sigma_{\lambda_{B_j}}^{B_j}.
\end{align}
That is, $\{\sigma_{\lambda_{B_j}}^{B_j}\}_{\lambda_{B_j}}$ and $\{H^{B_j}_{\lambda_{B_j}}\}_{\lambda_{B_j}}$ defines a frame and its dual frame on $V_{B_j}=\text{Span}(\{\tilde{\rho}^{B_j}_{\vec{a}\vec{b}_{\bar{j}}c|\vec{x}\vec{y}_{\bar{j}}z} \})$. Moreover, taking the trace of both sides yields
\begin{align}
\tr[\tilde{\rho}^{B_j}_{\vec{a}\vec{b}_{\bar{j}}c|\vec{x}\vec{y}_{\bar{j}}z}] =\sum_{\lambda_{B_j}}\tr[H^{B_j}_{\lambda_{B_j}}\tilde{\rho}^{B_j}_{\vec{a}\vec{b}_{\bar{j}}c|\vec{x}\vec{y}_{\bar{j}}z}].
\end{align}
Noting that this hold for all $\tilde{\rho}^{B_j}_{\vec{a}\vec{b}_{\bar{j}}c|\vec{x}\vec{y}_{\bar{j}}z}\in V_{B_j}$. we thus have $\sum_{\lambda_{B_j}}H^{B_j}_{\lambda_{B_j}}=\mc{P}_{V_{B_j}}(\mbb{1}^{B_j})$, where $\mc{P}_{V_{B_j}}$ is the superoperator that projects onto the operator space $V_{B_j}$.

Similarly, by plugging \cref{eq:frinst1,eq:frinst4} into \cref{eq:genonto2}, we can prove that $\{F^{A_i}_{\lambda_{A_i}}\}_{\lambda_{A_i}}$ and a POVM $\{G_{\lambda_{A_i}}^{A_i}\}_{\lambda_{A_i}}$ defines a frame and dual frame on $V_{A_i}=\text{Span}(\{\tilde{M}^{A_i}_{\vec{a}_{\bar{i}}\vec{b}c|\vec{x}_{\bar{i}}\vec{y}z}\})$, and in particular
\begin{align}
\tilde{M}^{A_i}_{\vec{a}_{\bar{i}}\vec{b}c|\vec{x}_{\bar{i}}\vec{y}z}=\sum_{\lambda_{A_i}}\tr[F^{A_i}_{\lambda_{A_i}}\tilde{M}^{A_i}_{\vec{a}_{\bar{i}}\vec{b}c|\vec{x}_{\bar{i}}\vec{y}z}]G_{\lambda_{A_i}}^{A_i}.
\end{align}

(5) Finally, we now consider the ontological identities in \cref{instrgen}. Again, there are many ontological identities therein. However, it turns out that it is sufficient to consider some special ones. Specifically, for any fixed $c$ and $z$, we consider operational identity of the form of:
\begin{widetext}
\begin{align}
\mc{F}^{\vec{B}|\vec{A}}_{c|z}(\cdot)&=\sum_{\substack{ \{c^{i}\vec{b}^{i}\vec{y}^{i}\}_{i}\{c^{j}\vec{b}^{j}_{\bar{j}}\vec{y}^{j}_{\bar{j}}\}_{j}\\ \{z^{i}\vec{a}^{i}_{{\bar{i}}}\vec{x}^{i}_{{\bar{i}}}\}_{{i}}\{{z^{j}\vec{a}^{j}\vec{x}^{j}\}_{{j}}}}}\alpha_{\{c^{i}\vec{b}^{i}\vec{y}^{i}\}_{{i}}\{c^{j}\vec{b}^{j}_{\bar{j}}\vec{y}^{j}_{\bar{j}}\}_{j}}^{\{z^{i}\vec{a}^{i}_{{\bar{i}}}\vec{x}^{i}_{{\bar{i}}}\}_{{i}}\{{z^{j}\vec{a}^{j}\vec{x}^{j}\}_{{j}}}}\tr[(\bigotimes_i\tilde{M}^{A_i}_{\vec{a}_{\bar{i}}\vec{b}c|\vec{x}_{\bar{i}}\vec{y}z})(\cdot)]\bigotimes_j \tilde{\rho}^{B_j}_{\vec{a}\vec{b}_{\bar{j}}c|\vec{x}\vec{y}_{\bar{j}}z}, 
\label{eq:gen-op-sp}
\end{align}
where the superscript $i$ and $j$ are indices to distinguish different settings and outcome variables. The associated ontological identities is then given as:
\begin{align}
p(c\lambda_{\vec{B}}|z\lambda_{\vec{A}})
&=\sum_{\substack{ \{c^{i}\vec{b}^{i}\vec{y}^{i}\}_{i}\{c^{j}\vec{b}^{j}_{\bar{j}}\vec{y}^{j}_{\bar{j}}\}_{j}\\ \{z^{i}\vec{a}^{i}_{{\bar{i}}}\vec{x}^{i}_{{\bar{i}}}\}_{{i}}\{{z^{j}\vec{a}^{j}\vec{x}^{j}\}_{{j}}}}}\alpha_{\{c^{i}\vec{b}^{i}\vec{y}^{i}\}_{{i}}\{c^{j}\vec{b}^{j}_{\bar{j}}\vec{y}^{j}_{\bar{j}}\}_{j}}^{\{z^{i}\vec{a}^{i}_{{\bar{i}}}\vec{x}^{i}_{{\bar{i}}}\}_{{i}}\{{z^{j}\vec{a}^{j}\vec{x}^{j}\}_{{j}}}}\prod_{i}\tilde{p}(\vec{a}^i_{\bar{i}}\vec{b}^ic^i|\vec{x}^i_{\bar{i}}\vec{y}^iz^i\lambda_{A_i})\times \prod_{j}\tilde{p}(\vec{a}^j\vec{b}^j_{\bar{j}}c^j\lambda_{B_j}|\vec{x}^j\vec{y}^j_{\bar{j}}z^j)\notag\\
&=\sum_{\substack{ \{c^{i}\vec{b}^{i}\vec{y}^{i}\}_{i}\{c^{j}\vec{b}^{j}_{\bar{j}}\vec{y}^{j}_{\bar{j}}\}_{j}\\ \{z^{i}\vec{a}^{i}_{{\bar{i}}}\vec{x}^{i}_{{\bar{i}}}\}_{{i}}\{{z^{j}\vec{a}^{j}\vec{x}^{j}\}_{{j}}}}}\alpha_{\{c^{i}\vec{b}^{i}\vec{y}^{i}\}_{{i}}\{c^{j}\vec{b}^{j}_{\bar{j}}\vec{y}^{j}_{\bar{j}}\}_{j}}^{\{z^{i}\vec{a}^{i}_{{\bar{i}}}\vec{x}^{i}_{{\bar{i}}}\}_{{i}}\{{z^{j}\vec{a}^{j}\vec{x}^{j}\}_{{j}}}}\prod_{i}\tr[F^{A_i}_{\lambda_{A_i}}\tilde{M}^{A_i}_{\vec{a}_{\bar{i}}\vec{b}c|\vec{x}_{\bar{i}}\vec{y}z} ]\times \prod_{j}\tr[H^{B_j}_{\lambda_{B_j}}\tilde{\rho}^{B_j}_{\vec{a}\vec{b}_{\bar{j}}c|\vec{x}\vec{y}_{\bar{j}}z}], \label{eq:gen-on-sp}
\end{align}
where the equality follows from \cref{eq:frinst3,eq:frinst4}.
\end{widetext}
plugging the operational identity in \cref{eq:gen-op-sp} back into \cref{eq:gen-on-sp}, we thus have
\begin{align}
p(c\lambda_{\vec{B}}|z\lambda_{\vec{A}})=\tr[(\bigotimes_j H^{B_j}_{\lambda_{B_j}})\mc{F}^{\vec{B}|\vec{A}}_{c|z}(\bigotimes_iF^{A_i}_{\lambda_{A_i}})].\label{eq:frinst5}
\end{align}

Now, substituting \cref{eq:frinst1,eq:frinst2,eq:frinst5} into \cref{eq:genonto}, and using the fact that we are considering every possible local measurements and preparations that are tomographically  complete,  we have:
\begin{widetext}
\begin{align}
&\tr[\underset{j\in [m]}{\bigotimes} (M_{b_j|y_j}^{B_j})\mc{F}_{c|z}^{\vec{B}|\vec{A}} (\underset{i\in [n]}\bigotimes p(a_i|x_i)\rho^{A_i}_{a_i|x_i})]=\sum_{\lambda_{\vec{B}}\lambda_{\vec{A}}}\tr[(\bigotimes_j H^{B_j}_{\lambda_{B_j}})\mc{F}^{\vec{B}|\vec{A}}_{c|z}(\bigotimes_iF^{A_i}_{\lambda_{A_i}})]\prod_j\tr[M^{B_j}_{b_j|y_j}\sigma_{\lambda_{B_j}}^{B_j}]\prod_i\tr[G_{\lambda_{A_i}}^{A_i}p(a_i|x_i)\rho^{A_i}_{a_i|x_i}] \notag \\
&\Rightarrow \mc{F}_{c|z}^{\vec{B}|\vec{A}}(\cdot)=\sum_{\lambda_{\vec{A}}\lambda_{\vec{B}}}\underset{j}{\bigotimes} \sigma_{\lambda_{B_j}}^{B_j}\tr[(\underset{j}{\bigotimes} H_{\lambda_{B_j}}^{B_j})\mc{F}_{c|z}^{\vec{B}|\vec{A}}(\underset{i}{\bigotimes} F^{A_i}_{\lambda_{A_i}})]\tr[\underset{i}{(\bigotimes} G_{\lambda_{A_i}}^{A_i})(\cdot)].
\end{align}
\end{widetext}

Now let us prove the ``if'' part. If there exists a frame decomposition of the multipartite multi-instrument $\{\{\mc{F}_{c|z}^{\vec{B}|\vec{A}}\}_c\}_z$  then for any element of $\{\{\mc{F}_{c|z}^{\vec{B}|\vec{A}}\}_c\}_z$, 
any element of a multi-measurement $M^{B}_{b|y}$ and a any element of a multi-source $p(a|x)\rho^A_{a|x}$, 
one can define the individual ontological representations induced by the frame operators
\begin{subequations}
\begin{align} 
&p(c\lambda_{\vec{B}}|z\lambda_{\vec{A}})=\tr[(\bigotimes_jH^{B_j}_{\lambda_{B_j}})  \mc{F}_{c|z}^{\vec{B}|\vec{A}}(\bigotimes_i F^{A_i}_{\lambda_{A_i}} )], \label{eq:instrc1}
\\
&p(a_i\lambda_{A_i}|x_i)\coloneqq\tr[G_{\lambda_{A_i}}^{A_i}p(a_i|x_i)\rho^{A_i}_{a_i|x_i}],\label{eq:instrc2} \\
&p(b_j|y_j\lambda_{B_j})\coloneqq\tr[\sigma_{\lambda_{B_j}}^{B_j}M^{B_j}_{b_j|y_j}], \label{eq:instrc3}
\end{align}
where the nonnegativity and normalization are ensured by Eq.~\ref{eq:cl-gen-cons_app}, $\sigma^{B_j}_{\lambda_{B_j}}$ being a density operator and $\{G_{\lambda_{A_i}}^{A_i}\}_{\lambda_{A_i}}$ being a POVM. Moreover, each is a linear functional on its corresponding quantum process. 

One can also show that the effective ontological processes obtained by sequential compositions are also linear to the corresponding effective quantum processes obtained by sequential compositions. This is because $\{F^{A_i}_{\lambda_{A_i}}\}_{\lambda_{A_i}}$ and $\{G_{\lambda_{A_i}}^{A_i}\}_{\lambda_{A_i}}$, $\{\sigma^{B_j}_{\lambda_{B_j}}\}_{\lambda_{B_j}}$ and $\{H^{B_j}_{\lambda_{B_j}}\}_{\lambda_{B_j}}$ are a frame and its dual frame on $V_{A_i}$ and $V_{B_j}$, respectively. In particular, for any choice of $\mc{I}\subseteq [1,\cdots,n]$ and $\mc{J}\subseteq [1,\cdots,m]$, one can verify by plugging in \cref{eq:instrc1,eq:instrc2,eq:instrc3} and using the frame-dual frame properties that
\begin{align}
&p(c\vec{a}_{{\mc{I}}}\vec{b}_{{\mc{J}}}{\lambda
}_{\vec{B}_{\bar{\mc{J}}}}|z\vec{x}_{{\mc{I}}}\vec{y}_{{\mc{J}}}\lambda_{\vec{A}_{\bar{\mc{I}}}})\\
&=\sum_{\lambda_{\vec{B}_{{\mc{J}}}}\lambda_{\vec{A}_{{\mc{I}}}}}p(c\lambda_{\vec{B}}|z\lambda_{\vec{A}})\prod_{j\in \mc{J}}p(b_j|y_j\lambda_{B_j})\prod_{i\in\mc{I}}p(a_i\lambda_{A_i}|x_i), \notag \\
&=\tr[\underset{j\in\mc{J}}{\bigotimes} M^{B_j}_{b_j|y_j} \underset{j\in \bar{\mc{J}}}{\bigotimes}H^{B_j}_{\lambda_{B_j}}  \mc{F}_{c|z}^{\vec{B}|\vec{A}}(\underset{i\in\mc{I}}{\bigotimes}p(a_i|x_i)\rho^{A_i}_{a_i|x_i}\underset{i\notin\mc{I}}{\bigotimes}F^{A_i}_{\lambda_{A_i}} )]. \notag
\end{align}
\end{subequations}
Since all these ontological representations depend linearly on the corresponding physical processes and are nonnegative, we have shown that the circuit is classically explainable. Since we allowed all possible factorizing dual processes (i.e., all product multi-measurements and product multi-sources), this, in turn, implies that the multipartite multi-instrument $\{\{\mc{F}_{c|z}^{\vec{B}|\vec{A}}\}_c\}_z$ is classical by Definition~\ref{maindefn}.
\end{proof}

\blk
\yujie 
\section{Frame representation of $2$-combs}
\label{sec:appK}
As we defined in the introduction, a generic process, in the context of this paper, is simply a transformation with various input and output systems, which may be quantum or classical. Note that we did not specify the internal causal structure of a given quantum process. That is, here we did not stipulate any particular causal influence relations among the input and output systems.\footnote{\yujie  For more discussions on the internal causal structure of a general circuit, see Ref.~\cite[Appendix F.1]{ying2025quantumtheoryneedscomplex}.} For example, any multipartite quantum channel is treated as a monolithic object with multiple inputs and multiple outputs without constraints on its internal causal structure, so any input system may influence any output system. As a consequence, in our framework, a multipartite channel and an $N$-comb (as defined in~\cite{Chiribella2009}) are treated as the same type of process, even though they are physically distinct, with the latter subject to additional no-signaling constraints arising from its internal causal structure. Nevertheless, as we will see by considering frame representations of $N$-combs, although we never specify the internal causal structure of a given process, if an $N$-comb is classical, its corresponding ontological model nevertheless automatically satisfies the required no-signaling constraints.

\yujie

Let us take a 2-comb $\mathcal{C}^{B_2B_1|A_2A_1}$ as an example. For $\mathcal{C}^{B_2B_1|A_2A_1}$ to be a valid 2-comb, it must be completely positive and trace preserving, and it must also satisfy a no-signaling constraint between input $A_2$ and output $B_1$. That is, there exists a channel $\mathcal{E}^{B_1|A_1}$ such that
\begin{equation}
\tr_{B_2}\big[\mathcal{C}^{B_2B_1|A_2A_1}(\rho^{A_1}\otimes \sigma^{A_2})\big]
    = \mathcal{E}^{B_1|A_1}(\rho^{A_1})
\end{equation}
for all states $\rho^{A_1}$ and $\sigma^{A_2}$.

Now apply Theorem~\ref{ncgen} to this 2-comb by viewing it as a structure-less process. A classical 2-comb then must admit a frame representation of the form:
\begin{align}
&\mc{C}^{B_2B_1|A_2A_1}(\cdot)=\notag \\
&\sum_{\substack{\lambda_{A_1}\lambda_{A_2}\\ \lambda_{B_1}\lambda_{B_2}}}
    p(\lambda_{B_1}\lambda_{B_2}|\lambda_{A_1}\lambda_{A_2})\,
    \mathrm{tr}\!\left[(G^{A_1}_{\lambda_{A_1}}\otimes G^{A_2}_{\lambda_{A_2}})(\cdot)\right]\,
    \sigma^{B_1}_{\lambda_{B_1}}\otimes \sigma^{B_2}_{\lambda_{B_2}},
\end{align}
where
\begin{align}
&p(\lambda_{B_1}\lambda_{B_2}|\lambda_{A_1}\lambda_{A_2})\notag \\
&\coloneqq\mathrm{tr}\!\left[(
    H^{B_1}_{\lambda_{B_1}}\otimes H^{B_2}_{\lambda_{B_2}})
    \mathcal{C}^{B_2B_1|A_2A_1}(F^{A_1}_{\lambda_{A_1}}\otimes F^{A_2}_{\lambda_{A_2}})
\right]
\ge 0
\end{align}
is its corresponding ontological representation.

One can now directly verify that $p(\lambda_{B_1}\lambda_{B_2}|\lambda_{A_1}\lambda_{A_2})$ satisfies the appropriate no-signaling constraint, i.e., 
\begin{align}
&\sum_{\lambda_{B_2}} p(\lambda_{B_1}\lambda_{B_2}|\lambda_{A_1}\lambda_{A_2}) \notag \\
&=\sum_{\lambda_{B_2}} \mathrm{tr}\!\left[(
     H^{B_1}_{\lambda_{B_1}}\otimes H^{B_2}_{\lambda_{B_2}})
     \mathcal{C}^{B_2B_1|A_2A_1}(F^{A_1}_{\lambda_{A_1}}\otimes F^{A_2}_{\lambda_{A_2}})
\right] \notag \\
&=\mathrm{tr}\!\left[(
     H^{B_1}_{\lambda_{B_1}}\otimes \mathbb{1}^{B_2})
     \mathcal{C}^{B_2B_1|A_2A_1}(F^{A_1}_{\lambda_{A_1}}\otimes F^{A_2}_{\lambda_{A_2}})
\right]  \notag \\
&=\mathrm{tr}_{B_1}\!\left[
    H^{B_1}_{\lambda_{B_1}}\,
    \mathcal{E}^{B_1|A_1}\!\left(F^{A_1}_{\lambda_{A_1}}\right)
\right]  \notag \\
&= p(\lambda_{B_1}|\lambda_{A_1}) \ge 0,
\end{align}
where we used $\sum_{\lambda_{B_2}} H^{B_2}_{\lambda_{B_2}} = \mathbb{1}^{B_2}$ in the second equality and the no-signaling condition of the 2-comb in the third equality. The nonnegativity of the final expression is automatic.

Therefore, although our framework does not assume any internal causal structure of a given process, and only treats it as a structureless object by specifying only its input and output systems, our framework can naturally be applied to those higher-order processes with fixed internal causal structure. Intuitively, this is because the corresponding no-signaling constraints for any process are automatically satisfied in its ontological representation due to the structure theorems we provide.

However, we do not claim that our framework can be directly generalized to treat arbitrary higher-order quantum processes. In particular, it is not clear how to define an ontological model for a process with indefinite causal order. Explicitly specifying the internal causal structure as an operational constraint might be important for studying arbitrary higher-order processes. We view this as an interesting direction for future research, which will likely require further investigation.
\blk
\section{Proof of Lemma~\ref{lem:Choi}}
\label{appendixL}

\begin{figure}[htb!]
\includegraphics[width=0.8\linewidth]{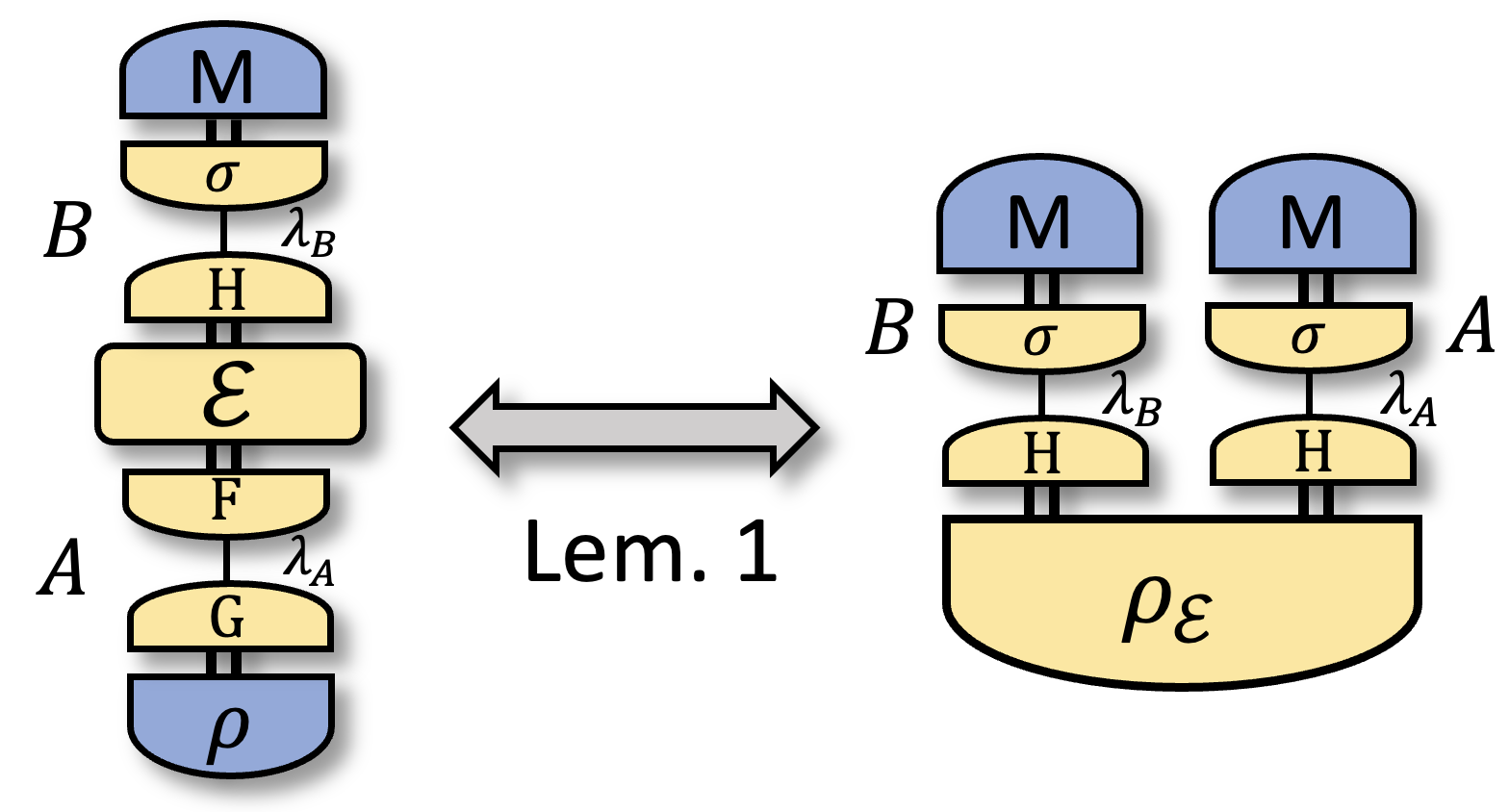}
    \caption{Connecting a prepare-transform-measure circuit with a prepare-measure circuit on a bipartite system involving local measurements}
    \label{fig:Choi_app}
\end{figure}

We now repeat and then prove Lemma~\ref{lem:Choi}.
\setcounter{lemma}{0}

\begin{lemma}
The Choi state isomorphic to a given process is classical if and only if the process is classical.
\end{lemma}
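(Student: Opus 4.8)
The plan is to prove both directions at the level of explicit ontological models, exploiting the Choi identity $\tr[(M^B\otimes N^A)\rho^{BA}_{\mc E}]=\tfrac1d\tr[M^B\mc E^{B|A}((N^A)^T)]$, where $d\coloneqq\dim\mc H^A$ and $\rho^{BA}_{\mc E}=(\mc E^{B|A}\otimes\mbb 1^A)\op{\Psi^+}{\Psi^+}^{AA}$. By \cref{thmdual} it suffices to test classicality of the channel against all \emph{factorizing} combs (a state $\rho^A$ together with an effect $M^B$) and classicality of $\rho^{BA}_{\mc E}$ against all \emph{product} effects $M^B\otimes N^A$; the two families of circuits are matched by the Choi identity together with the transpose $\rho^A\leftrightarrow(N^A)^T$. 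I will carry out the argument for the representative case in which the process is a channel and its Choi isomorph is a bipartite state (\cref{fig:Choi_app}); the multipartite multi-instrument case follows the same template, handling each quantum input factor independently, exactly as in the proof of \cref{ncgen}.

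For the direction ``$\mc E^{B|A}$ classical $\Rightarrow\rho^{BA}_{\mc E}$ classical'', I would start from a classical model of the channel. As established in the proof of \cref{ncchannelstruc} (see also \cref{eq:holevo form}), classicality of $\mc E^{B|A}$ supplies density operators $\{\sigma^B_{\lambda_B}\}$, a POVM $\{G^A_{\lambda_A}\}$, and a stochastic matrix $p(\lambda_B|\lambda_A)$ with $\tr[M^B\mc E(\rho^A)]=\sum_{\lambda_A\lambda_B}\tr[M^B\sigma^B_{\lambda_B}]\,p(\lambda_B|\lambda_A)\,\tr[G^A_{\lambda_A}\rho^A]$. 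Feeding this into the Choi identity and regrouping, I would set $q(\lambda_A\lambda_B)\coloneqq\tfrac1d\tr[G^A_{\lambda_A}]\,p(\lambda_B|\lambda_A)$ and $\tilde\sigma^A_{\lambda_A}\coloneqq(G^A_{\lambda_A})^T/\tr[G^A_{\lambda_A}]$, obtaining $\tr[(M^B\otimes N^A)\rho^{BA}_{\mc E}]=\sum_{\lambda_A\lambda_B}q(\lambda_A\lambda_B)\,\tr[M^B\sigma^B_{\lambda_B}]\,\tr[N^A\tilde\sigma^A_{\lambda_A}]$. This is manifestly a linear, diagram-preserving model of the bipartite state against all product effects (the form of \cref{eq:cl-bipartite}); one checks $q\ge0$ and, crucially, that $\sum_{\lambda_A\lambda_B}q=\tfrac1d\tr[\sum_{\lambda_A}G^A_{\lambda_A}]=\tfrac1d\tr[\mbb 1^A]=1$, where the POVM normalization $\sum_{\lambda_A}G^A_{\lambda_A}=\mbb 1^A$ is exactly what makes the joint prior normalized.

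For the converse, I would begin with a classical model of $\rho^{BA}_{\mc E}$ against all product effects, $\tr[(M^B\otimes N^A)\rho^{BA}_{\mc E}]=\sum_{\lambda_A\lambda_B}p(\lambda_A\lambda_B)\,p(M^B|\lambda_B)\,p(N^A|\lambda_A)$. Since the response functions are linear, positive, and normalized at the unit effect, the generalized Gleason theorem~\cite{Busch2003} (with the Riesz theorem~\cite{riesz1914demonstration}) yields density operators $\{\sigma^B_{\lambda_B}\}$ and $\{\tilde\sigma^A_{\lambda_A}\}$ with $p(M^B|\lambda_B)=\tr[M^B\sigma^B_{\lambda_B}]$ and $p(N^A|\lambda_A)=\tr[N^A\tilde\sigma^A_{\lambda_A}]$. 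Inverting the Choi identity (setting $N^A=(\rho^A)^T$ and extending by linearity) and defining the marginal $r(\lambda_A)\coloneqq\sum_{\lambda_B}p(\lambda_A\lambda_B)$, the conditional $p(\lambda_B|\lambda_A)\coloneqq p(\lambda_A\lambda_B)/r(\lambda_A)$, and $G^A_{\lambda_A}\coloneqq d\,r(\lambda_A)\,(\tilde\sigma^A_{\lambda_A})^T$, I would recover $\tr[M^B\mc E(\rho^A)]=\sum_{\lambda_A\lambda_B}\tr[M^B\sigma^B_{\lambda_B}]\,p(\lambda_B|\lambda_A)\,\tr[G^A_{\lambda_A}\rho^A]$, a valid channel model. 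The one nontrivial check is the POVM condition $\sum_{\lambda_A}G^A_{\lambda_A}=\mbb 1^A$, which reduces to $\sum_{\lambda_A}r(\lambda_A)\tilde\sigma^A_{\lambda_A}=\tfrac1d\mbb 1^A$, i.e.\ to the reduced state of $\rho^{BA}_{\mc E}$ on the reference being maximally mixed---a fact that holds precisely because $\mc E^{B|A}$ is trace-preserving. The conceptual crux, and the step I expect to require the most care, is the recognition that the input-state representation (a POVM) and the subsystem-measurement representation (a set of states) are interchanged by the transpose, and that the POVM normalization on one side corresponds exactly to trace preservation (maximal mixedness of the marginal) on the other; all remaining verifications are routine linear bookkeeping, and the extension to multipartite multi-instruments is obtained by applying this transpose-and-Gleason dictionary to each Choi-isomorphized input independently.
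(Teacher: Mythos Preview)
Your approach is the same as the paper's---both transfer the ontological model across the Choi isomorphism by transposing and rescaling the $A$-side operators---but you elide one step that the paper treats carefully. You display only the reproducibility condition~\eqref{eq:cl-bipartite} (respectively the Holevo form~\eqref{eq:holevo form}) and assert the model is ``manifestly linear''; but classical explainability (Definition~\ref{defn:classical}) also requires the steered-state identities~\eqref{ncprep1}--\eqref{ncprep2} and the composite identity~\eqref{ncnewstate}. Verifying these is not automatic from a separable-looking decomposition: one needs the coefficients $q(\lambda_A\lambda_B)$ to be of the dual-frame form $\tr[(H^B_{\lambda_B}\otimes H^A_{\lambda_A})\rho^{BA}_{\mc E}]$, so that the induced steered-state representation is itself linear (this is the content of Theorem~\ref{ncbp}; mere separability does not suffice, cf.~Example~\ref{singularexample}). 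The paper closes this by carrying the \emph{dual} frames $\{F^A_{\lambda_A}\}, \{H^B_{\lambda_B}\}$ from Theorem~\ref{ncchannelstruc} through the isomorphism via $H^A_{\lambda_A}\coloneqq\tr[G^A_{\lambda_A}](F^A_{\lambda_A})^T$ (Eq.~\eqref{eq:choiupdate1}), and conversely via $F^A_{\lambda_A}\coloneqq(H^A_{\lambda_A})^T/\tr[H^A_{\lambda_A}]$ (Eq.~\eqref{eq:choiupdate2}). Your $q$ does in fact have this form---the channel's classical model comes equipped with these dual frames by hypothesis---so your argument completes once you make them explicit.
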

\yujie
\begin{proof}
Let us first show that if a quantum channel $\mc{E}^{B|A}$ is classical, then so is its Choi state 
$\rho_{\mc{E}}^{BA}=(\mc{E}^{B|A}\otimes \mbb{1}^{A} )\op{\Psi^+}{\Psi^+}^{AA}$ .  \par 

By Theorem~\ref{ncchannelstruc}, a classical channel admits a frame decomposition
\begin{equation}
\mc{E}^{B|A}(\cdot)=\sum_{\lambda_A\lambda_B}\sigma_{\lambda_B}^{B}\tr[H^B_{\lambda_B}\mc{E}^{B|A}(F^A_{\lambda_A})]\tr[G^{A}_{\lambda_A}(\cdot)]
\end{equation}
where $\{\sigma^{B}_{\lambda_B}\}_{\lambda_B}$ is a set of quantum states defining a frame on $V_B\coloneqq\text{Span}(\{\mc{E}^{B|A}(\rho^A)\})$, and $\{G_{\lambda_A}^{A}\}_{\lambda_A}$ is a POVM defining a frame on $V_A\coloneqq\text{Span}(\{[\mc{E}^{{B|A}}]^{\dagger}(M^{B})\})$, and where $\{{H}_{\lambda_B}^{B}\}_{\lambda_B}$ and $\{{F}^{A}_{\lambda_A}\}_{\lambda_A}$ are sets of Hermitian operators forming the dual frames that satisfy $\sum_{\lambda_B}{H}_{\lambda_B}^{B}=\mc P_{_{V_B}}(\mbb{1}^{B})$, $\tr[{F}^{A}_{\lambda_A}]=1$ and moreover, 
\begin{align} 
\tr[H^B_{\lambda_B}\mc{E}^{B|A}(F^A_{\lambda_A})] \ge 0~~~~\forall \lambda_A, \lambda_B.
\end{align}
We can then write the Choi state $\rho_{\mc{E}}^{BA}$ of the quantum channel $\mc{E}^{A\rightarrow B}$ in terms of this frame decomposition as
\begin{align}
&\rho^{BA}_{\mc{E}}=(\mc{E}\otimes \mbb{1}^{A} )\op{\Psi^+}{\Psi^+}^{AA} \notag \\
&=\sum_{\lambda_A\lambda_B}\sigma_{\lambda_B}^{B}\tr[H^B_{\lambda_B}\mc{E}^{B|A}(F^A_{\lambda_A})]\tr_A[G^{A}_{\lambda_A}\otimes\mbb{1}^{A} \op{\Psi^+}{\Psi^+}^{AA}] \notag \\
&=\frac{1}{d}\sum_{\lambda_A\lambda_B}\sigma_{\lambda_B}^{B}\otimes [G^{A}_{\lambda_A}]^T\tr[H^B_{\lambda_B}\mc{E}^{B|A}(F^A_{\lambda_A})] \notag \\
&=\sum_{\lambda_A\lambda_B}\sigma_{\lambda_B}^{B}\otimes [G^{A}_{\lambda_A}]^T\tr\left[H^B_{\lambda_B}\otimes [F^{A}_{\lambda_A}]^T \rho^{BA}_{\mc{E}}\right],
\label{channel-to-choi}
\end{align}
where all transposes are with respect to the basis defining $\ket{\Psi^+}\coloneqq\frac{1}{\sqrt{d}}\sum_i\ket{ii}$, and we used the standard Choi identities $\tr_A[X\otimes \mbb{1}\op{\Psi^+}{\Psi^+}]=\frac{1}{d}X^T$ and $\frac{1}{d}\tr\!\left[Y^B\mathcal{E}^{B|A}(X^A)\right]
= \tr\!\left[\left(Y^B\otimes [X^A]^{\mathsf T}\right)\rho^{BA}_{\mathcal{E}}\right]$.

We can now rescale $\{G^{A}_{\lambda_A}\}_{\lambda_A}$ and $\{F^{A}_{\lambda_A}\}_{\lambda_A}$ to define a new frame and its dual frame
\begin{align}
\sigma_{\lambda_A}^{A}\coloneqq\frac{[G^{A}_{\lambda_A}]^T}{\tr[G^{A}_{\lambda_A}]}, ~~~~~~~ H_{\lambda_A}^{A}\coloneqq[F^{A}_{\lambda_A}]^T\tr[G^{A}_{\lambda_A}], \label{eq:choiupdate1}
\end{align}
such that $\{\sigma_{\lambda_A}^{A}\}_{\lambda_A}$ is a set of quantum states and $\{H_{\lambda_A}^{A}\}_{\lambda_A}$ is a set of Hermitian operators such that $\sum_{\lambda_A}H_{\lambda_A}^{A}=\mc P_{_{V_A}}(\mbb{1}^{A})=\mbb{1}^{A}$ since $V_A=\text{Span}(\{[\mc{E}^{{B|A}}]^{\dagger}(M^{B})\})$ must contain the identity operator $\mbb{1}^{A}$ as the map $[\mc{E}^{{B|A}}]^{\dagger}$ is unital; 
moreover, the nonnegativity conditions are respected since
\begin{align}
&\tr[H^B_{\lambda_B}\otimes H^{A}_{\lambda_A} \rho_{\mc{E}}^{BA}] \nonumber\\
=&\frac{1}{d}\tr[G_{\lambda_A}^{A}]\tr[H_{\lambda_B}^B\mc{E}^{B|A}(F_{\lambda_A}^A)] \ge 0.
\end{align}
Therefore, by \cref{ncbp}, $\rho_{\mc{E}}^{BA}$ is classical. \par 

Now we prove the reverse direction. A classical bipartite state $\rho_{\mc{E}}^{BA}$  admits a frame decomposition
\begin{equation}
\rho_{\mc{E}}^{BA}=\sum_{\lambda_A\lambda_B}\tr[(H^{B}_{\lambda_B}\otimes H^{A}_{\lambda_A})\rho_{\mc{E}}^{BA}]\sigma^{B}_{\lambda_B}\otimes \sigma^{A}_{\lambda_A},
\end{equation}
where $\{\sigma^{B}_{\lambda_B}\}$ and $\{\sigma^{A}_{\lambda_A}\}$  are sets of quantum states defining two respective frames on $V_{B(A)}=\text{Span}(\{\sigma^{B(A)}_{\lambda_{B(A)}}\})$, while $\{{H}^{B}_{\lambda_B}\}$, $\{{H}^{A}_{\lambda_A}\}$ are sets of Hermitian operators defining the respective dual frames and satisfying $\sum_{\lambda_B}{H}^{B}_{\lambda_B}=\mc P_{_{V_B}}(\mbb{1}^{B})$, and $\sum_{\lambda_A}{H}^{A}_{\lambda_A}=\mc P_{_{V_A}}(\mbb{1}^{A})=\mbb{1}^{A}$ where
\begin{align} 
\tr[(H^{B}_{\lambda_B}\otimes H^{A}_{\lambda_A})\rho_{\mc{E}}^{BA}] \ge 0~~~~\forall \lambda_B, \lambda_A. 
\label{eq:ch-to-bp}
\end{align}
Since a Choi state must satisfy $\tr_{B}[\rho_{\mc{E}}^{BA}]=\frac{1}{d}\mbb{1}^{A}$, we have
\begin{align}
\tr_B[\rho_{\mc{E}}^{BA}]&=\sum_{\lambda_A}\tr[(\mc P_{_{V_B}}(\mbb{1}^{B})\otimes H^{A}_{\lambda_A})\rho_{\mc{E}}^{BA}] \sigma^{A}_{\lambda_A} \notag \\
&=\sum_{\lambda_A}\tr_A[H^{A}_{\lambda_A}\tr_B[\rho_{\mc{E}}^{BA}]]\sigma^{A}_{\lambda_A}\notag \\
&=\sum_{\lambda_A}\frac{1}{d}\tr_A[H^{A}_{\lambda_A}]\sigma^{A}_{\lambda_A}=\frac{1}{d}\mbb{1}^{A},\label{eq:ch-margi1}
\end{align}
where the second equality is implied by the fact that $\mc P_{_{V_B}}(\mbb{1}^{B})$ acts as an identity operator on the span of $V_B$, and the last equality holds since $\{H^{A}_{\lambda_A}\}_{\lambda_A}$ and $\{\sigma^{A}_{\lambda_A}\}_{\lambda_A}$ defines a pair of frame and dual frame on $V_A$, which contains $\mbb{1}^{A}$.

Moreover, the constraints in \cref{eq:ch-to-bp} imply
\begin{align} 
\sum_{\lambda_B}\tr[(H^{B}_{\lambda_B}\otimes H^{A}_{\lambda_A})\rho_{\mc{E}}^{BA}]=\frac{1}{d}\tr[H^{A}_{\lambda_A}]  \ge 0~~\forall \lambda_A. 
\label{eq:ch-margi2}
\end{align}

We can now reconstruct the quantum channel $\mc{E}^{B|A}$ from its Choi state $\rho_{\mc{E}}^{BA}$ as 
\begin{align}
\mc{E}^{B|A}(\cdot)&=d\tr_{A}[(\mbb{1}^{B}\otimes (\cdot)^T )\rho_{\mc{E}}^{BA}] \\
&=d\sum_{\lambda_B\lambda_A}\tr[(H^{B}_{\lambda_B}\otimes H^{A}_{\lambda_A})\rho_{\mc{E}}^{BA}]\tr[\sigma^{A}_{\lambda_A}(\cdot)^T]\sigma^{B}_{\lambda_B}\notag \\
&=\sum_{\lambda_B\lambda_A}\tr[H^{B}_{\lambda_B}\mc{E}^{B|A}(H^{A}_{\lambda_A})^T]\tr[(\sigma^{A}_{\lambda_A})^T(\cdot)]\sigma^{B}_{\lambda_B}. \notag 
\end{align}
We then rescale $\{G^{A}_{\lambda_A}\}_{\lambda_A}$ and $\{F^{A}_{\lambda_A}\}_{\lambda_A}$ to define a new frame and its dual frame by
\begin{align}
F_{\lambda_A}^{A}\coloneqq\frac{(H^{A}_{\lambda_A})^T}{\tr[H^{A}_{\lambda_A}]},~~~~~~G_{\lambda_A}^{A}\coloneqq\tr[H^{A}_{\lambda_A}](\sigma^{A}_{\lambda_A})^T,\label{eq:choiupdate2}
\end{align}
such that $\tr[F_{\lambda_A}^{A}]=1$ and $\sum_{\lambda_A}G_{\lambda_A}^{A}=\sum_{\lambda_A}\tr[H^{A}_{\lambda_A}](\sigma^{A}_{\lambda_A})^T=\mc P_{V_A}(\mbb{1}^{A})=\mbb{1}^{A}$ using \cref{eq:ch-margi1}. Thus, we have
\begin{align}
\tr[H_{\lambda_B}^{B}\mc{E}^{B|A}(F_{\lambda_A}^{A})]=\frac{d}{\tr[H_{\lambda_A}^{A}]}\tr[(H_{\lambda_B}^{B}\otimes H_{\lambda_A}^{A})\rho_{\mc E}^{BA}]\ge 0,\notag
\end{align}
where $\frac{d}{\tr[H^{A}_{\lambda_A}]}\ge 0$ is ensured by \cref{eq:ch-margi2}. 

Therefore, a quantum channel $\mc{E}^{B|A}$ with a classical Choi state $\rho_{\mc{E}}^{BA}$ must be classical. 
\end{proof}
\begin{remark}
To show that for any general process with several quantum inputs and/or outputs, classicality is preserved under the Choi isomorphism, it suffices to apply the frame and dual-frame rescaling rules in \cref{eq:choiupdate1} or \cref{eq:choiupdate2} to each quantum input or output. All the nonnegativity and normalization constraints are preserved, and no additional modifications are required.
\end{remark}

\blk

\section{Proof of Lemma~\ref{lem:absorb_classical_handle}}
\setcounter{lemma}{2}
\par 
\begin{figure}[htb!]
  \includegraphics[width=0.9\linewidth]{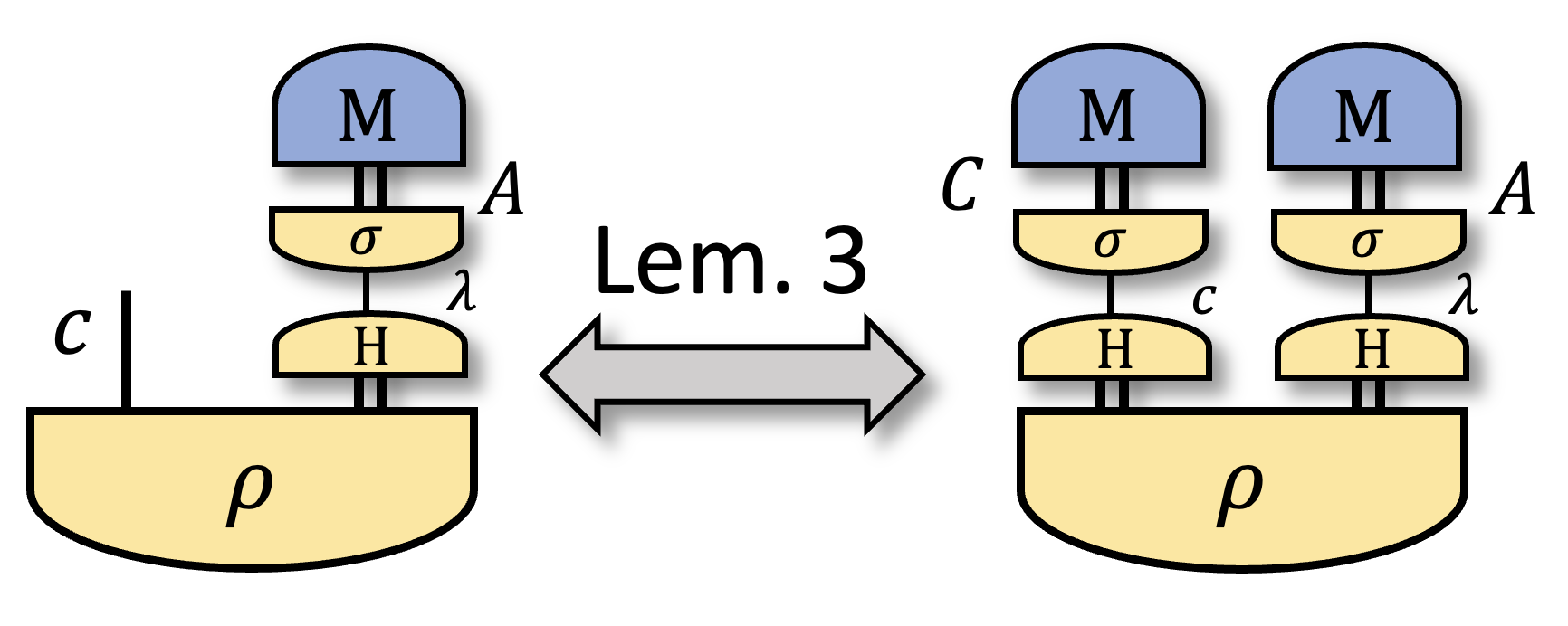}
    \caption{Connecting a prepare-measure circuit for a source with a prepare-measure circuit on a bipartite system involving local measurements}
    \label{fig:Cut_app}
\end{figure}
We now prove Lemma~\ref{lem:absorb_classical_handle}.
\begin{lemma}
Whether or not a process is nonclassical is independent of whether one reconceptualizes its classical outputs as dephased quantum outputs.
\end{lemma}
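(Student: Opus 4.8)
The plan is to prove the equivalence for the representative case in which the process is a source $\{p(c)\rho_c^A\}_c$ and its reconceptualization is the bipartite state $\tilde\rho^{CA}=\sum_c p(c)\op{c}{c}^C\otimes\rho_c^A$ (\cref{fig:view0}), with the $\{\ket{c}\}_c$ an orthonormal set; the general case then follows by running the identical device inside Theorem~\ref{ncgen}. The key structural fact I will exploit is that on the dephased system $C$ the only operators that ever arise are diagonal in $\{\ket{c}\}_c$, so that a preparation $\op{c}{c}^C$ behaves as a classical flag (a point distribution over a classical ontic label) and an effect $N^C$ acts only through its diagonal entries $N^C_{cc}$. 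I will characterize classicality of the source via Theorem~\ref{theoremprep} and classicality of $\tilde\rho^{CA}$ via Definition~\ref{def:bipartite}/Theorem~\ref{ncbp}, and show the two are in bijective correspondence through this diagonal ``read-out'' on $C$, as indicated by \cref{fig:Cut_app}.

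For the \emph{only if} direction, suppose the source is classical. By Theorem~\ref{theoremprep} there is a fixed set of states $\{\sigma_\lambda^A\}_\lambda$ and nonnegative $p(c\lambda)$, respecting $\mc{O}(\msf{P})$, with $p(c)\rho_c^A=\sum_\lambda p(c\lambda)\sigma_\lambda^A$. I then build an explicit ontological model for $\tilde\rho^{CA}$: assign to $C$ the ontic space $\{c\}$ with $\op{c}{c}^C$ represented by the point distribution $\delta_{\lambda_C c}$ and any effect $N^C$ by the diagonal response $p(N^C|\lambda_C)=N^C_{\lambda_C\lambda_C}$; assign to $A$ the source's ontic variable $\lambda$ with response $p(M^A|\lambda)=\tr[\sigma_\lambda^A M^A]$; and take the joint distribution $p(\lambda_C{=}c,\lambda)=p(c\lambda)\ge 0$. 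A direct computation gives $\sum_{c,\lambda}p(c\lambda)\,N^C_{cc}\,\tr[\sigma_\lambda^A M^A]=\sum_c N^C_{cc}\,p(c)\tr[M^A\rho_c^A]=\tr[(N^C\otimes M^A)\tilde\rho^{CA}]$ for all local effects, and since the model is assembled from a genuine joint distribution together with product, positive, linear response functions, it is linear and diagram-preserving; hence $\tilde\rho^{CA}$ is classical.

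For the \emph{if} direction, suppose $\tilde\rho^{CA}$ is classical, so there is a linear diagram-preserving model with $p(N^C|\lambda_C)=\tr[N^C\sigma_{\lambda_C}^C]$, $p(M^A|\lambda_A)=\tr[M^A\sigma_{\lambda_A}^A]$ and joint $p(\lambda_C\lambda_A)\ge 0$ reproducing all local-effect statistics. Restricting the $C$-measurement to the dephasing basis $\{\op{c}{c}\}_c$ and marginalizing over $\lambda_C$, I define $\tilde p(c\lambda_A)\coloneqq\sum_{\lambda_C}(\sigma_{\lambda_C}^C)_{cc}\,p(\lambda_C\lambda_A)\ge 0$; summing the reproduced statistics over $\lambda_C$ yields $p(c)\rho_c^A=\sum_{\lambda_A}\tilde p(c\lambda_A)\,\sigma_{\lambda_A}^A$, which is the decomposition \eqref{eq_rhodecom} of Theorem~\ref{theoremprep}. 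The main obstacle is the remaining requirement that $\tilde p(c\lambda_A)$ respect the source identities $\mc{O}(\msf{P})$: this is \emph{not} implied by the separable decomposition alone, because $\sum_c\alpha_c\,p(c)\rho_c^A=0$ does not entail $\sum_c\alpha_c\,\op{c}{c}^C\otimes p(c)\rho_c^A=0$. I resolve it using the steering constraint built into the bipartite model: the dephasing measurement on $C$ steers $A$ to the very ensemble $\{p(c)\rho_c^A\}_c$, so $\mc{O}(\msf{P})\subseteq\mc{O}(\msf{M}_C\circ\msf{P})$, and the bipartite ontological identity \eqref{ncprep1} for the steered assemblage states precisely that $\tilde p(c\lambda_A)$ annihilates every element of $\mc{O}(\msf{M}_C\circ\msf{P})$, hence of $\mc{O}(\msf{P})$. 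This completes condition (1) of Theorem~\ref{theoremprep} and shows the source is classical.

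Finally, for a process of arbitrary type carrying one or more classical outputs, I will run the same argument inside the general structure theorem (Theorem~\ref{ncgen}): reconceptualizing each classical output $c$ as a dephased quantum output $C$ appends a tensor factor whose reachable operator space is diagonal, on which the self-dual family $\{\op{c}{c}\}_c$ implements the read-out correspondence between the two frame representations. In the forward direction this extends the ontological model by a classical flag on $C$; in the reverse direction the operational-identity bookkeeping is supplied by the composition constraints \eqref{op:general} specialized to the dephasing read-out, exactly as the steering constraint \eqref{ncprep1} was used in the source case. I expect this operational-identity matching to be the only delicate point, and it is precisely what the steering/composition constraints of the structure theorems are designed to control.
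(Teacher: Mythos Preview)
Your proof is correct and follows essentially the same route as the paper's. Both arguments hinge on the same structural device: on the dephased system $C$ the orthonormal projectors $\{\op{c}{c}\}_c$ serve simultaneously as frame and dual frame, so the ontic space for $C$ collapses to a classical label. For the direction ``source classical $\Rightarrow$ bipartite classical'' the paper invokes Theorem~\ref{ncbp} directly (exhibiting the product frame $\{\op{c}{c}^C\otimes\sigma_\lambda^A\}$ and checking nonnegativity), whereas you build the same object as an explicit ontological model; for the direction ``bipartite classical $\Rightarrow$ source classical'' the paper cites Proposition~\ref{propremote}, whereas you unroll its content by appealing to the steering identity~\eqref{ncprep1}. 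The only place your write-up is a little quick is the claim that your constructed bipartite model is ``linear and diagram-preserving'': formally this requires checking all of~\eqref{op:bipartiterep}, including~\eqref{ncnewstate}, which you do not do explicitly---but this is routine and is exactly what Theorem~\ref{ncbp} packages, so citing it (as the paper does) would close the loop cleanly.
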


\begin{proof}
We prove the claim for a source $\{p(c)\rho_c^A\}$, which can also be viewed as a bipartite state
\begin{align}
\label{eq:rhoCA}
   \rho^{CA}\coloneqq\sum_cp(c)\op{c}{c}^C\otimes\rho_c^A 
\end{align}
when the classical outcome is reconceptualized as the dephased quantum output $C$. The generalization to arbitrary quantum processes is straightforward. \par 

For one direction, if the bipartite state $\rho^{CA}$ is classical, following Proposition~\ref{propremote}, then any source on $A$ obtainable by a measurement on $C$ is classical. In particular,  measuring $\{M_c=\op{c}{c}^C\}_c$ on $\rho^{CA}$ yields the source $\{p(c)\rho^A_c\}$, which must therefore be classical.\par 
\yujie
For the other direction, if the source $\{p(c)\rho_c^A\}$ is classical, according to Theorem~\ref{theoremprep}, there exists a frame decomposition such that
\begin{equation}
\label{eq:rhoC}
\rho^A_c=\sum_{\lambda}\tr[H^A_{\lambda}\rho^A_c]\sigma^A_{\lambda},
\end{equation}
where $\{\sigma^A_{\lambda}\}_{\lambda}$ is a set of quantum states and $\{H^A_{\lambda}\}$ is a set of Hermitian operators with $\sum_{\lambda}H_{\lambda}^A=\mc P_{V}(\mbb{1}^A)$ with $V=\text{span}(\{\rho_c\}_c)$, and  $\tr[H^A_{\lambda}\rho^A_c]\ge 0$ for all $\lambda, c$.

One can then define another pair of frame and dual frame as $\{\sigma_{c}^C\coloneqq\op{c}{c}^C\}_c$ and $\{H_{c}^C\coloneqq\op{c}{c}^C\}_c$ with $\tr[H_{c}^C\sigma_{c'}^C]=|\bra{c}c'\rangle|^2 =\delta_{c,c'}$. With that, one can find a frame decomposition of $\rho^{CA}$ as
\begin{equation}
\rho^{CA}=\sum_{c,\lambda}\tr[(H_{c}^C\otimes H^A_{\lambda})\rho^{CA}]\sigma_{c}^C\otimes \sigma_{\lambda}^A.
\end{equation}
Moreover, since $\tr[H^A_{\lambda}\rho^A_c]\ge 0$ for all $\lambda$, we have
\begin{align*}
\tr[(H_{c}^C\otimes H^A_{\lambda})\rho^{CA}]=p(c)\tr[H^A_{\lambda}\rho_c^A]\ge 0~~~~\forall c, \lambda. \\
\end{align*}
Therefore, the bipartite state $\rho^{CA}$ is classical.
\end{proof}
\yujie
\begin{remark}
For any general process with several quantum inputs and outputs, classicality remains preserved when classical outcomes are reconceptualized as dephased quantum outputs. This follows from choosing the eigenbasis $\{\op{c}{c}\}_c$ of each dephased system to serve simultaneously as the frame and its dual, ensuring that all associated nonnegativity and normalization constraints are automatically respected.
\end{remark}
\blk

\end{document}